\newtheorem{theorem}{Theorem}[section]
\newtheorem{prop}[theorem]{Proposition}
\newtheorem{proposition}[theorem]{Proposition}
\newtheorem{lemma}[theorem]{Lemma}
\newtheorem{corollary}[theorem]{Corollary}
\newtheorem{rem}[theorem]{Remark}
\theoremstyle{definition}
\renewcommand*{\big}{\bigg}
\renewcommand{\fakeItem}[1][$\bullet$]{\vspace{1mm}\noindent{\textbf{#1}}~}
\newcommand{\hlinestrut}{\bigstrut[b]\\\hline\bigstrut[t]}
\newcommand{\emthm}[1]{{\bf\em #1}}  % adding emphasis in a thm stt
\newcommand{\term}[1]{\mathtt{#1}\xspace}
\newcommand{\bin}{\{0,1\}}  % just a shorthand
\newcommand{\KL}[2]{D_{\mathtt{KL}}\rbr{#1\,\|\,#2}} % \KL{p}{q} KL-divergence
\newcommand{\vot}{VoTC\xspace}
\newcommand{\mvot}{MVoT\xspace}
\newcommand{\ourGame}{content-filtering game\xspace} % the name for our main game
\newcommand{\pot}{\tilde{\pi}^*_0}
\newcommand{\ptt}{\tilde{\pi}^*_1}
\newcommand{\fil}{\mathtt{f}} % subscript for the filter
\newcommand{\con}{\mathtt{c}} % subscript for the filter
\newcommand{\signal}{\Psi}    % a generic signal
\newcommand{\sfil}{\signal_{\fil}} % filter's signal
\newcommand{\scon}{\signal_{\con}} % consumer's signal
\newcommand{\afil}{a_{\fil}} % filter's action
\newcommand{\acon}{a_{\con}} % consumer's action
\newcommand{\strfil}{s_{\fil}} % filter's action strategy
\newcommand{\strcon}{s_{\con}} % consumer's action strategy
\newcommand{\mixsigma}{\bm{\upsigma}} % mixed profile
\newcommand{\mixstr}{\mathbf{s}} % standalone "mixed action strategy"
\newcommand{\mixstrfil}{\mixsigma_{\fil}} % filter's "mixed action strategy"
\newcommand{\mixstrcon}{\mixsigma_{\con}} % consumer's "mixed action strategy"
\newcommand{\Vfil}{V_{\fil}} % filter's utility
\newcommand{\vcon}{v_{\con}} % consumer's utility
\newcommand{\Vcon}{V_{\con}} % consumer's utility
\newcommand{\Vatt}{V_{\term{a}}} % attacker's utility
\newcommand{\Vind}{\widehat{V}} % consumer's utility induced by a given belief
\newcommand{\Vopt}{V^*}    % socially optimal equilibrium payoffs
\newcommand{\infoC}{\term{C}}
\newcommand{\infoL}{\lambda}
\newcommand{\fwdprof}{\sigma_{\term{fwd}}}   % forwarding profile
\newcommand{\blkprof}{\sigma_{\term{blk}}}   % blocking profile
\newcommand{\difprof}{\sigma_{\term{dif}}}   % diferentiating profile
\newcommand{\mixprof}{\sigma_{\mathtt{mix}}} % mixed prof for semi-aligned
\newcommand{\strfwd}{s_{\term{fwd}}} % forwarding action strategy
\newcommand{\strblk}{s_{\term{blk}}} % blocking action strategy
\newcommand{\strdif}{s_{\term{dif}}} % differentiating action strategy
\newcommand{\qdif}{q_{\term{dif}}}  % differentiating belief
\newcommand{\qH}{q_{\term{H}}}
\newcommand{\qL}{q_{\term{L}}}
\newcommand{\qfunc}{q} % belief as a fuction of \strfil
\newcommand{\qhat}{\hat{q}}  % just a shorthand
\newcommand{\partialpi}[1]{\frac{\partial}{\partial \pi_{#1}}}  % \partialV{x}
\newcommand{\partialV}[1]{\rbr{\partialpi{0}{#1},\; -\partialpi{1}{#1}}}
\newcommand{\Vnorm}{\widetilde{V}} % normalized utility
\newcommand{\deltaU}{\vartriangle\!\! U}
\begin{document}

\setcounter{secnumdepth}{1} %May be changed to 1 or 2 if section numbers are desired.

\title{Content Filtering with Inattentive Information Consumers%
\thanks{First version: May 2022. This version: December 2024. \newline
This is the full version of a conference paper published in \emph{AAAI'24}.
\newline
JG and IB were affiliated with Microsoft during this research.}}

\author {
    % Authors
    Ian Ball\textsuperscript{\rm 1},
    James Bono\textsuperscript{\rm 2},
    Justin Grana\textsuperscript{\rm 3},
    Nicole Immorlica\textsuperscript{\rm 4},
    Brendan Lucier\textsuperscript{\rm 4},
    Aleksandrs Slivkins\textsuperscript{\rm 5}
}
\affiliations {
    % Affiliations
    \textsuperscript{\rm 1}MIT, 77 Massachusetts Ave, Cambridge, MA 02139, USA; ianball@mit.edu.\\
    \textsuperscript{\rm 2}Microsoft, 1 Microsoft Way, Redmond, WA 98052, USA; james.bono@microsoft.com.\\
    \textsuperscript{\rm 3}Edge \& Node, remote only; justin@edgeandnode.com.\\
    \textsuperscript{\rm 4} Microsoft Research, 1 Memorial Dr, Cambridge, MA 02142, USA; \{nicimm, brlucier\}@microsoft.com.\\
    \textsuperscript{\rm 5} Microsoft Research, 300 Lafayette St, New York, NY 10012, USA; slivkins@microsoft.com.
}

\newcommand{\UnreasonableStrategyLocation}{Proposition~\ref{prop:unreason}\xspace}
\newcommand{\GenCostsLocation}{Appendix~\ref{sec:gen}\xspace}

\maketitle
\begin{abstract}
    We develop  
  %and analyze 
  a model of content filtering as a game between the filter and the content consumer, 
  where the latter incurs information costs for examining the content. 
 % when deciding the content's veracity.
  Motivating examples %of such a scenario 
  include censoring misinformation, spam/phish
  filtering, and recommender systems acting on a stream of content.  
  When the attacker is exogenous,
  % an exogenous attacker, % probability, 
  we show that improving the filter's quality is weakly Pareto improving,
  but has no impact on equilibrium payoffs until the filter becomes sufficiently accurate. 
  % though may sometimes have no positive impact on the equilibrium payoffs.  
  %Furthermore, 
  Further, if the filter does not internalize the consumer's information costs,
  its lack of commitment power may render it 
  % a low-quality filter 
  useless and lead to inefficient outcomes.  
  %Consequently, improvements to a moderately effective filter will have no impact on
  % equilibrium payoffs until the filter is sufficiently accurate. 
  When the attacker is also strategic,
  %With an endogenous attacker, 
  improvements in filter quality may   
  % lead to strictly lower 
  decrease equilibrium payoffs. 
  
  %since the content consumer
  %increases its trust in the filter and thus incentivizes 
  % the attacker to increase its attack propensity.

  \end{abstract}

% \begin{CCSXML}

% <ccs2012>

% <concept>

% <concept_id>10002978.10003029.10003031</concept_id>

% <concept_desc>Security and privacy~Economics of security and privacy</concept_desc>

% <concept_significance>500</concept_significance>

% </concept>

% <concept>

% <concept_id>10010405.10010455.10010460</concept_id>

% <concept_desc>Applied computing~Economics</concept_desc>

% <concept_significance>500</concept_significance>

% </concept>

% </ccs2012>

% \end{CCSXML}

% \ccsdesc[500]{Security and privacy~Economics of security and privacy}
% \ccsdesc[500]{Applied computing~Economics}

% \keywords{Strategic Security, Social Welfare, Rational Inattention}
%\begin{titlepage}
%  \maketitle
%  \end{titlepage}
%\pagestyle{plain}

\section{Introduction}
\label{sec:intro}
% \jgcomment{If we go with the interpretation that our ``mass of
%   messages'' is distributed to consumers that simulataneously make a
%   decision,  then ``consumer'' becomes ``consumers'' throughout.  If
%   we are good with that interpretation, I will make that change.}
% What is content filtering and why is it important?
Content filtering is a crucial and widely-applied tool for
improving the experience of information consumers.
Email filters automatically sort normal, malicious and spam messages,
%resulting in both increased security and reduced time
increasing security and saving users from
manually sorting mail
\citep{filt1,filt2,filt3}.
Information aggregators and social
media platforms have deployed content filters that censor non-credible
and potentially deceptive claims \cite{fn1,fn2}.
Recommender systems learn consumers'
preferences to save them from having to sift through unwanted content
\cite{rec1,rec2,rec3}.\footnote{Our model is most relevant to recommender systems that process a stream of items such as new event announcements: \eg new concerts for a music app, or new properties for a real estate app.}
%
%Of course, while content filters act directly on digital information, the quality of the filter can have significant real-world consequences --- such as controlling contagion of misinformation---and thus improving content filters has been a chief objective for many digital platforms.
Despite major efforts to improve content filters,
%Despite the ambitions of content filters, they are not perfect
% content consumers and platform users
information consumers
remain susceptible to malicious or illegitimate content, \eg
%Email recipients
they click on phishing messages \cite{pc1,pc2}
% and well-intentioned, truth-seeking information consumers still
and fall victim to misinformation \cite{mis1,mis2}.

%% Why are information costs relevant
%Consumers can take
%%With significant resources, these victims could have taken
%measures to avoid the malicious content.
%For example,
%%a consumer that is suspicious of an email
%a recipient of a suspicious email
%could ask an acquaintance's opinion, check a database of known
%malicious senders, or reach out to the purported sender directly via
%telephone.
%% The consumer of information on a social media website
%% could have done further research by going
%A social media user could go to a library or check
%other reputable  sources.
%% Of course, all of these actions are costly
%However, such actions cost time and effort. These costs, termed \emph{information costs}, may be not worthwhile for the consumer.
%%and the consumer's preferences may be such that it is simply not worthwhile to incur such costs.

Consumers can take measures to avoid the malicious content. For
example, a recipient of a suspicious email could examine the
  email more carefully, do a quick web search for known malicious
  patterns, ask an acquaintance's opinion, or even attempt
  to reach the purported sender by other means. A social media user
could carefully check the argumentation in a given post, or
  consult reputable sources.  However, such measures incur
substantial costs in time, effort and attention. In particular, the
literature on ``attention economy'' documents that attention in the digital
sphere is a scarce resource \cite{attmark}. We will refer
to these costs as \emph{information costs}%
\footnote{An alternative term, \emph{attention costs}, is also
    well-established.}.

Due to information costs, consumers  tend to strategically alter their
  behavior in response to the (perceived) filter quality. When
  consumers perceive that a filter is poor, either allowing too much
  malicious content or censoring too much, they abandon the platform
  (a risk acknowledged by major platforms for email, social media
  and news \citep{cnbc}).  When the filter is exceptional,
  consumers take content at face value \cite{whoshared}. In the
  ``middle ground," the filter is imperfect and consumers choose
  whether/how to examine the content to determine its quality.%
\footnote{An ironic example: a conference
%An somewhat ironic example: a publication venue
    serves as a filter for academic publications, and its reputation
    (\ie perceived filter quality) is often used to evaluate the merit
    of a scientific claim \cite{journal1}.}

%may be not worthwhile for the consumer.

% Thesis: since filtering is important in formation costs are relevant
% we should analyze how they interact with a simple model

%The importance of content filtering and the prevalence of consumer's information costs
The considerable investment in improving content filters and consumers' strategic allocation of scarce attention  motivates three salient questions:

%\begin{itemize}
\fakeItem[(Q1)] Can the benefits of an increase in filter
quality be crowded out by reduced consumer attention in response to the increase in filter quality?

\fakeItem[(Q2)]  If the filter's payoffs do not depend on the consumer's information costs, what inefficiencies (i.e. sub-optimal equilibria) arise and how can they be abated?

\fakeItem[(Q3)] How does the interaction between the filter and consumer change
  when the attacker \emph{strategically} crafts its attack in
  anticipation of this interaction? How does this affect the cost-benefit tradeoff for improving the filter quality?
%  Specifically, are the benefits or costs to improving filter quality amplified or reduced
%  in the presence of a strategic attacker?
%\end{itemize}

To answer these
% three
questions, we model
content filtering as a game between a filter and an information consumer.
%and ultimately, an attacker.
%In this model,
The filter receives a
batch of content, wherein each piece is either legitimate or malicious with some exogenously specified probability.
%some of which are legitimate and some that are malicious.
For each piece of content,
the filter receives a signal regarding its
% each piece of content's
legitimacy, and either blocks
% the content
it or forwards it to the consumer.
% For all of the unblocked content,
In the latter case, the consumer exerts
costly effort to examine the content
% determine the legitimacy of the content
and then decides whether to accept
%the content
or ignore it.
%The incentives driving behavior are as expected in a content filtering scenario.
Both players benefit when the consumer accepts legitimate content,
and incur a cost when it does not consume legitimate content or consumes malicious content.
In an extension, an \emph{endogenous} attacker
% The extension with a strategic attacker is identical except that the attacker
sets the mean amount of the malicious content (\emph{attack propensity}) to maximize the expected amount of malicious content the consumer ultimately accepts.

% The attacker receives a reward
%for each piece of \asedit{malicious} content accepted by the consumer.

% Anything in prewebconf boxes can be toggled back on by commenting
% out the excludecomment in the preamble.  This is so we can keep
% track of what was cut.
\begin{prewebconf}
First the filter receives a batch of content
where each piece can be either legitimate or malicious.  For each piece of
content, the filter then receives a signal that is correlated with the
content's type and then strategically decides to either block the
content or release it to the consumer.  While real-world filters are
often autonomous systems operating in real time that follow fixed
rules, the filter's strategic behavior represents the filter's owner
optimally tuning the filter's parameters that set the sensitivity and
specificity.  Upon being exposed to content the consumer makes two
decisions.  First, it exerts costly effort to learn about whether the
content is legitimate or malicious.  It then decides to either accept
the content or ignore it.  The model with an endogenous attacker
proceeds similarly but instead of the content having a fixed
probability of being malicious, the attacker chooses an attack
frequency that then impacts the probability that any piece of content
is malicious prior to any examination.
\end{prewebconf}
% We show that marginal increases in the filter do not always improve
% the outcome.  When incentives aren't aligned, there are
% inefficiencies.

The key novelty is that the consumer strategically chooses the fidelity of its signal and incurs the corresponding information cost.  This represents strategic information acquisitions where consumers optimally trade off the physical and cognitive costs of obtaining higher fidelity signals with the benefit associated with the higher fidelity information.  We adopt \emph{rational inattention} \cite{sims1},
a standard model for consumer's information cost.
%---to model these costs.  
Specifically, %rational inattention specifies
%it is proportional to the mutual information between the consumer's signal and its prior.  Equivalently, the information costs are
%that
%the information 
cost is proportional to the expected drop in entropy between the
consumer's prior and posterior.\footnote{Despite alternatives \cite{milgrom,vives,zhong,alt1,gen1,attnsurvey},
rational inattention is widely adopted as a standard model for information costs
\citep{martin, bertoli,ravid,cycles,travel,hiring,importers},
in the absence of further behavioral evidence or assumptions \cite{caplin2016}.}
%A novel feature in our model is that it explicitly incorporates consumer's information processing
%there is a cost to the consumer for
%learning about whether content is legitimate or malicious.
The filter may internalize these costs, aiming to maximize consumers' welfare.%
\footnote{Maximizing users' welfare is a common modeling choice and a reasonable proxy for many online platforms that indirectly profit from user engagement, \eg via advertising.}
%Furthermore, we allow for the filter to internalize the consumer's information costs,
%which might be the case if the strategic scenario
%is interpreted as a misinformation filter on a news platform under
%competitive pressure so that high consumer information costs lead to a
%reduction in readership.  We also examine the case
We also consider a variant in which the filter does \emph{not} internalize the information costs,
\eg when
% this might be the case if the filter
it only cares about detection rates, which may be the case when platforms compete in performance benchmarks.
We call these variants, resp., \emph{aligned utilities} and \emph{semi-aligned utilities}.
% .  One relevant example is
% if the filter is a security software company where demand is largely
% determined by detection rates rather than the total impact on end user
% productivity.  Similarly, a social media platform in the absence of
% competitive pressure might not internalize consumer information costs
% since there are no viable substitutes for the consumer.

%Including information costs into the model
%allows us to answer our main questions. Specifically:

%To model the consumer's information costs,
% We adopt \emph{rational inattention} \cite{sims1},
% a standard model for consumer's information cost.
% %it is proportional to the mutual information between the consumer's signal and its prior.  Equivalently, the information costs are
% Specifically, rational inattention specifies that information costs are  proportional to the expected drop in entropy between the
% consumer's prior and posterior.
% %Despite alternatives \cite{milgrom,vives,zhong,alt1,gen1,attnsurvey},
% %rational inattention is widely adopted
% %\citet{martin, bertoli,ravid,cycles,travel,hiring,importers}
% %as standard model for information costs
% %in the absence of further behavioral evidence or assumptions \cite{caplin2016}.
% \footnote{Despite alternatives \cite{milgrom,vives,zhong,alt1,gen1,attnsurvey},
% rational inattention is  a standard model for information costs
% \citep{martin, bertoli,ravid,cycles,travel,hiring,importers},
% absent  further behavioral evidence \cite{caplin2016}.}

%Armed
With this model, we answer our questions as follows:

%\begin{itemize}
\fakeItem[(A1)] With an exogenous attacker and aligned utilities,
%With an exogenous attacker probability and both the consumer and
%  filter incurring information costs,
increasing filter quality
%(in terms of type I and type II errors),
is Pareto-improving, but only weakly (Theorem \ref{thm:dvdp}).
% More concretely, improving the quality of the filter only improves equilibrium outcomes
There is a ``barrier to entry": equilibrium outcomes improve only when the filter is
accurate enough.
%already sufficiently accurate.
%This leads to an endogenous ``barrier to entry'' in filter quality.

\fakeItem[(A2)] A new inefficiency arises when we switch to semi-aligned utilities.
% When the filter does not internalize the consumer's information costs,
% inefficiency arises that is absent in the case of complete alignment
%Specifically,
Since the filter does not internalize the consumer's information cost, the filter is biased toward forwarding more content. It may not be credible for the filter to block any content, thus introducing a
  Pareto inefficiency % renders \asedit{it}
  % completely ineffective
  (Theorem \ref{thm:semi-inef-gen}).
  % Increasing filter quality in this regime has no impact on payoffs.
  % AS: this is obvious/redundant, since the filter is completely ineffective!
  However, this inefficiency vanishes once the filter is
  sufficiently accurate
 % to assuage the commitment problem, players escape this undesirable regime
  (Theorem \ref{thm:misaligned-escape}), upon which further
  increases to filter quality are Pareto improving (Theorem
  \ref{thm:semi-VoT}).  % and further increases to filter
  % quality is strictly beneficial for both players (Theorem
  % \ref{thm:semi-VoT}).

%<<<<<<< Updated upstream
\fakeItem[(A3)] With a strategic attacker, there are two
  surprising consequences:  the consumer does not examine any
  content in any equilibrium (Theorem~\ref{thm:eattack-zero}), and
  improving the filter can  make both the filter and the consumer worse off (Theorem \ref{thm:eattack}).
%improving the filter quality can lead to \emph{worse} outcomes for both the filter and the consumer (Theorem \ref{thm:eattack}).
  % As the filter increases in quality,
%\asedit{The consumer chooses to not examine the content because it relies on the filter.}
%optimally tolerates a higher attack propensity
%without incurring any information costs
%because it trusts the filter to block more malicious content.
% =======
%   \fakeItem[(A3)] With a strategic attacker, the consumer never incurs
%   information costs in equilibrium \ref{thm:eattack-zero}.
%   Consequently,
% % When the attacker endogenously chooses its attack intensity,
% improving the filter quality can lead to \emph{worse} outcomes for both the filter and the consumer (Theorem \ref{thm:eattack}).
%   % As the filter increases in quality,
% Indeed,
% the consumer optimally tolerates a higher attack propensity
% without incurring any information costs because it trusts the
% filter to block more malicious content.
% >>>>>>> Stashed changes
%Knowing this, the
%Accordingly,
The attacker raises its attack propensity, and this outweighs the direct benefit of a more accurate filter.
%So, the attacker raises its attack propensity
% % to the maximal point where
% as long as the consumer
% % is still willing to accept
% still accepts all content, ultimately leading to a lower payoff for
% the filter and consumer.
%\end{itemize}

The main practical implication of our results
is that rote marginal improvements in filter quality are not
unambiguously beneficial. These improvements should either be large
enough, or be coupled with other interventions (such as training to
decrease information costs), to avoid a damaging reduction in consumer
attention.

\begin{prewebconf}
\jgcomment{Marked for deletion/dramatic reduction}
There are several practical implications of these results.  First,
filters should not expect to see continuous, linear and smooth returns
to investments in filter quality since consumers will only begin to
trust a filter after it has reached a certain threshold of quality.
Beyond such threshold, the consumers will optimally adjust their
information costs in response to changes in filter quality, leading to
nonlinear improvements in outcomes. Second, in the case of an exogenous
attacker, filter providers that are rewarded simply based on their
type I and type II error rates will not necessarily act in the best
interest of the consumer.  Instead, content filters must be
incentivized to not manipulate the consumer's beliefs and impose
higher than socially optimal information costs.
\end{prewebconf}

\begin{prewebconf}
\jgcomment{Marked for Deletion/Dramatic Reduction}
With an endogenous attacker, increasing the filter's quality may
reduce equilibrium payoffs for both the filter and the consumer. The
narrow implication is that in this regime, filters should not invest in
quality improvements as they have a negative return.  However, given
the limited scope of the model, such a policy recommendation is
extreme.  A more tempered recommendation is that as filter quality
increases, filter providers should initiate programs that ensure a
lower bound on consumer information gathering, regardless of attacker
behavior. One example of such a program would be warning systems in
email filters that require consumers to verify content.  A social
media platform may initiate a policy that requires consumers to
provide links to source material when sharing sensitive
information.  Of course, a sufficient reduction in information
costs can also help filters and consumers escape this regime, so
training programs and increased digital literacy are all possible
policy interventions.
\end{prewebconf}

%Of course, the mutual information is
%not the only way to model costly information acquisition
%\cite{milgrom,vives,zhong,alt1,gen1,attnsurvey} but it has been
%labeled the `Cobb-Douglas model of attention, and moves to richer
%models are best based on behavioral evidence \cite{caplin2016}.  Or in
%other words, the mutual information cost specification is an
%appropriate baseline model for attention costs in the absence of other
%behavioral assumptions.  Due to its generality and tractability, the
%mutual information cost function has appeared in a wide-range of
%applications \cite{martin, bertoli,ravid,cycles,travel,
%  hiring,importers}.

\begin{prewebconf}
\jgcomment{All the stuff about alternatives to information cost models
  can be condensed into one paragraph with a ton of references.
  Dramatic reduction here.}

  %%% Since attention costs are important, we discuss how to model them
To model the cost the consumer incurs from gathering information about
whether content is legitimate or malicious, we adopt the
rational inattention framework\cite{sims1}.  Specifically, we model
information costs as being proportional to the mutual information
between the consumer's prior and posterior.  Or equivalently, the
(expected) reduction in uncertainty between the consumer's prior and
posterior, where uncertainty is quantified as the entropy of the
distribution.  Of course, the consumer may also incur costs simply by
consuming content.  For example, it takes time to read a post on
social media, regardless of its veracity.  We abstract away from those
costs and \emph{only} consider the costs the consumer incurs by
deliberating on the veracity of the content.
\end{prewebconf}

\begin{prewebconf}
% Here is what attention costs correspond to behaviorally
From a behavioral perspective, the consumer chooses its ``information
strategy'' by choosing a signal structure.  For example, the consumer
can \emph{choose} a signal such that when content is legitimate, it
receives a signal indicating it is indeed legitimate 80\% of the time
and an (incorrect) signal that the content is malicious 20\% of the
time.  While choosing a signal distribution may seem unnatural, this
dynamic is equivalent to the consumer committing to gather information
about the content type until it reaches a specific degree of
certainty.  So for example, a consumer that just received an email
asking for bank account information will commit to knowing that the
email is legitimate with 98\% certainty before entering its
credentials and will ignore a message anytime it is less than 60\%
certain it is legitimate.  Without information costs, the
optimal consumer strategy would be to learn the content type with
total certainty. However, more certainty is costly so the consumer
optimizes by trading off its information costs with the costs of
interacting with the content.
\end{prewebconf}

\begin{prewebconf}
% There are other attention costs
It is important to note that mutual information is not the only way to
model information costs but instead, it provides an appropriate
compromise between generality and parsimony.  One viable alternative
to the mutual information based costs would be to model the
physical choice to gather information.  This is what happens in prior
work on auctions \cite{milgrom} and oligopoly \cite{vives}.  For
example, in the oligopoly setting, a firm can pay to join a trade
association which then reveals the value of a relevant random
variable.  % Mutual information is better in this case than physical costs
The main drawback of modeling physical costs is its lack of generality
since the information gathering process and associated costs may vary
widely across domains.  For these reasons, we adopt what has been
called a black-box approach to information gathering \cite{attnsurvey}
and do not model the physical information gathering process but simply
assume that more certainty is more costly and assign a cost based on
the certainty of the consumer's posterior.

% Mutual information is also better than other more complex costs
Assigning costs to certainty is a common approach, though the mutual
information cost function we use is not the only such
``posterior-based'' cost function.  There are also generalizations and
adaptations of the basic mutual information cost
structure \cite{zhong,alt1,gen1} that allows for features such as
state-dependent costs of information gathering.  However, the mutual
information formulation is akin to the ``Cobb-Douglas model of
attention, and moves to richer models are best based on behavioral
evidence \cite{caplin2016},'' and since we do not have \emph{a priori}
behavioral evidence, we adopt the mutual information cost function for
both its parsimony and analytical tractability.  Nevertheless, we
generalize some of our main results to differentiable and convex cost
functions in the appendix.

% This must be important since there are a lot of other people
% applying rational inattention to econ problems.
Technological innovation has made raw information widely available in
massive quantities and information consumers face the challenge of
where to spend their valuable attention.  It is then no surprise that
there has been a surge in the application of rational inattention to
canonical economic scenarios.  Examples include quality signaling with
inattentive consumers \cite{martin}, migration decisions
\cite{bertoli}, and bargaining \cite{ravid}, all of which use the
mutual information specification we use.  Other work on voting
decisions \cite{trombetta} is an example of rational inattention with
general convex cost functions similar to our extension in the
appendix.  Our model adds to this growing literature by showing how
introducing information costs to a traditional strategic decision
problem can lead to unforeseen behavior and social inefficiencies that
remain hidden when consumers do not endogenously acquire information.
\end{prewebconf}

Conceptually, we identify strategic interaction between content filters and information consumers as a relevant aspect of content filtering.
%Our work is the first to model the strategic interaction between
%content filters and the final information consumers.
In contrast, prior game-theoretic work on content filtering studies games between filters and attackers \cite[e.g., ][]{kalgame,md}, between filters and a mediator \cite{recstrat}, or between consumers \cite{ace}.   Adversarial machine learning \cite{adbook1,adbook2} studies attacks on machine learning algorithms (such as content filters).
%Most of the strategic literature pertaining to content filtering
%considers the interactions between filters and attackers \cite{kalgame}.  This is
%especially prevalent in the adversarial machine learning domain where
%attackers try to ``trick'' content filters \cite{adphish}.  More
%recent work has considered the strategic interaction between multiple
%filters and an attacker \cite{md} or between content filters and
%mediators \cite{recstrat} to evaluate fairness and stability of a
%recommender system.
In all this work, consumers naively follow the filter's recommendations. We show that filter-consumer strategic interaction is
%leads to behaviors
not captured by attacker-filter games.

While our model may appear similar to models in information design \cite{id1,cando}, and especially information design with rational inattention \cite{matyskova},
%it has one the fundamental difference: in information design,
these models are fundamentally different:
senders can design arbitrary Blackwell \cite{bw} experiments that generate the receiver's signal.  In our model, the filter chooses an \emph{action} that has a direct impact on utility as well as consumer beliefs.  This coupling between actions and consumer beliefs is what sets our model apart from those of information design and yields new results. %Not only is this an important modeling difference but it also has distinct implications for equilibrium.  For example, in MM the equilibrium is one in which the receiver never learns.  In our model with an exogenous attacker, the consumer does indeed learn in some equilibria.  It is only with an endogenous attacker do we recover the ``never learning'' equilibrium of MM.

%Instead, our model is most
Our model is
similar to \cite[][P2020 for short]{papa} in that they both consider binary environments where a filter and consumers inspect content before choosing an action.  However, because consumers in our model choose their signal quality and the filter's signal is noisy (unlike that in P2020), we
% However, our model has
% %but maintains
% several key differences.  First,  in our model the consumer's signal quality is endogenous,  i.e. it chooses its signal quality.  Second, we
% %our model
% consider an endogenous attacker that strategically generates malicious content.  Third, in our model the filter receives a noisy signal whereas in P2020, if the filter chooses to inspect, it learns the veracity of the content exactly.  These modeling differences allow us to
examine the utility and behavioral impacts in changing filter quality, which is absent in P2020.   Additionally, we extend the environment and consider an endogenous attacker, another feature not included in P2020.

% Our model is idealized in several respects, yet yields tractable analysis, rich results,  and clear insights. Furthermore, %We believe this simplicity is a feature. Besides,
% our treatment is not specific to the particularities of any one filter, and provides a foundation for various extensions. % discussed in section \ref{sec:future}.
% We discuss this more in Conclusions.
%\ascomment{now it fits!}

%All proofs are in Appendix \ref{sec:proofs} in the full version.% of the Supplementary Materials.

%We discuss this more in Conclusions.

% \asdelete{This is often violated in practice, \eg whenever
% %Of course, this is not a valid assumption and is violated any time
% an email recipient reports a
% %delivered
% phishing email,
% %or when
% a media consumer ignores a content recommendation, or
% %when someone perusing a news aggregator
% a news reader does an independent fact-check.}

%Taken in this light, our work illustrates how
%the already complex environment surrounding content filtering has yet
%another layer of non-trivial complexity that occurs between the filter
%and the end consumer, which should be acknowledged in future work on
%the subject. 

All proofs are deferred to Appendix \ref{sec:proofs}.

\section{Our model and preliminaries}
\label{sec:model}
%\nicomment{ah alex, using the package :)}
\begin{prewebconf}
\jgcomment{Marked for deletion.  Don't need to outline the model,
  already kind of did in the intro. Just
  present it here.}
We consider the \emph{\ourGame}: a game between two strategic players,
an information filter and an information consumer. We call them the
\emph{filter} and the \emph{consumer}, and denote the respective
notation with subscripts $\fil$ and $\con$. The game focuses on a
single piece of content (henceforth, the \emph{content}) which can be
either legitimate or malicious. The content may correspond, \eg to an
email, a social media post, a recommendation, or a login prompt. The
content's legitimacy is set exogenously, and is only revealed to the
players via imprecise and noisy signals: via a classification
algorithm (for the filter) and costly manual examination (for the
consumer). Both players are incentivized to ensure that legitimate
content is accepted as such, while malicious content is
ignored. However, the consumer's \emph{information costs} for
examining the message lead to a complex strategic interaction which we
explore in the subsequent analysis.

\jgcomment{Marked for deletion/reduction.  Again, don't need to
  describe the model in words given space constraints}
The game unfolds in several discrete steps, whereby the filter
examines the content and decides whether to forward it to the
consumer, and then the consumer examines the content and decides
whether to accept it as legitimate or ignore it.  Mostly for
simplicity, we maintain binary assumptions.  That is, the filter only
receives one of two signals and can choose to either forward the
content or block it.  The binary signal assumption is for convenience.
For real world situations, filters likely receive a probabisitic
assessment (from a machine learning model, for example) and make a decision based
on that probability.  Our binary assumption captures the case where
the filter's actions are to either forward or block based on an
exogenously set threshold.  While future work could endogenize this
threshold, our binary model exhibits sufficient richness that we
abstract from the threshold-setting problem and simply assume binary
signals.  Furthermore, some real world filters can have multiple
actions such as forwarding content with a warning.  While expanding
the filter's action space is yet another viable extension to our
simple model, real world evidence suggests that warnings are often
ignored \cite{ignorewarn}, lending at least a partial justification
for our binary action assumption.
\end{prewebconf}

% Since we interpret the game as happending simultaneously across many
% consumers (a moment's worth of emails for a corporation, for example)
% for technical convenience we approximate the relative amount of
% legitimate and malicious content with continuous masses.  In detail,
% the protocol of the game is as follows.
% \nicomment{will have to justify
  % all the binary assumptions here (e.g., classifier might give a
  % confidence bound, gmail has a msg that is ``this looks like spam''
  % as well as a behavior that sends directly to spam so in essence has
  % three, not two, msgs, etc.)}

% We assume that a filter independently processes a batch of content and
% only after the filter processes all the content does the consumer have
% the opportunity to independently act on the content.  This represents
% the case where, for example, business emails arrive between meetings
% or overnight or a consumer logs onto a social media or news platform
% once per day to catch-up on the day's news.

We consider the \emph{\ourGame}: a game between two strategic players,
an info filter and an info consumer that make
decisions about content's legitimacy. We call them the \emph{filter}
and the \emph{consumer}, and denote the resp. notation with
subscripts $\fil$ and $\con$.  The game's protocol is as follows:

\begin{prewebconf}
\begin{enumerate}
\item The filter receives a batch of content that contains $N_0$
  pieces of malicious content and $N_1$ pieces of clean content.  We
  write $q=\frac{N_0}{N_0+N_1}$ as the prior probability that a
  randomly selected piece of content is malicious, which is common
  knowledge.  We denote by $X$ a randomly drawn piece of content which
  is then malicious with probability $q$ and legitimate with
  probability $(1-q)$.  In section \ref{sec:endog}, we introduce a
  modified version of the game in which the attacker endogenously sets
  $N_0$.
%\item The \emph{content type} $X\in\bin$ is realized. Formally, $X$ is a Bernoulli random variable
%with outcomes in $\Omega_X = \{0,1\}$
%  which determines whether the content is legitimate ($X=1$) or
%  malicious ($X=0$).  We write $q = \Pr[X=0]$.  The value of $q$ is
%  known but the realization of $X$ is
%  not.  %While the distribution of $X$ is known, its realization is not.
%Let $q = \Pr[X=0]$ be the prior probability that the content is malicious.
\item The players treat each piece of content independently using $q$
  as the prior probability that the content is malicious.  For each
  piece of content:
\begin{enumerate}
\item The filter receives a private signal
  $\sfil\in\bin$ about the content type. The signal represents the
  output of a classifier, so that $\sfil=0$ means ``likely malicious"
  and $\sfil=1$ means ``likely legitimate". The signal is drawn
  independently from a known conditional distribution. We write
\begin{align}\label{eq:pi-defn}
    \pi_x = \Pr\sbr{\sfil=0\mid X=x} \qquad\text{for $x\in\bin$}
\end{align}
Loosely, $\pi_0$ is the filter's true positive rate where $\pi_1$ is
the false positive rate.   Without loss of generality, we assume that
$\pi_0 \leq \pi_1$ since the filter has the freedom to choose its
action conditional on its signal.
%        $\ffil\rbr{\cdot\mid X} = \Pr\sbr{\sfil=\cdot\mid X}$
%    such that
%        $\ffil(0|0) \geq \ffil(0|1)$.
%   Denote $\ffil(0|0) = \pi_0$ and $\ffil(0|1)=\pi_1$.
\item After receiving a signal, the filter chooses its \emph{action}
  $\afil\in\bin$: whether to block the content ($\afil=0$) or to
  forward it to the consumer ($\afil=1$).
\end{enumerate}
\item Upon receiving the unblocked batch of content, the consumer
  processes each piece of content independently.  For each piece of
  content:
\begin{enumerate}
  \item The consumer chooses how much
  effort to put into examining the content. Formally, the consumer
  chooses a conditional distribution over binary signals $\scon$ about
  the content type, which we refer to as an \emph{information
    strategy}.  The chosen distribution is characterized by
  $\mu=\rbr{\tilde{\pi}_0, \tilde{\pi}_1}$, where
  $\tilde{\pi}_x=\Pr\sbr{\scon=0\mid X=x}$.
%$\fcon\rbr{\cdot\mid X}$ over $\bin$, a.k.a. \emph{information strategy}.
\item The consumer then examines the content. Formally, the consumer
  receives a signal $\scon\in\bin$ about the content type, drawn
  independently from the chosen distribution.
        %$\fcon\rbr{\cdot\mid X}= \Pr\sbr{\scon=\cdot\mid X}$.
    Here, $\scon=0$ means ``likely malicious" and $\scon=1$ means ``likely legitimate".
  \item The consumer chooses its \emph{action} $\acon\in\bin$: whether
    to accept the content as legitimate ($\acon=1$) or to ignore it
    ($\acon=0$).
\end{enumerate}
\end{enumerate}

Before proceeding there are two modeling choices to justify.  First,
we model a batch of content instead of a single piece of content to
highlight an important payoff scaling that has qualitative
implications for the case of the endogenous attacker, as will be shown
below.  Second, the fact that the filter and consumer analyze each
piece of content independently is a technical convenience to capture
the fact that the players don't know exactly how many of each type of
content there are but really only know $q$, the prior probability that
any piece of content is malicious.  This may happen if, for example,
the $N_0+N_1$ piece of content are constructed from a set of benign
content that arrive at a fixed Poisson rate and malicious content
that arrive according to another Poisson
process.  %If the players knew $N_0$ and $N_1$ exactly,
% the signal they receive about one message is informative about the
% legitimacy of the others whereas these are often independent.  While
% plausabile in some niche scenarios, such an analysis is out of scope.
% Therefore, we restrict the players to inspect each piece of content
% independently.  % Anothe

% interpretation is that each piece of content in the batch is destined
% for a distinct customer so that each consumer only observes one piece
% of content.

% \jgcomment{What do we think about that formulation?  It is still a bit
%   informal since we are saying $q$ is common knowledge and presumably,
%   the total number of messages is observed, from which one could
%   deduce $N_0$ and $N_1$.  in the explanation above, we say it
%   approximates some random arrival process.  An alternative is to
%   actually model the arrival process (Poisson processes) but then the
%   attacker's strategy would be to set a rate and not a number.  Also
%   not very clean.  In all, I am looking for the cleanest way to say
%   there is a batch of messages, some number of those are bad and when
%   examining each message, the filter and consumer only have access to
%   the relative proportions.

% Another alternative is just to say that a batch is $N$ pieces of content, each
% of which has an independent malicious probability of $q$.  Then we can punt on
% common knowledge until we get to the endogenous attacker and then we
% can break it up.  I.e. say that there is $N_1$ good content and the
% attacker chooses $N_0$ bad content and thus chooses $N_0$ and $N$.  }

Here is another example of what the protocol would look like if we use
Poisson rates:
\end{prewebconf}

%\begin{enumerate}
\fakeItem[1.] The filter receives a batch of content (\eg a day's
worth of news).  The batch consists of malicious content that arrives
at a Poisson rate of $\rho_0$ and legitimate content that arrives at a
Poisson rate of $\rho_1$, per unit time interval.  Both rates
are common knowledge.  W.l.o.g., we normalize $\rho_1=1$.

Each piece of content in the batch is identified with a binary random variable
$X$, where $X=0$ means ``malicious" and $X=1$ means ``legitimate.'' We
define \[ q := \Pr[X=0] = \rho_0/(\rho_0+1).\]

%In section \ref{sec:endog}, we modify the game so that
% %we introduce a modified version of the game in which
%the attacker endogenously sets $\rho_0$.

%  Let $X$ be a randomly drawn piece of content.  Then
%  $q=\frac{\rho_0}{\rho_0+\rho_1}$ is the prior probability that $X$ is malicious.

%\fakeItem The \emph{content type} $X\in\bin$ is realized. Formally, $X$ is a Bernoulli random variable
%with outcomes in $\Omega_X = \{0,1\}$
%  which determines whether the content is legitimate ($X=1$) or
%  malicious ($X=0$).  We write $q = \Pr[X=0]$.  The value of $q$ is
%  known but the realization of $X$ is
%  not.  %While the distribution of $X$ is known, its realization is not.
%Let $q = \Pr[X=0]$ be the prior probability that the content is malicious.

\fakeItem[2.] Each piece of content $X\in\bin$ is processed by the {\bf filter} as follows.
%For each piece of content in the batch:
%\begin{enumerate}
%\item
The filter receives a private signal $\sfil\in\bin$ about the content type, representing the output of a classifier so that $\sfil=0$ means ``likely malicious" and $\sfil=1$ means ``likely legitimate". The signal is drawn independently from a known conditional distribution given $X$. Denote the resp. true and false positive rates as
\begin{align}\label{eq:pi-defn}
    \pi_x = \Pr\sbr{\sfil=0\mid X=x}, \quad x\in\bin.
\end{align}
W.l.o.g. assume
$\pi_0 \geq \pi_1$ (since the filter is free to choose its
action conditional on its signal).
%        $\ffil\rbr{\cdot\mid X} = \Pr\sbr{\sfil=\cdot\mid X}$
%    such that
%        $\ffil(0|0) \geq \ffil(0|1)$.
%   Denote $\ffil(0|0) = \pi_0$ and $\ffil(0|1)=\pi_1$.
%\item
After receiving the signal, the filter chooses its \emph{action}
  $\afil\in\bin$: whether to block the content ($\afil=0$) or to
  forward it to the consumer ($\afil=1$).
%\end{enumerate}

\fakeItem[3.] Each piece of forwarded content is processed by the {\bf consumer} as follows.
%The consumer then acts on unblocked content.  For each piece of unblocked content:
The consumer chooses how to examine the content. Formally, the consumer controls the distribution of a signal $\scon\in\bin$, where $\scon=0$ means ``likely malicious" and $\scon=1$ means ``likely legitimate".
The signal is drawn independently from some conditional distribution given $X$, characterized by
\begin{align}\label{def:con-probs-defn}
 \tilde{\pi}_x=\Pr\sbr{\scon=0\mid X=x}, \quad x\in\bin.
\end{align}
%$\rbr{\tilde{\pi}_0, \tilde{\pi}_1}$
These probabilities are chosen by the consumer in advance, at the (information) cost specified below.
Then, the consumer chooses its \emph{action} $\acon\in\bin$: whether
    to accept the content as legitimate ($\acon=1$) or to ignore it
    ($\acon=0$).
%\end{enumerate}

%The consumer chooses how much
%effort to put into examining the content. Formally, the consumer
%chooses a conditional distribution over binary signals $\scon$ about
%the content type, which we refer to as an \emph{information strategy}.
%The chosen distribution is characterized by
%  $\mu=\rbr{\tilde{\pi}_0, \tilde{\pi}_1}$, where
%  $\tilde{\pi}_x=\Pr\sbr{\scon=0\mid X=x}$.
%$\fcon\rbr{\cdot\mid X}$ over $\bin$, a.k.a. \emph{information strategy}.

    \xhdr{Strategies.}  The filter and the consumer have pure action strategies
    $\strfil, \strcon: \bin\to\bin$ so that $\afil = \strfil(\sfil)$
    and $\acon = \strcon(\scon)$.  The consumer also chooses
    probabilities $\mu=\rbr{\tilde{\pi}_0, \tilde{\pi}_1}$ from
    \refeq{def:con-probs-defn}, called its \emph{information
      strategy}. Thus, pure strategies are $\strfil$ for the filter,
    and $(\strcon,\mu)$ for the consumer. Both players choose their
    (mixed) strategies before the game starts, and those strategies are applied to
    the entire batch. (This is justified because the pieces of content
    are ex-ante equivalent.) We posit that the filter and the consumer
    choose their (mixed) strategies simultaneously, \ie without
    observing one another.

%\ascomment{we need to make this point explicitly, and not intertwine it with discussion. I moved the discussion to a "remark" to clearly separate it from the main model. Otherwise it feels like we really have both variants in the model, in which case we are supposed to present results for both. A side benefit is that we can refer to this remark from "endog" and "conclusions".}

\begin{rem}\label{rem:model-order}
  When the filter and consumer have fully aligned utilities
    (as defined below and discussed in Sections~\ref{sec:aligned},
    \ref{sec:endog}), our results carry over to the variant
    where the players choose their mixed strategies sequentially: the
    filter moves first, and the consumer best-responds. This is
    because our results focus on the socially optimal strategy
    profile (defined in Section \ref{sec:aligned}), which is the same in both variants.
    %\jgdelete{We omit the formal
    %treatment of this model variant to simplify
    %presentation.}% \jgmargincomment{Deleted because it does not carry
    % over into the semi-aligned case.  So we omit this model variant
    % because we are limiting scope.}
    % \asmargincomment{nitpicking: but did we make it clear that we omit it?}
\end{rem}

\begin{rem}\label{rem:model-zero}
One pure strategy for the consumer is to \emph{not} examine the content and  incur no info cost.% by setting $\tilde{\pi}_0=\tilde{\pi}_1$.
% \ie always accept (resp., always ignore) it and incur no information cost (by setting $\tilde{\pi}_0=\tilde{\pi}_1$). Further, the consumer could choose to do so with some non-trivial probability, although we show it is not an equilibrium.
\end{rem}

%\jgcomment{In general, we do not require the filter to commit to a strategy before the consumer acts.  However, when the filter and consumer have fully aligned utilities (as in Section \ref{sec:aligned} and \ref{sec:endog}), the results trivially carry over to the case where the filter commits to a strategy since we focus on the socially optimal joint profile.}

%\asmargincomment{not WLOG, b/c consumer could learn smth re filter's strategy over time.}

%Since each piece of content is ex-ante equivalent due to the Poisson
%assumption, it suffices to constrain the strategies to be mappings
%from agent's signals to an action and there is no loss of generality
%from allowing the strategies to depend on the full history of observed
%content in a batch.  Notationally, the filter and consumer choose mixed action
%strategies $\mixstrfil$ and $\mixstrcon$, respectively.  The consumer
%additionally selects an information strategy $\mu$.
% $\fcon$.

\xhdr{Notation.}
%\asedit{A pure strategy profile can be represented as a triple $(\strfil,\strcon,\mu)$.}
A generic mixed strategy profile is denoted $\mixsigma$. The players' mixed action strategies are, resp., $\mixstrfil$ and $\mixstrcon$.

We label three filter pure strategies:
the \emph{blocking strategy} $\strblk$
which always blocks the content:
    $\strblk(\cdot) \equiv 0$,
the \emph{forwarding strategy} $\strfwd$
which always forwards the content:
    $\strfwd(\cdot) \equiv 1$,
and the \emph{differentiating strategy} $\strdif$
which differentiates between the signals:
    $\strdif(\psi)\equiv \psi$.
We ignore the ``unreasonable  strategy''  in which the filter forwards
``likely malicious'' content and blocks content that is ``likely clean'' as it can never be part of a non-trivial equilibrium (see \UnreasonableStrategyLocation for technical details).  %is
%trivially dominated by $\strdif$ and thus can be ignored.

A strategy profile is called \emph{consumer-optimal} if the consumer best-responds to the filter's strategy.
%We annotate strategy profiles in which the filter plays any of its non-dominated strategy
%\emph{and the consumer best responds to it.}  Specifically,
Let the \emph{blocking profile} $\blkprof$,
the \emph{forwarding profile} $\fwdprof$,
and the \emph{differentiating profile} $\difprof$,
be consumer-optimal strategy profiles in which
%the filter uses a pure strategy $\strfil$ equal to, resp.,
the filter's pure strategy is, resp.,
$\strblk$, $\strfwd$, and $\strdif$.
%(see Table~\ref{tab:profiles}).
%respectively, and the consumer's strategy is a best response.
%\asmargincomment{table is optional, but adds clarity. Like, this is what I want a reader to picture.}

%\begin{table}[h]
%\vspace{-2mm}
\begin{center}
{\small
\begin{tabular}{r|l|l}
    & $\strfil(1)=0$
    & $\strfil(1)=1$
\hlinestrut
$\strfil(0)=0$
    & \emph{blocking profile} $\blkprof$
    & \emph{differentiating profile} $\difprof$
\hlinestrut
$\strfil(0)=1$
    & ($\strfil$ is ``unreasonable'')
    & \emph{forwarding profile} $\fwdprof$
\end{tabular}}
\end{center}
%\vspace{-1mm}
%\caption{Consumer-optimal strategy profiles given $\strfil$.}
%\label{tab:profiles}
%\end{table}

%\vspace{-7mm}
\xhdr{Utilities.}  The consumer's utility per piece of content is the
difference between the \emph{action payoff} $u(\afil\cdot\acon,X)$,
determined by how the actions match the content type, and the
\emph{information cost} for examining the content.
%For action payoff,
We interpret the product $\afil\cdot\acon\in\bin$
as an aggregate action: indeed, the content is accepted
if $\afil\cdot\acon=1$, and ignored otherwise.
%Thus, action payoffs are summarized by a $2\times 2$ table
%(Table~\ref{tab:payoffs}).
The consumer receives a reward when
legitimate content is accepted ($\afil\cdot\acon=X=1$), and penalties
if the content is misclassified ($\afil\cdot\acon\neq X$). We
normalize action payoffs to $0$ if malicious content is ignored
($\afil\cdot\acon=X=0$).
Thus, action payoffs $u(\afil\cdot\acon,X)$ are summarized by a $2\times 2$ table below, with $b,c_1,c_2\geq 0$.
%\begin{table}[h]
\begin{center}
\begin{tabular}{l|c|c}
   & $X=0$  & $X=1$  \\ \hline
$a_{\fil}\cdot a_{\con}=0$  & 0      & $-c_1$ \\ \hline
$a_{\fil} \cdot a_{\con}=1$ & $-c_2$ & $b$    \\
\end{tabular}
\end{center}
%\caption{Action payoffs $u(\afil\cdot\acon,X)$, with $b,c_1,c_2\geq 0$.}
%\label{tab:payoffs}
%\end{table}

%\vspace{-5mm}
The information cost is the cost of obtaining signal $\scon$ about content type $X$.  %Generally speaking, the information cost is proportional to how far the consumer's beliefs shift away from its prior; the larger the shift, the more costly the signal.
It is proportional to how far the consumer's beliefs shift away from its prior, and only accrues when the filter does not block content.
%\asdelete{This is known as the rational-inattention formulation first posed in  \citet{sims1} and has been a well-established baseline for strategic information acquisition in prior work, \eg \cite{martin,bertoli,ravid}. This cost only accrues when the filter does not block content.}
More abstractly, we define the information cost for obtaining some randomized signal $\signal$ about some hidden state $X$ given some event $\mE$, denoted
    $\infoC\sbr{\signal; X \mid \mE}$
and determined by the conditional joint distribution of $(\signal,X)$
given $\mE$.
We adopt the (widely accepted) definition from \citet{sims1}.
%The rational inattention formulation we adopt is then given by:
\begin{align}\label{eq:cost-defn}
 \infoC\sbr{\signal;X \mid \mE}
    = \infoL\cdot I\rbr{\signal;X \mid \mE},
\end{align}
where $I\rbr{\signal;X \mid \mE}\geq 0$ is the mutual information conditional on the event $\mE$ and $\infoL>0$ is a known parameter. Thus, the information cost for examining the content is defined via \eqref{eq:cost-defn} as
    $\infoC\sbr{\scon;X \mid \afil=1}$.
Note that the cost indirectly depends on filter's mixed action strategy since information costs are a function of the consumers prior upon receiving content, which depends on the filter's strategy.
%\footnote{We zero in on this belief in Section~\ref{sec:trivial}, see \refeq{eq:belief-defn}.}.

%\jgedit{In other words, the filter's
%(possibly mixed) strategy impacts the consumer's belief that forwarded
%content is malicious, which is used to calculate $\infoC$.}
%\asmargincomment{is this is what you meant?}
%\jgmargincomment{What I meant was that the consumer's belief prior to
%  incuring information costs is used to calculate $\infoC$.  Does the
%  new text I added onto yours make that clear?}
% \asdelete{An important feature of the model is that
% the filter's decision impacts not only the outcome
% but also information costs through its impact on the consumer's prior belief on $X$.}

% Some of our results carry over to a general model of information costs
% which only makes concavity/convexity assumptions, which we address in
% appendix section~\ref{sec:gen}.

%\nicomment{this paragraph should be swapped with the one before it imo.}
\begin{prewebconf}
Since
    $I(\signal;X) = H(X)-H(\signal|X)$,
the information cost \eqref{eq:cost-defn} is proportional to the
expected reduction in consumer's uncertainty about $X$, where the uncertainty is expressed via entropy $H$. The information cost is $0$ if and only if the signal is \emph{uninformative}, in the sense that
    $\Pr\sbr{X=\cdot\mid \signal,\mE}$
does not depend on $\signal$. The information cost
    $\infoC\sbr{\scon;X \mid \afil=1}$
depends not only on the conditional distribution of $\scon$ chosen by the consumer, but also on the filter's action strategy
$\mixstrfil$, because $\mixstrfil$ impacts the posterior distribution
given $\afil=1$.
\end{prewebconf}
%$\strfil$, because $\strfil$ impacts the posterior distribution given $\afil=1$. \nicomment{might be helpful to calculate the expected utility, or at least cost, for some choice of filter accuracy/strategy and user attention}

The consumer's expected payoff per a random piece of content $X$ under mixed strategy profile $\mixsigma$ is therefore
%Combining the consumer's action payoff and information cost and taking
%expectation yields the consumer's expected payoff per piece of
%content under  strategy profile $\mixsigma$ as
%$\sigma = (\strfil,\strcon,\fcon)$ is
\begin{align*}
  \vcon(\mixsigma) = \E\sbr{ u(\acon\cdot\afil,X) - \afil\cdot \infoC\sbr{\scon;X \mid \afil=1}},
\end{align*}
where the expectation is over $X,\sfil,\scon,\mixsigma$.
As a shorthand, let
    $u(\mixsigma) = \E\sbr{ u(\acon\cdot\afil,X)}$
and
   $\infoC(\mixsigma) = \infoC\sbr{\scon;X \mid \afil=1}$
be the corresponding expected action payoff and information cost.

The consumer's total expected utility over the batch is
%and it's total expected utility with the expectation taken over the batch size is
\begin{align}
  \Vcon(\mixsigma)
    =(1+\rho_0)\;\vcon(\mixsigma)
    = \vcon(\mixsigma) / (1-q).
\end{align}
where $1+\rho_0$ represents the expected batch size.

    %\nicomment{i might have preferred swapping $V$ and $u$ and saying the expected utility, now represented by $u$, is the value $V$ minus the cost. but probably not worth changing notation right now.}

To define the filter's utility, we consider two variants.
The main variant (\emph{aligned utilities}) is that the filter's utility equals the consumer's. We also consider another variant (\emph{semi-aligned utilities}) when the filter internalizes the action costs but not the information costs.
%We posit that the filter's utility is largely aligned with the consumer's,
%where the only difference is whether the filter also suffers the
%consumer's information costs.  Specifically, we consider two variants:
%\textbf{\textit{1) Aligned utilities}}: The two utilities are exactly
%equal and thus the filter internalizes the consumer's information costs.
%\textbf{\textit{2) Semi-aligned utilities}} The filter's utility includes the action
%  payoff $u(\acon,\afil,X)$, but not the information cost.
Let $\Vfil(\mixsigma)$ be filter's total expected utility under profile $\mixsigma$.
Then
  $\Vfil(\mixsigma) = \Vcon(\mixsigma)$ for aligned utilities, and
  $\Vfil(\mixsigma) = u(\mixsigma)/(1-q)$ for semi-aligned utilities.
%For a \emph{mixed} strategy profile $\mixsigma$, we use the same notation:
%    $V_i(\mixsigma) := \E_{\sigma\sim\mixsigma} V_i(\sigma)$ for $i\in\{\fil,\con\}$.

\begin{prewebconf}
 \xhdr{Solution concept.}  Our equilibrium concept is perfect
  Bayesian equilibria (PBE).  A strategy profile $\mixsigma$ together
  with a beliefs $q_{\fil}, q_\con$  about the content type $X$ is
  a PBE of our game if: a) player $i$'s strategy is
  utility-maximizing given beliefs $q_i$, b) player $i$'s belief
  $q_i$ is consistent with a Bayesian update given the prior
  distribution $q$ and observed signals.

  Of course in equilibrium, the consumer and filter will always be
  best responding.  However, throughout the analysis it is helpful to
  fix the filter's strategy and consider the profile in which the
  consumer best responds to the filter's strategy \emph{without}
  supposing optimality of the filter.  We call such a strategy profile
  a \emph{consumer-optimal strategy profile}.
\end{prewebconf}

\begin{prewebconf}
\jgcomment{Mark to cut}
\xhdr{Model parameters.} Another useful distinction we will reference
is that  our model is specified by two different sets of  parameters: The
\emph{payoff parameters} $b,c_1,c_2,\infoL$ which characterize action
payoffs and information costs, and \emph{distribution parameters}
$q$, $\pi_0$,  $\pi_1$ which characterize the
distribution of $X$ and the joint distribution of
$(X,\sfil)$, respectively.
\end{prewebconf}

%\xhdr{Equilibrium (Marginal) Value of Technological Change.}
\xhdr{Value of Technological Change.}  We are particularly interested
in how improving the technology impacts equilibrium outcomes.
Specifically, we consider improving the quality of the filter, in
terms of raising $\pi_0$ and/or lowering $\pi_1$.\footnote{Filter's quality takes two numbers to describe.}
We adopt Perfect Bayesian Equilibrium (PBE)
%\cite{mwg} AS: no need to cite anything for PBE!
as a solution concept \cite{mwg}.

%  Specifically, we ask does component-wise raising
% $\pi_0$ ---the true positive rate--and lowering $\pi_1$ --- the false
% positive rate--- improve equilibrium outcomes?

For concreteness, fix some equilibrium selection rule, $f$, \cite{esel} and filter quality parameters, $\pi_0$ and $\pi_1$.  For each
player $i \in \cbr{\fil,\con}$, let $V^f_i(\pi_0,\pi_1)$ be $i$'s
equilibrium payoff under this rule.  We are interested in
the difference in equilibrium payoffs between a high- and
  low-quality filter:
\begin{align}\label{eq:vot-def}
V^f_i(\pi'_0,\pi'_1) - V^f_i(\pi_0,\pi_1):
    \quad i \in \cbr{\fil,\con},
\end{align}
where $\pi'_0\geq\pi_0$ and $\pi'_1\leq\pi_1$. We call \eqref{eq:vot-def}
the \emph{value of technological change} (\vot).
We say that \vot is positive (resp., negative) if \refeq{eq:vot-def} is that way for both players, \ie if improving the filter Pareto-increases (resp., Pareto-decreases) equilibrium payoffs.

Consider \vot under infinitesimal filter improvement:
\begin{align}\label{eq:mvot-def}
 \frac{\partial}{\partial \pi_0} V_i^f(\pi_0,\pi_1),\;
     -\frac{\partial}{\partial \pi_1} V_i^f(\pi_0,\pi_1):
    \; i \in \cbr{\fil,\con},
\end{align}
assuming the partial derivatives in \eqref{eq:mvot-def} are well-defined.
We call \eqref{eq:mvot-def} the \emph{Marginal Value of Technology} (\mvot).
% When \eqref{eq:mvot-def} is a zero vector, we say that \mvot is zero.
% When each component of \eqref{eq:mvot-def} is not zero, but does not depend on $(\pi_0,\pi_1)$, we say that \mvot is a constant and \vot is linear.
% Finally, when each component of \eqref{eq:mvot-def} depends on both $\pi_0$ and $\pi_1$,
% we say that \vot is non-linear and \mvot is non-constant.
The \mvot specifies how much a rational filter would pay to improve its quality. A zero (resp., negative) \mvot means the filter would not pay anything (resp., would have to \emph{be paid}).

\section{Consumer Beliefs}
\label{sec:trivial}

% In this section, we
This section  presents a preliminary analysis of consumer behavior,
%These initial results apply regardless of whether the filter internalizes the consumer's information costs.
%This analysis
which applies to both aligned and semi-aligned utilities,
and serves as scaffolding for what follows.
%the subsequent developments.
%paying particular attention to the importance of

An important quantity is the consumer's belief that the forwarded
content is malicious,
%upon receiving forwarded content under filter strategy $\mixstrfil$,
given that the filter's mixed action strategy is $\mixstrfil$.  We define
this quantity as:
\begin{align}\label{eq:belief-defn}
\qfunc(\mixstrfil) := \Pr\sbr{X=0\mid \mixstrfil,\,\afil = 1}.
%\qfunc(\strfil) := \Pr\sbr{X=0\mid \afil = 1, \strfil}
% = \frac{(1-\pi_0)q}{(1-\pi_0)q + (1-\pi_1)(1-q)}
% = \Pr\sbr{X=1\mid\strfil = \strdif, \afil=1}
%\qquad\EqComment{effective belief}.
\end{align}
%be the consumer's Bayesian-Rational belief that the content is
%malicious given that it wasn't blocked under filter
%  strategy $\mixstrfil$.  In the special case that the filter
%forwards all content ($\mixstrfil=\strfwd$), the belief simplifies
%A more interesting case occurs when the filter applies the
%differentiating strategy.  Therefore a particularly

\noindent Note that $\qfunc(\strfwd)$ is simply $q:=\Pr[X=0]$.

The following lemma shows that the consumer's behavior is uniquely
determined by $q(\mixstrfil)$:
%The importance of the consumer's belief is highlighted in the following
%lemma that formalizes the unique mapping form belief to the consumer's
%optimal strategy.

\begin{lemma}\label{lem:belief-BR}
Given any filter  mixed strategy $\mixstrfil\neq \strblk$
the consumer's best response to $\mixstrfil$ is determined by $\qfunc(\mixstrfil)$.
%More formally, the consumer's strategy that best-responds
%to $\mixstrfil$ is the same for all $\mixstrfil$ with a
%given $\qfunc(\mixstrfil)$.
\end{lemma}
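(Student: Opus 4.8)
The plan is to show that the consumer's optimization problem, after conditioning on the event $\afil=1$, depends on the filter's strategy $\mixstrfil$ only through the scalar $\qfunc(\mixstrfil)$; since a best response is an argmax of that problem, it too can depend on $\mixstrfil$ only through $\qfunc(\mixstrfil)$.

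First I would isolate the part of the payoff that the consumer actually controls. The consumer's choices $(\strcon,\mu)$ affect $\vcon(\mixsigma)$ only on the forwarded branch $\afil=1$: when $\afil=0$ the aggregate action is $\afil\cdot\acon=0$ regardless of $\acon$, and no information cost is incurred. I would therefore condition on $\afil$, writing the $\afil=0$ contribution as a constant independent of the consumer's strategy, and use $\mixstrfil\neq\strblk$ to conclude $\Pr\sbr{\afil=1\mid\mixstrfil}>0$. Maximizing $\vcon$ over $(\strcon,\mu)$ is then equivalent to maximizing the conditional objective $\E\sbr{u(\acon,X)\mid\afil=1}-\infoC\sbr{\scon;X\mid\afil=1}$, since the omitted $\afil=0$ term is an additive constant and the common weight $\Pr\sbr{\afil=1\mid\mixstrfil}$ is a positive multiplicative constant (here $u(\acon,X)=u(\afil\cdot\acon,X)$ because $\afil=1$ on this event).

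The core step is to argue this conditional objective is a function of $\qfunc(\mixstrfil)$, $\mu$, and $\strcon$ alone. The key observation is a conditional independence: the consumer draws $\scon$ from the fixed kernel $\mu$ given $X$, independently of the filter's signal and action, so $\Pr\sbr{\scon=s\mid X=x,\,\afil=1}=\Pr\sbr{\scon=s\mid X=x}$ is given by $\mu$ and does not depend on $\mixstrfil$. Hence the conditional joint law of $(\scon,X)$ given $\afil=1$ factors as the marginal of $X$ given $\afil=1$ --- which is exactly $(\qfunc(\mixstrfil),\,1-\qfunc(\mixstrfil))$ by \eqref{eq:belief-defn} --- times the fixed kernel $\mu$. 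Both the expected action payoff $\E\sbr{u(\acon,X)\mid\afil=1}=\sum_{s,x}u(\strcon(s),x)\,\Pr\sbr{\scon=s,\,X=x\mid\afil=1}$ and the mutual information $I\rbr{\scon;X\mid\afil=1}$ appearing in \eqref{eq:cost-defn} are determined by this conditional joint law, hence by $\qfunc(\mixstrfil)$, $\mu$, $\strcon$ only. Taking the argmax over $(\mu,\strcon)$ then yields a best-response correspondence that depends on $\mixstrfil$ only through $\qfunc(\mixstrfil)$.

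The main obstacle is the conditional-independence claim: one must verify that the conditioning on $\afil=1$ does not re-weight the kernel $\mu$, i.e.\ that $\scon$ is generated from $X$ independently of the filter's behavior. This is precisely what the model's generative assumption on $\scon$ supplies, and it is also what makes the hypothesis $\mixstrfil\neq\strblk$ necessary, so that $\Pr\sbr{\afil=1}>0$ and both the conditional belief $\qfunc(\mixstrfil)$ and the conditional objective are well-defined.
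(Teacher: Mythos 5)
Your proof is correct, and its first move coincides with the paper's: both condition on the event $\afil=1$, note that the $\afil=0$ branch contributes a constant independent of the consumer's choices, and use $\mixstrfil\neq\strblk$ so that this event has positive probability. Where you diverge is in how the conditional problem is handled. The paper (Lemma~\ref{lem:cbr}) identifies it as a discrete-choice problem under rational inattention with prior $\qfunc(\mixstrfil)$ and imports the explicit closed-form best response $(\tilde{\pi}_0^*,\tilde{\pi}_1^*)$, together with its uniqueness, from \cite{mm}; Lemma~\ref{lem:belief-BR} then follows in one line, since $\mixstrfil$ enters those formulas only through $\qfunc(\mixstrfil)$. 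You instead argue structurally: conditional on $X$, the signal $\scon$ is drawn from the kernel $\mu$ independently of the filter's signal and action, so the joint law of $(\scon,X)$ given $\afil=1$ factors as the marginal $\rbr{\qfunc(\mixstrfil),\,1-\qfunc(\mixstrfil)}$ times $\mu$, and both the expected action payoff and the mutual-information cost are functionals of that law; hence the argmax depends on $\mixstrfil$ only through $\qfunc(\mixstrfil)$. Your route is more elementary and more general --- it needs no closed form and would work for any cost that is a functional of the conditional joint law, in the spirit of the generalization in Appendix~\ref{sec:gen} --- but it establishes only that the best-response \emph{correspondence} is determined by $\qfunc(\mixstrfil)$; the paper's route additionally delivers uniqueness of the best response and the explicit formulas, which are what Proposition~\ref{prop:goalposts} and the downstream payoff computations (e.g.\ Lemma~\ref{lem:terms}) actually consume.
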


% As stated, this quantity depends only on the joint distribution of
% $(X,\sfil)$ and reflects the quality of the filter's signal. Moreover,
% it is the consumer's \emph{belief} about $X$ if the consumer receives
% the content from the filter, in the paradigmatic case when the filter
% uses the differentiating policy ($\strfil=\strdif$).

%Given that the consumer's belief that unblocked content is malicious
%determines its optimal strategy, we can now show that for the consumer
%to be willing to inspect content and incur information costs, those
%beliefs must be in a certain range.  We establish that with the
%following definition and proposition:

%\asmargincomment{brought back a sentence from ec subm}

%There is no restriction that the consumer \emph{must} examine the content.  Instead, the consumer can chose to accept (resp., ignore) all content
As per Remark~\ref{rem:model-zero},
the consumer can choose to \emph{not} examine the content
and incur no information costs.
%The following definition and proposition
Below we establish a regime where that is indeed optimal.  Define:
%\asdelete{Formally, define:}
\begin{align}
% beliefs $\qL$ and $\qH$, defined as follows:\begin{align}\label{eq:HL-defn}
1 &>\qH := \frac{\exp(b/\infoL) -\exp(-c_1/\infoL)}
        {\exp(b/\infoL)-\exp(-(c_1+c_2)/\infoL)} \nonumber \\
  &>\qL := \qH\cdot \exp(-c_2/\infoL) >0. \label{eq:HL-defn}
\end{align}
%\asdelete{then}
\begin{proposition}\label{prop:goalposts}
Let $\mixsigma$ be a consumer-optimal mixed strategy profile with
filter's mixed action strategy $\mixstrfil \neq \strblk$. Then  $\infoC(\mixsigma)=0$ if and only
if $\qfunc(\mixstrfil)\not\in (\qL,\qH)$.  Furthermore, if
$\qfunc(\mixstrfil) \leq \qL$ the consumer's optimal strategy is to
accept all content.  If $\qfunc(\mixstrfil) > \qH$ the consumer's
optimal strategy is to ignore all content.
\end{proposition}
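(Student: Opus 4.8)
The plan is to fix the filter's strategy, reduce the consumer's problem to a standard binary-state, binary-action rational-inattention problem parametrized only by the belief $\qhat := \qfunc(\mixstrfil)$, and then characterize when its optimum is a no-information (deterministic) action. By Lemma~\ref{lem:belief-BR} the consumer's behavior depends on $\mixstrfil$ only through $\qhat$, so I would work with $\qhat\in[0,1]$ directly. Since the examination signal $\scon$ and the action $\acon$ are both binary and, by the data-processing inequality, $I(\acon;X\mid\afil{=}1)\le I(\scon;X\mid\afil{=}1)$ with equality attainable by letting the signal recommend an action, it is without loss to let the consumer choose a stochastic choice rule $p_x:=\Pr[\acon{=}1\mid X{=}x]$, $x\in\bin$, and pay $\infoL\, I(\acon;X\mid\afil{=}1)$. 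Reading the action payoffs off the utility table (with $\afil{=}1$, so the aggregate action equals $\acon$), the consumer maximizes
\[
\Phi(p_0,p_1)= -\qhat\,c_2\,p_0 + (1-\qhat)\big((b+c_1)p_1-c_1\big)-\infoL\, I(\acon;X)
\]
over $(p_0,p_1)\in[0,1]^2$, where $I(\acon;X)=H(\qhat p_0+(1-\qhat)p_1)-\qhat H(p_0)-(1-\qhat)H(p_1)$ and $H$ is binary entropy in nats.

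Next I would use that $\Phi$ is concave (mutual information is convex in the channel for a fixed input distribution), so the KKT conditions are necessary and sufficient and it suffices to determine which of the two deterministic corners or the interior point is optimal. Differentiating yields the standard softmax/fixed-point form $p_x\propto P_a\exp(v(a,x)/\infoL)$ with self-consistent action marginals $P_a=\Pr[\acon{=}a]$; specializing it, the ``accept-all'' corner $p_0{=}p_1{=}1$ is optimal exactly when the unchosen action does not raise the objective, which reduces to
\[
\qhat\,\exp(c_2/\infoL)+(1-\qhat)\exp(-(b+c_1)/\infoL)\le 1,
\]
and symmetrically ``ignore-all'' $p_0{=}p_1{=}0$ is optimal exactly when
\[
\qhat\,\exp(-c_2/\infoL)+(1-\qhat)\exp((b+c_1)/\infoL)\le 1.
\]

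The third step is algebra. Multiplying the first inequality through by $\exp(-c_2/\infoL)$ and solving for $\qhat$ gives $\qhat\le\qL$, and rearranging the second gives $\qhat\ge\qH$, with $\qL,\qH$ exactly as in \eqref{eq:HL-defn}. Both corners take a deterministic action, so $I(\acon;X)=0$ and hence $\infoC(\mixsigma)=0$; this yields the ``if'' direction together with the two ``furthermore'' claims (accept-all when $\qhat\le\qL$, ignore-all when $\qhat>\qH$). For the converse, if $\qhat\in(\qL,\qH)$ then neither corner condition holds, so by concavity the optimum is interior with $0<\Pr[\acon{=}1]<1$; because $v(1,x)-v(0,x)$ equals $-c_2$ when $x{=}0$ but $b+c_1$ when $x{=}1$, the softmax rule makes $p_x$ genuinely depend on $x$ (whenever $b+c_1+c_2>0$, else the decision problem is degenerate), so $\acon$ is correlated with $X$ and $I(\acon;X)>0$, i.e.\ $\infoC(\mixsigma)>0$.

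The main obstacle I anticipate is the second step: justifying the corner-optimality inequalities. The subtlety is that the action marginal $P_a$ enters the mutual-information term, so the gradient of $\Phi$ is not linear and the naive ``compare the expected payoffs of the two actions'' argument is insufficient. The clean route is the fixed-point characterization of the optimal choice probabilities; alternatively, and more self-containedly, I would compute the one-sided directional derivative of $\Phi$ at each corner along the single feasible inward direction and show it is nonpositive precisely under the stated inequality, which by concavity certifies global optimality. The remaining threshold algebra is routine but must be carried out carefully to land on the exact expressions for $\qL$ and $\qH$.
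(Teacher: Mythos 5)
Your primary route is correct and is essentially the paper's own proof: the paper's Lemma~\ref{lem:cbr} reduces the consumer's problem, conditional on $\afil=1$, to a binary discrete-choice rational-inattention problem with prior $\qhat=\qfunc(\mixstrfil)$ and cites the closed-form solution and uniqueness of \cite{mm}; Proposition~\ref{prop:goalposts} then follows from whether the interior solution is clipped at $0$ or $1$. Your reduction (signals as action recommendations via data processing), your Matejka--McKay ``unchosen action'' inequalities, and your threshold algebra all match: the accept-all condition $\qhat\,e^{c_2/\infoL}+(1-\qhat)e^{-(b+c_1)/\infoL}\le 1$ rearranges exactly to $\qhat\le\qL$, and the ignore-all condition to $\qhat\ge\qH$. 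Your version is somewhat more self-contained in that it re-derives rather than cites the corner characterization, but the logical skeleton is the same.

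However, the fallback you propose for the step you (rightly) flag as the main obstacle is genuinely flawed as stated. At the vertex $(p_0,p_1)=(1,1)$ of $[0,1]^2$ there is not ``a single feasible inward direction'' but a two-dimensional cone of them, and for this concave objective it does \emph{not} suffice to check the two extreme rays: directional derivatives of concave functions are superadditive, so nonpositivity along $(-1,0)$ and $(0,-1)$ does not certify nonpositivity along mixtures. Concretely, the one-sided derivative of $\Phi$ at $(1,1)$ along $(-\alpha,-\beta)$ is
\begin{align*}
\qhat c_2\alpha-(1-\qhat)(b+c_1)\beta-\infoL\bigl[\qhat\alpha\ln\alpha+(1-\qhat)\beta\ln\beta-\bar\gamma\ln\bar\gamma\bigr],
\qquad \bar\gamma=\qhat\alpha+(1-\qhat)\beta,
\end{align*}
and along the coordinate ray $(\alpha,\beta)=(1,0)$ this is nonpositive iff $\qhat\le e^{-c_2/\infoL}$, which is strictly weaker than $\qhat\le\qL=\qH e^{-c_2/\infoL}$. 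For example, with $\infoL=b=c_1=c_2=1$ one has $\qL\approx 0.335<e^{-1}\approx 0.368$; at $\qhat=0.36$ both coordinate-ray derivatives are negative, yet the direction $(\alpha,\beta)=(0.9,0.1)$ gives a strictly positive derivative, so the accept-all corner is not optimal even though your proposed check would declare it optimal. The binding deviation is an interior direction of the cone (a signal informative about both states), so a self-contained argument must maximize the directional derivative over the whole cone --- which is essentially equivalent to proving the Matejka--McKay unchosen-action lemma. In short: keep the fixed-point/citation route (as the paper does), or carry out the full cone maximization; the extreme-ray shortcut fails.
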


%Proposition \ref{prop:goalposts} says that for a given filter
%strategy, if the probability that an unblocked piece of content is
%malicious is too high, the consumer will ignore the content without
%incuring information costs.  On the other hand, if it is too low the
%consumer will accept it without incur information costs.

In words, if unblocked content is too likely to be malicious (resp.,
legitimate) for a given $\mixstrfil$, the consumer's
best-response is to ignore (resp., accept) it without examination.

% \ascomment{I don't mind inlining the next para (but as a separate para).
% I wanted to make it a remark to make it really stand out visually,
% in case a reader complains about having too many parameters.}

\begin{rem}\label{rem:params}
The quantities $\qfunc(\mixstrfil)$, $\qH$, $\qL$
%in \eqref{eq:belief-defn} and \eqref{eq:HL-defn}
are meaningful as per Proposition~\ref{prop:goalposts}.
They usefully encapsulate the numerous parameters in our model, and
are essential in our subsequent results.
Note that
\eqref{eq:belief-defn} is determined by the joint distribution of $X$
and the filter's signal $\sfil$, whereas \eqref{eq:HL-defn} is
determined by all parameters related to the costs.
\end{rem}

We now derive the \mvot under some consumer-optimal profiles in
some parameter regimes. A key quantity here is
%\asmargincomment{$\qdif$ is not needed before.}
%the belief under strategy $\strdif$:
\begin{align}\label{eq:eff-defn}
\qdif := \qfunc(\strdif)
      =\Pr\sbr{X=0\mid \strdif, \sfil = 1}<q,
% = \frac{(1-\pi_0)q}{(1-\pi_0)q + (1-\pi_1)(1-q)}
% = \Pr\sbr{X=1\mid\strfil = \strdif, \afil=1}
%\qquad\EqComment{differentiating belief}. % Removed.  No longer referenced
\end{align}
where the inequality follows because $\pi_0\geq\pi_1$.

\begin{proposition}
  \label{cor:scaff-VoT}
For $i\in\cbr{\fil,\con}$ and $x\in\bin$:
\begin{itemize}

\item[(a)] \emthm{Zero  \mvot.}
$ \frac{\partial}{\partial\pi_x} V_i(\fwdprof)= \frac{\partial}{\partial\pi_x} V_i(\blkprof)= 0$.\\
If $\qdif>\qH$ then
    $ \frac{\partial}{\partial\pi_x} V_i(\difprof)= 0$.

\item[(b)] \emthm{Constant \mvot.}
If $\qdif<\qL$ then
\begin{align*}
\frac{\partial}{\partial\pi_0} V_i(\difprof)&= \frac{q}{1-q}c_2 >0 \\
\frac{\partial}{\partial\pi_1} V_i(\difprof) &= -(c_1+b)<0.
\end{align*}
%\item \emthm{Zero marginal VoT.}
%$ \frac{\partial}{\partial\pi_x} V_i(\fwdprof)= \frac{\partial}{\partial\pi_x} V_i(\blkprof)= 0$
%when $\sigma\in\cbr{\fwdprof,\blkprof}$ is either the forwarding profile or the blocking profile.

%$\frac{\partial V(\sigma^{\prime})}{\partial \pi_0} =
%      \frac{\partial V_i(\sigma^{\prime})}{\partial \pi_1}=0$ for any
%      $\sigma^{\prime}$ where the filter either blocks all content or
%      forwards all content and the consumer responds optimally.

\end{itemize}
\end{proposition}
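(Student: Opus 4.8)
The plan is to reduce each payoff $V_i$ to an explicit function of $(\pi_0,\pi_1)$ by invoking Proposition~\ref{prop:goalposts} to pin down the consumer's qualitative best response, and then to differentiate. The recurring observation is that in every regime named in the statement the information cost vanishes, so $\infoC(\mixsigma)=0$ and hence $\Vfil(\mixsigma)=\Vcon(\mixsigma)=u(\mixsigma)/(1-q)$ under \emph{both} aligned and semi-aligned utilities; this is why it suffices to treat a single $V_i$ and why each conclusion holds for both players at once. I would dispose of the two easy profiles first. Under $\fwdprof$ the filter ignores its signal ($\afil\equiv 1$), so the forwarding belief is $\qfunc(\strfwd)=q$; by Lemma~\ref{lem:belief-BR} the whole consumer best response, the action payoff, and the information cost are then functions of $q$ and the cost parameters alone, with no dependence on $(\pi_0,\pi_1)$, so the derivatives vanish. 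Under $\blkprof$ nothing is forwarded, so $\afil\cdot\acon\equiv 0$, the $\afil$ factor kills the information-cost term, and $V_i(\blkprof)=\E\sbr{u(0,X)}/(1-q)=-c_1$, again independent of $(\pi_0,\pi_1)$.

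For the differentiating profile I would split on $\qdif$. If $\qdif>\qH$, Proposition~\ref{prop:goalposts} gives that the consumer ignores all forwarded content, so $\afil\cdot\acon\equiv 0$ exactly as under $\blkprof$, whence $V_i(\difprof)=-c_1$ at zero information cost; this finishes part~(a). If instead $\qdif<\qL$, the consumer accepts all forwarded content at zero information cost. Since $\strdif$ forwards precisely when $\sfil=1$, the aggregate action $\afil\cdot\acon$ equals $1$ exactly on the event $\cbr{\sfil=1}$, and enumerating the four $(X,\sfil)$ outcomes gives
\begin{align*}
u(\difprof)= -\,q(1-\pi_0)\,c_2 + (1-q)(1-\pi_1)\,b - (1-q)\,\pi_1\,c_1 .
\end{align*}
Dividing by $1-q$ and differentiating termwise yields $\frac{\partial}{\partial\pi_0}V_i(\difprof)=\frac{q}{1-q}c_2$ and $\frac{\partial}{\partial\pi_1}V_i(\difprof)=-(c_1+b)$, the claimed formulas; the sign statements follow from $b,c_1,c_2\ge 0$ and $0\le q<1$.

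The one point that needs care — and the only real obstacle — is justifying that these derivatives are well defined, i.e.\ that the consumer's \emph{qualitative} best response (``accept all'' when $\qdif<\qL$, ``ignore all'' when $\qdif>\qH$) is locally constant as $(\pi_0,\pi_1)$ vary, so that the closed forms above hold on an open neighborhood rather than at a single point. This is where I would be careful: the thresholds $\qL,\qH$ in \eqref{eq:HL-defn} depend only on the cost parameters $b,c_1,c_2,\infoL$ and not on $(\pi_0,\pi_1)$, whereas $\qdif=\frac{(1-\pi_0)q}{(1-\pi_0)q+(1-\pi_1)(1-q)}$ is a smooth function of $(\pi_0,\pi_1)$ on the region where content is forwarded with positive probability. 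Hence the strict inequalities $\qdif<\qL$ and $\qdif>\qH$ are open conditions, preserved under small perturbations of $(\pi_0,\pi_1)$; the payoff therefore coincides with the stated affine function of $(\pi_0,\pi_1)$ throughout a neighborhood, and the partial derivatives are exactly as computed. Everything else is a termwise derivative of a function that is linear in each of $\pi_0,\pi_1$, so no further computation is delicate.
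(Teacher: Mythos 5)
Your proof is correct and follows essentially the same route as the paper: invoke Proposition~\ref{prop:goalposts} to pin down the consumer's behavior in each regime, write the resulting payoff explicitly, and observe that it is either independent of $(\pi_0,\pi_1)$ (parts with zero \mvot) or affine in them (part (b)), exactly as in the paper's proof, with your open-neighborhood remark being a small added rigor the paper leaves implicit. One harmless caveat: your opening claim that the information cost vanishes in \emph{every} named regime is false for $\fwdprof$ when $q\in(\qL,\qH)$, but this does not create a gap because your actual argument for $\fwdprof$ --- that the best response, action payoff, and information cost depend on $(\pi_0,\pi_1)$ only through $\qfunc(\strfwd)=q$, which is constant in them --- is correct and is precisely the paper's argument.
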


%Proposition \ref{cor:scaff-VoT} establishes the trivial result that
% if the filter doesn't condition its action on its signal,
In words, there is no benefit to improving the filter
if the filter's action does not depend on its signal, or
the consumer's best response is simply to ignore all content.
%there are no benefits from changing the filter's quality.
%Furthermore, if the consumer's best response is simply to ignore all content,
%then there are still no benefits to increasing the filter's quality.
On the other hand, if the consumer  accepts the filter's
recommendation, then \mvot is constant.
%the benefits to the players are linear in the parameters.
To fully characterize the \mvot, 
subsequent analysis will
%\asedit{we'll} 
focus on deriving the \mvot when $\qL<q(\mixsigma)<\qH$ \emph{and}
establishing which profile constitutes an equilibrium.

\section{Aligned utilities ($\Vfil= \Vcon$)}
\label{sec:aligned}

In this section, we consider \emph{aligned utilities}.  Let
$V:= \Vfil = \Vcon$.  We focus on socially optimal profiles (ones that
maximize $V$), noting that any such profile is an equilibrium. Let
$\Vopt = V^f_i(\pi_0,\pi_1)$, where $f$ chooses the equilibrium that maximizes $V$ among all equilibria.  Our first
result is that $\Vopt$ has a simple characterization in terms of two
pure profiles defined in Section~\ref{sec:model},
%as defined in Table~\ref{tab:profiles}, 
the differentiating profile $\difprof$ and forwarding profile $\fwdprof$:%
%\asmargincomment{there's enough space, let's do things in order}

\begin{proposition}\label{prop:aligned-best-eq}
$\Vopt = \max\rbr{V(\difprof),\, V(\fwdprof)}$.
\end{proposition}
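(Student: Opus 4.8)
The plan is to recast social welfare maximization as an optimization over the filter's mixed action strategy $\mixstrfil$ and then exploit convexity. First I would reduce to consumer-optimal profiles: since utilities are aligned ($V=\Vcon$), any socially optimal profile must have the consumer best-responding to $\mixstrfil$, because replacing the consumer's strategy by a best response can only raise $\Vcon=V$. So it suffices to maximize $V$ over $\mixstrfil$ with the consumer best-responding, which by Lemma~\ref{lem:belief-BR} depends on $\mixstrfil$ only through the induced belief $\qfunc(\mixstrfil)$.

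Next I would parametrize $\mixstrfil$ by its forwarding probabilities $f_x := \Pr[\afil=1 \mid X=x]$ for $x\in\bin$. As $\mixstrfil$ ranges over all mixtures of the four pure action strategies, $(f_0,f_1)$ ranges over the convex hull of the four points $(0,0)$ (for $\strblk$), $(1,1)$ (for $\strfwd$), $(1-\pi_0,1-\pi_1)$ (for $\strdif$), and $(\pi_0,\pi_1)$ (the unreasonable strategy), since each $f_x$ is linear in the mixing weights. Writing $m = qf_0+(1-q)f_1 = \Pr[\afil=1]$ and noting $\qfunc(\mixstrfil)=qf_0/m$, a direct decomposition of $\Vcon$ into blocked and forwarded content gives, at the consumer's best response,
\begin{align*}
V \;=\; c_1\,(f_1-1) \;+\; \frac{m}{1-q}\,\psi\!\left(\qfunc(\mixstrfil)\right),
\end{align*}
where $\psi(\qfunc(\mixstrfil))$ is the consumer's optimal per-forwarded-piece value (action payoff minus information cost) as a function of the belief that forwarded content is malicious.

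The crux is to show that $V$ is convex in $(f_0,f_1)$. The term $c_1(f_1-1)$ is affine. For the second term, the rational-inattention value function $\psi$ is convex in the prior belief: using the regime characterization of Proposition~\ref{prop:goalposts}, $\psi$ is affine on $\{\qfunc \le \qL\}$ and on $\{\qfunc \ge \qH\}$, and on the examination region $(\qL,\qH)$ it equals $-\infoL\,H(\cdot)$ plus an affine function (with $H$ the binary entropy), which is convex. Then $m\,\psi(qf_0/m)$ is the perspective of the convex function $\psi$ evaluated at the affine images $(qf_0,\,m)$ of $(f_0,f_1)$; the perspective of a convex function is jointly convex, and composing with an affine map preserves convexity. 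Hence $V$ is convex on the whole achievable polytope (continuously extended to the blocking vertex, where $m=0$).

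Finally, a convex function on a polytope attains its maximum at a vertex, so $\Vopt$ is attained at $\strblk$, $\strfwd$, $\strdif$, or the unreasonable strategy. The unreasonable vertex is dominated by \UnreasonableStrategyLocation. The blocking vertex gives $V(\blkprof)=-c_1$, which is at most $V(\difprof)$: under $\strdif$ the consumer may always ignore all forwarded content at zero information cost, guaranteeing payoff $-c_1$, so its best response does at least as well. Therefore $\Vopt = \max(V(\difprof),\,V(\fwdprof))$. The main obstacle is the convexity step: verifying that $\psi$ is convex in the belief (which rests on the entropy cost structure together with Proposition~\ref{prop:goalposts}) and recognizing $m\,\psi(qf_0/m)$ as a perspective function, so that convexity lifts cleanly from $\psi$ to $V$.
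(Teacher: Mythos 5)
Your proof is correct, but it takes a genuinely different route from the paper's. The paper also collapses the filter's mixed strategies into a two-parameter family — a mixture blocking with probabilities $(\gamma_0,\gamma_1)$ at the two signals is rewritten as a differentiating strategy with effective quality $\pi_i' = \pi_i(\gamma_0-\gamma_1)+\gamma_1$ — but it then finishes by contradiction via a \emph{local perturbation}: at any interior $(\gamma_0,\gamma_1)$ one can move so that $\pi_0'$ stays fixed while $\pi_1'$ strictly decreases, and since $V(\difprof)$ is decreasing in $\pi_1$ (the monotonicity behind Proposition~\ref{cor:scaff-VoT} and Theorem~\ref{thm:dvdp}), no interior point can be socially optimal; the belief regimes relative to $(\qL,\qH)$ are handled as separate cases. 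You instead prove \emph{global convexity}: the rational-inattention value $\psi$ is convex in the belief, $m\,\psi(qf_0/m)$ is its perspective composed with an affine map, hence $V$ is convex on the polytope of forwarding probabilities and is maximized at a vertex. Your route buys uniformity (no case analysis over regimes), independence from the derivative formulas, and a convexity fact that generalizes beyond mutual-information costs, since it only needs the consumer's value to be a supremum of functions convex in the prior. The paper's route buys reliance only on monotonicity already established elsewhere, and it sidesteps the one soft spot in your write-up: piecewise convexity of $\psi$ (affine below $\qL$, $-\infoL H + \text{affine}$ on $(\qL,\qH)$, affine above $\qH$) does not by itself give global convexity — you must also check smooth pasting at $\qL$ and $\qH$ (the $\KL{\cdot}{\qL}$ and $\KL{\cdot}{\qH}$ terms in Lemma~\ref{lem:terms} vanish to first order there), or argue directly that for each fixed information strategy the objective is linear-payoff minus $\infoL$ times mutual information, which is concave in the prior, so $\psi$ is a maximum of convex functions. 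With that one-line patch, your argument is complete; like the paper, you still correctly dispose of the blocking vertex via $V(\blkprof)=-c_1\le V(\difprof)$ and of the unreasonable vertex via \UnreasonableStrategyLocation combined with convexity along the blocking–forwarding edge.
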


%Furthermore, while it is trivial to show
%that the $\max(\difprof, \fwdprof)$ is
While it is straightforward to algebraically demonstrate which of these two profiles are the best among the \emph{pure} strategy profiles, it is more difficult to prove that
%we also show that
there is no benefit from the filter using a mixed strategy.
Indeed, in our game the payoffs at a mixed equilibrium are not (necessarily) linear in the mixing probabilities, because the latter enter non-linearly in the information costs. Consequently, it is no longer trivially guaranteed that some pure strategy profile is socially optimal.

%In a typical game of complete alignment, the payoffs at mixed equilibrium are linear in the mixing probabilities, and therefore some pure strategy profile is socially optimal. In contrast, in our game the equilibria payoffs are \emph{not} linear in the mixing probabilities, because the latter enter non-linearly in the information costs.

%This result is not trivial as it typically is in games of complete alignment,
% In such games of full alignment, it is well known that
%there exists at least one pure strategy profile that is socially
%optimal since payoffs at any mixed equilibrium
%are just a convex combination of payoffs at the possible pure profiles that occur with
%positive probability under such equilibrium.  In models with aligned
%incentives and rational inattention, it is not guaranteed that there
%exists a socially optimal profile in pure strategies because the
%filter's mixing probabilities enter payoffs non-linearly though their
%impact on $\infoC\sbr{\scon;X \mid \afil=1}$.  Nevertheless, we do
%demonstrate that there exists a socially optimal pure strategy
%profile.

The main result here 
%in this section
fully characterizes the marginal value of technological change (\mvot)
in terms of $\Vopt$.

\begin{theorem}\label{thm:dvdp}
%Let $\Vopt = \Vopt(\pi_0,\pi_1)$
%be the optimal utility.
  %$\Vopt = \max\rbr{V(\difprof),\, V(\fwdprof)}$.
  $\;$
\begin{itemize}

\item[(a)] \emthm{Zero \mvot.}
Suppose
      $\qdif>\qH$ or $V(\difprof)<V(\fwdprof)$.\\
Then
$\partial \Vopt / \partial \pi_0 = \partial \Vopt / \partial \pi_1 = 0$.

\item[(b)] \emthm{Constant \mvot.}
If $\qdif<\qL$ and $V(\difprof)>V(\fwdprof)$, 
\begin{align*}
\frac{\partial{\Vopt}}{\partial \pi_0} = \frac{q}{1-q}c_2>0
\quad\text{and}\quad 
\frac{\partial{\Vopt}}{\partial \pi_1} = -(c_1+b)<0.
\end{align*}

\item[(c)] \emthm{Non-constant \mvot.}
Suppose $\qdif\in (\qL,\qH)$ and $V(\difprof)>V(\fwdprof)$. Then
\begin{align}\label{eq:thm:dvdp-nonlinear}
\frac{\partial{\Vopt}}{\partial \pi_0}
    = \frac{q}{1-q}\infoL\cdot\log\rbr{\frac{\qH}{\qdif}} > 0
\\ \nonumber
\frac{\partial{\Vopt}}{\partial \pi_1}
    = \infoL\cdot\log\rbr{\frac{1-\qH}{1-\qdif}}< 0.
\end{align}
%The VoT is non-linear in $(\pi_0,\pi_1)$ due to the dependence on $\qdif$ in \eqref{eq:thm:dvdp-nonlinear}, which depends on $(\pi_0,\pi_1)$.
  % \item $\frac{\partial \Vopt}{\partial
  %     \pi_0}  = q\infoL \log\left(\frac{\qH}{\qdif}\right)>0$ \emph{and} $\frac{\partial \Vopt}{\partial
  %     \pi_1} =
  %   -(1-q)\left(\infoL\log\left(\frac{1-\qdif}{1-\qH}\right)\right)<0$
  %   if $\qH<\qdif<\qH$ and
  %   $V(\difprof)>V(\fwdprof)$
\end{itemize}
%Furthermore, $\Vopt$ is the socially optimal profile.
\end{theorem}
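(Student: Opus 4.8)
The plan is to build on Proposition~\ref{prop:aligned-best-eq}, which gives $\Vopt=\max(V(\difprof),V(\fwdprof))$, and to decide in each regime which profile locally attains the maximum. The key preliminary observation is that $V(\fwdprof)$ does not depend on $(\pi_0,\pi_1)$: under $\strfwd$ the filter disregards its signal, so both the forwarding probability and the induced belief $\qfunc(\strfwd)=q$ are insensitive to filter quality, whence $\partial V(\fwdprof)/\partial\pi_x=0$ (this is also Proposition~\ref{cor:scaff-VoT}(a)). Since $V(\difprof)$ and $V(\fwdprof)$ are continuous in $(\pi_0,\pi_1)$, a strict inequality between them persists on a neighborhood, so there $\Vopt$ equals the larger value and inherits its gradient. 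This settles (a) and (b): when $V(\difprof)<V(\fwdprof)$ we have $\Vopt=V(\fwdprof)$ locally, with zero gradient; when $\qdif>\qH$, Proposition~\ref{prop:goalposts} forces the consumer to ignore all forwarded content, so $V(\difprof)\equiv -c_1$ is constant and the gradient vanishes no matter which profile is the maximizer (again Proposition~\ref{cor:scaff-VoT}(a)); and in regime (b), where $\Vopt=V(\difprof)$ locally, the derivatives are precisely those of Proposition~\ref{cor:scaff-VoT}(b).

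The substance is (c), where $\qdif\in(\qL,\qH)$ and $\Vopt=V(\difprof)$ on a neighborhood. I would first write $V(\difprof)$ in separated form $\Vopt=-c_1\pi_1+\tfrac{P}{1-q}\,W(\qdif)$, where $P=q(1-\pi_0)+(1-q)(1-\pi_1)$ is the forwarding probability, $\qdif=q(1-\pi_0)/P$ is the belief upon forwarding, and $W(\cdot)$ is the consumer's optimal per-forwarded-piece net payoff (gross action payoff minus information cost) as a function of its prior, i.e.\ the value function of the one-shot rational-inattention subproblem. Using $\partial\qdif/\partial\pi_0=-q(1-\qdif)/P$, $\partial\qdif/\partial\pi_1=(1-q)\qdif/P$, $\partial P/\partial\pi_0=-q$, and $\partial P/\partial\pi_1=-(1-q)$, the chain and product rules collapse the two derivatives to the tangent line of $W$ at $\qdif$, evaluated at the endpoints: $\partial\Vopt/\partial\pi_0=-\tfrac{q}{1-q}[\,W(\qdif)+(1-\qdif)W'(\qdif)\,]$ and $\partial\Vopt/\partial\pi_1=-c_1-[\,W(\qdif)-\qdif W'(\qdif)\,]$.

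It remains to evaluate $W$ and $W'$ on $(\qL,\qH)$. Here I would invoke the interior rational-inattention structure underlying Proposition~\ref{prop:goalposts}: the optimal signal uses two posteriors that are locally invariant to the prior, namely $\qL$ (when the consumer accepts) and $\qH$ (when it ignores), with weight $p_a=(\qH-\qdif)/(\qH-\qL)$ on the accept posterior by Bayesian plausibility. Hence the gross value and the conditional-entropy term are affine in $\qdif$, and the sole nonlinear dependence of $W(\qdif)$ on $\qdif$ is the prior-entropy piece $-\infoL\, h(\qdif)$, with $h(p)=-p\log p-(1-p)\log(1-p)$; $W$ may therefore be differentiated directly. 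Substituting into the two tangent-line expressions, the $\qdif$-dependent entropy and log terms telescope to $\infoL\log\qdif$ and $\infoL\log(1-\qdif)$, while the residual constants reduce---using the identities $\infoL\log(\qH/\qL)=c_2$ and $\infoL\log\tfrac{1-\qL}{1-\qH}=b+c_1$ that define $\qL,\qH$ (equivalently, the optimality of the two posteriors)---to $-\infoL\log\qH$ and $-c_1-\infoL\log(1-\qH)$ respectively. This yields exactly \eqref{eq:thm:dvdp-nonlinear}, and the sign claims follow since $\qdif<\qH$ makes $\log(\qH/\qdif)>0$ and $\log\tfrac{1-\qH}{1-\qdif}<0$.

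The main obstacle is this last step. Establishing that the optimal posteriors are locally invariant (so that $W$ has the explicit form above and the envelope effect of the consumer's re-optimization does not enter) must be borrowed from the analysis behind Proposition~\ref{prop:goalposts}; and the two constant-reduction identities are delicate, since each requires using \emph{both} defining relations for $\qL$ and $\qH$ simultaneously to collapse the affine residuals to the clean logarithms in \eqref{eq:thm:dvdp-nonlinear} rather than to a $\qdif$-independent remainder. Verifying these two identities is where the precise logarithmic form of the non-constant \mvot originates.
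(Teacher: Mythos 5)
Your proposal is correct, and I verified the computations in part (c): with $W(\qhat)=\widehat u_0(\qhat)+\infoL\,\KL{\qhat}{\qH}$ one indeed gets $W(\qdif)+(1-\qdif)W'(\qdif)=\infoL\log(\qdif/\qH)$ and $W(\qdif)-\qdif W'(\qdif)=-c_1+\infoL\log\bigl((1-\qdif)/(1-\qH)\bigr)$, which plugged into your chain-rule expressions give exactly \eqref{eq:thm:dvdp-nonlinear}. For parts (a) and (b) you follow the paper's route (Proposition~\ref{cor:scaff-VoT} combined with Proposition~\ref{prop:aligned-best-eq}), and in fact you are more careful than the paper's one-line proof, since you supply the continuity/neighborhood argument needed to transfer gradients of the individual profile values to $\Vopt=\max(V(\difprof),V(\fwdprof))$. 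Part (c) is where your route genuinely differs: the paper first derives a closed form $V(\difprof)=-c_1+\tfrac{1}{1-q}\beta\infoL\,\KL{\qdif}{\qH}$ (Lemma~\ref{lem:payterm}, built on Lemma~\ref{lem:terms}), obtained by substituting the explicit Matejka--McKay logit best response of Lemma~\ref{lem:cbr} into the payoff and simplifying, and then differentiates that expression directly in $(\pi_0,\pi_1)$; you instead keep the consumer's value function $W$ abstract, collapse the derivatives to the two tangent-line functionals of $W$ at $\qdif$, and evaluate $W,W'$ via the invariant-posterior (concavification/Bayes-plausibility) characterization of the rational-inattention optimum, never needing the explicit best-response probabilities. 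The two characterizations of $W$ are equivalent --- your ``affine minus $\infoL h(\qdif)$'' form is precisely the paper's $\widehat u_0(\qdif)+\infoL\,\KL{\qdif}{\qH}$ expanded --- so the ingredient you flag as borrowed (local invariance of the posteriors at $\qL,\qH$) is exactly what the paper's Lemma~\ref{lem:cbr}/Lemma~\ref{lem:terms} supply, and your two defining identities $\infoL\log(\qH/\qL)=c_2$ and $\infoL\log\tfrac{1-\qL}{1-\qH}=b+c_1$ do hold by direct computation from \eqref{eq:HL-defn}. What your approach buys is independence from the specific logit formula (it would survive under any cost functional with the invariant-posterior property); what the paper's buys is that, once Lemma~\ref{lem:terms} is in hand, the differentiation is a purely mechanical calculus step with no envelope or concavification reasoning left to justify.
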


%\blcomment{In Table\ref{tab:regimes-VoT}, Should all the $\Vopt()$ be $V()$?}

% \begin{itemize}
%   \item If $\qdif<\qH$ or $\infoL\bigg[H(q) - \beta H(\qdif)
%   +\pi_2(1-q)\log\big(1-\qH\big)
%   + \pi_1q\log\big(\qH\big) \bigg]>\pi_2(1-q)(c_1+b)$
%   \begin{enumerate}[label=\Alph*)]
% \item $\qH  <\qdif<\qH$
% % \item $\infoL\bigg[H(q) - \beta H(\qdif)
% %   +\pi_2(1-q)\log\big(1-e^{-c_2/\infoL}\big)% + \pi_1 q
% %   + \pi_1q\log\big(e^{b/\infoL} - e^{-c_1/\infoL}\big) +
% %   (\beta-1) \log\big(e^{b/\infoL}-e^{-(c1+c2)/\infoL}\big)
% %   \bigg]>\pi_2(1-q)c_1$
%   \item $\infoL\bigg[H(q) - \beta H(\qdif)
%   +\pi_2(1-q)\log\big(1-\qH\big)
%   + \pi_1q\log\big(\qH\big) \bigg]>\pi_2(1-q)(c_1+b)$
%   \end{enumerate}
%   Then
%   \begin{align}
% 0\leq \frac{\partial{V(\sigma^*)}}{\partial \pi_1} = q\infoL
%     \log\left(\frac{\qH}{\qdif}\right) \;\;\;\; \text{[ Marginal Value of Information I]}& \\
%     \frac{\partial{V(\sigma^*)}}{\partial \pi_2} =
%      -(1-q)\left(\infoL\log\left(\frac{1-\qdif}{1-\qH}\right)\right)\leq
% 0 \;\;\;\; \text{[Marginal Value of Information II]}&
%    \end{align}

%Parts (a-b) are direct applications of Proposition~\ref{cor:scaff-VoT}.

The main insight of Theorem~\ref{thm:dvdp} is that
%regardless of the regime,
\mvot is weakly \emph{but not strictly} positive. 
That is, when
incentives are aligned, improving the filter quality can never hurt
the players, though in some cases it may have no impact. 
%Although secondary, Theorem~\ref{thm:dvdp} also
Moreover, we fully characterize \mvot behavior based on how $\qdif$ compares with $(\qL,\qH)$, and whether
$V(\difprof)<V(\fwdprof)$.
%which is summarized in Table~\ref{tab:regimes-VoT}. 
This is summarized in the table below.
(In this table, \vot is positive in both cells in which it is not zero.)
%\asmargincomment{I like this table, why remove it??}
%\begin{table}[h]

\vspace{2mm}
{\small
\begin{center}
\begin{tabular}{r|l|l|l}
    & $\qdif <\qL$
    & $\qdif\in(\qL,\qH)$
    & $\qdif>\qH$
\hlinestrut
$\Vopt(\difprof)>$ \\ $\Vopt(\fwdprof)$
    & Constant
    & Non-linear
    & Zero
\hlinestrut
$\Vopt(\difprof)< $\\ $\Vopt(\fwdprof)$
    & \multicolumn{3}{c}{Zero \vot}
\end{tabular}
\end{center}}
%\caption{Regimes for \vot.}
%\label{tab:regimes-VoT}
%\end{table}
\vspace{2mm}

%The main insight of Theorem~\ref{thm:dvdp} is that regardless of the regime,
%\mvot is weakly \emph{but not strictly} positive.  Indeed when the
%consumer does incur information costs and the socially optimal profile
%is $\difprof$, \mvot is positive and thus improvements in filter
%quality do not get completely crowded out by a reduction in the
%consumer's diligence.  However, there are regimes where \mvot is 0. In
%fact, if the parameters are such that $\qdif>\qH$, then $\mvot$ is 0.
%This illustrates a first \emph{barrier to entry} in filter technology;
%the filter must be of sufficiently high quality such that the consumer
%does not simply abandon the platform and ignore all content.

%\vspace{-5mm}
We have two \emph{barriers to entry} in filter
technology.  First, recall that we have zero \mvot when $\qdif>\qH$,
and note that the filter quality is higher for lower values of
$\qdif$.  Therefore, the filter must be of sufficiently high quality
for improvements to make a difference.% else, by
%Proposition~\ref{prop:goalposts}, the consumer simply ignores all
%content, effectively abandoning the platform.

Second, if $V(\fwdprof)>V(\difprof)$ then improving the filter does
not help, either.  In particular, the forwarding profile $\fwdprof$ is
now socially optimal, and so the filter is better off forwarding all
content regardless of its signal.  The next proposition shows that there exists parameter regimes where the socially optimal equilibrium is 
one in which the \mvot is $0$. To this end, we characterize this
regime precisely in terms of the model fundamentals.

\begin{proposition}\label{prop:diff}
Let $\KL{p}{q}$ be the Kullback-Leibler divergence between Bernoulli distributions with success probabilities $p$ and $q$.  Then
$V(\difprof)  \geq V(\fwdprof)$
if and only if one of the following conditions hold:

\begin{itemize}
\item[(a)] $q \geq \qH$,

\item[(b)] $\qL<\qdif<q<\qH$ and
\[ \deltaU >\infoL\sbr{\KL{q}{\qL} - \beta\cdot \KL{\qdif}{\qL}},\]

\item[(c)]
   $\qdif\leq \qL<q<\qH$
and
    $\deltaU > \infoL\cdot\KL{q}{\qL}$,

\item[(d)]
$\qdif<q\leq \qL<\qH$ and $\deltaU>0$,
\end{itemize}

%\begin{align}
%\begin{cases}
%q \geq \qH   \\ \hline
% \deltaU >\infoL\sbr{\KL{q}{\qL} - \beta\cdot \KL{\qdif}{\qL}} \\
%    \mbox{\;\;\;\; and \;\;\;\;} \qL<\qdif<q<\qH \\ \hline
%\deltaU > \infoL\cdot\KL{q}{\qL}
%     \mbox{\;\;\;\; and \;\;\;\;} \qdif\leq \qL<q<\qH \\ \hline
%\deltaU>0
%         \mbox{\;\;\;\; and \;\;\;\; } \qdif<q\leq \qL<\qH
%\end{cases}
%\end{align}

\noindent where we used the following shorthand
\[ \deltaU := \E\sbr{u\rbr{\strdif(\sfil),X}} - \E\sbr{u\rbr{\strfwd(\sfil),X}}, \]
for the expected increase in action payoffs with an always-accepting consumer when the filter's  strategy switches from $\strfwd$ to $\strdif$;
and $\beta := \Pr\sbr{ \afil=1 \mid \strfil = \strdif}$
is the \emph{ex ante} probability that a differentiating filter forwards the content.%
\footnote{Unwrapping,
$ \deltaU = \pi_0qc_2 -\pi_1(1-q)(b+c_1)$
and
$\beta =(1-\pi_0)q + (1-\pi_1)(1-q)$.}
\end{proposition}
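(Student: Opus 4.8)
The plan is to reduce the comparison to the consumer's optimal value and then split into the four configurations of $(\qdif,q)$. Since $V=\vcon/(1-q)$ with $1/(1-q)>0$, it suffices to compare $\vcon(\difprof)$ and $\vcon(\fwdprof)$. Let $W(\hat{q})$ denote the consumer's optimal net value (action payoff minus information cost) for a single forwarded piece whose posterior malicious probability is $\hat{q}$; by Lemma~\ref{lem:belief-BR} this depends only on $\hat{q}$. Writing $A(\hat{q}):=\hat{q}(-c_2)+(1-\hat{q})b$ for the accept-all value and $Z(\hat{q}):=-(1-\hat{q})c_1$ for the ignore-all value, I will show $\vcon(\fwdprof)=W(q)$ and $\vcon(\difprof)=-(1-q)\pi_1 c_1+\beta\,W(\qdif)$, using $\qfunc(\strfwd)=q$, $\qfunc(\strdif)=\qdif$, and the forwarding probability $\beta=\Pr\sbr{\sfil=1}$. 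The two Bayesian identities $\beta\qdif=q(1-\pi_0)$ and $\beta(1-\qdif)=(1-q)(1-\pi_1)$ let me rewrite everything in the primitive parameters; in particular, when the consumer accepts all forwarded content the action-payoff part of $\vcon(\difprof)$ is exactly $\E\sbr{u(\strdif(\sfil),X)}$ for an always-accepting consumer, which equals $A(q)+\deltaU$ by the very definition of $\deltaU$. Subtracting yields the master identity
\begin{align*}
\vcon(\difprof)-\vcon(\fwdprof)=\deltaU+\beta\rbr{W(\qdif)-A(\qdif)}-\rbr{W(q)-A(q)}.
\end{align*}

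The crux is a closed form for $W$. On the acceptance region $\hat{q}\le\qL$ the consumer accepts all and $W(\hat{q})=A(\hat{q})$; on the rejection region $\hat{q}\ge\qH$ it ignores all and $W(\hat{q})=Z(\hat{q})$ (both by Proposition~\ref{prop:goalposts}, with zero information cost). On the examination region $\hat{q}\in(\qL,\qH)$ I will show
\begin{align*}
W(\hat{q})=A(\hat{q})+\infoL\,\KL{\hat{q}}{\qL}=Z(\hat{q})+\infoL\,\KL{\hat{q}}{\qH}.
\end{align*}
I expect this to be the main obstacle: it requires solving the consumer's rational-inattention problem, using the property (special to the mutual-information cost) that the optimal posteriors are pinned at $\qL$ and $\qH$ independently of the prior, with only the mixing weight $t=(\hat{q}-\qL)/(\qH-\qL)$ varying. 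The two KL forms then follow by substituting \eqref{eq:HL-defn}; the equality of the two forms reduces to $\qL/\qH=e^{-c_2/\infoL}$ (immediate from \eqref{eq:HL-defn}) and $(1-\qL)/(1-\qH)=e^{(b+c_1)/\infoL}$, which together make $A(\hat{q})-Z(\hat{q})=\infoL\sbr{\KL{\hat{q}}{\qH}-\KL{\hat{q}}{\qL}}$ an identity in $\hat{q}$. This closed form may already be available from the analysis behind Proposition~\ref{prop:goalposts}, in which case I would simply cite it.

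With the closed form in hand, I will plug into the master identity and use $\qdif<q$ (from \eqref{eq:eff-defn}) to enumerate configurations. When $q\le\qL$ (case (d)) both $q,\qdif$ lie in the acceptance region, both bracketed terms vanish, and the difference is $\deltaU$. When $\qL<q<\qH$ and $\qdif\le\qL$ (case (c)) only the $W(q)-A(q)$ term survives, giving $\deltaU-\infoL\KL{q}{\qL}$. When $\qL<\qdif<q<\qH$ (case (b)) both examination terms appear, giving $\deltaU-\infoL\sbr{\KL{q}{\qL}-\beta\,\KL{\qdif}{\qL}}$. In each case $\vcon(\difprof)\ge\vcon(\fwdprof)$ reduces exactly to the stated inequality. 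For $q\ge\qH$ (case (a)) I will bypass the master identity: ignoring all forwarded content is always feasible, so $W\ge Z$ pointwise, whence $\vcon(\difprof)\ge-(1-q)\pi_1 c_1+\beta Z(\qdif)=-(1-q)c_1=\vcon(\fwdprof)$, i.e.\ the inequality holds unconditionally.

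To obtain the ``if and only if,'' I note that the four configurations partition all parameter values with $\qdif<q$: either $q\le\qL$, or $\qL<q<\qH$ with $\qdif$ below or above $\qL$, or $q\ge\qH$. In each, the derivation above is an equivalence between $\vcon(\difprof)\ge\vcon(\fwdprof)$ and the listed condition, so the disjunction (a)--(d) is precisely the region where $V(\difprof)\ge V(\fwdprof)$. The only delicacy is at ties (e.g.\ $\deltaU=\infoL\KL{q}{\qL}$ in case (c)), where the weak-versus-strict boundary is resolved by the equilibrium-selection tie-breaking convention; checking this is routine.
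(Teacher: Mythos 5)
Your proposal is correct and follows essentially the same route as the paper: your closed form for $W$ on the examination region, $W(\hat{q})=A(\hat{q})+\infoL\KL{\hat{q}}{\qL}=Z(\hat{q})+\infoL\KL{\hat{q}}{\qH}$, is exactly the paper's Lemma~\ref{lem:terms} (obtained there by plugging the Matejka--McKay best response of Lemma~\ref{lem:cbr} into the payoff), and your subsequent subtraction and regime-by-regime case analysis mirrors the paper's proof via Lemma~\ref{lem:payterm}. Your ``master identity'' is a cleaner way of organizing the same computation, and the tie/strictness caveat you flag is an imprecision present in the paper's own statement, not a defect of your argument.
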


It is straighforward to show that
both barriers are cleared once the filter quality is high enough:
%can be overcome by raising the filter quality.

%\ascomment{stated the next Cor. explicitly; you had it inlined informally.}

\begin{corollary}
  \label{prop:aligned-overcome}
%$V(\difprof)  \geq V(\fwdprof)$ as long as filter's quality is sufficiently low.
%More formally,
There exist thresholds $\pi'_0<1$ and $\pi'_1>0$ such that
     $\qdif<\qH$
and
     $V(\difprof) > V(\fwdprof)$
for any
    $\pi_0>\pi'_0$ and $\pi_1<\pi'_1$.
\end{corollary}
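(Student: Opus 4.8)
The plan is to drive the filter toward the perfect classifier, $(\pi_0,\pi_1)\to(1,0)$, show that both barriers are strictly cleared in the limit, and conclude by continuity that they remain cleared in a neighborhood. The only earlier facts I need are the Bayesian form of the belief $\qfunc$ and Proposition~\ref{prop:goalposts}; in particular I will \emph{not} route through Proposition~\ref{prop:diff}, since the direct argument is cleaner.

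\emph{First barrier.} Since a differentiating filter forwards exactly when $\sfil=1$, Bayes' rule gives
\begin{equation*}
\qdif=\frac{(1-\pi_0)\,q}{(1-\pi_0)\,q+(1-\pi_1)(1-q)}.
\end{equation*}
For $\pi_1$ bounded away from $1$ the denominator stays at least $(1-\pi_1)(1-q)>0$, so $\qdif\to 0$ as $\pi_0\to 1$. Because $\qL>0$ is a fixed constant, I would pick the thresholds so that $\qdif<\qL(<\qH)$; this clears the first barrier and, by Proposition~\ref{prop:goalposts}, puts the differentiating profile in the regime where the consumer accepts all forwarded content at zero information cost.

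\emph{Second barrier.} I rely on two observations. First, $V(\fwdprof)$ does not depend on $(\pi_0,\pi_1)$, because a forwarding filter ignores its signal and the consumer faces the fixed prior $\qfunc(\strfwd)=q$; moreover $V(\fwdprof)<b$ strictly, since the largest attainable per-piece action payoff under forwarding is $b(1-q)$ (accept legitimate, ignore malicious), and achieving it would require cost-free perfect information, which is impossible for $\infoL>0$ and $q\in(0,1)$. Hence $\vcon(\fwdprof)<b(1-q)$, i.e. $V(\fwdprof)<b$. Second, in the accept-all regime identified above $\infoC(\difprof)=0$ and every forwarded piece is accepted, so
\begin{equation*}
V(\difprof)=b(1-\pi_1)-c_2\,\frac{q}{1-q}(1-\pi_0)-c_1\pi_1,
\end{equation*}
which is continuous and tends to $b$ as $(\pi_0,\pi_1)\to(1,0)$.

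\emph{Combining.} Set $\varepsilon:=b-V(\fwdprof)>0$. By the displayed limit there exist thresholds $\pi'_0<1$ and $\pi'_1>0$ with $V(\difprof)>b-\varepsilon=V(\fwdprof)$ and simultaneously $\qdif<\qH$ whenever $\pi_0>\pi'_0$ and $\pi_1<\pi'_1$ (and $\pi_0\ge\pi_1$ holds automatically this close to the limit). I expect the only non-routine step to be the strict bound $V(\fwdprof)<b$: the rest is a continuity/limit argument, whereas this step is exactly where $\infoL>0$ is used, ruling out that an inattentive consumer reproduces the first-best outcome for free. A more computational alternative would apply Proposition~\ref{prop:diff} (cases (a), (c), (d) according to the position of $q$ relative to $\qL,\qH$), but verifying the case-(c) condition $\deltaU>\infoL\,\KL{q}{\qL}$ in the limit is messier than the direct comparison above.
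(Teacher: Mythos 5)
Your proof is correct, and it takes a genuinely different route from the paper's. The paper's proof is two lines: $V(\difprof)$ is weakly increasing in $(\pi_0,-\pi_1)$, $\qdif$ is decreasing in those quantities, $V(\fwdprof)$ is constant in them, and hence a condition of Proposition~\ref{prop:diff} (the paper cites case~(d)) is met once $(\pi_0,\pi_1)$ is close enough to $(1,0)$. You bypass Proposition~\ref{prop:diff} entirely: in the accept-all regime $\qdif<\qL$ (guaranteed near the limit, with Proposition~\ref{prop:goalposts} zeroing out information costs) you compute $V(\difprof)=b(1-\pi_1)-c_1\pi_1-\tfrac{q}{1-q}c_2(1-\pi_0)\to b$, and you beat the filter-independent quantity $V(\fwdprof)$ via the strict bound $V(\fwdprof)<b$, justified by the observation that the first-best action payoff $b(1-q)$ would require free perfect information, which is impossible when $\infoL>0$. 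This is sound, with two caveats you should make explicit: (i) strictness also needs errors to be strictly costly, i.e.\ $c_2>0$ and $b+c_1>0$, which are implicit in the paper's standing assumption $0<\qL<\qH<1$; (ii) one needs the consumer's optimum to be attained (it is, by Lemma~\ref{lem:cbr} and compactness of the strategy space), so that pointwise strictness yields strictness of the maximum $V(\fwdprof)$. Comparing the two routes: the paper's is shorter given the machinery already in place, but its citation of case~(d) is loose --- that case requires $q\le\qL$, a restriction on fixed parameters that no filter improvement can create, so for $\qL<q<\qH$ one would actually have to verify case~(c), i.e.\ that $\deltaU>\infoL\cdot\KL{q}{\qL}$ in the limit. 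Your single inequality $V(\fwdprof)<b$ covers every position of the prior $q$ at once (ignore-all gives $-c_1<b$, accept-all gives $b-\tfrac{q}{1-q}c_2<b$, and the examining regime is handled by the information-cost argument), and it additionally yields an explicit quantitative lower bound on $V(\difprof)-V(\fwdprof)$ in terms of the thresholds, which the paper's monotonicity-plus-citation argument does not provide.
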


% ascomment{Your previous text seemed as if the second barrier to entry is all about poor quality filters. But this point is not obvious to me from Prop~\ref{prop:diff}, and seems false. I mean, perhaps this barrier can also kick in for medium-quality? Besides, if it \emph{is} all about poor quality, then what's a big deal given that we already have the first barrier?}

% \asdelete{Proposition characterizes when $V(\difprof)>V(\fwdprof)$ and thus
% establishes when \mvot is positive.  However, whenever the conditions
% in proposition \ref{prop:diff} are not satisfied,
% $V(\fwdprof)>V(\difprof)$ and thus \mvot is $0$.  It is
% straightforward to show that as $\pi_0\rightarrow 1$ and
% $\pi_1 \rightarrow 0$, the conditions in \ref{prop:diff} become
% satisfied.  This illustrates the second \emph{barrier to entry}: for
% poor quality filters, the players are  better off when the filter
% simply forwards all content and thus marginal improvements in filter
% quality have no impact on equilibrium outcomes.  However, once the
% filter is of sufficiently high quality, $V(\difprof)$ exceeds
% $V(\fwdprof)$ and \mvot becomes positive, with the exact \mvot
% determined by whether $\qdif>\qL$ and is given in theorem
% \ref{thm:dvdp}.
% } %%%

Finally, Proposition~\ref{prop:diff} implies that the non-linear \vot regime from Theorem~\ref{thm:dvdp} is feasible. Indeed, this regime corresponds to case (b) of the proposition.

\medskip

\medskip

\begin{prewebconf}
\begin{figure}[t]
  \centering
\begin{subfigure}{.49\textwidth}
  \centering
  \includegraphics[width=\textwidth]{./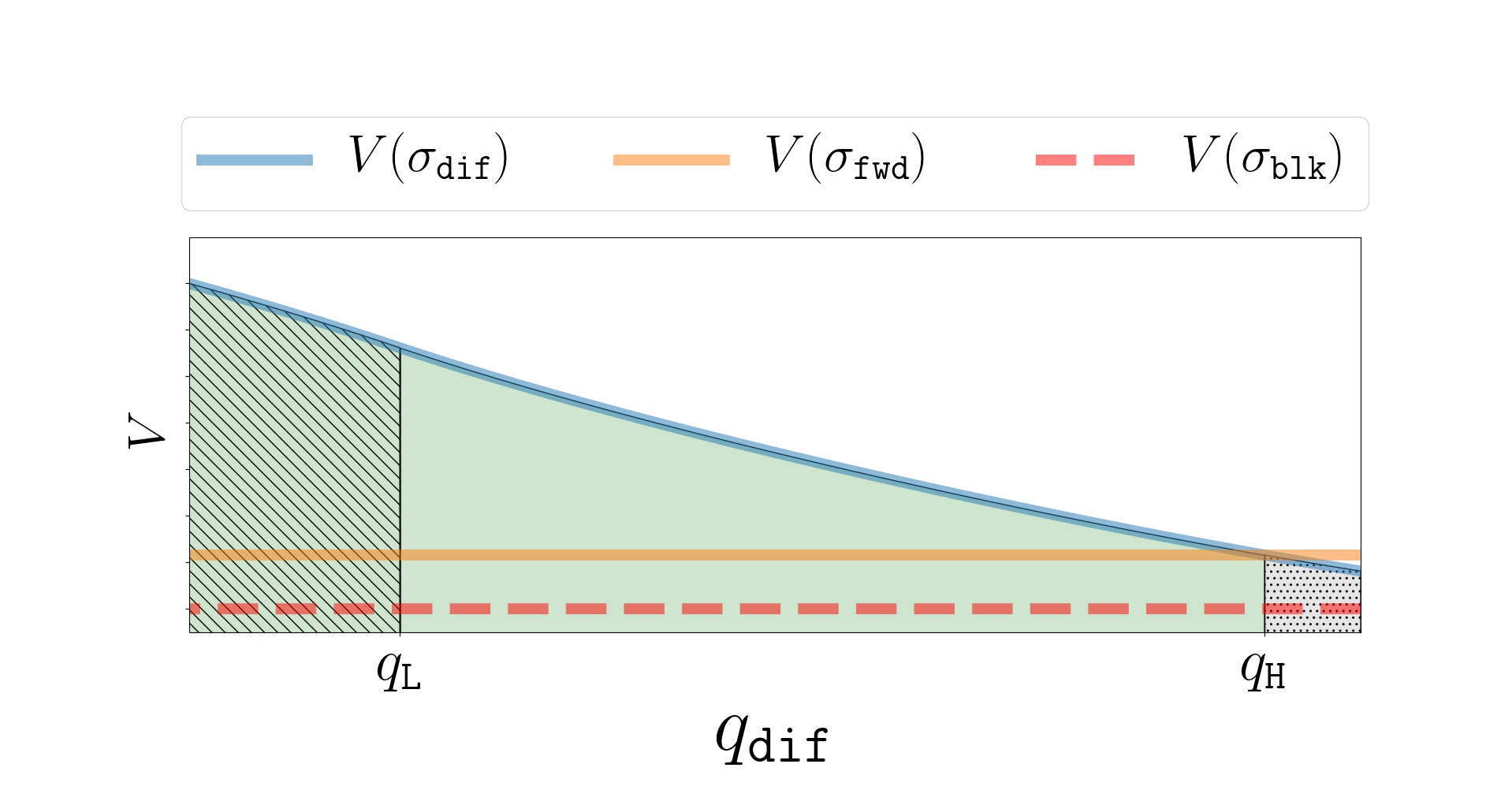}
  \caption{$q=.5$}
  \label{fig:noqh}
\end{subfigure}
\begin{subfigure}{.49\textwidth}
  \centering
  \includegraphics[width=\textwidth]{./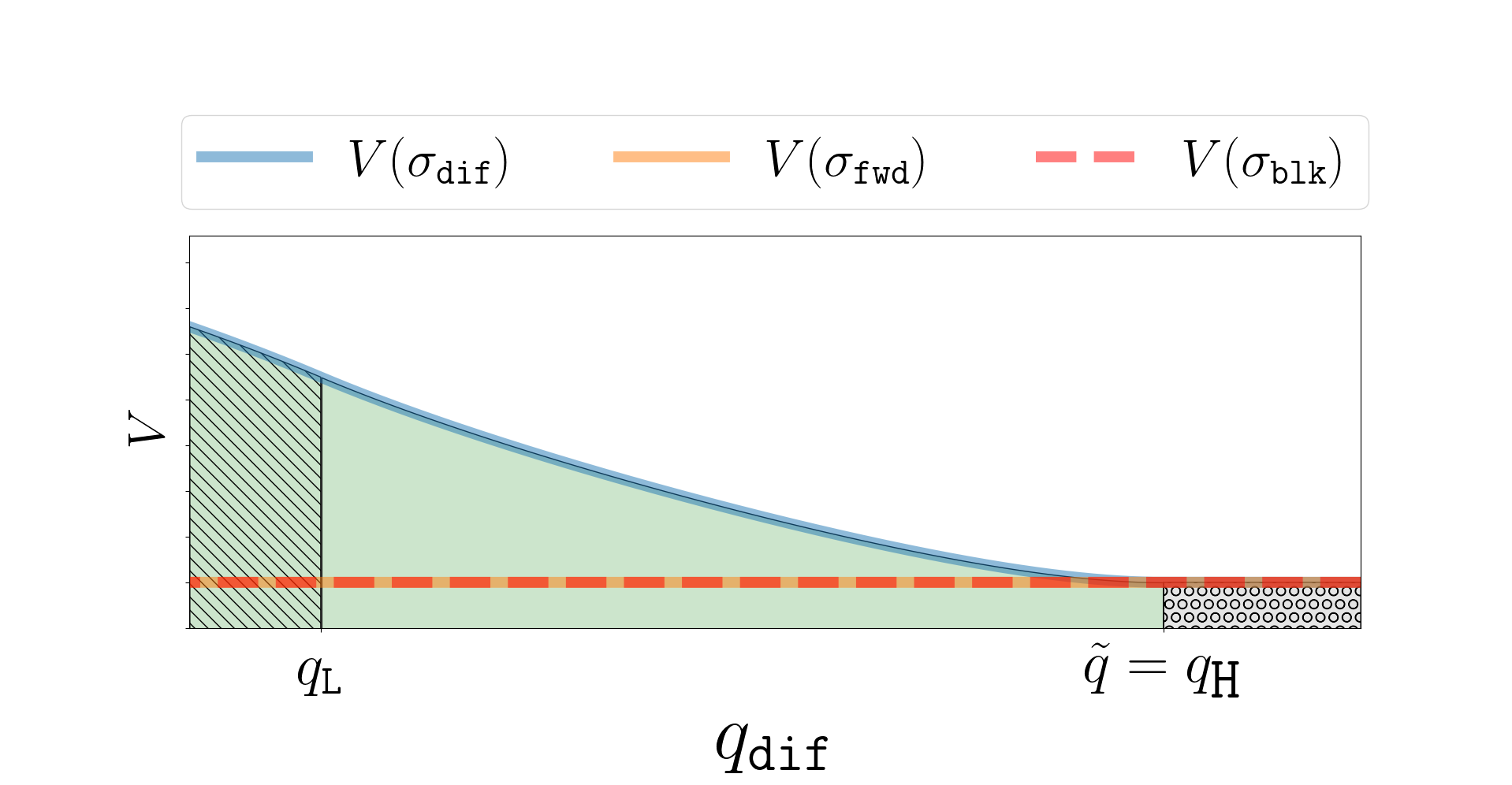}
  \caption{$q=.8$}
  \label{fig:qh}
\end{subfigure}
\caption{Changes in payoffs as a function of filter quality.  In both
  figures $\lambda=2, b=1, c_1=1,c_2=4, \pi_2=.3$.  In \ref{fig:noqh},
  $q=.5$ and in \ref{fig:qh}, $q=.8$.  The figures are generated by
  varying $\pi_0$.  The value $\tilde{q}$ is the point where the
  players are indifferent between $V(\fwdprof)$ and $V(\difprof)$.  }
\label{fig:alignmain}
\end{figure}
\end{prewebconf}

\section{Semi-aligned utilities ($\Vfil=u$)}
\label{sec:semi-aligned}
%\subsection{Consumer-Only Information Costs}

% Proposition \ref{prop:dcase0} characterizes when a differentiating
% equilibrium is the socially optimal equilibrium when the consumer and
% filter have a common interest.

% \jgcomment{Move paragraph to motivation}
% However, complete alignment might not always be the case and the
% filter and the consumer may internalize the information costs
% differently.  In a security situations such as a spam filter, firms
% may purchase filtering software based on the software's advertised
% performance metrics and thus the filter does not internalize the
% consumer's deliberation costs.  As another example, a social media
% platform that has monopoly power may incur costs when misinformation
% propagates but since there is no alternative platform for the
% consumers, the platform may not (fully) internalize the consumers'
% deliberation costs.

% Instead of fully characterizing the value of information

This section considers \emph{semi-aligned utilities}:
$\Vfil(\sigma) = u(\sigma)$. Our results concern Pareto-efficiency.
We show that all equilibria may be Pareto-inefficient
(in stark contrast with the aligned utilities),
but this inefficiency vanishes if the filter quality is sufficiently high.
Put differently, improving the filter has an important side benefit
of guaranteeing Pareto-efficient equilibria.

For clarity, we focus on the regime where $\qL<\qdif<q_H<q$  
(Similar
%The forthcoming inefficiency
results holds for other regimes, but have a higher notation burden).
In this regime, the inefficiency arises when one measure of filter
quality is sufficiently low.  Specifically, we summarize filter
quality as one number that is strictly pointwise-increasing in $\pi_0$ and
$-\pi_1$,%\footnote{This is just one such summarization of ``quality''
  %which is productive for our analysis.  Other specifications  that is strictly pointwise-increasing
  %in both $\pi_0$ and $-\pi_1$ are also reasonable.}
\begin{align}\label{eq:Q-defn}
  \mathcal{Q}(\pi_0, \pi_1)
    = \frac{\pi_0}{\pi_1} \;
      \frac{1-\pi_1}{1-\pi_0}.
\end{align}
We compare \eqref{eq:Q-defn} to a threshold driven by cost parameters:
\begin{align}\label{eq:lambda-defn}
  \Lambda = \frac{b+c_1}{c_2}\frac{1-\qL}{\qL}.
\end{align}

%\jgedit{Note that $\mathcal{Q}$ is component wise increasing in $\pi_0$ and
%$\pi_1$ so roughly captuires filter quality, though there are other
%ways to quantify filter quality.  }
%\jgmargincomment{Alex, take this out if you don't think it makes
%sense.  Happy to iterate but also please make the final call. }
% \jgedit{Note that $\mathcal{Q}$ is componentwise increasing in $\pi_0$ (the
% ``true positive rate'') and decreasing in $\pi_1$ (the false positive
% rate).  However, we refrain from calling $\mathcal{Q}$ ``quality''
% since the optimal tradeoff between true positives and false p

% \ascomment{reworded to make it more transparent, i think}
% \asdelete{The following theorem shows that the relationship between
% $\mathcal{Q}$ and $\Lambda$ determines the existence of an
% inefficiency.}

% \ascomment{I needed numbered equations in the above because I refer to them in the endog.}

\begin{theorem}[inefficiency]\label{thm:semi-inef-gen}
Assume $\qL<\qdif<\qH<q$.  If furthermore
    $\mathcal{Q}(\pi_0,\pi_1)<\Lambda$,
%\[ \frac{\pi_0}{\pi_1} <\Lambda := \frac{b+c_1}{c_2}\; \frac{1-\qL}{\qL}, \]
then profile $\difprof$ strictly Pareto-dominates any equilibrium but
is not  an equilibrium itself.
In particular, any equilibrium is Pareto-inefficient.
\end{theorem}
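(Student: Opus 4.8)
The plan is to prove the two halves of the statement—that $\difprof$ is not an equilibrium, and that it strictly Pareto-dominates every equilibrium—both of which reduce to a single marginal action-payoff computation. Since $\qL<\qdif<\qH$, Proposition~\ref{prop:goalposts} tells us the consumer examines under $\difprof$, and the rational-inattention solution for an interior prior (the analysis behind Proposition~\ref{prop:goalposts}) places the two induced posteriors at the thresholds $\qL$ on the realization $\scon=1$ (``accept'') and $\qH$ on $\scon=0$ (``ignore''). Applying Bayes' rule to the accept pile at prior $\qdif$ pins down the consumer's chosen fidelity through $\frac{1-\tilde{\pi}_1}{1-\tilde{\pi}_0}=\frac{\qdif(1-\qL)}{\qL(1-\qdif)}$. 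I hold this consumer strategy fixed throughout the deviation arguments (moves are simultaneous, so the consumer does not re-optimize off path), recalling that only the action payoff $u$ matters to the semi-aligned filter.

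First I would show $\difprof$ is not an equilibrium. Forwarding one marginal unit of the blocked ($\sfil=0$) content changes $u$ by $-(1-\tilde{\pi}_0)c_2$ on its malicious part and by $+(1-\tilde{\pi}_1)(b+c_1)$ on its legitimate part; weighting by $\Pr[\sfil=0,X=0]=\pi_0 q$ and $\Pr[\sfil=0,X=1]=\pi_1(1-q)$, the filter strictly gains iff $\pi_1(1-q)(1-\tilde{\pi}_1)(b+c_1)>\pi_0 q(1-\tilde{\pi}_0)c_2$. Substituting the fidelity ratio above together with the identity $\frac{1-q}{q}\frac{\qdif}{1-\qdif}=\frac{1-\pi_0}{1-\pi_1}$ collapses this condition to $\Lambda/\mathcal{Q}(\pi_0,\pi_1)>1$, i.e.\ exactly $\mathcal{Q}(\pi_0,\pi_1)<\Lambda$. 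Hence under the hypothesis the filter is not best-responding in $\difprof$, so $\difprof$ is not an equilibrium.

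Next I would show every equilibrium gives both players payoff $-c_1$. For any filter strategy forwarding $\sfil=0$ content with probability $\gamma_0\ge0$, a one-line cross-multiplication using $\pi_0\ge\pi_1$ gives $\qfunc(\mixstrfil)\ge\qdif$; thus $\qfunc>\qL$ always, ruling out the accept-all regime. Suppose some equilibrium had the consumer examining, i.e.\ $\qfunc\in(\qL,\qH)$. Repeating the marginal computation at a general examining belief, forwarding more $\sfil=0$ content is profitable iff $\frac{\pi_1(1-q)}{\pi_0 q}\frac{\qfunc}{1-\qfunc}\Lambda>1$, whose value at $\qfunc=\qdif$ equals $\Lambda/\mathcal{Q}>1$ and which increases in $\qfunc\ge\qdif$; symmetrically, forwarding more $\sfil=1$ content is profitable because the analogous expression is at least $\Lambda>1$ (here $\mathcal{Q}\ge1$ from $\pi_0\ge\pi_1$, so $\Lambda>\mathcal{Q}\ge1$). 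As the only strategy forwarding everything is $\strfwd$, which induces $\qfunc=q>\qH$ and hence no examination, every examining profile leaves room to forward more and so is not an equilibrium. Therefore in every equilibrium the consumer ignores all forwarded content, never pays to examine, no content is accepted, and both players obtain $u/(1-q)=-c_1$; this value is attained (e.g.\ by $\fwdprof$, an equilibrium since $q>\qH$).

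Finally I would verify strict domination of the $-c_1$ value by $\difprof$. For the filter, the same substitutions simplify the action-payoff difference to $u(\difprof)-u(\fwdprof)=D\,\qdif(1-\tilde{\pi}_0)c_2(\Lambda-1)$, where $D=\Pr[\afil=1\mid\strdif]>0$; since $\Lambda>1$ this is positive, so $\Vfil(\difprof)>-c_1$. For the consumer, $\qdif\in(\qL,\qH)$ strictly, so by Lemma~\ref{lem:belief-BR} and Proposition~\ref{prop:goalposts} examining strictly beats ignoring all (whose value is $-c_1$), giving $\Vcon(\difprof)>-c_1$. Thus $\difprof$ strictly Pareto-dominates every equilibrium, which establishes the inefficiency. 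I expect the crux to be the third step: ruling out \emph{all} mixed examining equilibria, since the consumer's optimal fidelity co-varies with $\qfunc$. The argument closes only because $\qdif$ is the minimal attainable forwarded-content belief, so the worst case for the filter's deviation incentive is precisely $\difprof$ itself, where it is governed exactly by $\Lambda/\mathcal{Q}$.
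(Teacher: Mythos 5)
Your proof is correct, and while your first step coincides with the paper's argument, your treatment of Pareto-domination is genuinely different. For the ``not an equilibrium'' half, the paper (its Proposition~\ref{prop:semiebm}) does exactly your computation: hold the consumer's best response to $\difprof$ fixed and check the filter's incentive to forward the $\sfil=0$ content, which reduces to comparing $\mathcal{Q}(\pi_0,\pi_1)$ with $\Lambda$; your marginal version is equivalent by linearity of the filter's action payoff in its forwarding probabilities. For the domination half, however, the paper proceeds through Proposition~\ref{prop:nomixsemi}: it bounds the payoff of \emph{any} mixed equilibrium by $\max\rbr{V_i(\difprof),V_i(\fwdprof)}$ using the linearity structure of Lemma~\ref{lem:linq}, and then uses $q>\qH$ to collapse $V_i(\fwdprof)$ to $-c_1$. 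You instead characterize the equilibrium set directly: the observation that $\qfunc(\mixstrfil)\geq\qdif$ for every filter strategy (your cross-multiplication with $\pi_0\geq\pi_1$), combined with monotonicity in $\qfunc$ of the forwarding-gain expression and the bound $\Lambda>\mathcal{Q}\geq 1$, shows the filter strictly gains by forwarding more at \emph{either} signal whenever the consumer examines, so no examining profile is an equilibrium and every equilibrium gives both players exactly $-c_1$. This is a sharper conclusion than the paper's bound (it pins down all equilibrium payoffs in this regime, rather than dominating them), and it avoids the linearity machinery entirely, at the cost of leaning on the posterior-threshold property of the rational-inattention solution (posteriors at $\qL$ and $\qH$), which is implicit in the paper's Lemmas~\ref{lem:cbr} and~\ref{lem:linq} and which you use correctly; your closed form $u(\difprof)-u(\fwdprof)=\beta\,\qdif(1-\tilde{\pi}_0)c_2(\Lambda-1)$ also matches what Lemma~\ref{lem:payterm} yields after substitution. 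The trade-off is that the paper's Proposition~\ref{prop:nomixsemi} is a reusable tool (it also powers Theorem~\ref{thm:misaligned-escape}), whereas your argument is tailored to the regime $\qL<\qdif<\qH<q$ with $\mathcal{Q}<\Lambda$.
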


The key insight behind Theorem \ref{thm:semi-inef-gen} is that a low
quality filter cannot commit to $\difprof$ because it has an incentive
to trick the consumer into incurring information costs that are higher
than optimal for the consumer.
Under $\difprof$, the filter incurs a cost of $(1-q)\pi_1 c_1$ for
blocking clean content but incurs some benefit from the consumer's
content inspection.  If the filter could \emph{convince} the consumer
it would choose $\strdif$, the filter would be better off by instead
forwarding all content, not incurring the cost of $(1-q)\pi_1c_1$ and
still enjoying the benefit of the consumer inspecting the content.
Knowing this, the filter can \emph{not} convince the consumer that it
would play $\strdif$ and thus $\difprof$ is not an equilibrium.

%\ascomment{end of intuition for the theorem.}

%The natural question then is how can the filter and the consumer
%escape this inefficiency?  We answer that question in the following theorem:

To escape this inefficiency, one can improve the filter, ensuring that $\mathcal{Q}>\Lambda$.
The \vot would be strictly positive.

\begin{theorem}[escaping the inefficiency]
%\label{cor:escape-inefficiency}
\label{thm:misaligned-escape}
Assume $\qH<q$ and suppose $\pi'_0\leq \pi_0$, $\pi_1'>\pi_1$ and
    $\mathcal{Q}(\pi_0, \pi_1)>\Lambda>\mathcal{Q}(\pi'_0, \pi'_1)$.
Then:
\begin{itemize}
\item[(a)] The differentiating profile $\difprof$ is a Pareto-efficient equilibrium,
and it Pareto-dominates any other equilibrium.

\item[(b)]
The \vot by switching  from any equilibrium with filter
quality $(\pi'_0,\pi'_1)$ to $\difprof$ with filter quality
$(\pi_0,\pi_1)$ is strictly positive.
\end{itemize}
\end{theorem}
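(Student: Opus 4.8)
I would prove (a) and (b) separately, using the consumer analysis of Section~\ref{sec:trivial} together with the aligned-utility results. Throughout, write $\theta=(\pi_0,\pi_1)$, $\theta'=(\pi'_0,\pi'_1)$, and let $a_x:=\Pr[\acon\cdot\afil=1\mid X=x]$ be the probability that type-$x$ content is ultimately accepted. The first ingredient is to pin down the consumer's best response to $\strdif$: since $\qL<\qdif<\qH$, Proposition~\ref{prop:goalposts} says the consumer examines, and the rational-inattention structure underlying it makes the two action-posteriors \emph{invariant}, namely $\Pr[X=0\mid\text{accept}]=\qL$ and $\Pr[X=0\mid\text{ignore}]=\qH$. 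This fixes $a_0,a_1$ in terms of $\qdif,\qL,\qH$; in particular $a_0/a_1=\tfrac{1-q}{q}\tfrac{\qL}{1-\qL}$ is independent of filter quality. Because $u$ is linear in the filter's mixing weights once the consumer's strategy is held fixed, showing $\difprof$ is an equilibrium reduces to ruling out the two pure deviations $\strfwd$ and $\strblk$.

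\textbf{Part (a).} The deviation to $\strfwd$ only changes the outcome on the blocked ($\sfil=0$) content, giving filter gain $(1-q)\pi_1(1-\tilde\pi_1)(b+c_1)-q\pi_0(1-\tilde\pi_0)c_2$; substituting the ratio $\tfrac{1-\tilde\pi_1}{1-\tilde\pi_0}=\tfrac{1-\qL}{\qL}\tfrac{\qdif}{1-\qdif}$ (from the accept-posterior being $\qL$) and unwinding $\qdif$ turns ``gain $\le0$'' into exactly $\mathcal{Q}(\pi_0,\pi_1)\ge\Lambda$ --- the same threshold as Theorem~\ref{thm:semi-inef-gen}, as it must be. For $\strblk$, the identities $u(\difprof)=(1-q)\big[a_1(b+c_1)\tfrac{\Lambda-1}{\Lambda}-c_1\big]$ and $u(\blkprof)=-(1-q)c_1$ reduce the check to $\Lambda\ge1$, which itself follows because at the belief $\qL$ the consumer weakly prefers accepting to ignoring, i.e.\ $(1-\qL)(b+c_1)\ge\qL c_2$. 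This establishes that $\difprof$ is an equilibrium. For efficiency and dominance I would use that $q\ge\qH$ forces $V(\difprof)\ge V(\fwdprof)$ (Proposition~\ref{prop:diff}(a)), so by Proposition~\ref{prop:aligned-best-eq} $\difprof$ globally maximizes $\vcon=\Vcon$. Hence the consumer weakly prefers $\difprof$ to every profile, so no profile can Pareto-dominate it unless it ties $\Vcon$; any such profile is payoff-equivalent to $\difprof$ (uniqueness of the $\vcon$-optimum in the strict interior regime), hence gives the filter the same $u$, yielding efficiency. The same consumer inequality, plus $u(\difprof)\ge u(\cdot)$ across equilibria, gives dominance.

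\textbf{Part (b).} For each player $i$ I would decompose
\begin{align*}
&V_i(\difprof,\theta)-V_i(\mathrm{eq},\theta') \\
&\quad=\underbrace{[\,V_i(\difprof,\theta)-V_i(\difprof,\theta')\,]}_{(\mathrm{I})}
+\underbrace{[\,V_i(\difprof,\theta')-V_i(\mathrm{eq},\theta')\,]}_{(\mathrm{II})}.
\end{align*}
Term $(\mathrm{II})$ is strictly positive for both players by Theorem~\ref{thm:semi-inef-gen} (applicable at $\theta'$ since $\mathcal{Q}<\Lambda$ there and we remain in the regime $\qL<\qdif<\qH<q$). Term $(\mathrm{I})$ is the change in $\difprof$'s payoff as quality improves: for the consumer it equals $\Vopt(\theta)-\Vopt(\theta')\ge0$ by Theorem~\ref{thm:dvdp} (since $\difprof$ is the aligned optimum at every quality when $q\ge\qH$); for the filter I would show $u(\difprof)$ is weakly increasing via the identity above, noting $\Lambda>1$ here (because $\Lambda>\mathcal{Q}(\theta')\ge1$) so $u(\difprof)$ increases with $a_1=\tfrac{1-\qL}{\qH-\qL}\big[\qH(1-\pi_1)-\tfrac{q}{1-q}(1-\pi_0)(1-\qH)\big]$, which is increasing in $\pi_0$ and decreasing in $\pi_1$. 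Thus both terms are nonnegative with $(\mathrm{II})$ strict, giving strictly positive \vot.

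\textbf{Main obstacle.} The hardest step is the ``Pareto-dominates any other equilibrium'' claim in (a): information cost enters the filter's payoff non-linearly in its mixing probabilities (exactly the difficulty flagged for Proposition~\ref{prop:aligned-best-eq}), so I cannot restrict to pure profiles and must control mixed equilibria. The route I see is that every examine-profile satisfies $u=(1-q)[a_1(b+c_1)\tfrac{\Lambda-1}{\Lambda}-c_1]$, so maximizing the filter's payoff reduces to maximizing the legitimate-acceptance probability $a_1$; one then shows $a_1$ is maximized by forwarding only the $\sfil=1$ content (i.e.\ by $\difprof$), because the blocked content is dirtier than the ignore threshold, $\Pr[X=0\mid\sfil=0]>q>\qH$, which makes $a_1$ monotone in the weight placed on $\strdif$. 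The remaining equilibria, in which the consumer ignores all content, give both players $-c_1$ (up to scaling) and are dominated outright.
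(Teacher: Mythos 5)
Your proof is correct, and its skeleton matches the paper's, but the machinery is genuinely different and in one place more complete. The paper proves this theorem by citation: equilibrium existence of $\difprof$ is Proposition~\ref{prop:semiebm} (the same deviation-to-$\strfwd$ computation you do, landing on exactly the comparison of $\mathcal{Q}$ with $\Lambda$), Pareto-dominance over all (including mixed) equilibria is Proposition~\ref{prop:nomixsemi}, whose proof runs through the explicit rational-inattention formulas of Lemma~\ref{lem:cbr} and the linearity facts of Lemma~\ref{lem:linq}, and the positive \vot is attributed to Theorem~\ref{thm:semi-VoT}. Your substitute for Lemma~\ref{lem:linq} and Proposition~\ref{prop:nomixsemi} is the invariant action-posterior property (accept-posterior $\qL$, ignore-posterior $\qH$), which yields the identity $u=(1-q)\big[a_1(b+c_1)\tfrac{\Lambda-1}{\Lambda}-c_1\big]$ and reduces every filter-payoff comparison --- the deviation checks and the dominance over mixed profiles alike --- to monotonicity of the legitimate-acceptance probability $a_1$; this is the same underlying structure (the paper's Lemma~\ref{lem:linq} identities are precisely these posterior-invariance relations in disguise), but your packaging makes the economics transparent and disposes of mixed equilibria in one stroke. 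On part (b), your decomposition into the within-$\difprof$ term (I) plus the inefficiency-gap term (II) is actually more careful than the paper's own proof, which cites only Theorem~\ref{thm:semi-VoT}: that theorem covers marginal changes within $\difprof$, whereas the jump from a $\theta'$-equilibrium (which is not $\difprof$) to $\difprof$ at $\theta$ needs exactly your term (II), supplied by Theorem~\ref{thm:semi-inef-gen}. Two minor loose ends, neither fatal: your linearity argument should also dismiss the ``unreasonable'' vertex of the filter's strategy square, which the paper handles separately (Proposition~\ref{prop:unreason2}); and invoking Theorem~\ref{thm:semi-inef-gen} at $\theta'$ presumes $\qdif(\theta')<\qH$, which is the section's standing regime assumption rather than a consequence of the stated hypotheses (a worse filter raises $\qdif$) --- though if it fails, every $\theta'$-equilibrium pays both players $-c_1$ and part (b) is immediate.
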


The intuition behind Theorem~\ref{thm:misaligned-escape} is as
follows.  As soon as the filter is of sufficiently high quality,
$\difprof$ becomes an equilibrium, is Pareto efficient and furthermore,
is the equilibrium preferred by both players.  Behaviorally, when the
filter is of sufficiently high quality, it is credible for the filter to use strategy $\strdif$. Content with a strong bad signal is so likely to be malicious that the filter prefers not to forward it.  As a result, the
filter can credibly commit to playing $\difprof$.  This
characterization again highlights the non-linear nature of filter
improvements and the importance of the filter meeting a baseline level
of quality.  However, unlike in the aligned section where it was the
\emph{consumer} that would abandon platforms with low quality filters,
with semi-aligned incentives it is the \emph{filter}'s incentive to
forward too much content that leads to inefficient outcomes with low quality
filters.

%Finally, once the players have reached the Pareto efficient
%equilibrium, further improvements in filter technology benefit both
%the consumer and the filter, as established by combining proposition
%\ref{cor:scaff-VoT} with the following theorem:

 %%%%%
Consider the regime of Theorem \ref{thm:misaligned-escape}(a),
\ie $\qH<q$ and $\mathcal{Q}(\pi_0, \pi_1)>\Lambda$.
Once the filter and consumer enter this regime, further improving the
filter would  keep them in that regime.
%\ascomment{This is an important point, even if it is obvious mathematically.}
%Our final result in
Theorem \ref{thm:semi-VoT} shows that such improvements would benefit both
players, and characterizes the resulting \vot.  %Recall that by
%Theorem~\ref{thm:misaligned-escape}(a) we can focus on the
%differentiating profile $\difprof$ as an equilibrium preferred by both
%players.

\begin{theorem}\label{thm:semi-VoT}
  Assume $\qL<\qdif<\qH<\qH$ and $\mathcal{Q}(\pi_0, \pi_1)>\Lambda$.
  Under equilibrium $\difprof$,   the \mvot is positive for both
  players, constant for the filter, and non-constant for the consumer.
\end{theorem}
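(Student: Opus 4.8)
The plan is to handle the two players separately, exploiting that under $\difprof$ their payoffs differ only by the information-cost term, and that we sit in the interior regime $\qdif\in(\qL,\qH)$ where the consumer acquires information. Recall $\difprof$ is the relevant equilibrium here by Theorem~\ref{thm:misaligned-escape}(a) (since $\qH<q$ and $\mathcal{Q}>\Lambda$), so the consumer best-responds to $\strdif$.

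For the consumer I would first note that its total payoff $\Vcon(\difprof)$ is the \emph{same} function of $(\pi_0,\pi_1)$ in both utility variants, because $\Vcon$ has an identical definition and the consumer best-responds to $\strdif$ either way. Since $\qdif\in(\qL,\qH)$, the derivatives of $\Vcon(\difprof)$ are exactly those of the interior case of Theorem~\ref{thm:dvdp}(c): $\partial\Vcon(\difprof)/\partial\pi_0=\tfrac{q}{1-q}\infoL\log\tfrac{\qH}{\qdif}$ and $\partial\Vcon(\difprof)/\partial\pi_1=\infoL\log\tfrac{1-\qH}{1-\qdif}$. Because $\qdif<\qH$, the first is positive and the second negative, so both \mvot entries $\big(\partial/\partial\pi_0,\,-\partial/\partial\pi_1\big)$ are positive; and since they depend on $\qdif$, which moves with $(\pi_0,\pi_1)$, the consumer's \mvot is non-constant.

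For the filter, whose payoff is $\Vfil(\difprof)=u(\difprof)/(1-q)$, I would compute $u(\difprof)$ via the aggregate action. By the characterization of the optimal signal underlying Proposition~\ref{prop:goalposts}, in the interior regime the consumer's signal induces two posterior beliefs; for the mutual-information cost these are prior-independent, and taking $\qdif\to\qL^+$ and $\qdif\to\qH^-$ identifies them as exactly $\qL$ (the ``accept'' posterior) and $\qH$ (the ``ignore'' posterior). Hence the fraction of accepted content that is malicious equals $\qL$, and the ex-ante acceptance probability is $w:=\Pr[\afil\cdot\acon=1]=\beta\cdot\tfrac{\qH-\qdif}{\qH-\qL}$, with $\beta=(1-\pi_0)q+(1-\pi_1)(1-q)$. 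Splitting the action payoff over $\{\afil\cdot\acon=1\}$ and its complement yields $u(\difprof)=wK-(1-q)c_1$, where $K:=(1-\qL)(b+c_1)-\qL c_2$ is precisely the accept-minus-ignore payoff gap at belief $\qL$, hence $K>0$ because accepting is optimal there by Proposition~\ref{prop:goalposts}. Using $\beta\qdif=(1-\pi_0)q$ I would rewrite $w=\tfrac{\qH(1-\pi_1)(1-q)-(1-\qH)(1-\pi_0)q}{\qH-\qL}$, which is \emph{affine} in $(\pi_0,\pi_1)$. Differentiating gives the constants $\partial\Vfil(\difprof)/\partial\pi_0=\tfrac{q}{1-q}\tfrac{K(1-\qH)}{\qH-\qL}$ and $-\partial\Vfil(\difprof)/\partial\pi_1=\tfrac{K\qH}{\qH-\qL}$, both positive since $K>0$ — the constant, positive filter \mvot.

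The main obstacle is the filter step: showing $u(\difprof)$ is affine in $(\pi_0,\pi_1)$. The consumer's conditional acceptance probabilities depend on $\qdif$ through the nonlinear rational-inattention (logit) solution, yet in $u(\difprof)$ they are multiplied by the forwarding probabilities $(1-\pi_x)$; the cancellation that renders the product affine hinges entirely on the two induced posteriors being the \emph{prior-independent} constants $\qL$ and $\qH$. Establishing this invariance and the identification of the posteriors is the crux; once it is in place, the remaining manipulations and the sign check $K>0$ (read off as a payoff gap) are routine bookkeeping.
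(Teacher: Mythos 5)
Your proposal is correct, and its skeleton matches the paper's: the consumer's positive, non-constant \mvot is read off from Theorem~\ref{thm:dvdp}(c) exactly as the paper does (its proof literally says the consumer part ``follows directly'' from that theorem), and the filter's constant, positive \mvot comes from showing the action payoff $u(\difprof)$ is affine in $(\pi_0,\pi_1)$. Where you genuinely diverge is in how that affineness is established. The paper's Lemma~\ref{lem:linq} obtains it by direct algebra: plugging the explicit best response of Lemma~\ref{lem:cbr} into the products $\qdif\,(1-\tilde{\pi}_0(\qdif))$, $(1-\qdif)\,\tilde{\pi}_1(\qdif)$, etc., and checking each is affine in $\qdif$ with no other dependence on $(\pi_0,\pi_1)$; combined with $\beta\,\qdif=(1-\pi_0)q$ this makes the whole expression affine, after which the paper simply asserts ``taking derivatives is then straightforward.'' You instead invoke the locally-invariant-posteriors property of mutual-information costs---the posterior on accepted content is exactly $\qL$ and on ignored content exactly $\qH$, independent of the prior $\qdif\in(\qL,\qH)$---plus Bayes plausibility to get the acceptance probability $w=\beta\,(\qH-\qdif)/(\qH-\qL)$, which is affine. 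The two routes rest on the same underlying fact (your posterior identification is equivalent to Lemma~\ref{lem:linq}), but yours explains \emph{why} the cancellation happens, and it buys a clean closed form $u(\difprof)=wK-(1-q)c_1$ with explicit derivative constants and an explicit positivity certificate $K>0$---a sign check the paper's proof omits entirely even though positivity is part of the theorem's claim. One small caveat: reading $K>0$ off Proposition~\ref{prop:goalposts} only gives $K\geq 0$ (accepting is weakly optimal at belief $\qL$); the strict inequality your argument needs follows from $\qL<(b+c_1)/(b+c_1+c_2)$, equivalently $\bigl(1-e^{-(b+c_1)/\infoL}\bigr)/\bigl(e^{c_2/\infoL}-1\bigr)<(b+c_1)/c_2$, which holds for all positive parameters; this deserves one explicit line.
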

%} %%%%%%%

% \ascomment{I think this thm can be stated very simply (as above) compared to the previous version (below). Also, note that I'm directly stating the final conclusion: in such and such regime, VoTC is such and such, as opposed to a piece which implies this conclusion. }

% \begin{theorem}\label{thm:semi-VoT-old}

% \ascomment{OLD VERSION OF THEOREM~\ref{thm:semi-VoT}, TO BE REMOVED.}
% Assume $\qL<\qdif<\qH$.  Then:
% \begin{itemize}
% \item[(a)] \emthm{Filter Positive and Piecewise Constant \vot.}
% Under $\difprof$, there is a constant \vot for the filter  when
% $\qdif>\qL$ (resp., when $\qdif<\qL$).
% %, but does depend on $\infoL$.
% \item[(b)] \emthm{Consumer Positive and Non-linear \vot.}
% Under $\difprof$, there is a  positive and non-linear \vot for the consumer
%    % $\rbr{\partialpi{0}\Vcon(\difprof),\;  -\partialpi{1}\Vcon(\difprof)}$
%      $\partialV{\Vcon}$
% that satisfies \refeq{eq:thm:dvdp-nonlinear} with $\Vopt = \Vcon(\difprof)$.
% \end{itemize}
% \end{theorem}

%\jgcomment{Put reference to this theorem in intro}

\begin{prewebconf}

While it would be desirable to characterize \vot in the semi-aligned
case, the mismatch between the consumer and the filter leads to
multiple equilibria that are, in general, \emph{not} Pareto
comparable.  Nevertheless, we a) show there are only $3$ possible
Pareto-incomparable equilibrium profiles that are not weakly Pareto
dominated and b) derive the \vot under those profiles.  In other words, we
do not claim to resolve the equilibrium selection issue (which cannot
be resolved through Pareto dominance as in the case of aligned incentives), but we
derive the \vot under each joint profile that may be a Pareto
incomparable equilibrium.

\jgcomment{ I think we can delete the rest of this section}
Figure \ref{fig:semi-ineff} illustrates the intuition behind theorem
\ref{thm:semi-inef-gen}.  In the plot, $\tilde{q}$ is the value of
$\qdif$ such that the filter is indifferent between $\fwdprof$ and
$\difprof$.  The pink area between $\tilde{q}$ and $q_{\Lambda}$
represents the region where both players (the consumer's payoffs are
not shown) prefer $\difprof$ over $\fwdprof$. However, under the
consumer's best response to $\difprof$ the filter can do better by
instead choosing $\strfwd$ and thus $\difprof$ is not an equilibrium.
This is represented in the figure as $\Vfil(\strfwd, BR(\strdif))$
being greater than $\Vfil(\difprof)$.  However, once the filter
quality increases such that $\qdif>q_{\Lambda}$, the filter no longer
has an incentive to deceive the consumer and thus the inefficiency
vanishes in the green region.  Furthrmore, when $\qdif>q_{\Lambda}$,
$V(\fwdprof)$ is Pareto dominated by $V(\difprof)$ so if we assume
equilibrium selection chooses among non-Pareto dominated equilibria,
when $\qdif$ rises abouve $q_{\Lambda}$, there is a positive jump in
equilibrium payoffs as the players switch from $\fwdprof$ to
$\blkprof$.\footnote{Note the plot was generated such that $q<\qH$
  providing evidence to our assertion in the previous footnote that
  $q>\qH$ is not a necessary condition for the inefficiency to exist.}

\begin{figure}
  \centering
  \includegraphics[width=.8\textwidth]{./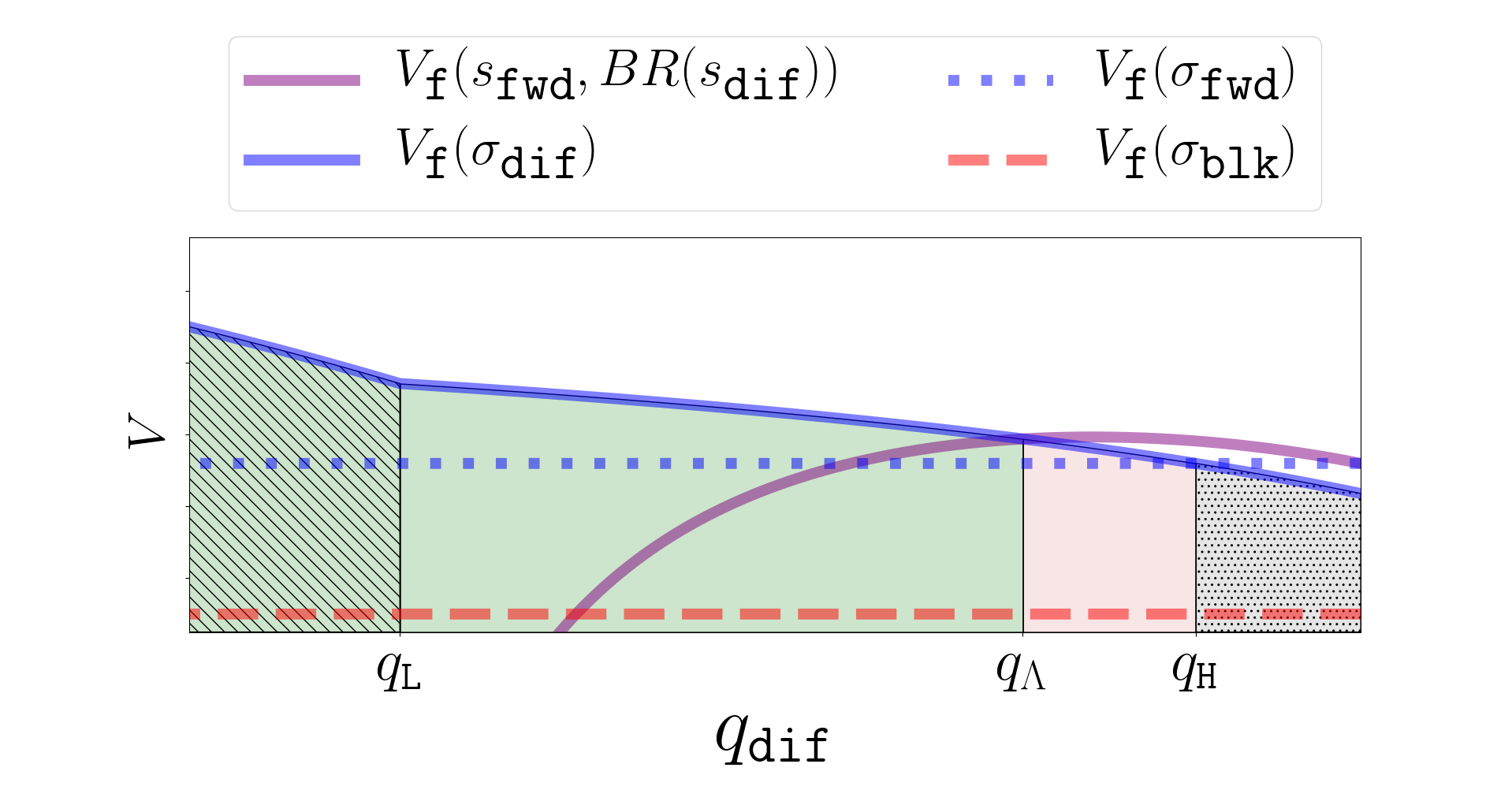}
  \caption{Filter's payoff under different strategy profiles.
    $\Vfil(\strfwd, BR(\strdif))$ is the filter's expected payoff when
    it chooses $\strfwd$ but the consumer best responds to $\strdif$.
    The parameters that generated this figure are $\lambda=2, b=1,
    c_1=1, c_2=4, q=.5, \pi_2=.2$ and $\pi_1$ goes from $.2$ to $1$.}
  \label{fig:semi-ineff}
\end{figure}

As previously mentioned, in the case of semi-aligned utilities,
equilibrium selection issues prohibits us from fully characterizing
\vot.  We now formalize this equilibrium selection problem and derive
the VoT under those equilibrium profiles.  We first establish that
there is only one properly mixed filter's strategy that can be an
equilibrium with higher payoffs than $V_i(\blkprof)$.  We then
identify in the same parameter regime the existence of $2$ pure
strategy equilibria.  Together, these results imply that there are at
most $3$ non-Pareto dominated equilibria.

\begin{proposition}\label{prop:semi-regimes}
  Assume $\qL<q<\qdif<\qH$ and
% \item[(a)] If
%     $\Lambda < \frac{\pi_0}{\pi_1}$
% then $\difprof$ is  and equilibrium %Pareto-optimal aequilibrium (and $\fwdprof$ is not an equilibrium).
% \item[(b)] If
%     $\Lambda > \frac{\pi_0(1-\pi_1)}{\pi_1(1-\pi_0)}$
% then $\fwdprof$ is an equilibrium % Pareto-optimal equilibrium (and $\difprof$ is not an equilibrium).
$\Lambda \in \rbr{\frac{\pi_0}{\pi_1},\, \frac{\pi_0(1-\pi_1)}{\pi_1(1-\pi_0)}}$
then both $\difprof$ and $\fwdprof$ are equilibria with payoffs higher
that $V_i(\blkprof)$.
% \asedit{Moreover, there is another equilibrium $\mixprof$, with (properly) mixed filter's strategy, such that $V_i(\mixprof)$ is sandwitched between $V_i(\difprof)$ and  $V_i(\fwdprof)$ for each player
%    $i\in\{\fil,\con\}$.}
% All other equilibria are weakly Pareto-dominated by $\difprof$, $\fwdprof$, and $\mixprof$.
\end{proposition}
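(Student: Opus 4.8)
The plan is as follows. Since $\difprof$ and $\fwdprof$ are \emph{by definition} consumer-optimal, the only conditions left to check are (i) that the filter best-responds in each profile and (ii) that both players' payoffs strictly exceed their payoffs under $\blkprof$. The hypotheses place both forwarded-content posteriors --- $\qdif$ under $\strdif$ and $q$ under $\strfwd$ --- strictly inside $(\qL,\qH)$, so by Proposition~\ref{prop:goalposts} the consumer examines (an interior best response) in both profiles. Because the players move simultaneously, when I test a filter deviation I hold the consumer's information and action strategy fixed at its best response to the profile under consideration.

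First I would reduce the filter's problem to two scalar sign tests. Write $g_\psi=\Pr[\afil=1\mid\sfil=\psi]$ for the filter's per-signal forwarding probabilities and let $\alpha_x=\Pr[\acon=1\mid X=x,\afil=1]$ denote the consumer's (fixed) acceptance probabilities. Then the probability that type-$x$ content is forwarded is $f_x=\pi_x g_0+(1-\pi_x)g_1$, and expanding the payoff table gives the per-piece action payoff
\[
  u=-(1-q)c_1-q c_2\,\alpha_0 f_0+(1-q)(b+c_1)\,\alpha_1 f_1,
\]
which is affine in $(f_0,f_1)$ and therefore affine and \emph{separable} in $(g_0,g_1)\in[0,1]^2$. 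Hence the filter's best response is determined by the signs of the two coefficients: $g_0=0$ is optimal iff $q c_2\alpha_0\pi_0\ge(1-q)(b+c_1)\alpha_1\pi_1$, and $g_1=1$ is optimal iff $q c_2\alpha_0(1-\pi_0)\le(1-q)(b+c_1)\alpha_1(1-\pi_1)$. The profile $\difprof$ is the corner $(g_0,g_1)=(0,1)$ and $\fwdprof$ is $(1,1)$; separability means this test automatically dominates the ``unreasonable'' corner $(1,0)$, so no separate argument is needed.

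Next I would substitute the consumer's interior best response. The key input --- obtained from the analysis behind Proposition~\ref{prop:goalposts}, where the optimal accept- and ignore-posteriors are prior-invariant and equal to $\qL$ and $\qH$ --- is that the acceptance probabilities satisfy $\alpha_1/\alpha_0=\tfrac{1-\qL}{\qL}\cdot\tfrac{\hat q}{1-\hat q}$, with $\hat q$ the forwarded-content posterior. For $\difprof$ I set $\hat q=\qdif$ and use $\tfrac{\qdif}{1-\qdif}=\tfrac{q(1-\pi_0)}{(1-q)(1-\pi_1)}$ (Bayes' rule applied to \eqref{eq:eff-defn}); substituting into the $g_0$-coefficient test and cancelling $\tfrac{q}{1-q}$ collapses it exactly to $\Lambda\le\mathcal{Q}(\pi_0,\pi_1)$, while the $g_1$-coefficient test reduces to $\Lambda\ge1$ --- recovering Theorem~\ref{thm:misaligned-escape}. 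For $\fwdprof$ I set $\hat q=q$: the $g_0$-test becomes $\Lambda\ge\pi_0/\pi_1$ and the $g_1$-test becomes $\Lambda\ge\tfrac{1-\pi_0}{1-\pi_1}$, which is automatic since $\pi_0\ge\pi_1$. Thus both profiles are equilibria exactly when $\pi_0/\pi_1\le\Lambda\le\mathcal{Q}$, i.e. on the hypothesized interval.

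Finally I would verify the comparisons with $\blkprof$. On the filter side, $u(\strblk)=-(1-q)c_1$ (nothing is forwarded, and $\Vfil=u/(1-q)$ is a positive monotone transform that preserves the ordering); subtracting $u(\strblk)$ from $u$ at each corner reduces, after the same cancellations, to $\Lambda>1$, which holds because a nonempty interval $(\pi_0/\pi_1,\mathcal{Q})$ forces $\pi_0>\pi_1$ and hence $\Lambda>\pi_0/\pi_1>1$. On the consumer side, observe that ignoring all forwarded content reproduces exactly $\vcon(\blkprof)$ (every piece then carries aggregate action $0$ at zero information cost); since $\hat q\in(\qL,\qH)$ strictly, Proposition~\ref{prop:goalposts} makes examination \emph{strictly} better than either boundary action, so $\vcon>\vcon(\blkprof)$ and likewise $\Vcon>\Vcon(\blkprof)$. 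The main obstacle is the third step: establishing the exact acceptance ratio $\alpha_1/\alpha_0$ (equivalently, the prior-invariance of the consumer's optimal posteriors) from the rational-inattention problem --- once that ratio is in hand, every remaining step is elementary algebra.
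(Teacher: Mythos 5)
Your proof is correct, and its core coincides with the paper's own machinery: hold the consumer's (unique) best response fixed, exploit that the filter's semi-aligned payoff is affine and separable in its per-signal forwarding probabilities, and reduce each deviation test to a threshold comparison with $\Lambda$. Your key input, $\alpha_1/\alpha_0=\frac{1-\qL}{\qL}\cdot\frac{\hat q}{1-\hat q}$, is precisely what the paper supplies via Lemma~\ref{lem:cbr} and Lemma~\ref{lem:linq}, together with the identity $\frac{e^{b/\lambda}(1-e^{-c_2/\lambda})}{e^{-c_2/\lambda}(e^{b/\lambda}-e^{-c_1/\lambda})}=\frac{1-\qL}{\qL}$ invoked in the proof of Proposition~\ref{prop:semiebm}; so the ``main obstacle'' you flag is already cleared by the paper's lemmas. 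Where you go beyond the paper: the appendix only proves the $\difprof$ half of this proposition (Proposition~\ref{prop:semiebm}, which is your $g_0$-test $\Lambda\le\mathcal{Q}(\pi_0,\pi_1)$; the paper leaves the $\strblk$-deviation, i.e.\ your $g_1$-test $\Lambda\ge 1$, implicit), while the $\fwdprof$ equilibrium conditions and both comparisons with $\blkprof$ are never proven anywhere; your separable-affine reduction handles all of these uniformly, including dismissing the unreasonable corner without a separate argument. Two minor points: (i) the hypothesis as printed, $\qL<q<\qdif<\qH$, is literally vacuous since $\qdif<q$ always by \eqref{eq:eff-defn}; your reading---both forwarded-content posteriors lie strictly in $(\qL,\qH)$---is clearly the intended one; (ii) for the $\fwdprof$-versus-$\blkprof$ comparison the payoff difference is the sum of both coefficients, so it reduces to $\Lambda>\pi_0/\pi_1$ together with $\Lambda>\frac{1-\pi_0}{1-\pi_1}$ rather than literally to $\Lambda>1$, but both hold under the hypothesis, so your conclusion stands.
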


\begin{proposition}\label{prop:semi-mixed}
%Consider the regim (c) in Proposition~\ref{prop:semi-regimes}.
  There is an equilibrium $\mixprof$ with (properly) mixed filter's
  strategy that strictly Pareto-improves over $\blkprof$ if and only if
  $\Lambda \in \rbr{\frac{\pi_0}{\pi_1},\,
    \frac{\pi_0(1-\pi_1)}{\pi_1(1-\pi_0)}}$ and $\qdif<\qH$.  Any
  other equilibrium with (properly) mixed filter's strategy is
  utility-equivalent to $\blkprof$.
%(In fact, $\mixprof$ strictly improves over $\blkprof$ for both players.)
%Furthermore, there is a sub-regime of parameters such that equilibria $\difprof,\fwdprof,\mixprof$ are pairwise Pareto-incomparable.
\end{proposition}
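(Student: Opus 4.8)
The plan is to reduce the filter's mixed strategy to the pair of forwarding probabilities $(\alpha_0,\alpha_1)$, where $\alpha_\psi=\Pr[\afil=1\mid\sfil=\psi]$, and to exploit that in the simultaneous game the filter best-responds to the consumer's \emph{fixed} equilibrium acceptance probabilities $P_x=\Pr[\acon=1\mid X=x,\,\afil=1]$. Holding $P_0,P_1$ fixed, the filter's objective is $u(\blkprof)+\sum_\psi\alpha_\psi\,\Delta(\psi)$, where $\Delta(\psi):=\Pr[\sfil=\psi]\,(\Pr[X{=}1\mid\sfil{=}\psi]\,P_1(b+c_1)-\Pr[X{=}0\mid\sfil{=}\psi]\,P_0c_2)$ is the marginal gain from forwarding a signal-$\psi$ piece. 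Thus a properly mixed best response forces $\Delta(\psi)=0$ on the randomized signal. Since signal $1$ is more likely legitimate than signal $0$ (as $\pi_0\ge\pi_1$) and the optimal consumer accepts legitimate content at least as often ($P_1\ge P_0$), we get $\Delta(1)>\Delta(0)$, so the filter randomizes on at most one signal. By Lemma~\ref{lem:belief-BR} and Proposition~\ref{prop:goalposts} the consumer's response is pinned by the induced belief $\qfunc(\mixstrfil)$, the examine regime being $\qfunc(\mixstrfil)\in(\qL,\qH)$; reachable beliefs under reasonable strategies lie in $[\qdif,q]$, so any belief exceeding $\qH$ makes the consumer ignore everything, collapsing the filter's payoff to $u(\blkprof)$ and (since the consumer then pays no information cost) also the consumer's payoff to that of $\blkprof$.

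The crux is an explicit evaluation of $g(q'):=P_1(b+c_1)/(P_0c_2)$ along the examine regime. Using the optimal rational-inattention behavior behind Proposition~\ref{prop:goalposts}, the consumer's posterior after accepting is the invariant value $\qL$, i.e. $\Pr[X=0\mid\acon=1]=\qL$ for every prior $q'\in(\qL,\qH)$. This gives $P_1/P_0=\tfrac{q'}{1-q'}\tfrac{1-\qL}{\qL}$ and hence the clean identity $g(q')=\tfrac{q'}{1-q'}\,\Lambda$, strictly increasing in $q'$. The filter's indifference on a randomized signal $\psi$ then reads (signal-$\psi$ odds) $=\tfrac{q'}{1-q'}\Lambda$. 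Now the signal-$0$ odds equal $\tfrac{q\pi_0}{(1-q)\pi_1}=\tfrac{\qdif}{1-\qdif}\mathcal{Q}$ and the signal-$1$ odds equal $\tfrac{\qdif}{1-\qdif}$. Because $\Lambda>\pi_0/\pi_1\ge1$, randomizing on signal $1$ would require $\tfrac{\qdif}{1-\qdif}=\tfrac{q'}{1-q'}\Lambda\ge\tfrac{\qdif}{1-\qdif}\Lambda$, which is impossible; indeed $\Delta(1)>0$ throughout the examine regime. Hence the only properly mixed, non-$\blkprof$-equivalent candidate is $\alpha_1=1,\ \alpha_0\in(0,1)$, the interpolation $\strdif\to\strfwd$.

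As $\alpha_0$ sweeps $(0,1)$ the belief $\qfunc$ sweeps $(\qdif,q)$ monotonically, so in the maintained regime $\qL<\qdif<q<\qH$ (cf. Proposition~\ref{prop:semi-regimes}) the equation $\Delta(0)=0$ has a solution with $\qfunc=q^*\in(\qdif,q)$ iff $\tfrac{q\pi_0}{(1-q)\pi_1}\in\big(g(\qdif),g(q)\big)=\big(\tfrac{\qdif}{1-\qdif}\Lambda,\ \tfrac{q}{1-q}\Lambda\big)$. The lower bound is equivalent to $\mathcal{Q}>\Lambda$ and the upper bound to $\Lambda>\pi_0/\pi_1$; together with reachability $\qdif<\qH$ this is exactly $\Lambda\in(\pi_0/\pi_1,\mathcal{Q})$ and $\qdif<\qH$. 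This is the familiar picture of two strict pure equilibria ($\difprof$ and $\fwdprof$, by the same two bounds) sandwiching an interior mixed one; strict monotonicity of $g$ makes $q^*$, hence $\mixprof$, unique. Every other properly mixed equilibrium has belief $>\qH$ and is $\blkprof$-equivalent, which proves the second claim. For strict Pareto improvement: at $\mixprof$ the filter uses $\alpha_1=1$ with $\Delta(1)>0$ and $\Delta(0)=0$, so $\Vfil(\mixprof)=u(\blkprof)+\Delta(1)>u(\blkprof)$; and since the consumer can always secure $\Vcon(\blkprof)$ by ignoring all forwarded content, optimality of examining at $q^*\in(\qL,\qH)$ (Proposition~\ref{prop:goalposts}) gives $\Vcon(\mixprof)>\Vcon(\blkprof)$.

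The main obstacle is the crux identity $g(q')=\tfrac{q'}{1-q'}\Lambda$; once it is in hand, the rest is sign bookkeeping and the sandwich argument. Establishing it cleanly requires the invariant-posterior property of mutual-information rational inattention — that the post-acceptance posterior equals $\qL$ regardless of the prior in $(\qL,\qH)$ — which I would either quote from the proof of Proposition~\ref{prop:goalposts} or re-derive from the consumer's first-order conditions (the logit form $P_x\propto \Pr[\acon{=}1]\exp(u(1,x)/\infoL)$, together with the belief-consistency equation $\Pr[\acon{=}1]=q'P_0+(1-q')P_1$). A secondary care point is confirming $P_1\ge P_0$ and the monotonic sweep of $\qfunc$ in $\alpha_0$, both of which are routine but must be stated to justify uniqueness.
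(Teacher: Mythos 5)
Your proof is essentially correct, but one thing should be flagged up front: the paper never actually proves this proposition. It sits in the \texttt{prewebconf} draft material (which the paper excludes from compilation), and no proof of it appears in Appendix~\ref{sec:proofs}. So the comparison can only be against the paper's adjacent machinery, and by that measure your argument is the natural completion of it. Your crux identity --- that the optimal consumer's post-acceptance posterior is pinned at $\qL$ for every prior $q'\in(\qL,\qH)$, hence $P_1(b+c_1)/(P_0c_2)=\frac{q'}{1-q'}\Lambda$ --- is exactly the content of the paper's Lemma~\ref{lem:linq}: the linear-in-$q'$ formulas there (modulo sign typos in the paper) make $q'P_0$ and $(1-q')P_1$ both proportional to $\qH-q'$, which is the invariant-posterior property, and both follow from the logit form in Lemma~\ref{lem:cbr}. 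Your decomposition of the filter's payoff as $u(\blkprof)+\sum_\psi\alpha_\psi\Delta(\psi)$, linear in the mixing probabilities because the filter does not internalize information costs, is the same device used in the proof of Proposition~\ref{prop:nomixsemi}; and your endpoint evaluations $g(\qdif)$ and $g(q)$ reproduce the thresholds $\mathcal{Q}(\pi_0,\pi_1)$ and $\pi_0/\pi_1$ in precisely the way Proposition~\ref{prop:semiebm} produces the condition $\mathcal{Q}>\Lambda$ for $\difprof$ to be an equilibrium. The uniqueness and strict-Pareto-improvement steps (monotone sweep of the belief in $\alpha_0$, $\Delta(1)>0$ at indifference, and the consumer's strict gain $\lambda\,\KL{q^*}{\qH}>0$ over ignoring) are all sound.

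Three caveats, none fatal. First, the early claim ``$\Delta(1)>\Delta(0)$'' should be stated for the normalized quantities $\Delta(\psi)/\Pr[\sfil=\psi]$; what you actually need (and what is true when $\pi_0>\pi_1$) is that joint indifference $\Delta(0)=\Delta(1)=0$ forces $P_0=P_1=0$, i.e., the $\blkprof$-equivalent case. Second, the ``if and only if'' is exact only inside the maintained regime $\qL<\qdif<q<\qH$, which you correctly invoke; outside it the proposition as stated can fail non-generically. For instance, if $q\leq\qL$ and the knife-edge condition $\frac{q\pi_0}{(1-q)\pi_1}=\frac{b+c_1}{c_2}$ holds, there is a properly mixed equilibrium with an always-accepting consumer that strictly Pareto-improves on $\blkprof$ even though $\Lambda\leq\pi_0/\pi_1$; and if $q>\qH$, the condition $\qdif<\qH$ is no longer sufficient, since the interior solution $q^*$ of the indifference equation must itself lie below $\qH$. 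Third, the summary sentence ``every other properly mixed equilibrium has belief $>\qH$'' gives the wrong reason: configurations such as $\alpha_1=0$ with $\alpha_0\in(0,1)$, or $\alpha_0=1$ with $\alpha_1\in(0,1)$, are ruled out because their indifference conditions would force $\Lambda\leq 1$ (contradicting $\Lambda>1$), not because the induced belief exceeds $\qH$; your own indifference argument already covers these cases, so only that sentence needs rewording.
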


% \medskip

% Next, we partially characterize equilibrium selection \nicomment{do we explicitly discuss equil selection in the aligned case? we should use the term equil selection...} and the associated value of technology. We find similar effects as for aligned utilities, arising due to the strategic nature of the consumer. However, equilibrium selection is not as clear-cut.

%\ascomment{not really, the prop itself, as stated, does not say
%anything about that. on the other hand, (a) and (b) directly talk
%about pareto-optimality.}  Specifically,

While Proposition \ref{prop:semi-regimes} and \ref{prop:semi-mixed}
establish that in a particular parameter regime, there are three
equilibrium profiles that earn payoffs higher than $V_i(\blkprof)$, if
one of these profiles Pareto dominated the others, then equilibrium
selection would be trivial.  Unfortunately, in general these profiles
are pairwise Pareto-incomparable.  Figure \ref{fig:incomp} illustrates
one such case, when $\Vfil(\fwdprof)>\Vfil(\mixprof)>\Vfil(\difprof)$
and $\Vcon(\fwdprof)<\Vcon(\mixprof)<\Vcon(\difprof)$.  Thus,
equilibrium selection cannot always be resolved via Pareto
optimality. We do not claim to resolve equilibrium selection but
instead,  characterize \vot \emph{at a particular profile} that in
some regimes is an equilibrium.

\begin{figure}
  \includegraphics[width=.8\textwidth]{./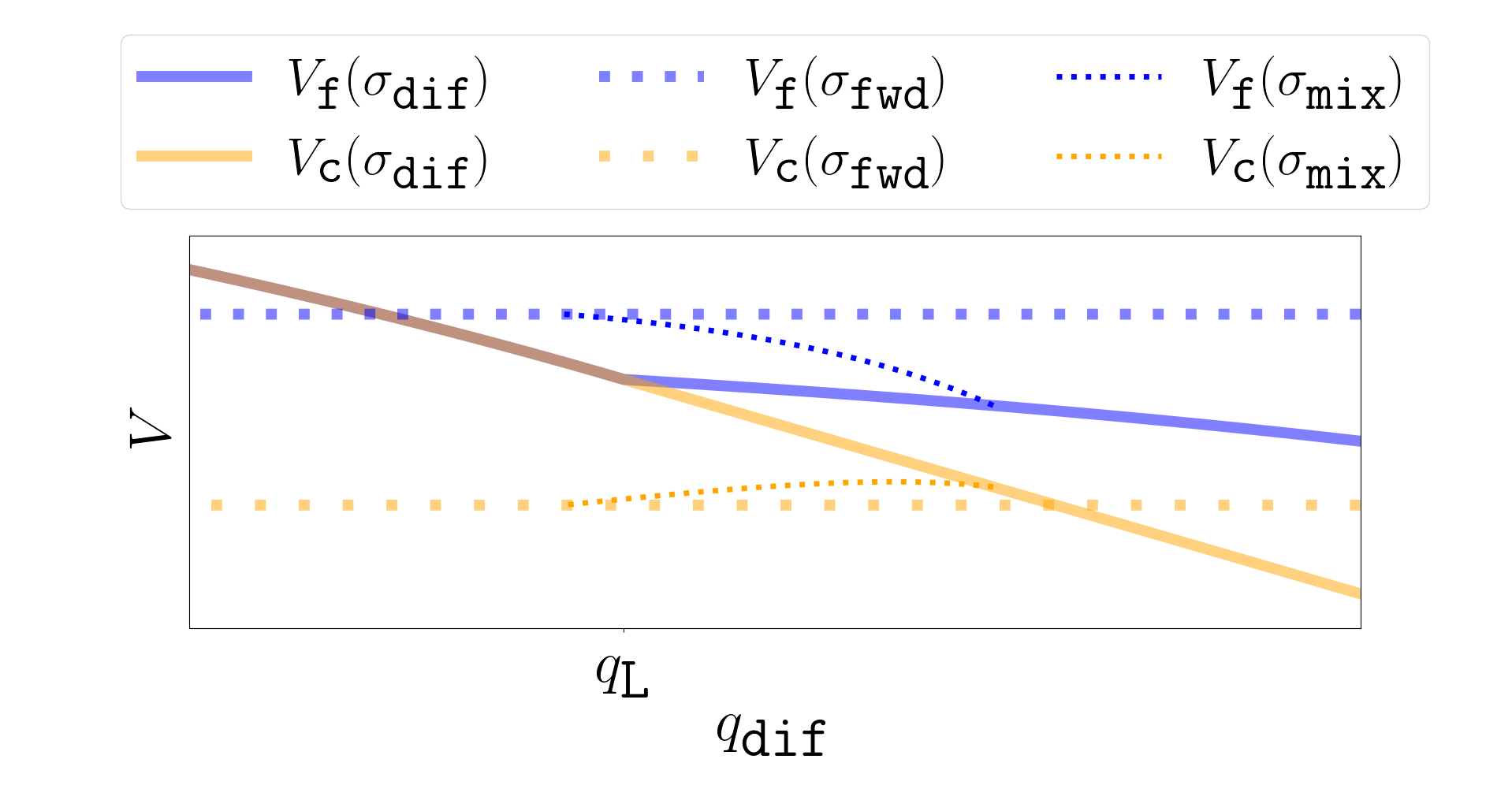}
  \caption{Payoffs under different profiles.  The payoffs under the
    mixed equilibria are only plotted in the region where such
    an equilibrium exists.  The parameters that generated the figure are
    $\lambda=2, b=1, c_1=1, c_2=2, q=.5, \pi_2=.2$ and $\pi_1$ ranges
    from $.2$ to $.8$.}
  \label{fig:incomp}
\end{figure}

\begin{theorem}\label{thm:semi-VoT}
~
\begin{itemize}

\item[(a)] \emthm{Zero \vot}
Under $\fwdprof$, we have
    $\partialpi{x}V_i(\fwdprof)=0$
%    $\frac{\partial{V_i(\fwdprof)}}{\partial \pi_x}=0$
for any $i\in\cbr{\fil,\con}$ and $x\in\bin$.

\item[(b)] \emthm{Filter Positive and Piecewise Constant \vot.}
Under $\difprof$, there is a constant \vot for the filter  when
$\qdif>\qL$ (resp., when $\qdif<\qL$).
%, but does depend on $\infoL$.

\item[(c)] \emthm{Consumer Positive and Non-linear \vot.}
Under $\difprof$, there is a  positive and non-linear \vot for the consumer
   % $\rbr{\partialpi{0}\Vcon(\difprof),\;  -\partialpi{1}\Vcon(\difprof)}$
     $\partialV{\Vcon}$
that satisfies \refeq{eq:thm:dvdp-nonlinear} with $\Vopt = \Vcon(\difprof)$.

\end{itemize}
\end{theorem}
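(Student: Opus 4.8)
The plan is to treat the two players separately, exploiting the fact that the consumer's payoff function $\Vcon$ is \emph{identical} in the aligned and semi-aligned models (only the filter's objective differs), while the filter's payoff $\Vfil=u/(1-q)$ must be computed directly. Throughout I use that, by Theorem~\ref{thm:misaligned-escape}(a), the stated hypotheses ($\qdif\in(\qL,\qH)$, $\qH<q$, and $\mathcal{Q}(\pi_0,\pi_1)>\Lambda$) guarantee that $\difprof$ is an equilibrium, so the consumer best-responds to $\strdif$; and that infinitesimal improvements to the filter keep $\qdif$ strictly inside $(\qL,\qH)$, so the relevant derivatives are well-defined.

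For the consumer (part (c)): under $\difprof$ the consumer best-responds to the filter strategy $\strdif$ using exactly the same action payoffs and information cost as in the aligned model, so $\Vcon(\difprof)$ is the very function whose derivatives were computed in Section~\ref{sec:aligned}. Since $\qdif\in(\qL,\qH)$, the computation behind Theorem~\ref{thm:dvdp}(c) applies verbatim with $\Vopt$ replaced by $\Vcon(\difprof)$, yielding \eqref{eq:thm:dvdp-nonlinear}. These derivatives have the signs stated there, so the consumer's \mvot is positive; and they depend on $\qdif$, which itself moves with $(\pi_0,\pi_1)$, so the \mvot is non-constant.

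For the filter I first pin down the consumer's behavior under $\difprof$. Because $\qdif\in(\qL,\qH)$, Proposition~\ref{prop:goalposts} says the consumer examines; for the Shannon-entropy cost the optimal interior strategy uses the two prior-independent posteriors $\qL$ (upon which it accepts) and $\qH$ (upon which it ignores), with acceptance weight $(\qH-\qdif)/(\qH-\qL)$ so that the posteriors average to $\qdif$. Writing $P_{\mathrm{acc}}$ for the ex-ante acceptance probability, namely the forwarding probability $\beta=(1-\pi_0)q+(1-\pi_1)(1-q)$ times the acceptance weight, and using $\Pr[X=0\mid \text{accepted}]=\qL$, the filter's action payoff collapses to
\[
  u(\difprof)=P_{\mathrm{acc}}\cdot\left[(b+c_1)(1-\qL)-c_2\qL\right]-c_1(1-q).
\]
The key algebraic step is that the nonlinear pieces cancel: since $\beta\,\qdif=(1-\pi_0)q$,
\[
  P_{\mathrm{acc}}=\beta\,\frac{\qH-\qdif}{\qH-\qL}
    =\frac{(1-\pi_1)(1-q)\qH-(1-\pi_0)q(1-\qH)}{\qH-\qL},
\]
which is affine in $(\pi_0,\pi_1)$ (as $\qL,\qH,q$ do not depend on filter quality). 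Hence $\Vfil(\difprof)=u(\difprof)/(1-q)$ is affine, so its \mvot is constant, with
\[
  \frac{\partial \Vfil(\difprof)}{\partial\pi_0}=\frac{K}{1-q}\,\frac{q(1-\qH)}{\qH-\qL},
  \qquad
  -\frac{\partial \Vfil(\difprof)}{\partial\pi_1}=K\,\frac{\qH}{\qH-\qL},
\]
where $K:=(b+c_1)(1-\qL)-c_2\qL$. Positivity then reduces to $K>0$, i.e. $\qL<\tfrac{b+c_1}{b+c_1+c_2}$, equivalently $\Lambda>1$; this holds because $\qL$ is the posterior at which the consumer strictly prefers to accept, hence must lie strictly below the myopic acceptance threshold. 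This gives the filter a positive, constant \mvot.

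The main obstacle is the filter computation: correctly identifying the consumer's optimal two-posterior strategy (so every acceptance carries the fixed belief $\qL$), and then spotting the cancellation that makes $P_{\mathrm{acc}}$, and hence $u(\difprof)$, affine in filter quality even though the consumer's best response depends nonlinearly on $(\pi_0,\pi_1)$ through $\qdif$. The sign of $K$—that the accept-posterior sits strictly below the naive threshold—is the one place where the rational-inattention/equilibrium structure is essential.
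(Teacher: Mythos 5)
Your proposal is correct on the substantive parts, and it splits the work exactly as the paper does: the consumer claim (part (c)) is handled identically in both, by observing that $\Vcon$ is the same function as in the aligned model and importing the computation of Theorem~\ref{thm:dvdp}(c) verbatim. For the filter claim (part (b)) your route is different in packaging, though it rests on the same structural fact. The paper writes $\Vfil(\difprof)$ in terms of the consumer's optimal information strategy from Lemma~\ref{lem:cbr} and invokes Lemma~\ref{lem:linq}, which states that the joint probabilities such as $(1-\tilde{\pi}_0(\qdif))\qdif$ and $\tilde{\pi}_1(\qdif)(1-\qdif)$ are linear in $\qdif$ with no other dependence on $(\pi_0,\pi_1)$; combined with $\beta\qdif=(1-\pi_0)q$ this makes $\Vfil(\difprof)$ affine in $(\pi_0,\pi_1)$. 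Your posterior-based argument is the dual view of the same fact: Lemma~\ref{lem:linq}'s formulas are precisely the statement that the action-conditional posteriors are the prior-independent constants $\qL$ (accept) and $\qH$ (ignore), which is what you assert and then exploit via Bayes plausibility. Your packaging buys two things the paper's proof does not supply: explicit closed-form derivatives, and an explicit verification of \emph{positivity}. The paper's proof ends with ``taking derivatives is then straightforward'' and never checks the sign; you reduce positivity to $K=(b+c_1)(1-\qL)-c_2\qL>0$, i.e.\ $\Lambda>1$ with $\Lambda$ as in \eqref{eq:lambda-defn}, which indeed always holds. Your one-line justification of this (that $\qL$ sits strictly below the myopic acceptance threshold) is correct but deserves a proof; it follows from Lemma~\ref{lem:terms} by letting $\qhat\to\qL$ in the identity $\widehat{u}_1(\qhat)-\widehat{u}_0(\qhat)=\infoL\sbr{\KL{\qhat}{\qH}-\KL{\qhat}{\qL}}$, which gives $\widehat{u}_1(\qL)-\widehat{u}_0(\qL)=\infoL\,\KL{\qL}{\qH}>0$, exactly the inequality $K>0$.

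Two omissions relative to the statement as written. First, you never address part (a): zero \vot under $\fwdprof$. This is immediate --- under $\strfwd$ nothing in either player's payoff depends on $(\pi_0,\pi_1)$, since $\qfunc(\strfwd)=q$ (this is Proposition~\ref{cor:scaff-VoT}(a), whose argument covers the semi-aligned filter as well) --- but it is part of the claim and should be stated. Second, part (b) asserts \emph{piecewise} constant \vot, i.e.\ constant on $\qdif>\qL$ and (separately) on $\qdif<\qL$; you only treat the interior regime $\qdif\in(\qL,\qH)$. The $\qdif<\qL$ branch is again easy (the consumer accepts everything, so payoffs are explicitly linear in $(\pi_0,\pi_1)$, as in Proposition~\ref{cor:scaff-VoT}(b)), but it should be mentioned for completeness. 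Neither omission affects the core of your argument.
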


Theorem \ref{thm:semi-VoT} says that improving the technology helps
both the consumer and the filter if the players follow $\difprof$ and
zero otherwise, just as in the case of aligned utilities.  However,
this does not give the complete picture of the \vot since any increase
in filter quality that allows the player to jump from an inefficient
to an efficient equilibrium, as shown in
Theorem \ref{thm:misaligned-escape}.   Put succinctly,
characterizing \vot without perfect alignment requires carefil
consideration of both the ``within profile'' \vot as provided in
Theorem \ref{thm:semi-VoT} and the ``between profile'' \vot established
in Corollary \ref{thm:misaligned-escape}.

% \jgdelete{(resp., $\fwdprof$). This is plausibly the case when we are
%   in the regime from part (a) (resp., part (b)) from
%   Proposition~\ref{prop:semi-regimes}. \nicomment{this is getting at
%     the equil selection issue that was bothering me, maybe needs to be
%     moved up?} However, we make no claim if the players switch from
%     one equilibrium to another.}

For completeness, we also establish the \vot at the mixed equilibrium
profile.  As the following proposition shows, improving the technology
always helps the filter, but may hurt the consumer in the mixed equilibrium.

\begin{proposition}\label{prop:semi-mixprof}
Consider the parameter regime from Proposition~\ref{prop:semi-regimes}(c) and the equilibrium $\mixprof$ from Proposition~\ref{prop:semi-mixed}. The filter's \vot
    %$\rbr{\partialpi{0}\Vfil(\mixprof),\;  -\partialpi{1}\Vfil(\mixprof)}$
     $\partialV{\Vfil}$
is positive. However, the consumer's \vot
    %$\rbr{\partialpi{0}\Vcon(\mixprof),\;  -\partialpi{1}\Vcon(\mixprof)}$
     $\partialV{\Vcon}$
can be positive or negative, depending on the parameters.
\end{proposition}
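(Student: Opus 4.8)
The plan is to pin down the mixed equilibrium $\mixprof$ explicitly, read off each player's payoff as a one‑dimensional function of filter quality, and differentiate.

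\textbf{Step 1 (characterize $\mixprof$).} In the regime of Proposition~\ref{prop:semi-regimes}, the only properly mixed filter strategy that can be an equilibrium beating $\blkprof$ (by Proposition~\ref{prop:semi-mixed}) has the filter always forwarding on the ``likely legitimate'' signal and randomizing on the ``likely malicious'' signal; write $\theta\in(0,1)$ for the forwarding probability on the latter. Since the filter mixes, it must be indifferent between forwarding and blocking a ``likely malicious'' piece, and because $\Vfil=u$ this is a single indifference equation relating the consumer's acceptance probabilities $(\alpha_0,\alpha_1)=(\Pr[\acon=1\mid X=0],\Pr[\acon=1\mid X=1])$ to the posterior $r_0:=\Pr[X=0\mid\sfil=0]$. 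On the consumer's side $q(\mixstrfil)\in(\qL,\qH)$, so by Proposition~\ref{prop:goalposts} the consumer examines and its best response (Lemma~\ref{lem:belief-BR}) is interior; the first‑order conditions of the rational‑inattention cost \eqref{eq:cost-defn} fix the two acceptance log‑odds to payoff‑determined constants shifted by a common term, so $(\alpha_0,\alpha_1)$ depend on a single scalar, the overall acceptance odds $O$. Combining this with the filter indifference equation determines $O$ as a function of $r_0$, and I would show $O$ is strictly monotone in $r_0$ by implicit differentiation, while $r_0$ is strictly monotone in filter quality $(\pi_0,-\pi_1)$.

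\textbf{Step 2 (filter's \vot).} Because the filter is indifferent on the randomized signal, its equilibrium payoff is invariant to $\theta$ and hence equals the payoff of the pure deviation $\strfwd$ evaluated against the consumer's equilibrium strategy; this collapses $\Vfil(\mixprof)$ to an explicit expression in $(\alpha_0,\alpha_1)$, thus in $O$, thus in $(\pi_0,\pi_1)$ through $r_0$. Differentiating and using the indifference relation to eliminate one acceptance probability reduces $\partialV{\Vfil}$ to a single expression whose sign I would settle using $r_0>q$ together with the regime inequalities of Proposition~\ref{prop:semi-regimes}, concluding that both components of $\partialV{\Vfil}$ are strictly positive.

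\textbf{Step 3 (consumer's \vot).} Write $\Vcon(\mixprof)=\Vfil(\mixprof)-\tfrac{1}{1-q}\,\Pr[\afil=1]\,\infoL\,I(\acon;X\mid\afil=1)$. The first term inherits the strictly positive \vot of Step~2, whereas the information‑cost term responds to quality with an a priori ambiguous sign. I would compute that term explicitly as a function of $O$, $q(\mixstrfil)$ and $r_0$, differentiate, and then exhibit parameter families in which the information‑cost response dominates the action‑payoff gain (making some component of $\partialV{\Vcon}$ negative) and others in which it does not (keeping $\partialV{\Vcon}$ positive); scaling $\infoL$ is the cleanest lever, since the cost term is proportional to $\infoL$ while the action‑payoff \vot is not. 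Each constructed parameter choice must be checked to remain in the mixed‑equilibrium regime of Propositions~\ref{prop:semi-regimes}--\ref{prop:semi-mixed}.

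\textbf{Main obstacle.} The comparative statics are not sign‑definite term by term: improving the filter lowers both $\alpha_0$ (good for the filter) and $\alpha_1$ (bad for it), so even the filter's positive \vot is not immediate and depends on pairing the implicit derivative of $O$ in $r_0$ with the indifference relation to cancel the offsetting effects. For the consumer this same tension is compounded by the information cost, and the real work is producing explicit, regime‑consistent parameter families that realize both signs rather than merely asserting ambiguity.
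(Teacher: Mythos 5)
A preliminary caveat: the paper never actually proves this statement. The proposition, together with the two propositions it invokes (\texttt{prop:semi-regimes} and \texttt{prop:semi-mixed} in the source), lives entirely inside the \texttt{prewebconf} blocks that the preamble excludes from compilation; the cited part ``(c)'' does not exist (\texttt{prop:semi-regimes} has no parts, and its stated hypothesis $\qL<q<\qdif<\qH$ even contradicts $\qdif<q$ from \eqref{eq:eff-defn}); and Appendix~\ref{sec:proofs} says nothing about $\mixprof$. So your proposal can only be judged against the mathematics itself. Your skeleton is the natural one, and its ingredients are sound: the relevant mixed equilibrium has the filter forwarding on $\sfil=1$ and randomizing on $\sfil=0$; indifference there reads $r_0\alpha_0 c_2=(1-r_0)\alpha_1(b+c_1)$, where $r_0=\Pr[X=0\mid\sfil=0]$ and $\alpha_x=\Pr[\acon=1\mid X=x]$; and evaluating $\Vfil(\mixprof)$ as the payoff of the pure deviation $\strfwd$ against the consumer's equilibrium strategy is legitimate precisely because $\Vfil=u$ is linear in the filter's mixing probabilities once the consumer's strategy is fixed --- the non-linearity the paper stresses enters only through information costs, which the semi-aligned filter does not internalize.

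The fatal gap is Step 2, exactly at ``whose sign I would settle using $r_0>q$ together with the regime inequalities.'' Carry the plan out: set $\kappa=\pi_1/\pi_0$, $D=\frac{(1-q)(b+c_1)}{qc_2}$, $A=e^{b/\infoL}$, $B=e^{-c_1/\infoL}$, $C=e^{-c_2/\infoL}$. Indifference becomes $\alpha_0/\alpha_1=D\kappa$; solving this against Lemma~\ref{lem:cbr} gives $\alpha_1=\frac{D\kappa A-CB}{D\kappa(A-CB)}$, and substituting the indifference into the $\strfwd$-payoff collapses the filter's utility to $\Vfil(\mixprof)=(b+c_1)(1-\kappa)\alpha_1-c_1$, a function of filter quality through $\kappa$ alone (hence both components of $\partialV{\Vfil}$ share one sign --- a simplification you miss). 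The derivative in $\kappa$ has the sign of $CB/(D\kappa^2)-A$, so the filter's \vot is positive if and only if $D\kappa^2>CB/A$, a condition strictly stronger than the interiority requirement $D\kappa>CB/A$ (since $\kappa<1$), dependent on $q$, and implied neither by $r_0>q$ nor by the $\Lambda$-interval existence condition. It genuinely fails inside the existence regime: with $b=c_1=c_2=\infoL=1$, $q=0.8$, $\pi_0=0.9$, $\pi_1=0.252$, one checks $\qL\approx0.335<\qdif\approx0.348<q<\qH\approx0.910$ and $\pi_0/\pi_1\approx3.57<\Lambda\approx3.97<\frac{\pi_0(1-\pi_1)}{\pi_1(1-\pi_0)}\approx26.7$, the mixed equilibrium exists (equilibrium belief $q^*\approx0.78$, interior consumer response, no profitable filter deviation at $\sfil=1$), and $\Vfil(\mixprof)\approx-0.023$; improving the filter to $\pi_1=0.234$ keeps the equilibrium alive and yields $\Vfil(\mixprof)\approx-0.039$. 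So the filter's \vot is strictly negative on part of the regime you work in: either the lost regime ``(c)'' is exactly the missing hypothesis $D\kappa^2>CB/A$, or the claim is simply false as stated; in both cases your Step 2 cannot be completed from the assumptions you invoke. (Step 3 has a secondary flaw --- rescaling $\infoL$ moves $A,B,C$ and hence the equilibrium itself, so the cost term is not ``proportional to $\infoL$'' holding everything else fixed --- but that part is repairable by explicit construction; Step 2 is not, without the extra hypothesis.)
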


The reason for the negative consumer  \vot  is that, as the
filter increases in quality, it must block more content to remain
indifferent when it receives a signal that content is malicious. While
this may help the consumer by blocking more malicious content, it also
blocks more legitimate content, and thus the  negative impact might outweigh
the positive effect.

Finally, while we do not claim to fully resolve equilibrium selection,
under the assumption that equilibrium selection is resolved by
choosing the equilibrium that is Pareto optimal among all equilibria
--- when it exists --- we can partially characterize the \vot for some
parameter regimes.

\begin{proposition}\label{prop:VoT-partial}
  Suppose $\qL<\qdif<q<\qH$ and equilibrium selection is resolved by
  selecting the equilibrium that Pareteo dominates any other
  \emph{equilibrium}.  Then if $\Lambda < \frac{\pi_0}{\pi_1}$,
  $VoT>0$ for both players.  If $\Lambda >
  \frac{\pi_0(1-\pi_1)}{\pi_1(1-\pi_0)}$, VoT is $0$ for both players.
\end{proposition}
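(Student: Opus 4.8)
The plan is to reduce the claim to an equilibrium-selection question: in each of the two regimes I identify the Pareto-dominant equilibrium and then read off its \vot from results already established. Concretely, I will show that when $\Lambda<\pi_0/\pi_1$ the selected equilibrium is $\difprof$ (whose \vot is strictly positive for both players by Theorem~\ref{thm:semi-VoT}), whereas when $\Lambda>\mathcal{Q}(\pi_0,\pi_1)$ every surviving equilibrium is payoff-equivalent to $\fwdprof$ or $\blkprof$, each of which has zero \vot by Proposition~\ref{cor:scaff-VoT}(a). The intermediate band $\pi_0/\pi_1<\Lambda<\mathcal{Q}$ is exactly the multiple-equilibria case of Proposition~\ref{prop:semi-regimes}, where no Pareto-dominant equilibrium exists and the selection rule is silent.

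The technical core is a best-response threshold for the filter. Fix the consumer's response to an interior belief $\hat q\in(\qL,\qH)$, and let $\alpha_x=\Pr[\acon=1\mid X=x]$ be its acceptance probabilities. Since the semi-aligned filter maximizes only the action payoff $u$, comparing ``forward'' to ``block'' on a piece with signal $s$ shows it forwards iff the posterior odds $O_s:=\Pr[X{=}1\mid \sfil{=}s]/\Pr[X{=}0\mid \sfil{=}s]$ exceed $T=\alpha_0 c_2/(\alpha_1(b+c_1))$. The step I expect to be the main obstacle is pinning down $\alpha_0/\alpha_1$: using the rational-inattention structure behind Proposition~\ref{prop:goalposts} (the consumer's two optimal posteriors are exactly $\qL$ and $\qH$, independent of the prior), one gets $\alpha_0/\alpha_1=\qL(1-\hat q)/((1-\qL)\hat q)$, hence $T=\Lambda^{-1}(1-\hat q)/\hat q$ after substituting \eqref{eq:lambda-defn}. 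Evaluating this at $\hat q=\qdif$ (belief under $\difprof$) and at $\hat q=q$ (belief under $\fwdprof$), together with $O_1/O_0=\mathcal{Q}$, yields: $\difprof$ is an equilibrium iff $\Lambda\le\mathcal{Q}$; $\fwdprof$ is an equilibrium iff $\Lambda>\pi_0/\pi_1$; and, since a direct check gives $\qL<(b+c_1)/(b{+}c_1{+}c_2)<\qH$ and thus $\Lambda>1$, signal $\sfil=1$ is always forwarded.

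Next I establish Pareto-dominance over the blocking and mixed equilibria. A short computation gives the filter's action-payoff gain of $\difprof$ over $\blkprof$ as $(1-q)(1-\pi_1)\alpha_1(b+c_1)-q(1-\pi_0)\alpha_0 c_2$, which is positive exactly because $O_1>T$, i.e.\ $\Lambda>1$; the analogous gain for $\fwdprof$ reduces to the same inequality. For the consumer, examining at a strictly interior belief is strictly better than ignoring all content, which is the payoff it would obtain under $\blkprof$; hence whichever of $\difprof,\fwdprof$ is an equilibrium strictly Pareto-dominates $\blkprof$. By Proposition~\ref{prop:semi-mixed}, outside the band $\Lambda\in(\pi_0/\pi_1,\mathcal{Q})$ every properly mixed equilibrium is utility-equivalent to $\blkprof$, so these are dominated as well.

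Finally I assemble the two cases. If $\Lambda<\pi_0/\pi_1\;(\le\mathcal{Q})$, then $\fwdprof$ is not an equilibrium while $\difprof$ is, and by the previous paragraph $\difprof$ Pareto-dominates every other equilibrium, so the rule selects it; since $\Lambda<\mathcal{Q}$ and $\qL<\qdif<\qH$, Theorem~\ref{thm:semi-VoT} gives a strictly positive \vot for both players, and improving the filter only raises $\pi_0/\pi_1$, keeping us in this regime. If $\Lambda>\mathcal{Q}\;(\ge\pi_0/\pi_1)$, then $\difprof$ is not an equilibrium, so the only equilibria are $\fwdprof$, $\blkprof$, and mixed profiles payoff-equivalent to $\blkprof$, all of which have zero \vot by Proposition~\ref{cor:scaff-VoT}(a); hence the selected equilibrium, whichever it is, yields \vot $=0$ for both players. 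As usual this is the marginal statement; for a discrete improvement one keeps $\Lambda>\mathcal{Q}(\pi_0',\pi_1')$.
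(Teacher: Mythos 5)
Your proposal is correct, but note what it is being measured against: the paper never actually proves this proposition --- it appears only in suppressed draft text, alongside two companion propositions (one asserting that $\difprof$ and $\fwdprof$ are both equilibria precisely when $\Lambda$ lies between $\pi_0/\pi_1$ and $\mathcal{Q}(\pi_0,\pi_1)$, and one asserting that any properly mixed equilibrium outside that band is utility-equivalent to $\blkprof$) --- and the intended proof is plainly the assembly of those statements with Proposition~\ref{prop:semiebm}, Theorem~\ref{thm:semi-VoT}, and Proposition~\ref{cor:scaff-VoT}(a). Your argument is essentially that assembly, and it adds genuine value by re-deriving the existence thresholds from first principles: the acceptance-odds identity $\alpha_0/\alpha_1=\qL(1-\hat q)/\bigl((1-\qL)\hat q\bigr)$ and the resulting threshold $T=\Lambda^{-1}(1-\hat q)/\hat q$ reproduce exactly the substitution $\frac{e^{b/\lambda}(1-e^{-c_2/\lambda})}{e^{-c_2/\lambda}(e^{b/\lambda}-e^{-c_1/\lambda})}=\frac{1-\qL}{\qL}$ that the paper uses to prove Proposition~\ref{prop:semiebm}, and they deliver the $\difprof$ condition ($\Lambda<\mathcal{Q}$) and the $\fwdprof$ condition ($\Lambda>\pi_0/\pi_1$) in one stroke; your claim $\Lambda>1$ likewise follows from Lemma~\ref{lem:terms} evaluated at $\hat q=\qL$, where strict positivity of $\KL{\qL}{\qH}$ gives $\qL<(b+c_1)/(b+c_1+c_2)$.

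Two caveats should be made explicit. First, to dispose of properly mixed equilibria you must lean on the companion mixed-equilibrium proposition, which the paper states but never proves; the proven Proposition~\ref{prop:nomixsemi} cannot substitute for it, because it only bounds each player's mixed-equilibrium payoff by $\max\rbr{V_i(\difprof),\,V_i(\fwdprof)}$ player-by-player, and in your first case (where $\fwdprof$ is not an equilibrium but its payoff profile still enters that bound) this does not yield Pareto dominance by $\difprof$. So that step of your proof inherits an unproven dependency rather than closing it. Second, Theorem~\ref{thm:semi-VoT} is stated for the regime $\qH<q$, whereas the present proposition has $q<\qH$; your use of it is sound in substance, because its proof (Lemma~\ref{lem:linq} together with the payoff formula \eqref{eq:to_dif}) only requires $\qdif\in(\qL,\qH)$, but you are invoking the argument rather than the literal statement and should say so. With those two points flagged, the proof is complete, and your handling of the boundary issues --- regime preservation under discrete improvements, and the silence of the selection rule in the intermediate band $\pi_0/\pi_1<\Lambda<\mathcal{Q}$ --- supplies qualifications that the paper's own statement leaves implicit.
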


Although far from a complete characterization, Proposition
\ref{prop:VoT-partial} pins down the \vot for a limited parameter
regime.

\end{prewebconf}

\section{Endogenous Attacker}
\label{sec:endog}
In this section, we extend our model to include the \emph{attacker}: a
third strategic player who is responsible for choosing the rate of
malicious content, $\rho_0$. We focus on \emph{aligned utilities}, to
better isolate the novelty brought by endogenizing the attacker. We
find two surprising consequences:  the consumer does not incur information costs in  equilibrium, and that improving the filter can make both the filter and consumer worse off.

\xhdr{Modeling choices and notation.}
We restrict the attacker to  pure strategies, \ie to choose its rate $\rho_0$ deterministically for the entire batch. One interpretation is that the attacker is not sophisticated enough to implement mixed strategies in this context.%
\footnote{A mixed strategy chooses $\rho_0$ at random once and keeps it fixed.} %throughout.}

%\ascomment{Moved up the ``pure strategy" para above. Careful: we need to define things in the right order, e.g., you refer to $\rho_0$ as a strategic choice below, and \refeq{eq:endog-Va} is written as if the attacker does not mix.}

%\ascomment{Reworded the para below for clarity (hopefully).}

 As in the original model, all three players choose a strategy
  to use on the entire batch. %All three players choose their
  %strategies before the game starts, and commit to them for the entire
  %batch.
  The attacker and filter move first and simultaneously. The consumer
  observes the attacker's choice of $\rho_0$ but not the filter's
  chosen strategy, and moves next. Therefore, for a fixed and known
  value of $\rho_0$, the game reduces to the \ourGame defined in
  Section \ref{sec:model}. Importantly, as per
  Remark~\ref{rem:model-order}, our results carry over to the variant
  where the consumer observes the strategies of both the attacker and
  the filter.  Furthermore, the results carry over to the case where the filter 
  also observes $\rho_0$.   

%first simultaneously choose a strategy. The consumer's information set
%contains the attacker's choice of $\rho_0$ but not the filter's strategy.
%%filter action (and not strategy) for each piece of content in the batch.
%Therefore, for a fixed and known value of $\rho_0$, the game
%reduces to the \ourGame defined in Section \ref{sec:model}. Recall
%however that when the utilities are aligned, the filter and
%consumer's socially optimal profile with and without filter
%commitment are the same, so our results carry over to the case where
%the filter's strategy also enters the consumer's information set.

% and filter's
%   strategy choice as given, the filter and consumer play the game as
%   outlined in section \ref{sec:model} with aligned utilities}.  Of course, \jgdelete{consumer
%   observes the attacker's strategy, but not the filter's, before
%   choosing its own. (The intuition for this modeling choice is
%   discussed in Remark~\ref{rem:endog-sequencing}.) Note that when} if
% $\rho_0$ is chosen non-strategically, the game reduces to the
% content-filtering game defined in
% Section~\ref{sec:model}.\footnote{Since the filter and consumer have
%   aligned utilities, the results carry over to a slightly different
%   formulation where the attacker and filter simultaneously commit
%   before the consumer acts.}

% \ascomment{I'd like to clearly separate the intuition in Remark~\ref{rem:endog-sequencing} from the model and the results.}

The attacker's expected utility, denoted $\Vatt$, is the expected number of malicious pieces of content that are accepted by
the consumer.%
\footnote{We do not impose \emph{production costs} on the attacker for generating malicious content. These costs are often small in practice:
\eg a generative AI model can produce many deep-fakes,
an inexpensive phish-kit can generate many fake emails \cite{kit}.
Our results generalize to allow for small but positive production costs.}
Fixing the strategies of all players and letting $Y$ be the number of malicious messages in a batch, we have
\begin{align}\label{eq:endog-Va}
\Vatt %(\rho_0,\mixsigma)
    = \E[Y]\;\Pr\sbr{\afil=\acon=1 \mid X=0},% \nonumber \\
%    &= \rho_0\Pr\sbr{\afil=\acon=1|X=0, \rho_0} \nonumber \\
%    &= \frac{q}{1-q}Pr\sbr{\afil=\acon=1|X=0,q} \label{eq:qonly}
\end{align}
where $\E[Y] = \rho_0 = q/(1-q)$.

%\jgedit{We restrict the attacker to only use pure strategies, \ie to
%  choose $q$ deterministically for the entire batch.  While this is
%  made as an assumption for convenience, we posit without proof that
%  there is no non-trivial equilibrium (e.g. an. abandoned platform) in
%  which the attacker mixes.\footnote{The intuition is that any
%    randomization, $f$, over $\rho_0$ yields some prior probabity
%    $q(f(\rho_0))$ that any given piece of content is malicious.  For
%    any best response by the filter and consumer to $q(\rho)$, it can
%    be shown that the attacker's payoff is single peaked in $rho$ and
%    thus the attacker would not have been willing to randomize in the
%    first place.}}

%\jgdelete{This completes the specification of endogenous attacker.}

%\ascomment{The remark below contains the old para which I did not revise}

% Let us re-parameterize the model for convenience. Rate $\rho_0$ only enters the model through its impact on
%     $q := \Pr[X=0] = \rho_0/(\rho_0+1)$,
% which can take an arbitrary value in the interval $(0,1)$. Therefore, we can assume w.l.o.g. that the attacker sets $q\in (0,1)$ directly.

Denote strategy profiles as
$(\rho_0,\mixsigma)$. We denote the players' utilities by $\Vatt$ and
$V = \Vfil = \Vcon$.   In general,
we expand any quantities that take as an input $\mixsigma$ to also
take as an input $\rho_0$.  For example, we write
$\Vatt = \Vatt(\rho_0,\mixsigma)$ and $V = V(\rho_0,\mixsigma)$.
Likewise, we write $q(\mixstrfil) = q(\rho_0,\mixstrfil)$ in
\refeq{eq:belief-defn}.

Note that the rate $\rho_0$ only enters the model through its impact on
     $q := \Pr[X=0] = \rho_0/(\rho_0+1)$,
which can take an arbitrary value in the interval $(0,1)$. Therefore, one could equivalently reparameterize the model so that the attacker sets $q\in (0,1)$ directly.

%For simplicity, we focus on \emph{aligned utilities} for the filter and
%the consumer, so that $V(q,\mixsigma) := \Vfil(q,\mixsigma) =
%\Vcon(q,\mixsigma)$, where we extend the consumer and filter's
%expected utility to also take as an input $q$.
%Furthermore, we extend
%$q(\mixstrfil, q)$ analogously to represent the probability that
%unblocked content is malicious for a fixed filter strategy and value
%of $q$.

\newcommand{\stable}{Socially Optimal\xspace}
\newcommand{\optimal}{FC-optimal\xspace}

%Before deriving \vot, we present the curious result that in
%equilibrium, the consumer never incurs information costs:

\xhdr{Equilibrium information costs.}
Our first result is that the consumer never incurs information costs in an equilibrium.

\begin{theorem}\label{thm:eattack-zero}
%If $(\rho_0^*,\mixsigma^*)$ is an equilibrium, then $\infoC(\rho_0^*,\mixsigma^*)=0$.
$\infoC(\rho_0^*,\mixsigma^*)=0$
for any equilibrium $(\rho_0^*,\mixsigma^*)$.
\end{theorem}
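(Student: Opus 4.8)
The plan is to show that the attacker's best response to \emph{any} fixed filter strategy already pushes the consumer out of the information-acquiring regime, so zero information cost must hold in every equilibrium regardless of the filter's equilibrium play. The load-bearing structural fact is the timing: the filter commits to a single mixed action strategy $\mixstrfil$ that does not vary with $\rho_0$, so a unilateral attacker deviation leaves $\mixstrfil$ unchanged and only alters the consumer's belief $\qfunc := \Pr[X=0\mid \afil=1]$, to which the consumer best-responds. It therefore suffices to fix $\mixstrfil$, summarized by the forward rates $g_x := \Pr[\afil=1\mid X=x]$ (with $g_0\le g_1$), and study $\Vatt$ as a function of the attacker's chosen prior $q'\in(0,1)$ alone.

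First I would record that $\qfunc = \tfrac{q'g_0}{q'g_0+(1-q')g_1}$ is continuous and strictly increasing from $0$ to $1$ as $q'$ ranges over $(0,1)$ in the nondegenerate case $0<g_0\le g_1$, so every belief level is attainable. By Proposition~\ref{prop:goalposts}, the information cost is positive exactly when $\qfunc\in(\qL,\qH)$, which splits the analysis into three regimes. If $\qfunc>\qH$ the consumer ignores all content and $\Vatt=0$. If $\qfunc\le\qL$ the consumer accepts all forwarded content, so $\Vatt=\tfrac{q'}{1-q'}\,g_0$, strictly increasing in $q'$.

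The crux is the interior regime $\qfunc\in(\qL,\qH)$. Here I would invoke the invariant-posterior property of the rational-inattention solution underlying Proposition~\ref{prop:goalposts}: the consumer's posterior that forwarded content is malicious equals $\qL$ after an accept and $\qH$ after an ignore, \emph{independently} of the prior $\qfunc$. Writing $\alpha=\Pr[\acon=1\mid\afil=1]$ and $\beta=\Pr[\afil=1]=q'g_0+(1-q')g_1$, Bayes-plausibility on the forwarded pool gives $\qfunc=\alpha\qL+(1-\alpha)\qH$, and the malicious fraction among accepted-and-forwarded content is exactly $\qL$; using $\beta\qfunc=q'g_0$ this yields the closed form
\begin{align*}
\Vatt = \frac{\qL\,\beta\,\alpha}{1-q'}
      = \frac{\qL}{\qH-\qL}\left[\frac{q'}{1-q'}\,g_0(\qH-1)+\qH g_1\right].
\end{align*}
Since $\qH<1$ and $q'/(1-q')$ is strictly increasing, $\Vatt$ is strictly decreasing in $q'$ throughout this regime (when $g_0>0$); I would also verify continuity at $\qfunc=\qL$, where both expressions equal $\tfrac{q'g_0}{1-q'}$.

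Combining the three regimes shows $\Vatt(q')$ is single-peaked, attaining its maximum at the unique $q'$ with $\qfunc=\qL$, a point where the information cost is zero. Hence the attacker's best response to any filter strategy never lands strictly inside $(\qL,\qH)$; applied to the equilibrium filter strategy $\mixstrfil^*$, this forces $\qfunc(\rho_0^*,\mixstrfil^*)\notin(\qL,\qH)$ and thus $\infoC(\rho_0^*,\mixsigma^*)=0$. Finally I would dispose of the degenerate filters: $g_0=0$ gives $\qfunc\equiv 0$, and $g_1=0$ or $\mixstrfil=\strblk$ leaves no examined content, so all trivially entail zero information cost. I expect the main obstacle to be the interior-regime step — extracting the invariant-posterior characterization cleanly from the consumer's best response and then reading off the monotonicity of $\Vatt$ — rather than any single delicate estimate.
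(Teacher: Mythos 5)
Your proof is correct, and its skeleton coincides with the paper's own argument (Lemma~\ref{lem:qql}): fix the filter's strategy, split the analysis at the belief thresholds $\qL$ and $\qH$, show the attacker's payoff is increasing while the consumer accepts everything and strictly decreasing throughout the interior regime, and conclude that any attacker best response pins the forwarded-content belief at exactly $\qL$, where Proposition~\ref{prop:goalposts} gives zero information cost. Where you genuinely differ is the derivation of the interior-regime monotonicity. The paper substitutes the explicit Matejka--McKay solution $\tilde{\pi}_0^*$ from Lemma~\ref{lem:cbr} into $\Vatt$ and simplifies the resulting expression, and it covers general mixed filter strategies only indirectly, via the reduction ``any mixed strategy is equivalent to a differentiating strategy with modified $(\pi_0,\pi_1)$.'' You instead invoke prior-invariance of the action-conditional posteriors --- accepted content has posterior exactly $\qL$, ignored content exactly $\qH$ --- together with Bayes plausibility, which yields a clean closed form for $\Vatt$ at arbitrary forward rates $(g_0,g_1)$ in one step. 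That invariance is nowhere stated in the paper, but it does check out against Lemma~\ref{lem:cbr}: writing the first-order conditions of the rational-inattention problem, the posterior after ignoring solves to $\frac{e^{(b+c_1)/\infoL}-1}{e^{(b+c_1)/\infoL}-e^{-c_2/\infoL}}=\qH$ and the posterior after accepting is $\qH e^{-c_2/\infoL}=\qL$, matching \eqref{eq:HL-defn} exactly. So your route is sound and arguably buys something: it is more conceptual (the monotonicity of $\Vatt$ falls out of the fact that the malicious share of accepted content is locked at $\qL$ while raising $q'$ only shrinks the accept probability), and it treats all mixed filter strategies uniformly rather than by reparametrization; the price is that the invariance property itself needs the short verification above, which is precisely the step you flagged as the main obstacle.
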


The key driver of
Theorem \ref{thm:eattack-zero} is that for a fixed filter's strategy,
the attacker's expected payoff under the consumer's best response is \emph{decreasing} in $\rho_0$ when
$q(\rho_0,\mixstrfil) \in (\qL,\qH)$.  Behaviorally, as the relative
proportion of malicious content rises, a combination of the consumer's
increased information costs and required certainty to accept content
\emph{reduces} the total amount of malicious content that is
ultimately accepted (of course, this comes at a higher cost due to
ignoring clean content).  On the other hand, for
$q(\rho_0,\mixstrfil)<\qL$, the attacker's payoff is increasing in $\rho_0$
since the consumer's best response is to accept all content.  As a
result, for a fixed filter strategy, the attacker's optimal strategy
is to set $\rho_0$ such that $q(\rho_0,\mixstrfil)=\qL$.  In this sense, the
consumer's attention serves as a deterrent to attack: the amount of
malicious content will not exceed the amount such that the consumer
incurs information costs in deciding whether content is legitimate.

\xhdr{Negative \vot.}
We find that improving the filter can \emph{reduce}
the equilibrium utility of the filter and the consumer.

%Despite the possible deterrence effect of the consumer's attention, it
%is possible for increases in the filter quality to reduce socially
%optimal payoffs.
As in Section~\ref{sec:aligned}, we focus on equilibria
$(\rho_0^*, \mixsigma^*)$ that maximizes the utility for the filter and the consumer,
\ie satisfy
\begin{align*}%\label{eq:endog-selection}
    V(\rho_0^*, \mixsigma^*)\geq V(\rho_0, \mixsigma)
\text{ for any  equilibrium $(\rho_0,\mixsigma)$}.
\end{align*}
and label this equilibrium payoff $\Vopt$.
% This equilibrium selection rule uniquely determines the equilibrium utility
%     $\Vopt:= V(\rho_0^*, \mixsigma^*)$,
% fixing the model parameters.
We are interested in \vot in terms of $\Vopt$.

% \ascomment{I think the para above captures the points that you wanted to transmit in your original para (below), but puts them in a correct order.}

% \ascomment{OLD para, to be removed} Since the consumer's and filter's  payoff are
% aligned, the socially optimal profile for a fixed value of $q$
% guaranteed that that filter and consumer are choosing a best response.
% Therefore, we are interested in \vot in relation to
% $\Vopt:= V(q^*, \mixsigma^*)$ where $(q^*, \mixsigma^*)$ comprise a PBE
% \emph{and} for any fixed value of $\pi_0$ and $\pi_1$,
% $V(q^*, \mixsigma^*)\geq V(q^{\prime}, \mixsigma^{\prime})$ for any PBE
% $q^{\prime}, \mixsigma^{\prime}$.\footnote{Equivalently, we can simply treat the filter and
%   consumer as one player and thus consider a two player game with a
%   filter/consumer player against an attacker.}

Our negative \vot result can now be succinctly formulated using the
ratio $\frac{\pi_0}{\pi_1}$ and the threshold  $\Lambda$ from
\refeq{eq:lambda-defn}.

\begin{theorem}[Negative \vot] \label{thm:eattack}
Suppose $\pi_0/\pi_1<\Lambda$.  Then sufficiently improving both $\pi_0$ and $\pi_1$ strictly decreases the equilibrium utility $\Vopt$. More formally: there exist thresholds $\hat{\pi}_0\in(\pi_0,1)$ and $\hat{\pi}_1\in (0,\pi_1)$ such that
for any $\pi'_0\in(\hat{\pi}_0,1)$ and $\pi'_1\in(0,\hat{\pi_1})$ improving the filter quality to $(\pi'_0,\pi'_1)$ strictly decreases $\Vopt$.
\end{theorem}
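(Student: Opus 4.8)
The plan is to reduce the problem to comparing the welfare of two explicit equilibrium profiles and to show that improving the filter forces a switch from the better one to the worse one. Throughout write $V=\Vfil=\Vcon$ for the common filter--consumer welfare. By Theorem~\ref{thm:eattack-zero} the consumer incurs no information cost in any equilibrium, so by Proposition~\ref{prop:goalposts} the consumer either accepts all forwarded content ($q(\mixstrfil)\le \qL$) or ignores all of it ($q(\mixstrfil)> \qH$). The ignore-all and block-all configurations both yield $V=-c_1$, which I will check is dominated, so the welfare-optimal equilibrium has the consumer accepting everything. Moreover, as in the discussion following Theorem~\ref{thm:eattack-zero}, the attacker's payoff is strictly increasing in $\rho_0$ while $q(\mixstrfil)\le\qL$, so the attacker pushes its rate up to the boundary $q(\mixstrfil)=\qL$. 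This leaves exactly two candidate welfare-optimal profiles, both with the consumer accepting all forwarded content: the forwarding profile $\fwdprof$ (attacker sets $q=\qL$) and the differentiating profile $\difprof$ (attacker sets $q$ so that $\qdif=\qL$).

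First I would compute the two welfare levels. Under $\fwdprof$ with $q=\qL$ all content is accepted, giving
\[
V(\fwdprof) = b - \frac{\qL}{1-\qL}\,c_2 .
\]
Under $\difprof$ the constraint $\qdif=\qL$ is equivalent to $\frac{q}{1-q}=\frac{1-\pi_1}{1-\pi_0}\frac{\qL}{1-\qL}$; substituting this into the action payoff (legitimate content earns $b$ when forwarded and $-c_1$ when blocked, malicious content earns $-c_2$ when forwarded) the $\pi_0$-terms cancel and yield
\[
V(\difprof) = (1-\pi_1)\Big(b - \tfrac{\qL}{1-\qL}c_2\Big) - \pi_1 c_1 .
\]

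Next I would identify which profile is an equilibrium, and this is where $\pi_0/\pi_1$ and the threshold $\Lambda$ enter. Fixing the attacker at $q=\qL$ and the consumer accepting all forwarded content, the filter prefers to forward its $\sfil=0$ content rather than block it exactly when $\frac{\Pr[X=0\mid\sfil=0]}{\Pr[X=1\mid\sfil=0]}=\frac{\qL}{1-\qL}\frac{\pi_0}{\pi_1}\le\frac{b+c_1}{c_2}$, i.e.\ exactly when $\pi_0/\pi_1\le\Lambda$. Thus when $\pi_0/\pi_1<\Lambda$ the profile $\fwdprof$ is an equilibrium, whereas when $\pi_0/\pi_1>\Lambda$ the filter strictly prefers to block $\sfil=0$, so $\fwdprof$ is no longer an equilibrium and (checking the analogous best-response inequalities at $\qdif=\qL$) $\difprof$ is; combined with the domination argument of the first paragraph this gives $\Vopt=V(\fwdprof)$ when $\pi_0/\pi_1<\Lambda$ and $\Vopt=V(\difprof)$ when $\pi_0/\pi_1>\Lambda$. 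The hypothesis $\pi_0/\pi_1<\Lambda$ together with the standing assumption $\pi_0\ge\pi_1$ forces $\Lambda>1$, and $\Lambda$ depends only on the cost parameters, hence is unchanged by any filter improvement. A direct subtraction then gives
\[
V(\difprof)-V(\fwdprof) = -\pi_1(b+c_1)\Big(1-\tfrac{1}{\Lambda}\Big) < 0
\]
for every $\pi_1>0$. So improving the filter enough that $\pi_0/\pi_1$ crosses $\Lambda$ strictly lowers $\Vopt$ from $V(\fwdprof)$ to $V(\difprof)$, and the gap persists (it shrinks but stays strictly positive as $\pi_1\to 0$). Since $\pi_0/\pi_1\to\infty$ as $\pi_0\to 1,\ \pi_1\to 0$, there exist $\hat\pi_0\in(\pi_0,1)$ and $\hat\pi_1\in(0,\pi_1)$ with $\pi_0'/\pi_1'>\Lambda$ for all $\pi_0'>\hat\pi_0,\ \pi_1'<\hat\pi_1$, which gives the claim.

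The main obstacle is the equilibrium characterization rather than the algebra: I must rule out all other equilibria --- in particular properly mixed filter strategies and the ignore/block-all profiles --- as candidates for the welfare optimum in each regime, and verify the joint consistency of the attacker's rate choice, the filter's action, and the consumer's boundary best response. Theorem~\ref{thm:eattack-zero} does most of this work by collapsing consumer behavior to the accept-all/ignore-all dichotomy, reducing the remaining argument (in the spirit of Proposition~\ref{prop:aligned-best-eq}) to the finite comparison of $\fwdprof$ and $\difprof$ carried out above.
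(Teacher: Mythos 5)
Your proposal is correct and follows essentially the same route as the paper's proof: the attacker pushes the consumer's on-path belief to $\qL$ (the paper's Lemma~\ref{lem:qql}, yielding Theorem~\ref{thm:eattack-zero}), the forwarding equilibrium exists iff $\pi_0/\pi_1<\Lambda$ (Lemma~\ref{lem:fwdexists}), and the welfare gap you compute, $\pi_1\bigl(b+c_1-c_2\tfrac{\qL}{1-\qL}\bigr)$, is exactly the paper's Lemma~\ref{lem:fwdbetter}, so improving the filter past the threshold forces the switch from $\fwdprof$ to $\difprof$ and strictly lowers $\Vopt$. The only cosmetic differences are that you certify positivity of the gap via $\Lambda>1$ rather than via consumer optimality at $q=\qL$, and you leave the exclusion of mixed filter strategies and degenerate equilibria at the same informal level as the paper itself does.
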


% \jgcomment{We can make a different statement and not have to introduce
%   Vopt the way we did.  The difference is that if we say quality jumps
%   such that $\Lambda > \pi_0(1-\pi_1)/(\pi_1(1-\pi_0))$ to
%   $\Lambda<\pi_0/\pi_1$, then utility decreases at any
%   equilibrium...(roughly).  However, I think the Vopt formulation is
%   better at this point}

% \ascomment{I assume we need $\mathcal{Q}$ in the above, but pls change to whatever is correct :).}
% \jgcomment{It is not.  I will fix}
%Since $\frac{\pi_0^{\prime}}{\pi_1^{\prime}}$ is increasing in $\pi_0^{\prime}$
%and decreasing in $\pi_1^{\prime}$, theorem \ref{thm:eattack} says
%that an improvement in filter technology can decrease the filter's and
%consumer's socially optimal equilibrium payoffs!

What drives this result is that improvements in filter technology
can be completely crowded out by an increase in the attack propensity.
%While the formal details are in the appendix, the key feature driving this result
One key reason is that the socially optimal equilibrium switches from $\fwdprof$ to $\difprof$ as the filter technology improves.
Specifically, when the filter is poor quality, the socially optimal
equilibrium is $\fwdprof$.  Then, the attacker sets $\rho_0$
such that $q(\rho_0, \fwdprof)=\qL$ and
the consumer accepts all content.  However, for a high quality filter,
the socially optimal equilibrium is $\difprof$.  In that case, the
attacker sets $\rho_0$ such that $q(\rho_0, \difprof) = \qL$.  Consequently, the
expected fraction of malicious content that reaches the consumer is the
same in both equilibria and therefore, the equilibrium expected utility
for the filter and consumer \emph{conditional on content reaching the
  consumer} is the same. However, since
under $\difprof$ the filter blocks some clean content, the filter's and consumer's
expected utility under the $\difprof$ is strictly lower than the
expected utility under $\fwdprof$.  Although under $\difprof$ the filter blocks some malicious content, that benefit is not justified by the increase in attack intensity.

Another key feature driving this result is the
%This result is also driven by
filter's inability to commit to $\strfwd$.  If the filter
were able to commit to $\strfwd$, then equilibrium expected utilities
would not depend on $\pi_0$ and $\pi_1$ and thus payoffs would not
change as the filter improved in quality.  However, because the filter
and attacker act simultaneously, once the filter is of sufficiently high quality, the
filter has an incentive to switch to $\strdif$.  However, under
$\strdif$, the attacker increases $\rho_0$, ultimately lowering
equilibrium expected payoffs for the filter and consumer.

\begin{prewebconf}
\begin{theorem}[Negative \vot] \label{thm:eattack}
For some parameter regime, a coordinate-wise increase in
$(\pi_0,-\pi_1)$ reduces $\Vopt$.
\end{theorem}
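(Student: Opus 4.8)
The plan is to collapse every equilibrium to a single scalar---the expected amount of legitimate content the filter forwards---and show the equilibrium value is strictly increasing in that scalar, so the best equilibrium forwards as much legitimate content as possible. First I would invoke Theorem~\ref{thm:eattack-zero} and the reasoning preceding it: against any non-blocking filter the attacker's unique best response sets the forwarded posterior $\qfunc(\mixstrfil)=\qL$, at which point Proposition~\ref{prop:goalposts} says the consumer accepts all forwarded content at zero information cost. Normalizing $\rho_1=1$ and letting $L\in[0,1]$ be the expected number of legitimate pieces forwarded per batch, the condition $\qfunc(\mixstrfil)=\qL$ forces the malicious fraction among forwarded content to equal $\qL$; accounting for accepted legitimate content ($+b$ each), accepted malicious content ($-c_2$ each), and blocked legitimate content ($-c_1$ each) then collapses the batch utility to
\begin{align*}
V \;=\; L\,c_2\,\frac{\qL}{1-\qL}\,(\Lambda-1)\;-\;c_1,
\end{align*}
which depends on the full profile only through $L$.

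The second step is to note that $\pi_0/\pi_1<\Lambda$ together with $\pi_0\ge\pi_1$ (hence $\pi_0/\pi_1\ge 1$) forces $\Lambda>1$, and that $\Lambda$ depends on cost parameters alone, so it is unchanged when the filter improves. Therefore $V$ is strictly increasing in $L$, the maximal value $L=1$ corresponds to forwarding all legitimate content, and---because $\pi_1>0$---$L=1$ is attained only by the forwarding strategy $\strfwd$. Thus $V(\fwdprof)=c_2\frac{\qL}{1-\qL}(\Lambda-1)-c_1$ is an upper bound on the value of every equilibrium, attained exactly when $\fwdprof$ is itself an equilibrium.

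Third, I would pin down when $\fwdprof$ is an equilibrium. Against $\strfwd$ the attacker sets $q=\qL$, and the filter (facing an always-accepting consumer) forwards a ``likely malicious'' signal iff the posterior odds $\tfrac{\pi_0}{\pi_1}\tfrac{\qL}{1-\qL}$ fall below $\tfrac{b+c_1}{c_2}$, i.e.\ iff $\pi_0/\pi_1\le\Lambda$. So at the initial quality $\fwdprof$ is an equilibrium and $\Vopt=V(\fwdprof)$. Choosing thresholds $\hat\pi_0\in(\pi_0,1)$ and $\hat\pi_1\in(0,\pi_1)$ with $\hat\pi_0/\hat\pi_1\ge\Lambda$ (possible since $\pi_0/\pi_1\to\infty$ as $\pi_0\to1,\pi_1\to0$), any improved filter with $\pi_0'>\hat\pi_0$, $\pi_1'<\hat\pi_1$ has $\pi_0'/\pi_1'>\Lambda$, so $\strfwd$ is no longer a best response, no equilibrium forwards all legitimate content ($L<1$), and the value formula gives every equilibrium a value strictly below $V(\fwdprof)$. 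Since $\mathcal{Q}(\pi_0',\pi_1')>\pi_0'/\pi_1'>\Lambda$ guarantees $\difprof$ is an equilibrium, $\Vopt(\pi_0',\pi_1')$ is well defined and strictly below $V(\fwdprof)=\Vopt(\pi_0,\pi_1)$, i.e.\ the \vot is strictly negative for both players.

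The main obstacle is justifying that the scalar reduction covers \emph{all} equilibria, not just the pure profiles $\fwdprof,\difprof,\blkprof$. To close this I would rule out properly mixed filter equilibria at the improved quality. Mixing on the ``likely malicious'' signal requires filter indifference there, which forces the prior odds below $\qL/(1-\qL)$ and hence the forwarded posterior strictly below $\qL$, contradicting the attacker's requirement $\qfunc(\mixstrfil)=\qL$; mixing on the ``likely legitimate'' signal leaves the forwarded posterior equal to $\qdif$, so indifference there would require $\Lambda=1$, which is excluded; and mixing on both signals is impossible since it demands indifference at two distinct posterior odds for an informative signal. The blocking profile $\blkprof$ contributes only value $-c_1<V(\fwdprof)$ and is harmless. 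With mixed profiles eliminated, $\difprof$ is the unique value-maximizing equilibrium at the improved quality, and substituting its legitimate-forwarding level $L=1-\pi_1'$ makes the strict decrease explicit: $\Vopt(\pi_0',\pi_1')=V(\difprof)=V(\fwdprof)-\pi_1'\,c_2\frac{\qL}{1-\qL}(\Lambda-1)<V(\fwdprof)$.
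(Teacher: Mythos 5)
Your proof is correct, and its skeleton matches the paper's: (i) the attacker's best response pins the forwarded posterior at $\qL$ so the consumer accepts everything (the paper's Lemma~\ref{lem:qql} behind Theorem~\ref{thm:eattack-zero}); (ii) $\fwdprof$ is an equilibrium iff $\pi_0/\pi_1\le\Lambda$ (Lemma~\ref{lem:fwdexists}); (iii) the forwarding equilibrium strictly dominates the differentiating one at every filter quality (Lemma~\ref{lem:fwdbetter}); (iv) improving quality past the threshold destroys $\fwdprof$ and leaves only the lower-value $\difprof$ equilibrium. What you do genuinely differently is the packaging of (iii) and of the ``over all equilibria'' step. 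Your linearization $V=L\,c_2\tfrac{\qL}{1-\qL}(\Lambda-1)-c_1$ in the forwarded-legitimate mass $L$ reproduces the paper's identity $V(\fwdprof)-V(\difprof)=\pi_1\bigl(b+c_1-c_2\tfrac{\qL}{1-\qL}\bigr)$ as a special case and, at the same time, gives a uniform upper bound over \emph{all} equilibria, pure or mixed. This matters: the paper's proof never explicitly confronts mixed filter strategies at the improved quality (it delegates to a remark inside Lemma~\ref{lem:qql} that mixtures reparameterize $(\pi_0,\pi_1)$), and your explicit elimination of properly mixed equilibria is exactly what is needed to conclude that the new $\Vopt$ equals $V(\difprof)$ rather than a supremum over mixed equilibria with $L\to 1$, which would defeat the strict decrease. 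You also certify existence of the $\difprof$ equilibrium after the improvement via the deviation condition $\mathcal{Q}(\pi_0',\pi_1')>\Lambda$, the correct check under the simultaneous-move timing of Section~\ref{sec:endog}, whereas the paper instead imposes $q(\rho_0^{-1}(\qL,\difprof),\fwdprof)>\qH$, a condition suited to the sequential variant of Remark~\ref{rem:model-order}; both hold for sufficiently high quality, so the conclusions agree. One small omission: your mixed-strategy case analysis skips mixtures that forward on the ``likely malicious'' signal while blocking or mixing on the ``likely legitimate'' one, but these are immediately impossible because, with an always-accepting consumer, filter optimality at the two signals would require the posterior odds given the malicious signal to be weakly below those given the legitimate signal, contradicting informativeness ($\pi_0>\pi_1$); this is in line with the paper's standing convention of discarding the unreasonable strategy (Proposition~\ref{prop:unreason}). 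Net assessment: same route as the paper, but your version is more self-contained and closes the equilibrium-coverage gaps that the paper's terse argument leaves implicit.
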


%\nicomment{would be great if we could suggest some interesting intuition or give a story about the parameters for what this happens.}

The intuition for this results is as follows.  In an \stable
equilibrium, the attacker will always attack right up to the point in
which any higher $q$ would induce the consumer to incur information
costs.  That is, for a fixed filter strategy, the optimal strategy for
the attacker (assuming the consumer observes its action and best
responds), is to set $q$ as high as possible such that the consumer's
optimal strategy is to not inspect the content and simply accept all
content.  Specifically, if the filter chooses $\strfwd$ the attacker's
optimal strategy is to set $q=\qL$.  If the filter chooses $\strdif$,
the attacker's optimal strategy is to choose $q$ such that
$\qdif=\qL$.  We denote this value of $q$ as $\qdif^{-1}$.  This is
not terribly surprising since consumer inspection reduces the
probability that malicious content is accepted.  Knowing this, the
only two possible \stable equilibria are if the filter either forwards
all content or plays $\strdif$. In either case the consumer is faced
with the same prior probability that content is malicious upon
receiving content.  However, it is imperative to not interpret this
result as saying information costs are irrelevant.  On the contrary,
the information costs and assoicated parameters \emph{determine} the
attacker's optimal strategy.

By our notion of stability, \emph{only} $\strfwd$ or $\strdif$ can be
\stable equilibrium strategies for the filter.  Additionally since the
attacker chooses the maximal $q$ such that the consumer is no longer
willing to incur information costs, any \stable equilibrium profile
has the consumer accepting all content.  That means the only
(attacker, filter) strategy pairs supported in equilibrium are
$(q_L, \strfwd)$, $(\qdif^{-1}, \strdif)$.  The consumer and filter
prefer $(q_L, \strfwd)$ over $(\qdif^{-1}, \strdif)$ since under
$\strfwd$, they do not incur costs for the filter blocking clean
content.  However, it might be the case that $\strfwd$ is not
equilibrium (stable or otherwise).  That is, for $q=\qL$, if the
filter has sufficiently high quality, $\strfwd$ it is not a best
response to $q=\qL$ since it could potentially block a high proportion
of malicious content.  Knowing this, the attacker would not choose to
set $q=\qL$ and thus the only \stable equilibrium is the one in which
the filter choose $\strdif$.

\bigskip

\jgcomment{Delete the rest of this section}
In the remainder of this section, we provide details to
Theorem~\ref{thm:eattack} and Theorem~\ref{thm:eattack-zero}. Along
the way, we partially characterize the equilibrium selection and the
value of technology.

% \begin{proposition}\label{prop:attackerstrat}
% If $(q,\fwdprof)$ is an equilibrium, then $q = \qL$.
% %
% If $(q,\difprof)$ is an equilibrium, then $q$ is chosen so that $\qdif=\qL$; we denote this profile as
% $(\qdif = \qL;\,\difprof)$.
% %The attacker's best response to $\fwdprof$ is to set $q=\qL$.
% %The attacker's best response to $\difprof$ is to choose $q$ so that $\qdif=\qL$.
% \end{proposition}

% In particular, Proposition~\ref{prop:attackerstrat} implies that \bledit{in any equilibrium
%      $(q,\sigma)$ where $\sigma\in\cbr{\fwdprof,\difprof}$,
% the probability $q$ is chosen to be precisely at the level where} the consumer is no longer willing to accept all content.
% In essence, consumer attention serves as a deterrent for a higher $q$.

% Next, we show that it suffices to consider equilibria of the form
%     $(\qL,\fwdprof)$ and $(\qdif = \qL;\,\difprof)$,
% in the sense that one of them is an \optimal equilibrium. This is similar in spirit to Proposition~\ref{prop:aligned-best-eq}, but requires additional subtlety, since in particular we need to focus on \stable equilibria.

\begin{proposition}\label{prop:attacker-best-eq}
  The only two possible \stable equilibria not equivalent to
  $ \blkprof$ are $(\qL,\fwdprof)$ or
  $(\qdif^{-1};\,\difprof)$.  Furthermore, if either of those are
  \stable equilibria the consumer's equilibrium
  strategy is to accept all content.
\end{proposition}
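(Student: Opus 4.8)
The plan is to show that every equilibrium $(\rho_0^*,\mixsigma^*)$ that is \emph{not} utility-equivalent to $\blkprof$ is forced, step by step, into one of the two named profiles; since socially optimal equilibria form a subset of all equilibria, this suffices. Throughout I would write $\mixstrfil^*$ for the filter's equilibrium action strategy and $q^*=q(\rho_0^*)$ for the induced prior. First I would dispose of the consumer's behavior. If $\mixstrfil^*=\strblk$ the profile is $\blkprof$ and is excluded, so $\mixstrfil^*\neq\strblk$ and Proposition~\ref{prop:goalposts} applies. By Theorem~\ref{thm:eattack-zero} the equilibrium incurs no information cost, so Proposition~\ref{prop:goalposts} gives $q(\mixstrfil^*)\notin(\qL,\qH)$ and the consumer either accepts all forwarded content or ignores all of it. In the ignore-all case $\afil\cdot\acon=0$ for every piece, so the action payoffs of both players (and the attacker's payoff) coincide with those under $\blkprof$, contradicting non-equivalence; hence the consumer accepts all forwarded content, which by Proposition~\ref{prop:goalposts} requires $q(\mixstrfil^*)\le\qL$. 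This already establishes the ``furthermore'' clause.

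Next I would pin down the filter. The key observation is that the consumer observes only $\rho_0$, not the filter's strategy, so a unilateral filter deviation leaves the consumer's equilibrium action ``accept'' unchanged. Writing $f_s=\Pr\sbr{\afil=1\mid\sfil=s}$, the filter's payoff is then $\E\sbr{u(\afil,X)}/(1-q^*)$, which is \emph{linear} in $(f_0,f_1)\in[0,1]^2$. A linear objective is maximized at a vertex, so the filter's best response is pure; any properly mixed best response can occur only under exact indifference and is utility-equivalent to a vertex, which is where the socially optimal selection is invoked to pick the pure representative. Discarding the unreasonable strategy (\UnreasonableStrategyLocation) and $\strblk$ (which gives $\blkprof$), the filter must play $\strfwd$ or $\strdif$.

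Finally I would characterize the attacker. Fixing the filter at $\strfwd$ or $\strdif$ and the consumer at ``accept all,'' the attacker earns $\Vatt=\tfrac{q}{1-q}\Pr\sbr{\afil=1\mid X=0}$; since $\Pr\sbr{\afil=1\mid X=0}$ is independent of $q$ (it equals $1$ under $\strfwd$ and $1-\pi_0$ under $\strdif$) while $\tfrac{q}{1-q}$ is strictly increasing, $\Vatt$ is strictly increasing in $q$ as long as $q(\mixstrfil^*)\le\qL$. Past this threshold the consumer stops accepting all, and by the monotonicity already used for Theorem~\ref{thm:eattack-zero} (the attacker's payoff is decreasing in $q$ on $q(\mixstrfil^*)\in(\qL,\qH)$) the attacker is strictly worse off. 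Hence the attacker's unique best response sets $q$ so that $q(\mixstrfil^*)=\qL$: under $\strfwd$ this is $q^*=q(\strfwd)=\qL$, giving $(\qL,\fwdprof)$; under $\strdif$ it is the unique $q^*$ solving $\qdif=\qL$, denoted $\qdif^{-1}$, giving $(\qdif^{-1},\difprof)$. No other candidate survives.

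The main obstacle is the second step: ruling out properly mixed filter strategies. This rests entirely on the informational assumption that the consumer cannot condition on a filter deviation, which collapses the filter's objective to a linear function of $(f_0,f_1)$ and excludes interior optima; the only delicacy is the knife-edge indifference case, handled via utility-equivalence and the socially optimal selection. The remaining steps are then routine given Theorem~\ref{thm:eattack-zero} and Proposition~\ref{prop:goalposts}.
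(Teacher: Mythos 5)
Your treatment of the consumer (Theorem~\ref{thm:eattack-zero} plus Proposition~\ref{prop:goalposts}) and of the attacker (the monotonicity argument, which is exactly Lemma~\ref{lem:qql} in the paper) matches the paper's machinery, and the ``furthermore'' clause is handled correctly. The genuine gap is in your second step, where you rule out properly mixed filter strategies. You treat filter indifference as a knife-edge handled by ``utility-equivalence and the socially optimal selection,'' but with an endogenous attacker the indifference is \emph{not} knife-edge in the parameters: the attacker's choice of $\rho_0$ is an extra endogenous variable, so the two equilibrium conditions---filter indifference on the bad-signal branch, $q^*/(1-q^*)=\pi_1(b+c_1)/(\pi_0 c_2)$, and attacker optimality $q(\mixstrfil^*)=\qL$, which then pins down the mixing weight $\gamma\in(0,1)$---hold simultaneously on an \emph{open} set of parameters, namely $\pi_0/\pi_1<\Lambda<\mathcal{Q}(\pi_0,\pi_1)$ (this mirrors the mixed equilibria the authors found in the semi-aligned setting). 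Such a mixed equilibrium has prior $q^*>\qL$ and is a genuinely different profile from both named candidates. Your proposed repair also fails on its own terms: the vertex deviation is payoff-equivalent only \emph{holding the attacker's $\rho_0$ fixed}, and the resulting pure profile (e.g., $\strfwd$ against the attacker's $q^*\neq\qL$) is not itself an equilibrium, so there is no ``pure representative'' for the selection rule to pick.

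To close the gap you need an explicit payoff comparison, which is how the paper's surrounding lemmas dispose of this case: any mixed filter strategy is an effective differentiating strategy with degraded quality (here $((1-\gamma)\pi_0,(1-\gamma)\pi_1)$), as noted in the proof of Lemma~\ref{lem:qql}; Lemma~\ref{lem:fwdbetter} shows that $V(\rho_0^{-1}(\qL,\fwdprof),\fwdprof)$ strictly exceeds the value of \emph{every} differentiating-type equilibrium uniformly over all quality parameters in $(0,1)^2$; and Lemma~\ref{lem:fwdexists} shows the forwarding equilibrium exists exactly when $\Lambda>\pi_0/\pi_1$, which covers the entire region where the mixed equilibria can live. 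Together these imply a mixed equilibrium is always strictly Pareto-dominated by the coexisting $(\qL,\fwdprof)$, hence never \stable, and only the two named profiles survive. Without this uniform-domination step your argument establishes that the filter's best response is \emph{generically} pure, but not the stated proposition.
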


%For any \stable mixed equilibrium $(q,\mixsigma)$  in the extended game, at least one of the following holds:
%\begin{OneLiners}
%\item[(i)] $(\qL,\fwdprof)$ is an equilibrium  and
%    $\Vnorm(q,\mixsigma) \leq \Vnorm(\qL,\fwdprof)$.
%\item[(ii)] $(\qdif = \qL;\,\difprof)$ is an equilibrium and
%    $\Vnorm(q,\mixsigma) \leq \Vnorm(\qdif = \qL;\,\difprof)$.
%\end{OneLiners}

Proposition \ref{prop:attacker-best-eq} pinpoints the only two
possible \stable equilibrium profiles not equivalent to a trivial
regime in which the consumer or filter simply block all content.  We
an now characterize when each of the profiles are \stable equilibria.
% In
% fact, just as in Proposition~\ref{prop:semi-regimes}, which of
% $(\qL,\fwdprof)$ or $(\qdif = \qL;\,\difprof)$ is an \optimal
% equilibrium selection is determined by three parameter regimes.
% Moreover, these parameter regimes are identical to the ones from
% Proposition~\ref{prop:semi-regimes}.}

 \begin{proposition}{\label{prop:endeq}}[Equilibrium Characterization]
       Assume $\qdif^{-1}>\qH$, then
       \begin{enumerate}
       \item $(\qdif^{-1},\,\difprof)$ is a \stable equilibrium.
       \item $(\qL,\fwdprof)$ is a stable equilibrium if and only if
       $\Lambda > \frac{\pi_0}{\pi_1}$.
     \end{enumerate}
 \end{proposition}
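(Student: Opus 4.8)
The plan is to verify the equilibrium (best-response) conditions at each candidate profile, leaning on Proposition~\ref{prop:attacker-best-eq}, which already tells us that any stable equilibrium outside the $\blkprof$-equivalent class must be either $(\qL,\fwdprof)$ or $(\qdif^{-1},\difprof)$, and that in either the consumer accepts all forwarded content. Thus the consumer's best response is settled, and it remains to check, for each profile, (i) that the filter best-responds given that the consumer accepts everything it receives, and (ii) that the attacker best-responds given the filter's strategy and the consumer's belief-based response function. Since the consumer never inspects in these profiles, the filter's problem is purely about action payoffs: for a signal $\psi$ with posterior $q_\psi=\Pr[X=0\mid\sfil=\psi]$, forwarding yields $(1-q_\psi)b-q_\psi c_2$ and blocking yields $-(1-q_\psi)c_1$, so forwarding is weakly optimal exactly when $q_\psi\le\theta:=\frac{b+c_1}{b+c_1+c_2}$.

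The key structural fact I would establish first is the ordering $\qL\le\theta\le\qH$. This follows from Proposition~\ref{prop:goalposts} by a consistency argument: $\theta$ is precisely the belief at which a non-inspecting consumer is indifferent between accepting all and ignoring all. If $\theta<\qL$, then for a belief $q\in(\theta,\qL)$ Proposition~\ref{prop:goalposts} would force accept-all to be optimal even though $q>\theta$ makes ignore-all strictly better, a contradiction; symmetrically $\theta>\qH$ is impossible. With this in hand, note that $\pi_0\ge\pi_1$ yields the posterior ordering $q_1\le q\le q_0$, where $q$ is the prior set by the attacker.

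For part~(1), at $(\qdif^{-1},\difprof)$ the attacker has set $q=\qdif^{-1}$, so the forwarded-content belief is $\qfunc(\strdif)=\qdif=\qL$, i.e. $q_1=\qL$. Forwarding on $\sfil=1$ is optimal since $q_1=\qL\le\theta$, and blocking on $\sfil=0$ is optimal since $q_0\ge q=\qdif^{-1}>\qH\ge\theta$; hence $\strdif$ is the filter's best response. For the attacker, fixing $\strdif$, the payoff $\Vatt=\frac{q}{1-q}(1-\pi_0)$ is strictly increasing in $q$ while $\qdif(q)\le\qL$ (consumer accepts all), strictly decreases once $\qdif(q)\in(\qL,\qH)$ by the mechanism driving Theorem~\ref{thm:eattack-zero}, and drops to $0$ once $\qdif(q)>\qH$; so the unique optimum is the deterrence level $q=\qdif^{-1}$. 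This establishes the profile as an equilibrium.

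For part~(2), at $(\qL,\fwdprof)$ the same single-peakedness argument (now with $\qfunc(\strfwd)=q$) shows $q=\qL$ is the attacker's best response to $\strfwd$. The filter best-responds with $\strfwd$ iff forwarding is optimal after both signals; since $q_1\le q=\qL\le\theta$ the binding condition is $q_0\le\theta$ on the bad signal. Using $\frac{q_0}{1-q_0}=\frac{\qL}{1-\qL}\frac{\pi_0}{\pi_1}$ and $\frac{\theta}{1-\theta}=\frac{b+c_1}{c_2}$, the condition $q_0\le\theta$ rearranges to $\frac{\pi_0}{\pi_1}\le\frac{b+c_1}{c_2}\frac{1-\qL}{\qL}=\Lambda$; when $\Lambda<\pi_0/\pi_1$ the filter strictly prefers to block the bad signal and $\strfwd$ is not a best response. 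The strict inequality $\Lambda>\pi_0/\pi_1$ in the statement simply excludes the knife-edge $q_0=\theta$. I expect the main obstacles to be pinning down the threshold ordering $\qL\le\theta\le\qH$ rigorously and confirming the attacker's single-peaked payoff; once these are in place, both parts reduce to the stated comparison of $q_0$ against $\theta$.
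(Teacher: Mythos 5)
Your proof is correct, and it shares its skeleton with the paper's treatment, which (this proposition living in commented-out text) is scattered across Lemma~\ref{lem:qql} (the attacker's single-peaked payoff, hence the deterrence levels $q=\qL$ and $q=\qdif^{-1}$), Lemma~\ref{lem:fwdexists} (part~2), and the proof of Theorem~\ref{thm:eattack} (part~1) --- but the execution differs in two ways worth recording. First, for the filter's best response you run an interim, signal-by-signal test against the threshold $\theta=\frac{b+c_1}{b+c_1+c_2}$, with the ordering $\qL\le\theta\le\qH$ obtained by a consistency argument from Proposition~\ref{prop:goalposts}; the paper instead compares ex-ante payoffs of $\strfwd$ versus $\strdif$ at $q=\qL$ and grinds out $\Lambda>\pi_0/\pi_1$. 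The two are algebraically equivalent, but your version disposes of \emph{all} filter deviations (blocking, signal-reversing, and mixtures) in one stroke, whereas Lemma~\ref{lem:fwdexists} explicitly checks only the deviation to $\strdif$. Second, and more substantively, for part~1 the paper's one-line justification (``otherwise the consumer blocks all content,'' inside the proof of Theorem~\ref{thm:eattack}) implicitly lets the consumer re-optimize against the filter's deviation, i.e., it leans on the sequential variant of Remark~\ref{rem:model-order}; you instead hold the consumer's accept-all strategy fixed --- as the simultaneous-move timing requires, since the consumer observes $\rho_0$ but not the filter's strategy --- and show the deviation to $\strfwd$ is unprofitable anyway because $q_0\ge q=\qdif^{-1}>\qH\ge\theta$, the first inequality using $\pi_0\ge\pi_1$. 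That chain is exactly the step the paper glosses over, so your verification that $(\qdif^{-1},\difprof)$ is an equilibrium is self-contained in the game as actually defined, where the paper's is not. Finally, your appeal to the mechanism behind Theorem~\ref{thm:eattack-zero} for the attacker's payoff being strictly decreasing on the inspection region $\qfunc(\mixstrfil)\in(\qL,\qH)$ is precisely what Lemma~\ref{lem:qql} establishes, so nothing is missing there.
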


 Proposition \ref{prop:endeq} says that when $\qdif^{-1}>\qH$,
 $(\qdif^{-1},\,\difprof)$ is always a \stable (but not necessatily
 \optimal) equilibrium.  However, $(\qL,\fwdprof)$ is only a \stable
 regime in certain parameter regimes.  The following proposition
 establishes regions whereach of those are optimal equilibria.

 \begin{proposition}\label{prop:endog-regimes}
  Assume $\qdif^{-1}>\qH$.  Then
%There are three regimes, depending on $\Lambda := \frac{b+c_1}{c_2}\frac{1-\qL}{\qL}$.
\begin{itemize}
\item[(a)] If $\Lambda<\frac{\pi_0}{\pi_1}$, then
  $(\qdif^{-1}, \difprof)$ is an  \optimal equilibrium.

\item[(b)] If
    $\Lambda > \frac{\pi_0(1-\pi_1)}{\pi_1(1-\pi_0)}$
    then $(\qL,\fwdprof)$ is an \optimal equilibrium.
    \end{itemize}
\end{proposition}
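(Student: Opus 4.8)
The plan is to leverage the enumeration of candidate equilibria and reduce the claim to a comparison of closed-form payoffs. By Proposition~\ref{prop:attacker-best-eq}, every equilibrium is either equivalent to $\blkprof$ or equals one of the two profiles $(\qL,\fwdprof)$ and $(\qdif^{-1},\difprof)$; hence an \optimal equilibrium is just the $V$-maximizer among those candidates that are genuine equilibria. By Proposition~\ref{prop:endeq}, under the maintained assumption $\qdif^{-1}>\qH$ the profile $(\qdif^{-1},\difprof)$ is always an equilibrium, while $(\qL,\fwdprof)$ is an equilibrium exactly when $\Lambda>\pi_0/\pi_1$. In each profile the attacker pins the consumer's effective belief to the boundary value $\qL$, so by Proposition~\ref{prop:goalposts} the consumer accepts all forwarded content at zero information cost (consistent with Theorem~\ref{thm:eattack-zero}).

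First I would compute the three payoffs in closed form. Writing $W:=b-\frac{\qL}{1-\qL}c_2$, the accept-all consumer gives $V(\qL,\fwdprof)=W$ and $V(\blkprof)=-c_1$; and using $q=\qdif^{-1}$, i.e. $\frac{q}{1-q}=\frac{\qL}{1-\qL}\cdot\frac{1-\pi_1}{1-\pi_0}$, the differentiating profile yields $V(\qdif^{-1},\difprof)=(1-\pi_1)W-\pi_1 c_1$. Subtracting, $V(\qL,\fwdprof)-V(\qdif^{-1},\difprof)=\pi_1(W+c_1)$ and $V(\qdif^{-1},\difprof)-V(\blkprof)=(1-\pi_1)(W+c_1)$, so the entire ranking is governed by the sign of $W+c_1=b+c_1-\frac{\qL}{1-\qL}c_2$, which is positive precisely when $\Lambda>1$.

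The key auxiliary fact is therefore the universal inequality $\Lambda>1$. I would prove it by factoring $\Lambda$ from the closed forms in \eqref{eq:HL-defn}: with $w:=(b+c_1)/\infoL$ and $z:=c_2/\infoL$ one gets $\Lambda=\frac{w}{1-e^{-w}}\cdot\frac{e^{z}-1}{z}$, and both factors exceed $1$ for $w,z>0$ (since $1-e^{-w}<w$ and $e^{z}-1>z$). Hence $W+c_1>0$, and the payoffs are strictly ordered as $V(\qL,\fwdprof)>V(\qdif^{-1},\difprof)>V(\blkprof)$.

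Given this ordering the two cases follow immediately. For (a), $\Lambda<\pi_0/\pi_1$ makes $(\qL,\fwdprof)$ fail to be an equilibrium, so the only non-$\blkprof$ equilibrium is $(\qdif^{-1},\difprof)$, which strictly dominates $\blkprof$ in $V$ and is thus \optimal. For (b), since $\pi_1\le\pi_0$ we have $\mathcal{Q}(\pi_0,\pi_1)=\frac{\pi_0}{\pi_1}\cdot\frac{1-\pi_1}{1-\pi_0}\ge \pi_0/\pi_1$, so the hypothesis $\Lambda>\mathcal{Q}$ implies $\Lambda>\pi_0/\pi_1$ and makes $(\qL,\fwdprof)$ an equilibrium; it sits at the top of the ranking and is therefore \optimal. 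The main obstacle is establishing $\Lambda>1$ cleanly and carrying out the boundary-belief payoff bookkeeping correctly; the equilibrium-existence thresholds themselves are supplied by Propositions~\ref{prop:attacker-best-eq} and \ref{prop:endeq}.
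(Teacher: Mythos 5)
Your proof is correct, and its overall architecture matches the paper's treatment: reduce to the candidate equilibria (blocking-equivalent, $(\qL,\fwdprof)$, $(\qdif^{-1},\difprof)$), invoke the existence threshold $\Lambda \gtrless \pi_0/\pi_1$ for the forwarding profile (the paper's Lemma~\ref{lem:fwdexists}), note $\mathcal{Q}(\pi_0,\pi_1)\ge\pi_0/\pi_1$, and finish with a payoff ranking. Where you genuinely depart from the paper is in how the ranking's key inequality is established. The paper's Lemma~\ref{lem:fwdbetter} computes the same difference
\begin{equation*}
V(\qL,\fwdprof)-V(\qdif^{-1},\difprof)\;=\;\pi_1\Bigl(b+c_1-c_2\,\tfrac{\qL}{1-\qL}\Bigr)
\end{equation*}
and argues positivity by revealed preference: at belief $\qL$ the consumer's best response is to accept, so accept-all beats ignore-all, i.e. $-\qL c_2+(1-\qL)b>-(1-\qL)c_1$. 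You instead prove the equivalent statement $\Lambda>1$ analytically, by factoring the closed form of $\qL$ from \eqref{eq:HL-defn} into $\Lambda=\frac{w}{1-e^{-w}}\cdot\frac{e^{z}-1}{z}$ with $w=(b+c_1)/\infoL$ and $z=c_2/\infoL$, each factor exceeding $1$ (this factorization is correct; I verified it from the definitions of $\qH$ and $\qL$). Both routes are valid. The paper's argument is shorter and more conceptual, and would survive cost specifications other than mutual information so long as accept-all remains optimal at $\qL$; but as written its \emph{strict} inequality needs one extra word (optimality alone gives only a weak inequality), whereas your computation makes strictness transparent. Your version also makes explicit the full ordering $V(\qL,\fwdprof)>V(\qdif^{-1},\difprof)>V(\blkprof)=-c_1$, i.e. the comparison against blocking-equivalent equilibria that the paper leaves implicit, which is what actually certifies FC-optimality of $(\qdif^{-1},\difprof)$ in case (a) when the forwarding equilibrium fails to exist.
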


Proposition \ref{prop:endog-regimes} says there exists two
non-overlaping regimes in which each of the profiles are the strictly
optimal equilibrium.  The following result establishes the consumer's
and filter's preference over any possible equilibria

\begin{proposition}[\label{prop:fwdbetter}]
  $\Vnorm(\qL,\fwdprof)$, which does not depend on $\pi_0$ and $\pi_1$
  achieves higher payoffs  then $\Vnorm(\qdif = \qL;\,\difprof)$ for any
  values of $\pi_0, \pi_1 \in(0,1)$.
\end{proposition}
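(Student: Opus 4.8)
The plan is to evaluate the common payoff $\Vnorm=\Vfil=\Vcon$ explicitly in each profile, using that in both cases the consumer accepts all forwarded content at zero information cost. Indeed, by construction the consumer's posterior that forwarded content is malicious, $\qfunc(\mixstrfil)$, equals $\qL$ in both profiles, so Proposition~\ref{prop:goalposts} gives that accepting all content without examination is optimal and $\infoC=0$ (consistent with Theorem~\ref{thm:eattack-zero}). Hence $\Vnorm$ reduces to the batch-summed action payoff $\E\sbr{u(\acon\cdot\afil,X)}/(1-q)$, which I evaluate by counting expected pieces of each type (recall the legitimate rate is normalized to $1$, so a batch has $1$ expected legitimate piece and $q/(1-q)$ expected malicious pieces). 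Writing $A:=\qL/(1-\qL)$, under $\fwdprof$ at $q=\qL$ every piece is forwarded and accepted, giving $\Vnorm(\qL,\fwdprof)=b-A\,c_2$, which is manifestly independent of $\pi_0,\pi_1$, as the statement asserts.

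Next I evaluate $\Vnorm$ under $\difprof$ at the attack level for which $\qdif=\qL$. Now the filter forwards only the pieces with $\sfil=1$, the consumer accepts them, and each blocked legitimate piece costs $c_1$. The key simplification comes from the defining relation $\qdif=\qL$: unwrapping \eqref{eq:eff-defn}, this reads $\tfrac{q}{1-q}(1-\pi_0)=A(1-\pi_1)$, i.e.\ the expected number of accepted malicious pieces is exactly $A(1-\pi_1)$. Substituting collapses the payoff to
\[
 \Vnorm(\qdif=\qL,\difprof)=(1-\pi_1)\rbr{b-A\,c_2}-\pi_1 c_1=(1-\pi_1)\,\Vnorm(\qL,\fwdprof)-\pi_1 c_1 .
\]
Subtracting, the difference telescopes to $\Vnorm(\qL,\fwdprof)-\Vnorm(\qdif=\qL,\difprof)=\pi_1\rbr{\Vnorm(\qL,\fwdprof)+c_1}$. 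Since $\pi_1>0$, the proposition reduces to the single inequality $\Vnorm(\qL,\fwdprof)+c_1=b+c_1-A\,c_2>0$.

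The final step, proving $b+c_1-A\,c_2>0$ (equivalently $\qL<\tfrac{b+c_1}{b+c_1+c_2}$, i.e.\ $\Lambda>1$ for the threshold in \eqref{eq:lambda-defn}), is where I expect the real work to lie. I would obtain it by simplifying $A$ directly from \eqref{eq:HL-defn}: after cancelling the common factors in $\qL$ and $1-\qL$ one gets the closed form $A=(1-e^{-(b+c_1)/\infoL})/(e^{c_2/\infoL}-1)$. The desired bound $A<\tfrac{b+c_1}{c_2}$ then follows from two elementary estimates, $1-e^{-s}\le s$ with $s=(b+c_1)/\infoL$ and $e^{z}-1>z$ with $z=c_2/\infoL$, which combine to $z(1-e^{-s})\le sz<s(e^{z}-1)$; the second inequality is strict because $c_2>0$ and $b+c_1>0$. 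Combining with $\pi_1>0$ yields $\Vnorm(\qL,\fwdprof)>\Vnorm(\qdif=\qL,\difprof)$ for all $\pi_0,\pi_1\in(0,1)$. The main obstacle is the cancellation producing the closed form of $A$ (and noting the degenerate case $b+c_1=0$, where the two payoffs merely coincide). A purely economic alternative to this last step is to observe that at posterior $\qL$ accepting all content must weakly beat ignoring all content (the former being optimal there by Proposition~\ref{prop:goalposts}), with strictness following from $\qL<\qH$; this is precisely the inequality $\qL<\tfrac{b+c_1}{b+c_1+c_2}$.
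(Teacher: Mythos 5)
Your proof is correct and follows essentially the same route as the paper's: both reduce the comparison to the identity $\Vnorm(\qL,\fwdprof)-\Vnorm(\qdif=\qL,\difprof)=\pi_1\bigl(b+c_1-c_2\tfrac{\qL}{1-\qL}\bigr)$ (using that in both profiles the consumer accepts everything at zero information cost) and then verify that the bracketed term is positive. The only difference is one of rigor, not of route: the paper asserts strict positivity from the consumer's best response at prior $\qL$ (which on its face yields only a weak inequality), whereas you pin down the strict version from the closed form of $\qL$ via elementary exponential bounds — a more careful treatment of the same final step.
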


Proposition \ref{prop:fwdbetter} says that the filter and consumer
always prefer the equilibrium where the attacker plays $\qL$ than any
other equilibrium for any values of the parameters.  Taken together
with proposition \ref{prop:endog-regimes}, anytime the filter's
quality changes such that $(\qL,\fwdprof)$ goes from being a \stable
equilibrium to not existing as an equilibrium, there will be a
negative \vot.

We can finally show that this happens though increasing filter
quality.

\begin{proposition}\label{prop:eattack-details}
Fix all model parameters other than $\pi_0,\pi_1$.
Suppose $\pi_0,\pi_1$ change so that we transition from regime (b) to
regime (a) in Proposition~\ref{prop:endog-regimes}. Then the \optimal
utility $\Vopt$ strictly decreases. However, such transition can
happen by increasing $(\pi_0,-\pi_1)$ coordinate-wise, corresponding
to an unambiguous improvement in filter quality.
\end{proposition}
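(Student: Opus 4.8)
The plan is to assemble the proposition from the equilibrium-selection results that precede it, so almost all the work is bookkeeping about which profile is \optimal on each side of the transition. By Theorem~\ref{thm:eattack-zero} the consumer incurs no information cost in any equilibrium, so by Proposition~\ref{prop:attacker-best-eq} the only two non-trivial candidate \optimal profiles (up to equivalence with $\blkprof$) are $(\qL,\fwdprof)$ and $(\qdif^{-1},\difprof)$, both of which are available under the maintained hypothesis $\qdif^{-1}>\qH$ of Proposition~\ref{prop:endog-regimes}. I would first record the two facts I need: by Proposition~\ref{prop:fwdbetter}, the payoff $\Vnorm(\qL,\fwdprof)$ is independent of $(\pi_0,\pi_1)$ and strictly exceeds $\Vnorm(\qdif=\qL;\difprof)$; and by Proposition~\ref{prop:endog-regimes}, in regime~(b) the profile $(\qL,\fwdprof)$ is \optimal while in regime~(a) the profile $(\qdif^{-1},\difprof)$ is \optimal.

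For the strict decrease of $\Vopt$, I would argue directly. In regime~(b) we have $\Lambda>\mathcal{Q}(\pi_0,\pi_1)$, so by Proposition~\ref{prop:endog-regimes}(b) the \optimal equilibrium is $(\qL,\fwdprof)$ and hence $\Vopt=\Vnorm(\qL,\fwdprof)$. In regime~(a) we have $\Lambda<\pi_0/\pi_1$, so by Proposition~\ref{prop:endog-regimes}(a) the \optimal equilibrium is $(\qdif^{-1},\difprof)$ and hence $\Vopt=\Vnorm(\qdif=\qL;\difprof)$. Combining these with the strict inequality from Proposition~\ref{prop:fwdbetter}, the value of the \optimal equilibrium drops strictly across the transition, which is exactly the first conclusion.

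For the second conclusion I would exhibit the coordinate-wise improvement explicitly. Both $\pi_0/\pi_1$ and $\mathcal{Q}(\pi_0,\pi_1)$ are strictly increasing in $\pi_0$ and in $-\pi_1$ (the latter by the monotonicity of $\mathcal{Q}$ noted at its definition). Starting from any $(\pi_0,\pi_1)$ in regime~(b), i.e.\ $\mathcal{Q}(\pi_0,\pi_1)<\Lambda$, the chain $\pi_0/\pi_1\le\mathcal{Q}(\pi_0,\pi_1)<\Lambda$ holds, where the first inequality uses $\pi_0\ge\pi_1$ (so $\tfrac{1-\pi_1}{1-\pi_0}\ge 1$). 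Since $\pi_0'/\pi_1'\to\infty$ as $\pi_0'\uparrow 1$ and $\pi_1'\downarrow 0$, there is an improved pair $(\pi_0',\pi_1')$ with $\pi_0'>\pi_0$, $\pi_1'<\pi_1$ and $\pi_0'/\pi_1'>\Lambda$, which lies in regime~(a). Thus the transition from (b) to (a) is realized by a coordinate-wise increase of $(\pi_0,-\pi_1)$, an unambiguous quality improvement.

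The main obstacle I anticipate is not the payoff comparison, which Proposition~\ref{prop:fwdbetter} delivers cleanly, but the intermediate region $\pi_0/\pi_1<\Lambda<\mathcal{Q}$ that the improvement path must cross, where both $(\qL,\fwdprof)$ and $(\qdif^{-1},\difprof)$ are equilibria. The resolution is that this does not disturb the conclusion: by Proposition~\ref{prop:endeq} the superior profile $(\qL,\fwdprof)$ remains an equilibrium throughout the region $\Lambda>\pi_0/\pi_1$, so $\Vopt$ stays equal to the $(\pi_0,\pi_1)$-independent value $\Vnorm(\qL,\fwdprof)$ and only drops once $\pi_0/\pi_1$ crosses $\Lambda$. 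A secondary technical point is to verify that the hypothesis $\qdif^{-1}>\qH$ of Proposition~\ref{prop:endog-regimes} can be maintained along the chosen endpoints; since $\qdif^{-1}$ varies continuously in $(\pi_0,\pi_1)$ with the cost parameters fixed, this can be arranged by selecting the start and end pairs appropriately.
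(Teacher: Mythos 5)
Your proposal is correct and follows essentially the same route as the paper's own argument (the proof of Theorem~\ref{thm:eattack} and its supporting lemmas): pin down the FC-optimal equilibrium in each regime via the regime characterization, apply the payoff-comparison result (Proposition~\ref{prop:fwdbetter}, the paper's Lemma~\ref{lem:fwdbetter}) together with the $(\pi_0,\pi_1)$-independence of the forwarding-equilibrium payoff to get the strict drop, and use monotonicity of $\pi_0/\pi_1$ and $\mathcal{Q}$ in $(\pi_0,-\pi_1)$ to realize the transition as a coordinate-wise improvement. Your extra observations---that $\Vopt$ is flat on the intermediate region $\pi_0/\pi_1<\Lambda<\mathcal{Q}$ and that the maintained hypothesis $\qdif^{-1}>\qH$ holds for sufficiently good filters---match the paper's handling of the same points.
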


% \item[(c)] If
%     $\Lambda \in \rbr{\frac{\pi_0}{\pi_1},\, \frac{\pi_0(1-\pi_1)}{\pi_1(1-\pi_0)}}$
% then both $(\qL,\fwdprof)$ and $(\qdif = \qL;\,\difprof)$ are FC-stable \ascomment{?}
% equilibria.
% \end{itemize}
% \end{proposition}

For completeness, Since there are regimes in which both
$(\qdif^{-1}, \difprof)$ and $(\qL,\fwdprof)$ may be optimal
equilibria, it would be insightful to characterize the \vot in those
regimes.  We do so in the following propostion:

\begin{proposition}\label{prop:endog-VoT}~
\begin{itemize}
\item[(a)] For strategy profile $(\qL,\fwdprof)$, \vot is
$\partialpi{0}V = \partialpi{1}V = 0 $.

\item[(b)] For strategy profile $(\qdif^{-1};\,\difprof)$, the \vot is
    $\partialpi{0}V = 0$,
whereas
    $\partialpi{1}V<0$ is constant in $(\pi_0,\pi_1)$.
\end{itemize}
\end{proposition}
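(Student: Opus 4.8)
The plan is to evaluate $V=\Vcon$ directly along each of the two candidate equilibrium profiles, treating the attacker's choice of $q$ as pinned down by its best response and therefore as a function of $(\pi_0,\pi_1)$. In both profiles the consumer's forwarded-content belief sits at the lower goalpost $\qL$: exactly $\qL$ for the forwarding profile since $q(\strfwd)=q$, and $\qdif=\qL$ for the differentiating profile by the definition of $\qdif^{-1}$. Hence by Proposition~\ref{prop:goalposts} the consumer accepts all forwarded content and pays zero information cost, so $\infoC=0$ and $V=u(\sigma)/(1-q)$ reduces to a pure action-payoff computation.

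For part (a), under $(\qL,\fwdprof)$ the filter ignores its signal, so the joint law of $(X,\afil\cdot\acon)$ does not involve $\pi_0,\pi_1$ at all; moreover the attacker's target $q=\qL$ depends only on the cost parameters. Thus the entire equilibrium outcome, and with it $V$, is independent of $(\pi_0,\pi_1)$, giving $\partialpi{0}V=\partialpi{1}V=0$. This case is immediate and is not the crux.

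For part (b), under $\difprof$ content is accepted exactly when $\sfil=1$, so
\[
V=\frac{u(\difprof)}{1-q}=-\frac{q}{1-q}(1-\pi_0)c_2-\pi_1 c_1+(1-\pi_1)b.
\]
The key step is to substitute the attacker's response. Writing $\qdif=\qL$ in odds form gives $\frac{q}{1-q}=\frac{\qL}{1-\qL}\,\frac{1-\pi_1}{1-\pi_0}$, whence $\frac{q}{1-q}(1-\pi_0)=\frac{\qL}{1-\qL}(1-\pi_1)$ and the $\pi_0$-dependence cancels:
\[
V=-\frac{\qL}{1-\qL}(1-\pi_1)c_2-\pi_1 c_1+(1-\pi_1)b.
\]
Reading off the derivatives yields $\partialpi{0}V=0$ and $\partialpi{1}V=\frac{\qL}{1-\qL}c_2-(c_1+b)$, which is manifestly constant in $(\pi_0,\pi_1)$. (Relative to the exogenous computation in Proposition~\ref{cor:scaff-VoT}, the attacker's re-optimization exactly erases the $\partialpi{0}$ benefit and softens the $\partialpi{1}$ term from $-(c_1+b)$.)

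It remains to sign $\partialpi{1}V$, i.e.\ to show $\frac{\qL}{1-\qL}c_2<c_1+b$, which is precisely $\Lambda>1$ for $\Lambda$ from \eqref{eq:lambda-defn}. Setting $\alpha=(b+c_1)/\infoL$ and $\gamma=c_2/\infoL$, a short manipulation of \eqref{eq:HL-defn} gives $\frac{\qL}{1-\qL}=\frac{1-e^{-\alpha}}{e^{\gamma}-1}$, so the claim reduces to $\gamma(1-e^{-\alpha})<\alpha(e^{\gamma}-1)$, which follows from the elementary bounds $1-e^{-\alpha}<\alpha$ and $e^{\gamma}-1>\gamma$ for $\alpha,\gamma>0$ (using $b+c_1>0$ and $c_2>0$). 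The main obstacle is the part (b) algebra: recognizing that the attacker's equilibrium response makes $q$ a function of $(\pi_0,\pi_1)$ and then verifying the clean cancellation of $\pi_0$; after that the sign step is a routine convexity inequality.
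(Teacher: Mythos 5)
Your proof is correct. A point of comparison worth noting: the paper never gives a standalone proof of this proposition (it survives only in cut draft material), but its content is exactly what the paper's appendix machinery for the endogenous-attacker section establishes, and your argument tracks that machinery closely. Your key substitution $\frac{q}{1-q}(1-\pi_0)=\frac{\qL}{1-\qL}(1-\pi_1)$, with the resulting cancellation of $\pi_0$, is precisely the ``simple equation manipulation'' behind Lemma~\ref{lem:fwdbetter} (whose stated difference $\pi_1\bigl(b+c_1-c_2\tfrac{\qL}{1-\qL}\bigr)$ is what your two profile values subtract to), and the premise that the attacker pins the forwarded-content belief at $\qL$ is Lemma~\ref{lem:qql}. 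The genuine difference is how the sign of $\partialpi{1}V=\frac{\qL}{1-\qL}c_2-(b+c_1)$ is settled. You compute the closed form $\frac{\qL}{1-\qL}=\frac{1-e^{-\alpha}}{e^{\gamma}-1}$ with $\alpha=(b+c_1)/\infoL$, $\gamma=c_2/\infoL$, and invoke the elementary bounds $1-e^{-\alpha}<\alpha$ and $e^{\gamma}-1>\gamma$. The paper instead uses a revealed-preference argument: since the consumer's best response at beliefs at or below $\qL$ is to accept everything, accepting must beat ignoring at $\qL$, i.e.\ $-\qL c_2+(1-\qL)b\ge -(1-\qL)c_1$, which is the same inequality $\Lambda\ge 1$. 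The paper's route is shorter and does not rely on the entropy-based closed form of $\qL$ (so it would carry over to the generalized costs of Appendix~\ref{sec:gen}), but taken literally it only yields a weak inequality at the boundary belief; your exponential bounds deliver strictness --- $\qL$ lies strictly below the accept/ignore indifference belief $(b+c_1)/(b+c_1+c_2)$ --- which is what the strict claim $\partialpi{1}V<0$ in part (b) actually requires. Both approaches are sound, and your explicit treatment of the attacker's re-optimization as making $q$ a function of $(\pi_0,\pi_1)$ is the correct reading of the profile $(\qdif^{-1},\difprof)$.
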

\end{prewebconf}

% \ascomment{stopped here. what's below is an obsolete version, to be removed.}

% \begin{proposition}\label{prop:ebms-endog}
%   $\fwdprof$ and $q=\qL$ is in equilibrium if and only if
%   $\frac{\pi_0}{\pi_1}<\frac{b+c_1}{c_2}\frac{1-\qL}{\qL}$.
%   $\difprof$ and $\qdif=\qL$ is an equilibrium if and only if
%   $\frac{b+c_1}{c_2}\frac{1-\qL}{\qL}<\frac{\pi_0(1-\pi_1)}{\pi_1(1-\pi_0)}$.
% \end{proposition}

% Proposition \ref{prop:ebms-endog} gives the conditions for each of the
% the possible equilibria to be the case.  The following proposition
% establishes that the filter and consumer always prefer the forwarding
% equilibrium.

% \begin{theorem}[VoT]
%   For $\frac{\pi_0}{\pi_1}<\frac{b+c_1}{c_2}\frac{1-\qL}{\qL}$,
%   VoT=0.  For
%   $\frac{\pi_0(1-\pi_1)}{\pi_1(1-\pi_0)}>\frac{b+c_1}{c_2}\frac{1-\qL}{\qL}$,
%   $\frac{\partial V}{\partial \pi_0}=0$ and $\frac{\partial
%     V}{\partial \pi_1}>0$ and is linear. However, for any discrete
%   jump in filter quality from $\pi_0^1, \pi_1^1$ to $\pi_0^2, \pi_1^2$
%   such that
%   $\frac{\pi_0^1}{\pi_1^1}<\frac{b+c_1}{c_2}\frac{1-\qL}{\qL}$,
%   $\frac{\pi_0^2(1-\pi_1^2)}{\pi_1^2(1-\pi_0^2)}>\frac{b+c_1}{c_2}\frac{1-\qL}{\qL}$,
%   VoT is strictly negative.
% \end{theorem}

\section{Conclusions and Open Questions}
\label{sec:future}
%\section{Discussion and Future Work}

We develop a model of strategic interactions between a content filter
and inattentive content consumers; such interactions are a common
feature in many applications. %Our model as designed is
% arguably as simple as possible, yet necessitates considerable
% technical complexity and leads to rather nuanced conclusions. 
Our equilibrium analysis undermines the common notions that improving
filter quality is unambiguously beneficial and that the improvements are
necessarily linear in the natural parameters (such as the true/false
positive rates). We conclude that consumers' strategic
inattention is essential for the analysis of content filtering.

% FOCUS ON A SINGLE ITEM

%\ascomment{no need to restate the results, IMO!}

%\ascomment{OK, this is my shot at a story on policy implications. Note the open Q in the end -- I think it honestly reflects our state, and also advertises a nice Q, FWIW.}

The main policy implication is that content filtering does not reduce
to a classification problem in machine learning. In addition to rote
improvements in filter quality, one should consider interventions to
reduce consumers' information costs and increase vigilance.%
\footnote{Such interventions are not uncommon in practice. Mandatory
  corporate trainings are now wide-spread. Some IT departments even
  implement ``secret exercises", \eg send out phishing emails to all
  employees and reprimand those who fall for these emails.}  Our
analysis illuminates non-obvious positive consequences of these
interventions that arise due to strategic interactions: \eg increasing
the marginal benefits of improvements in filter quality, or
disincentivizing the attacker from inserting more malicious
content. Detailing whether and which interventions are desirable remains an intriguing open
question.

%\ascomment{adding a `medskip' to separate the discussion of our contributions from the discussion of modeling issues.}

% Simplifications to discuss:
% static vs dynamic;
% binary choices;
% model for info costs.

%\ascomment{Simplifications / open Qs to discuss below: (1) static vs dynamic, (2) binary choices, (3) info costs. Note how I conclude each para with a justification for our model.}

We focus on a homogeneous and stationary world in which the homogeneous players' strategies are
non-adaptive and fixed throughout. Effectively, we consider a
  ``single-round'' game that concerns a single piece of
   content. %(Indeed, our batched model serves only to correctly
  % normalize the players' utilities with respect to the attacker's rate
  % $\rho_0$.)  
  This stationary world is, of course, an idealization of
a dynamic world in which the players continuously adapt to one
another. Such dynamic worlds are notoriously difficult to analyze, and
are not well-understood even in simple scenarios.%
\footnote{They are studied in (decentralized) multi-agent learning, \eg Ch. 9.5 in
  \cite{slivkins-MABbook} for  introductory background.}  Focusing
on equilibria of a ``single-round'' game is a common route towards
tractability.  Nevertheless, adding dynamics with heterogeneous consumers is a viable extension.  

%\ascomment{The above points should appear \emph{somewhere}, IMO. this seems a reasonable order and a reasonable place to put them.}

A key simplification in our model is that all legitimacy-related
quantities are \emph{binary}: the legitimacy itself, the filter's signal and action and the consumer's signal and action.
% --- while they need not be
%All these quantities are not necessarily binary in practice.
Indeed, the
filter's and the consumer's signal could be fractional, reflecting the
likelihood of the content piece being malicious. Filter's actions
could also
%can (and do)
include, \eg putting the content piece into a spam folder
%(rather than blocking it outright),  and
or attaching a warning.
Furthermore, the content piece itself may sometimes be a mix
of genuine and malicious, \eg a genuine social media post may be
contaminated by propaganda. Accordingly, a consumer might choose an
`intermediate' action, \eg accept the content piece with some
reservations.
%These binary choices could be relaxed in future work, and
Relaxing these binary choices could
potentially lead to more refined conclusions,
%On the other hand, such extensions may
but might also lose the appealing simplicity and tractability of
the ``binary'' model.

%\ascomment{The points re non-binary content legitimacy and non-binary consumer's action are probably less important as open questions. I included them because I love the symmetry:  that all things binary could be made non-binary.}

%\ascomment{Below: all points re info costs merged into one narrative.}

Our model of information costs, while suitable (and standard) for
idealized models, could potentially be refined to reflect more
realistic scenarios of information discovery. First, the
\emph{process} of information discovery could be modeled more
explicitly, perhaps via an analogy to machine learning algorithms for
similar problems. Second, the \emph{information sources} available to
a human user may differ from the one readily available to the
filter. For example, a human receiving an email might intuitively pick
up on a suspicious tone or an unusual visual layout, whereas a spam
filter would be restricted to specific pre-trained characteristics of
the email. Moreover, a human user might do a quick web search to
resolve a suspicion (\eg of spam, phishing, or misinformation), or
even ask a friend, whereas a spam/content filter might consult its
internal database. On the other hand, such refinements might be
application-specific and/or involve some unobvious modeling choices.

Another approach towards modeling information costs is to handle a
large, abstract \emph{class} thereof, without attempting to
micro-found any particular function shape in this class.
In \GenCostsLocation, we obtain an initial result in this direction, generalizing the conclusions in Section~\ref{sec:aligned} to arbitrary information costs under
some generic conditions.

\newpage

\bibliography{refs,bib-abbrv,bib-slivkins}

%\clearpage % So I can gauge length

\appendix

\section{Generalized information costs}
\label{sec:gen}

This appendix begins to generalize our notion of information costs.
Specifically, we consider convex/concave information costs (defined below). We focus on aligned utilities, and we restrict the filter to only use pure action strategies. The ``interesting" parameter regime here is when the differentiating profile $\difprof$ is socially optimal and the consumer does not incur information costs.
\footnote{Indeed, the alternatives are essentially trivial: either the filter does not choose the differentiating strategy $\strdif$ (and players' payoffs do not depend on the filter quality), or the consumer does not incur information costs and equilibrium payoffs are linear in filter quality.}
When and if this parameter regime occurs, we show that the players' utility is strictly increasing in the filter quality.

%\footnote{We limit our attention to the case where the filer chooses the differentiating strategy $\strdif$, because if the filter forwards all content or blocks all content, then players' payoffs do not depend on the filter quality.}

%$\pi_0$, $-\pi_1$

Let us formulate our cost model. Let $C(\mu; q(\difprof))$ be the consumer's cost for choosing information strategy
    $\mu=\rbr{\tilde{\pi}_0, \tilde{\pi}_1}$
  when their prior belief that content is
  malicious is $q(\mixstrfil)$. We assume the following:
\begin{enumerate}
\item  $C$ is strictly convex in $\mu$;
\item $C$ is strictly concave in $q(\mixstrfil)$;
\item $ C$ is differentiable in $\mu$ and $q(\mixstrfil)$;
\item $C(\mu;q(\mixstrfil))=0$ if and only if
      $P(X|\Psi_{\texttt{c}})=P(X)$.
\end{enumerate}

\noindent Assumptions 1-3 are standard.  Assumption $4$ says that the
consumer can costlessly choose to gather no information and any
other information that shifts the distribution away from the prior must be costly.

The main result of this section is stated as follows.

\begin{prop}\label{prop:gen-costs}
Consider information costs that satisfy assumptions 1-4 above. Posit aligned utilities. Restrict the filter to only use pure action strategies.

Suppose the parameters are such that in an open neighborhood around $(\pi_0,\pi_1)$ (and fixing the other parameters) $\difprof$ is socially optimal and the consumer incurs positive information costs under $\difprof$. Then
\begin{align}\label{eq:prop:gen-costs}
 \frac{\partial V(\difprof)}{\partial \pi_1}>0 \text{    and    } \frac{\partial V(\difprof)}{\partial \pi_2}<0.
\end{align}
\end{prop}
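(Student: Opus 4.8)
The plan is to write the welfare under $\difprof$ in terms of the consumer's value function on forwarded content, reduce the two claimed inequalities to sign conditions on the tangent line of that value function, and then read those signs off from convexity. I work with the model's true- and false-positive rates $\pi_0,\pi_1$; these are the quantities denoted $\pi_1,\pi_2$ in \eqref{eq:prop:gen-costs}, so the two displayed inequalities are $\partial V(\difprof)/\partial\pi_0>0$ and $\partial V(\difprof)/\partial\pi_1<0$.

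First I would decompose the welfare. Under $\difprof$ the filter forwards iff $\sfil=1$, so the ex-ante forwarding probability is $\beta=(1-\pi_0)q+(1-\pi_1)(1-q)$ and the posterior on forwarded content is $\qdif=(1-\pi_0)q/\beta$. Letting $W(p)$ denote the consumer's optimal value (expected action payoff minus information cost) on a forwarded piece with malicious-prior $p$, the per-piece welfare is $v(\difprof)=-\pi_1(1-q)c_1+\beta\,W(\qdif)$ and $V(\difprof)=v(\difprof)/(1-q)$. Since $q$ is fixed, $\partial V/\partial\pi_x$ has the sign of $\partial v/\partial\pi_x$. Differentiating, using $\partial\beta/\partial\pi_0=-q$, $\partial\beta/\partial\pi_1=-(1-q)$, the identities $(1-\pi_0)q/\beta=\qdif$ and $(1-\pi_1)(1-q)/\beta=1-\qdif$, and the envelope theorem for the consumer's inner maximization, all terms collapse to
\begin{align*}
\frac{\partial v}{\partial\pi_0}&=-q\big[W(\qdif)+(1-\qdif)W'(\qdif)\big]=-q\,L(1),\\
\frac{\partial v}{\partial\pi_1}&=-(1-q)\big[c_1+W(\qdif)-\qdif\,W'(\qdif)\big]=-(1-q)\big[c_1+L(0)\big],
\end{align*}
where $L(p):=W(\qdif)+W'(\qdif)(p-\qdif)$ is the tangent line to $W$ at $\qdif$. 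Hence it remains to show $L(1)<0$ and $c_1+L(0)>0$.

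The key structural fact is that $W$ is convex in the prior: for any fixed information strategy and action rule the expected action payoff is affine in $p$, and by Assumption~2 the cost is concave in $p$, so $W$ is a maximum of convex functions. For $L(1)$, I would note $W(1)=0$ (when all forwarded content is malicious the consumer ignores it at zero cost) and use the branch $F(p)$ attaining the maximum at $\qdif$: since the consumer gathers information here $\mu^*$ is informative, so $C(\mu^*;\cdot)$ is strictly concave, $F$ is strictly convex, $W\ge F$, and $W(\qdif)=F(\qdif)$; thus the tangents coincide at $\qdif$ and $L(1)<F(1)\le W(1)=0$. For $c_1+L(0)$, I would compare $W$ to the always-available ignore-all line $\ell(p)=-(1-p)c_1$: then $W\ge\ell$ everywhere, while on $[\qH,1]$ the consumer ignores all content so $W=\ell$ there, which has slope $c_1$; convexity forces $W'(\qdif)\le c_1$. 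Writing
\[
c_1+L(0)=\big[W(\qdif)-\ell(\qdif)\big]+\qdif\big[c_1-W'(\qdif)\big],
\]
the first bracket is strictly positive (gathering information strictly beats ignore-all) and the second is nonnegative, so $c_1+L(0)>0$. Together these give $\partial V/\partial\pi_0>0$ and $\partial V/\partial\pi_1<0$.

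The main obstacle is the second sign condition. Unlike the mutual-information case of Theorem~\ref{thm:dvdp}(c), there is no closed form for $W$, so the bound must come entirely from convexity plus two anchoring lines: the ignore-all line (which lower-bounds $W$ and pins its slope to $c_1$ on $[\qH,1]$, thereby controlling $W'(\qdif)$) and the supporting branch $F$ (whose strict convexity is what turns the endpoint inequality strict). A secondary technical point is justifying that $W$ is differentiable at $\qdif$ and that the envelope step is valid; I would obtain both from uniqueness of the consumer's optimizer throughout the stated open neighborhood, which also guarantees $\qdif\in(\qL,\qH)$ so that the boundary behavior of $W$ used above applies.
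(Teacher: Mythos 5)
Your proof is correct, and its skeleton matches the paper's: decompose $V(\difprof)$ into the blocking-cost term plus $\beta$ times the consumer's optimal value on forwarded content, eliminate the information-strategy derivatives by the envelope theorem (justified, as in the paper, by uniqueness of the consumer's optimizer), and extract signs from Assumption 2 (concavity of the cost in the prior) anchored at the degenerate prior. In fact, your condition $L(1)<0$ unwinds to exactly the paper's reduced inequality \eqref{eq:fin}, and your bound $L(1)<F(1)\le W(1)=0$ is the paper's tangent-line/concavity estimate at prior $1$ in geometric clothing. Where you genuinely add something is the second inequality: the paper dismisses it with ``the second case follows similarly,'' but it does \emph{not} follow by mirroring the argument --- concavity evaluated at prior $0$ gives $\qdif\, \partial C/\partial \qdif - C < -C(\mu;0)\le 0$, which points the wrong way --- and your extra ingredients (the ignore-all line $\ell$, the strict inequality $W(\qdif)>\ell(\qdif)$ from uniqueness of the optimizer, and the slope bound $W'(\qdif)\le c_1$) are precisely what is needed to close that case. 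One blemish: for general costs satisfying Assumptions 1--4 you are not entitled to the thresholds $\qL,\qH$, so neither the claim that $W=\ell$ on an interval $[\qH,1]$ nor the phrase ``guarantees $\qdif\in(\qL,\qH)$'' is justified; those objects were derived for the mutual-information cost. This is inessential, however: the bound you actually need, $W'(\qdif)\le c_1$, already follows from facts you establish --- convexity of $W$, $W\ge\ell$, and $W(1)=\ell(1)=0$ --- via $W'(\qdif)\le \bigl(W(1)-W(\qdif)\bigr)/(1-\qdif)\le \bigl(\ell(1)-\ell(\qdif)\bigr)/(1-\qdif)=c_1$; and the parenthetical about $(\qL,\qH)$ should simply be replaced by the hypothesis that the consumer's optimum at $\qdif$ incurs positive information cost.
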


We prove this propositions in what follows. We make the assumptions therein without further notice.

Let us adopt the following notation:
\begin{align}
\vec{q}(\mixstrfil)
    &= [q(\mixstrfil), (1-q(\mixstrfil))]^{\prime}\nonumber \\
\tilde{\Pi}(\mu)
    &= \begin{bmatrix}
        \tilde{\pi}_1 & 1-\tilde{\pi}_1 \\
        \tilde{\pi}_2 & 1-\tilde{\pi}_2
    \end{bmatrix} \nonumber \\
\vec{R}
    &= \begin{bmatrix}
            0 & -c_1 \\
            c_2 & b
    \end{bmatrix}
    \label{eq:GenCosts-notation}
\end{align}

%\ascomment{point \#1}
Under strategy profile $\difprof$, the consumer chooses information strategy
  $\mu=\rbr{\tilde{\pi}_0, \tilde{\pi}_1}$
%$\tilde{\pi}_1$ and $ \tilde{\pi}_2$
and pure action strategy $s_{\texttt{c}}$ in order to optimize
\begin{equation*}
  V_g(\difprof, s_{\texttt{c}}, \mu)
  =  \mathrm{tr}\left(\tilde{\Pi}(\mu)\vec{R}\right) \cdot \vec{q}(\difprof) - C(\mu; q(\mixstrfil)).
\end{equation*}

%\ascomment{point \#2}
The consumer has a unique optimal choice of $\mu$. This is because the consumer's strategy space is compact and $V_g$ is a concave function minus a convex function.
% \ascomment{Your point here is that this is the case for any filter's strategy, or just for $\difprof$?}

%\begin{prop}
%  Assume there exists a $\mu^{\prime}$ such that $V_g((\bar{s}_{\texttt{f}}, s_{\texttt{c}},
%\mu^{\prime})) > \max(-(1-q(\difprof))c_1, -q(\difprof)c_2 +
%(1-q(\difprof)b))$ and the socially optimal strategy for the filter is
%to play a differentiating strategy.
%Then:
%\begin{align*}
% \frac{\partial V(\sigma^*)}{\partial \pi_1}>0 \text{    and    } \frac{\partial V(\sigma^*)}{\partial \pi_2}<0
%\end{align*}
%\end{prop}

%\ascomment{OLD PROOF BELOW.}

%\ascomment{Pls see that it is correct given what we actually assume (as opposed to what you've assumed in the old prop. Also, pls remove / comment out the notation/ points above that you are no longer using.}

We prove the first statement in \eqref{eq:prop:gen-costs}, the second case follows similarly
  %\begin{align*}
    %&\frac{\partial}{\partial \pi_1} \left((1-\pi_1)q + (1-\pi_2)(1-q)\right) \times \nonumber \\
    %&\left( (1-\tilde{\pi}_1)q(\difprof)-c_2 + (1-q(\difprof))(\tilde{\pi}_2-c_1 + (1-q(\difprof))(1-\tilde{\pi}_2)b - C(\mu; q(\difprof))\right)>0 \nonumber
  %\end{align*}
  %This amounts to
  \begin{align*}
    & \frac{\partial V(\sigma^*)}{\partial \pi_1} =  \nonumber \\
    & A\bigg(\left(\frac{\partial Z}{\partial \tilde{\pi}_1} +
    \frac{\partial Z}{\partial \tilde{\pi}_2}\right) \times
  \left(\frac{\partial \tilde{\pi}_1}{\partial q(\difprof)} \frac{\partial q(\difprof)}{\partial \pi_1}\right) + \nonumber \\
  & \frac{\partial Z}{\partial q(\difprof)}\frac{\partial q(\difprof)}{\partial\pi_1}\bigg) + Zq
%\label{eq:inc}
  \end{align*}
  where
  \begin{align*}
  A &= \left((1-\pi_1)q + (1-\pi_2)(1-q)\right) \\
   Z &=(
    (1-\tilde{\pi}_1)q(\difprof)(-c_2) + \\
    &(1-q(\difprof))(\tilde{\pi}_2(-c_1) +
    (1-q(\difprof))(1-\tilde{\pi}_2)b + \\
    & C(\mu; q(\difprof)))
    \end{align*}
  However, by enforcing optimality for the consumer $\frac{\partial Z}{\partial \tilde{\pi}_1}=
  \frac{\partial Z}{\partial \tilde{\pi}_2}=0$.  Therefore, it is now sufficient to show that
  \begin{align}
    (1-\tilde{\pi}_1)c_2 + (1-q(\difprof))\frac{\partial C}{\partial q(\difprof)} + C(\mu;q(\difprof))>0
    \label{eq:fin}
  \end{align}
  By the definition of concavity (and notationally dropping the
  dependence on $\mu$), it must be that
  \begin{align*}
  C(q(\difprof)') - C(q(\difprof)) < \frac{\partial{C}}{\partial q(\difprof)}(q(\difprof)' - q(\difprof))
\end{align*}
for any $q(\difprof)$ and $q(\difprof)^{\prime}$. Plugging in $1$ for
$q(\difprof)^{\prime}$ and rearranging yields
\begin{align*}
  \frac{\partial{C}}{\partial q(\difprof)}(1 - q(\difprof)) +C(q(\difprof))>C(1)>0
  \end{align*}
which then implies that inequality \eqref{eq:fin} is satisfied since the
first term is positive and the sum of the second and third term are
positive and thus completes the proof.

%\jgedit{
%\begin{prop}
%    Assume the parameters are such that in a local region around $\pi_0$, $\pi_1$, the socially optimal strategy is for the filter to play $\difprof$ and the consumer to incur strictly positive information costs.  Then:
%    \begin{align*}
% \frac{\partial V(\sigma^*)}{\partial \pi_1}>0 \text{    and    } \frac{\partial V(\sigma^*)}{\partial \pi_2}<0
%\end{align*}
%\end{prop}} 

\section{Proofs\vspace{2mm}}
\label{sec:proofs}

%\subsection{Proofs of Sections \ref{sec:trivial} and
%  \ref{sec:aligned}}

\begin{prop}\label{prop:unreason}
    The ``unreasonable''' strategy cannot be an equilibrium in which the consumer doesn't block all content.  
\end{prop}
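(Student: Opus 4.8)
The plan is to argue by contradiction: assume the filter plays the unreasonable strategy $\strfil(0)=1,\strfil(1)=0$ in an equilibrium where the consumer accepts content with positive probability, and then exhibit a profitable filter deviation. The first step is to record the relevant beliefs. Writing $f_\psi=\Pr[\sfil=\psi]$, the unreasonable strategy forwards exactly the ``likely malicious'' pieces, so the consumer's posterior on forwarded content is $q_{\mathrm u}:=\Pr[X=0\mid \sfil=0]$, whereas $\strdif$ forwards the ``likely clean'' pieces with posterior $\qdif=\Pr[X=0\mid\sfil=1]$. Since $\pi_0\ge\pi_1$ these satisfy $\qdif<q<q_{\mathrm u}$, and the prior decomposes as the mixture $q=f_0\,q_{\mathrm u}+f_1\,\qdif$. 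By Lemma~\ref{lem:belief-BR} and Proposition~\ref{prop:goalposts}, the consumer's best response is pinned down by $q_{\mathrm u}$; since the consumer does not ignore all content we must have $q_{\mathrm u}\le\qH$, leaving two regimes: $q_{\mathrm u}\le\qL$ (accept all) and $q_{\mathrm u}\in(\qL,\qH)$ (examine).

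The mechanism behind the deviation is that ``likely clean'' content is strictly more worth forwarding than ``likely malicious'' content. Fixing the consumer's action behavior, let $\Delta(p)$ be the per-piece gain in action payoff from forwarding rather than blocking a piece with posterior malicious probability $p$; a direct computation shows $\Delta(p)$ is strictly decreasing in $p$ whenever the consumer accepts with positive probability, so $\Delta(\qdif)>\Delta(q_{\mathrm u})>0$. This immediately settles the easy cases. Under semi-aligned utilities the filter's payoff is the action payoff alone, so forwarding an $\varepsilon$-fraction of the currently-blocked clean content raises it by $\varepsilon f_1\Delta(\qdif)>0$ with no offset. Under aligned utilities in the accept-all regime, deviating to $\strfwd$ leaves the consumer accepting everything (the pooled posterior only drops, staying below $\qL$) at zero information cost, so the action payoff strictly increases by $f_1\Delta(\qdif)>0$. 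In both, the unreasonable strategy is not a best response.

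The main obstacle is the aligned/examine regime, where the filter internalizes the consumer's information cost, and forwarding more content can raise that cost, so one cannot simply compare action payoffs. I would handle this by a marginal argument using the consumer's value function $W(p)$, the optimal payoff (action payoff minus information cost) when facing a pool with posterior $p$, together with the block baseline $B(p)=-(1-p)c_1$. Writing the filter's payoff as a function of the probability $c$ with which it additionally forwards the clean pieces, I would differentiate at $c=0$ (the unreasonable strategy). Using the envelope theorem at the consumer's optimum for pool $q_{\mathrm u}$, the linearity of action payoffs in $p$, and the concavity of mutual information in the prior, the derivative reduces to $f_1\,[L(\qdif)-B(\qdif)]$, where $L$ is the tangent line to $W$ at $q_{\mathrm u}$. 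It then remains to show $L(\qdif)>B(\qdif)$. This follows from two facts: $W(q_{\mathrm u})>B(q_{\mathrm u})$, because an examining consumer strictly beats ignoring; and $W'(q_{\mathrm u})\le c_1$, because $W$ is convex and coincides with $B$ (of slope $c_1$) on $[\qH,1]$, which forces its left derivative at any $q_{\mathrm u}<\qH$ below $c_1$. Since $\qdif<q_{\mathrm u}$ and the tangent's slope is no larger than that of $B$, the line $L$ stays strictly above $B$ at $\qdif$, making the derivative strictly positive. Hence the filter strictly profits by forwarding some clean content, contradicting equilibrium. The delicate points, namely differentiability of $W$ on $(\qL,\qH)$ and the slope bound at the $\qH$ boundary, are where I would spend the most care.
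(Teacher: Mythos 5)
Your proof is correct, but it takes a genuinely different route from the paper's. The paper dispatches all cases with a single deviation in the \emph{opposite} direction from yours: replace the unreasonable strategy with the mixture $\pi_1\,\strfwd + (1-\pi_1)\,\strblk$, which keeps the forwarding rate of legitimate content at $\pi_1$ while cutting the forwarding rate of malicious content from $\pi_0$ down to $\pi_1$. Holding the consumer's strategy fixed, this strictly raises action payoffs (the consumer accepts with positive probability) and \emph{weakly lowers} information costs, so one argument covers aligned and semi-aligned utilities and every consumer regime, with no case analysis and no envelope theorem. Your deviation instead forwards more of the signal-$1$ pool, which increases the mass of examined content and hence the information cost; that is exactly why you are forced into the tangent-line analysis in the aligned/examining regime. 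The trade-off: the paper's proof is far shorter, but its claim that the deviation ``weakly reduces information costs'' is asserted, not proved, and making it rigorous requires essentially the concavity fact your argument is built on (the batch cost $(a+b)\,I\!\left(a/(a+b)\right)$, as a function of forwarded malicious mass $a$ and legitimate mass $b$, is the perspective of the concave $I$ with $I(1)=0$, hence nondecreasing in $a$); your route never needs that unproved claim, and your differentiability worry evaporates because Lemma~\ref{lem:terms} gives $W$ in closed form, $W(p)=-(1-p)c_1+\infoL\,\KL{p}{\qH}$ on $(\qL,\qH)$.

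Two small patches are needed. First, you assert $\Delta(q_{\mathrm u})>0$ as if it followed from monotonicity of $\Delta$; it does not. It follows from the consumer's optimality: in the examining regime the consumer's \emph{net} value already beats the ignore baseline, so the gross-of-cost action payoff does too, and in the accept-all regime $q_{\mathrm u}\le\qL$ lies weakly below the accept/ignore indifference point $(b+c_1)/(b+c_1+c_2)$. Second, like the paper's proof, yours implicitly requires $\pi_0>\pi_1$ strictly: when $\pi_0=\pi_1$ one has $\qdif=q_{\mathrm u}$ and neither your deviation nor the paper's is strictly profitable.
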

\begin{proof}
Suppose the unreasonable strategy were played in an equilibrium in which the consumer accepted content with a positive probability.  . Then the filter would strictly profit by replacing its strategy with the mixed strategy $\pi_1  \strfwd + (1 - \pi_1) \strblk$. This yields strictly more utility because  under the unreasonable strategy, malicious content is forwarded at rate $\pi_0$ and genuine content is forwarded at rate $\pi_1$. Under the deviation, all content is forwarded at rate $\pi_1$. This deviation strictly increases content utility  and weakly reduces information costs.
\end{proof}
 
\subsection*{From Section~\ref{sec:trivial}: Consumer Beliefs\vspace{2mm}}

\begin{lemma}%[Consumer's Best Response]
\label{lem:cbr}
Fix a particular $\mixstrfil$, mixed strategy of the filter. The consumer's unique best response to it is
\begin{align}
  \tilde{\pi}_0^*  &= \frac{(1-\mathcal{P}_c)}{(1-\mathcal{P}_c) +
                    \mathcal{P}_{c}e^{-c_2 / \lambda}}\nonumber\\
                           %& = \frac{\mathcal{P}_{c}e^{-c_2 / \lambda}}{\mathcal{P}_{c}e^{-c_2 / \lambda}+ (1-\mathcal{P}_c)}  \\
%  \quad\text{and}\quad
%   \tilde{\pi}_1^* &= \asedit{\tilde{\pi}_0^* \cdot e^{-c_1/\lambda}} \label{eqn:cbr}
 \tilde{\pi}_1^* & = \frac{(1-\mathcal{P}_{c})e^{-c_1/\lambda}}{(1-\mathcal{P}_c) e^{-c_1/\lambda} + \mathcal{P}_ce^{b/\lambda}},                                \label{eqn:cbr}
                            % & = \frac{\mathcal{P}_{c}e^{b /
                            % \lambda}}{\mathcal{P}_{c}e^{b / \lambda}+
                            % (1-\mathcal{P}_c)e^{-c_1/\lambda}}
\end{align}
where
\begin{align*}
\mathcal{P}_c &= \min(1,\, \max(0, \widetilde{\mathcal{P}}_c)) \\
\widetilde{\mathcal{P}}_c
    &=
  \frac{e^{b/\lambda}(\qfunc(\mixstrfil)-1) +
  e^{-c_1/\lambda}(1-\qfunc(\mixstrfil)e^{-c_2/\lambda})}{(e^{-c_2/\lambda}-1)(e^{b/\lambda}
  - e^{-c_1/\lambda})}.
\end{align*}

%\begin{align*}
%\mathcal{P}_c = \min\left(1, \max\left(0,
%  \frac{e^{b/\lambda}(\qfunc(\mixstrfil)-1) +
%  e^{-c_1/\lambda}(1-\qfunc(\mixstrfil)e^{-c_2/\lambda})}{(e^{-c_2/\lambda}-1)(e^{b/\lambda}
%  - e^{-c_1/\lambda})}\right)\right)
%%\label{eqn:maxmin}
%\end{align*}

Furthermore, if $\mathcal{P}_c\in(0,1)$, the unique consumer optimal
actions is $s_{\texttt{c}}^*(0) = 0$ and $s_{\texttt{c}}^*(1) = 1$. If
$\mathcal{P}_c\notin(0,1)$, the consumer's optimal action is to ignore
all content or accept all content and choose a non-informative but
costless information strategy (i.e.
$\tilde{\pi}_1^* = \tilde{\pi}_2^*$).
\end{lemma}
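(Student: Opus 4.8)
The plan is to reduce the consumer's best-response problem to a two-dimensional concave program, read off a logit form for the signal probabilities from its first-order conditions, and then determine the single scalar $\mathcal{P}_c$ by exploiting a cancellation that linearizes the remaining optimality condition. Concretely, I would first fix a filter strategy $\mixstrfil\neq\strblk$ (so that $\qfunc(\mixstrfil)$ is well defined) and abbreviate $p:=\qfunc(\mixstrfil)$. The consumer's payoff depends on the examined signal $\scon$ only through the realized action $\acon\in\bin$, which is a deterministic function of $\scon$; hence by the data-processing inequality $I(\scon;X\mid\afil=1)\ge I(\acon;X\mid\afil=1)$, so acquiring exactly the information that is used ($\scon=\acon$, i.e.\ $\strcon(0)=0$, $\strcon(1)=1$) is weakly optimal and strictly so when the signal is informative. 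This lets me parametrize the consumer by its conditional accept probabilities $a_x:=\Pr[\acon=1\mid X=x]=1-\tilde{\pi}_x\in[0,1]$. The per-piece objective is then the (linear) expected action payoff minus $\infoL\,I(\acon;X)$, which is concave on $[0,1]^2$ because mutual information is convex in the channel; strict convexity of $I$ there yields a \emph{unique} maximizer, which establishes the uniqueness claim.

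Next I would take first-order conditions in $(a_0,a_1)$. Writing $\mathcal{P}_c:=\Pr[\acon=1]=p\,a_0+(1-p)a_1$ for the marginal accept probability, the derivative of $-\infoL\,I(\acon;X)$ contributes the log-odds terms $\infoL\log\frac{1-a_x}{a_x}$ and $\infoL\log\frac{1-\mathcal{P}_c}{\mathcal{P}_c}$, so stationarity gives the logit responses $a_0=\frac{\mathcal{P}_c e^{-c_2/\infoL}}{(1-\mathcal{P}_c)+\mathcal{P}_c e^{-c_2/\infoL}}$ and $a_1=\frac{\mathcal{P}_c e^{b/\infoL}}{(1-\mathcal{P}_c)e^{-c_1/\infoL}+\mathcal{P}_c e^{b/\infoL}}$. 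Substituting $\tilde{\pi}_x=1-a_x$ reproduces the displayed formulas in \eqref{eqn:cbr} exactly, so everything reduces to identifying $\mathcal{P}_c$.

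Finally I would pin down $\mathcal{P}_c$ through the standard rational-inattention usage conditions: at an interior optimum each action $a\in\bin$ satisfies $g_a(\mathcal{P}_c)=1$, where $g_a$ is the state-averaged, denominator-normalized exponentiated payoff of action $a$. Because the normalization identity $\mathcal{P}_c\,g_1+(1-\mathcal{P}_c)\,g_0\equiv 1$ holds for \emph{every} $\mathcal{P}_c$, the pair of conditions collapses to the single equation $g_1=g_0$, whose apparent quadratic terms cancel, leaving a \emph{linear} equation with root exactly $\widetilde{\mathcal{P}}_c$. The interior case $\widetilde{\mathcal{P}}_c\in(0,1)$ is where both actions are used, giving the informative signal with $\strcon(0)=0,\strcon(1)=1$; when $\widetilde{\mathcal{P}}_c\le 0$ (resp.\ $\ge 1$) the bound $a_x\ge 0$ (resp.\ $a_x\le 1$) binds, forcing $\mathcal{P}_c=0$ (resp.\ $1$), i.e.\ $a_0=a_1=0$ and $\tilde{\pi}_0=\tilde{\pi}_1=1$ — ignore all content with an uninformative, costless signal (resp.\ accept all) — which is precisely the clamp $\mathcal{P}_c=\min(1,\max(0,\widetilde{\mathcal{P}}_c))$ and the stated corner behavior.

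I expect the main obstacle to be the reduction/concavity step: carefully justifying that it is without loss to set $\scon=\acon$ and that the resulting program is \emph{globally} concave, so that the first-order conditions are sufficient and the clamped formula is the global (not merely local) and unique optimum. Once that is secured, deriving the logit form and verifying the cancellation that linearizes $g_1=g_0$ is a routine, if slightly lengthy, computation.
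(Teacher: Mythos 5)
Your proposal is correct, and it reaches exactly the formulas in the lemma: your first-order conditions give the stated logit expressions for $\tilde{\pi}_0^*,\tilde{\pi}_1^*$, and the usage condition $g_1=g_0$ is indeed linear in $\mathcal{P}_c$ with root precisely $\widetilde{\mathcal{P}}_c$. The difference from the paper is one of self-containedness rather than substance: the paper's entire proof is the reduction (the consumer's choices only matter conditional on $\afil=1$) followed by a citation to Mat\v{e}jka and McKay's rational-inattention discrete-choice theorem (their logit formula, equation 13, and their uniqueness guarantee), whereas you re-derive that theorem from scratch for the two-action, two-state case: the data-processing reduction to $\scon=\acon$, the concave program in the accept probabilities, the logit first-order conditions, and the collapse of the two usage conditions into one linear equation via the normalization identity. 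Your route buys a verifiable, self-contained argument; the paper's buys brevity and inherits uniqueness from a general theorem. One caveat in your version: mutual information is \emph{not} strictly convex in the channel --- it vanishes identically on the line of uninformative signals $\tilde{\pi}_0=\tilde{\pi}_1$ --- so strict concavity of the objective fails exactly along that line, and your uniqueness argument must treat it separately. Off that line the Hessian of the information term is strictly positive definite, while on it the objective is linear in the common accept probability, so the optimum there is a corner (accept-all or ignore-all); the knife-edge prior at which those corners tie lies strictly inside $(\qL,\qH)$, where the interior logit solution strictly dominates, so uniqueness survives (up to the payoff-irrelevant multiplicity of uninformative signals at corners, which the lemma itself allows). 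With that patch --- or by invoking the cited uniqueness theorem, as the paper does --- your argument is complete.
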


\begin{proof}
  The consumer's strategy only impacts
  payoffs if $a_{\texttt{f}}=1$ so it sufficies to examine the
  consumer's best response conditional on the filter forwarding
  content.  Conditional on $a_{\texttt{f}}=1$, the consumer's decision
  problem is equivalent to a discrete choice problem under rational
  inattention as in \cite{mm} with prior probability
  $\qfunc(\mixstrfil)$.  Applying equation their equation 13 as well as the
  guarantees of uniqueness provided in \cite{mm} gives the claimed
  results.
\end{proof}

\begin{proof}[Proof of Lemma \ref{lem:belief-BR}]
%The attacker's strategy choice
$\mixstrfil$ only enters  \refeq{eqn:cbr} via $\qfunc(\mixstrfil)$.
%\asdelete{Furthermore, $\qfunc(\mixstrfil)$ uniquely pins down consumer best response.}
\end{proof}

\begin{proof}[Proof of Proposition \ref{prop:goalposts}]
%This is a straightforward application of
The proof immediately  follows from Lemma~\ref{lem:cbr}.
  % If $\qfunc(\mixstrfil)<\qL$, then $P_c$ in lemma \ref{lem:cbr}
  % is $1$ so the consumer's optimal strategy is to
  % accept all content and incur no information costs.  A symmetric
  % argument holds for when $\qfunc(\mixstrfil)>\qH$. Prove the other
  % direction by contrapositive.  Assume $\infoC(\mixsigma)\neq 0$, then
  % the consumer's best response cannot be to either accept all content
  % or reject all content since it can do that by incuring no
  % information costs.  This non-degenerate case only happens if
  % $q(\mixstrfil)\in(\qL,\qH)$ by lemma \ref{lem:cbr}.
\end{proof}

% \begin{proof}[Proof of Lemma \ref{lm:min-diff}]
%    For a filter's mixed strategy $\mixstrfil$, let
%   $\gamma_0$ and $\gamma_1$ be the filter's probability of blocking
%   after observing $0$ and $1$ respectively.  Then
%   \begin{align*}
%     \qfunc(\mixstrfil) = \frac{(\pi_0(1-\gamma_0) + (1-\pi_0)(1-\gamma_1))q}{(\pi_0(1-\gamma_0) + (1-\pi_0)(1-\gamma_1))q + (\pi_1(1-\gamma_0) + (1-\pi_1)(1-\gamma_1))(1-q)}
%   \end{align*}

% Taking derivatives with respect to both $\gamma_0$ and $\gamma_1$
% establish that as long as $\pi_0>\pi_1$, $\qfunc(\mixstrfil)$ is
% decreasing in $\gamma_0$ and increasing in $\gamma_1$ and thus
% $\qfunc(\cdot)$ is minimized when $\gamma_0=1$ and $\gamma_1=0$ which
% corresponds to $\strdif$.
% \end{proof}

% \begin{proof}[Proof of Corollarly \ref{cor:scaff-regimes}]
%   To prove (a), by lemma \ref{lm:min-diff} if
%   $\qfunc(\strdif)>q_H$ then $\qfunc(\mixstrfil)>q_H$ for any
%   $\mixstrfil$ and by lemma \ref{lem:cbr} the consumer's best
%   response when $\qfunc(\cdot)>q_H$ is to always block and therefore
%   always blocking is dominant. The proof of (b) is symmetric and (c)
%   follows once again directly from equation \ref{eqn:maxmin} in lemma
%   \ref{lem:cbr}.
% \end{proof}

\begin{proof}[Proof of Proposition \ref{cor:scaff-VoT}]
  To prove (a), under $\blkprof$ payoffs are $-c_1$ and do not
  depend on $\pi_0$ or $\pi_1$.  Under $\fwdprof$, payoffs are given
  by
  $\frac{1}{1-q}\E_{X,\scon}\sbr{ u(\acon,1,X) - \infoC\sbr{\scon;X \mid \afil=1}}$
  which does not depend on $\pi_0$ or $\pi_1$ directly nor through the
  consumer's optimal strategy since $\qfunc(\fwdprof)=q$ for any
  values of $\pi_0$ and $\pi_1$.  Finally, if $\qfunc(\difprof)>q_H$,
  the consumer's optimal strategy is to block all content and payoffs
  are again $-c_1$ and do not depend on $\pi_0$ and $\pi_1$.

  To prove (b), the consumer's strategy under $\difprof$ is to accept all
  content and thus \emph{per content} payoffs are given by
  $(1-q)(\pi_1(-c_1)+(1-\pi_1)b) + q(1-\pi_0)(-c_2)$ of which taking
  the derivatives and multiplying by $\frac{1}{1-q}$ are straightforward.
  \end{proof}

\subsection*{From Section~\ref{sec:aligned}: Aligned Utilities\vspace{2mm}}

\begin{lemma}\label{lem:blkbad}
 $V_i(\blkprof)\leq \min\rbr{V_i(\difprof), V_i(\fwdprof)}$,  where $i\in\cbr{\fil,\con}$.
\end{lemma}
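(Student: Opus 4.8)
The plan is to exploit a single ``outside option'' available to the consumer in every consumer-optimal profile: to decline to examine the content and to ignore everything. First I would pin down the value of the blocking profile. As noted in the proof of Proposition~\ref{cor:scaff-VoT}, under $\blkprof$ no content is forwarded, so the aggregate action $\afil\cdot\acon$ is identically $0$; the per-content action payoff is $(1-q)(-c_1)$, the information-cost term vanishes because $\afil=0$, and after the normalization by $1/(1-q)$ this gives $V_i(\blkprof)=-c_1$ for both $i\in\cbr{\fil,\con}$.

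Next I would show that the consumer can guarantee itself this same value $-c_1$ against \emph{any} filter strategy. Consider the consumer strategy that picks an uninformative signal $\scon$ (i.e.\ $\tilde{\pi}_0=\tilde{\pi}_1$, so that $\infoC=0$ by the cost definition~\eqref{eq:cost-defn}) together with the action strategy $\acon\equiv 0$ (cf.\ Remark~\ref{rem:model-zero}). Under this strategy the aggregate action $\afil\cdot\acon$ equals $0$ no matter what the filter does, so the action payoff is $u(0,X)$, which is $0$ when $X=0$ and $-c_1$ when $X=1$, while the information cost is zero. Hence this ``always-ignore'' strategy yields expected (normalized) payoff exactly $-c_1$ against any filter strategy, and in particular against $\strdif$ and $\strfwd$.

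Finally I would combine these facts with the definitions of $\difprof$ and $\fwdprof$ as consumer-optimal profiles. Since the consumer best-responds to the filter in each of these profiles, its payoff is at least what the always-ignore strategy secures, so $\Vcon(\difprof)\geq -c_1$ and $\Vcon(\fwdprof)\geq -c_1$. Invoking aligned utilities, $V_i=\Vcon$ for both $i\in\cbr{\fil,\con}$, whence $V_i(\difprof)\geq V_i(\blkprof)$ and $V_i(\fwdprof)\geq V_i(\blkprof)$, which is the claim.

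There is no serious obstacle here; the lemma is essentially a dominance argument. The only point requiring care is verifying that the always-ignore deviation is genuinely costless and delivers the blocking payoff in \emph{both} $\difprof$ and $\fwdprof$: in $\fwdprof$ all content reaches the consumer, yet an uninformative signal still incurs zero mutual-information cost, while in $\difprof$ the mix of blocked and forwarded content again collapses to aggregate action $0$ once the consumer ignores everything. Confirming that this outside option is available and has value $-c_1$ in every relevant profile is the crux; everything else follows from best-response optimality and the alignment $\Vfil=\Vcon$.
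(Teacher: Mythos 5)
Your proof is correct and takes essentially the same approach as the paper's (much terser) proof: the consumer's costless always-ignore option guarantees the blocking payoff $-c_1$ against any filter strategy, so consumer-optimality of $\difprof$ and $\fwdprof$ immediately gives $\Vcon(\blkprof)\leq\Vcon(\difprof),\Vcon(\fwdprof)$. The only cosmetic difference is that the paper covers the filter via the inequality $\Vfil(\mixsigma)\geq\Vcon(\mixsigma)$ (which also works under semi-aligned utilities), whereas you invoke the alignment $\Vfil=\Vcon$, which suffices in the lemma's aligned-utilities context.
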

\begin{proof}
% This is trivial since for any $\mixstrfil$
% \ascomment{pls make it correct :)}
% The consumer can ignore all
% content, earn $V_i(\blkprof)$.
  $\Vfil(\mixsigma)\geq \Vcon(\mixsigma)$ and since the consumer can
  always block all content for any $\mixstrfil$,
  $\Vcon(\blkprof)<\Vcon(\mixsigma)$ for any consumer-optimal
  $\mixsigma$.
  \end{proof}

\begin{lemma}
    \label{lem:terms}
%    $\Vind\rbr{\qfunc(\mixstrfil)} := \E\sbr{ \Vcon(\mixsigma) \mid \afil=1}$
%be the consumer's expected utility conditional on receiving the content, interpreted as a function of consumer's belief $\qfunc(\mixstrfil)$ (as per part (a)).
%Let
%    $ \widehat{u}_a(q) = \E\sbr{u(a,X)}$
%be the expected action payoff as a function of aggregate action $a\in\bin$ and probability $q = \Pr[X=0]$.
For a single piece of content, let
    $ \widehat{u}_a(\qhat) = \E\sbr{u(a,\hat{X})}$
be the expected action payoff for aggregate action $a\in\bin$ if the content type $\hat{X}\in\bin$ is a random variable with $\qhat = \Pr[\hat{X}=0]$.
If $\qhat := \qfunc(\mixstrfil) \in (\qL,\qH)$ then
\begin{align*}
\Vind(\qhat)
    &=  \widehat{u}_1(\qhat) +\infoL\cdot\KL{\qhat}{\qL},\\
        &\qquad\text{where } \widehat{u}_1(\qhat)=(1-\qhat)b -\qhat c_2,  \nonumber \\
        %\label{eq:eut} \\
    &=  \widehat{u}_0(\qhat) +\infoL\cdot \KL{\qhat}{\qH},\\
        &\qquad\text{where } \widehat{u}_0(\qhat)= -(1-\qhat)c_1.
        \nonumber
\end{align*}
\end{lemma}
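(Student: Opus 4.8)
The plan is to evaluate the consumer's optimal payoff directly, exploiting the closed-form structure of the rational-inattention problem that already underlies Lemma~\ref{lem:cbr}. Since $\qhat := q(\mixstrfil) \in (\qL,\qH)$, Proposition~\ref{prop:goalposts} (equivalently $\mathcal{P}_c\in(0,1)$ in Lemma~\ref{lem:cbr}) tells us the consumer examines the content, follows its signal, and uses both aggregate actions $a\in\bin$ (accept/ignore) with positive probability; thus $\Vind(\qhat)$ is a genuine two-action rational-inattention value at prior $\qhat$ with payoffs $u(a,x)$. I would first record the standard log-partition form of this value \cite{mm}: writing $P(a)$ for the optimal unconditional probability of aggregate action $a$ and $Z_x=\sum_{a\in\bin}P(a)\,e^{u(a,x)/\infoL}$ for $x\in\bin$, substituting the optimal conditional rule $p(a\mid x)\propto P(a)\,e^{u(a,x)/\infoL}$ into $\E[u]-\infoL\,I(\scon;X)$ collapses the mutual-information term and yields $\Vind(\qhat)=\infoL\sbr{\qhat\log Z_0+(1-\qhat)\log Z_1}$, where the prior places mass $\qhat$ on $x=0$ and $1-\qhat$ on $x=1$.

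The crux is to pin down $Z_0$ and $Z_1$. Each aggregate action used with positive probability satisfies the first-order (consideration) condition $\sum_{x\in\bin}\qhat_x\,e^{u(a,x)/\infoL}/Z_x=1$, which for the payoff table ($u(0,0)=0$, $u(0,1)=-c_1$, $u(1,0)=-c_2$, $u(1,1)=b$) is a nondegenerate linear system in $(1/Z_0,1/Z_1)$. I would solve it and verify the solution is exactly
\[
Z_0=\frac{\qhat}{\qH}=\frac{\qhat\,e^{-c_2/\infoL}}{\qL},\qquad
Z_1=\frac{(1-\qhat)e^{-c_1/\infoL}}{1-\qH}=\frac{(1-\qhat)e^{b/\infoL}}{1-\qL},
\]
where the two expressions for each $Z_x$ coincide precisely because of the definitions \eqref{eq:HL-defn}: the identity $\qL=\qH e^{-c_2/\infoL}$ is immediate, while matching the two forms of $Z_1$ unwinds exactly to the stated closed form for $\qH$. (Economically, this is the invariant-posteriors property of mutual-information costs: the posterior probability of malicious content is $\qL$ conditional on accepting and $\qH$ conditional on ignoring, independent of $\qhat$.)

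Finally I would substitute and regroup. Using $Z_0=\qhat e^{-c_2/\infoL}/\qL$ and $Z_1=(1-\qhat)e^{b/\infoL}/(1-\qL)$ and collecting the $\qhat$- and $(1-\qhat)$-terms,
\[
\Vind(\qhat)=\sbr{(1-\qhat)b-\qhat c_2}+\infoL\,\KL{\qhat}{\qL},
\]
which is the first identity since the bracket is $\widehat{u}_1(\qhat)$. Substituting instead $Z_0=\qhat/\qH$ and $Z_1=(1-\qhat)e^{-c_1/\infoL}/(1-\qH)$ and regrouping identically gives $\Vind(\qhat)=-(1-\qhat)c_1+\infoL\,\KL{\qhat}{\qH}=\widehat{u}_0(\qhat)+\infoL\,\KL{\qhat}{\qH}$, the second identity; the $-c_2$ and $b$ (resp.\ $-c_1$) terms are precisely the contributions of the $e^{\pm\cdot/\infoL}$ factors pulled out of the logarithms.

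I expect the main obstacle to be the middle step: solving for $Z_0,Z_1$ and, above all, checking that the two a priori distinct expressions for each coincide. This is exactly where the specific definitions \eqref{eq:HL-defn} of $\qL$ and $\qH$ are consumed; the reduction of the $Z_1$ consistency condition to the closed form of $\qH$ is the only genuinely non-mechanical algebra, whereas the log-partition derivation and the final regrouping into Bernoulli KL divergences are routine.
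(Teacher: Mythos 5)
Your proof is correct, and it reaches the result by a genuinely different route than the paper. The paper's proof stays in the primal: it takes the explicit optimal information strategy $(\pot,\ptt)$ and the unconditional acceptance probability $\mathcal{P}_c$ from Lemma~\ref{lem:cbr}, plugs them into the value written as expected action payoff minus the entropy-based cost $\infoL\sbr{H(\mathcal{P}_c)-\qhat H(\pot)-(1-\qhat)H(\ptt)}$, and then massages the logarithms, substituting the definitions of $\qL$ and $\qH$ at the end to produce the two KL forms. You instead work with what is effectively the dual representation from \cite{mm}: the log-partition formula $\Vind(\qhat)=\infoL\sbr{\qhat\log Z_0+(1-\qhat)\log Z_1}$, with $Z_0,Z_1$ pinned down by the two first-order (consideration) conditions, which are valid precisely because $\qhat\in(\qL,\qH)$ guarantees both aggregate actions are used (Lemma~\ref{lem:cbr}/Proposition~\ref{prop:goalposts}). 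Solving that $2\times 2$ linear system and checking its solution against \eqref{eq:HL-defn} is exactly where your argument consumes the definitions of $\qL$ and $\qH$ (I verified the system is nondegenerate, that $Z_0=\qhat/\qH=\qhat e^{-c_2/\infoL}/\qL$ and $Z_1=(1-\qhat)e^{b/\infoL}/(1-\qL)=(1-\qhat)e^{-c_1/\infoL}/(1-\qH)$, and that the final regrouping reproduces both identities). What your route buys is twofold: it never needs the closed-form expressions for $\pot,\ptt,\mathcal{P}_c$ or any entropy bookkeeping, and it makes the appearance of $\qL$ and $\qH$ in the KL terms transparent via the invariant-posteriors property --- the posterior on malicious content is exactly $\qL$ conditional on accepting and $\qH$ conditional on ignoring, independent of the prior. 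What the paper's route buys is that it reuses Lemma~\ref{lem:cbr} verbatim (already needed elsewhere), so no additional structure from the rational-inattention literature has to be invoked beyond the best-response formulas themselves.
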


\begin{proof}[Proof of Lemma \ref{lem:terms}]
  Let $\tilde{\pi}_0^*$ and $\tilde{\pi}_1^*$ represent the consumer's
  optimal attention strategy as given in lemma \ref{lem:cbr}.
  Then
 % $\hat{V}(\qfunc(\mixstrfil))=$:
\begin{align*}
    &\hat{V}(\qfunc(\mixstrfil))= \\
    &\qquad\qfunc(\mixstrfil) \left( 1-\pot \right)  \left( -c_2 \right) \\
    &\qquad+ \left( 1-\qfunc(\mixstrfil) \right) \ptt   \left( -c_1 \right) \\
    &\qquad+ \left( 1-\qfunc(\mixstrfil) \right)  \left( 1-\ptt \right) b \\
    &\qquad- \lambda [H(\mathcal{P}_c) -q(\mixstrfil)H(\pot) - (1-q(\mixstrfil))H(\ptt)].
                % -\mathcal{P}_c \log \left(\mathcal{P}_c \right)  -
    %              \left( 1-\mathcal{P}_c \right) \log \left(
    %              1-\mathcal{P}_c \right)  + \nonumber \\
    %           & \qfunc(\mixstrfil)   \big( \pot\log \left( \pot \right)  +
    %              \left( 1-\pot \right)  \log \left( 1-\pot \right)
    %             \big)  + \nonumber \\
    % &  \left( 1-\qfunc(\mixstrfil) \right)    \big( \ptt\log \left( \ptt \right)  +
    %   \left( 1-\ptt \right)  \log \left( 1-\ptt \right) \big) \bigg]
    % %   \nonumber \\
    % % \nonumber \\
    % %    &= \qfunc(\mixstrfil) \left( 1-\pot \right)  \left( -c_2 \right)  + \left(
    % %      1-\qfunc(\mixstrfil) \right) \ptt   \left( -c_1 \right)  +
    % %      \left( 1-\qfunc(\mixstrfil) \right)  \left( 1-\ptt \right) b -\nonumber \\
    % %            & \lambda \bigg[-\mathcal{P}_c \log \left(\mathcal{P}_c \right)  -
    % %              \left( 1-\mathcal{P}_c \right) \log \left(
    % %              1-\mathcal{P}_c \right)  + \nonumber \\
    % %            & \qfunc(\mixstrfil)\pot\log\big(\frac{\pot}{1-\pot}\big) +
    % %              (1-\qfunc(\mixstrfil))\ptt\log\big(\frac{\ptt}{1-\ptt}\big)  + \nonumber \\
    %            & \log(1-\pot) + (1-\qfunc(\mixstrfil))\log(1-\ptt)\bigg]  \nonumber
  \end{align*}

  % \begin{align*}
  %   &\hat{V}(\qfunc(\mixstrfil))= \nonumber \\
  %   &\qfunc(\mixstrfil) \left( 1-\pot \right)  \left( -c_2 \right)  + \left( 1-\qfunc(\mixstrfil) \right) \ptt   \left( -c_1 \right)  +
  %                \left( 1-\qfunc(\mixstrfil) \right)  \left( 1-\ptt \right) b
  %                \nonumber \\
  %              &- \lambda \bigg[
  %               -\mathcal{P}_c \log \left(\mathcal{P}_c \right)  -
  %                \left( 1-\mathcal{P}_c \right) \log \left(
  %                1-\mathcal{P}_c \right)  + \nonumber \\
  %             & \qfunc(\mixstrfil)   \big( \pot\log \left( \pot \right)  +
  %                \left( 1-\pot \right)  \log \left( 1-\pot \right)
  %               \big)  + \nonumber \\
  %   &  \left( 1-\qfunc(\mixstrfil) \right)    \big( \ptt\log \left( \ptt \right)  +
  %     \left( 1-\ptt \right)  \log \left( 1-\ptt \right) \big) \bigg]
  %   %   \nonumber \\
  %   % \nonumber \\
  %   %    &= \qfunc(\mixstrfil) \left( 1-\pot \right)  \left( -c_2 \right)  + \left(
  %   %      1-\qfunc(\mixstrfil) \right) \ptt   \left( -c_1 \right)  +
  %   %      \left( 1-\qfunc(\mixstrfil) \right)  \left( 1-\ptt \right) b -\nonumber \\
  %   %            & \lambda \bigg[-\mathcal{P}_c \log \left(\mathcal{P}_c \right)  -
  %   %              \left( 1-\mathcal{P}_c \right) \log \left(
  %   %              1-\mathcal{P}_c \right)  + \nonumber \\
  %   %            & \qfunc(\mixstrfil)\pot\log\big(\frac{\pot}{1-\pot}\big) +
  %   %              (1-\qfunc(\mixstrfil))\ptt\log\big(\frac{\ptt}{1-\ptt}\big)  + \nonumber \\
  %   %            & \log(1-\pot) + (1-\qfunc(\mixstrfil))\log(1-\ptt)\bigg]  \nonumber
  % \end{align*}
  Plugging in optimal values of $\pot$ and $\ptt$ from Lemma
  \ref{lem:cbr}, separating the logs and recognizing that
  $1-\mathcal{P}_c = \qfunc(\mixstrfil)\pot +
  (1-\qfunc(\mixstrfil))\ptt$ and plugging in for $\mathcal{P}_c$ from
  Lemma \ref{lem:cbr} yields
  \begin{align*}
  \hat{V}(\qfunc(\mixstrfil))
  &=\lambda\big[
        \qfunc(\mixstrfil)\log\left(\frac{\qfunc(\mixstrfil)}{1-\qfunc(\mixstrfil)}\frac{1-e^{-c_2/\lambda}}{e^{b/\lambda} - e^{-c_1/\lambda}}\right)\\
   &\quad     + \log\left(\frac{(1-\qfunc(\mixstrfil))(e^{b/\lambda}
        - e^{-(c_1+c_2)/\lambda})}{1-e^{-c_2/\lambda}}\right)\big].
  \end{align*}

Separting out the $\qfunc(\mixstrfil)$ terms from within the logs and substituting
$\frac{e^{-b/\lambda}(1-q_L)}{e^{c_2/\lambda}q_L}$ for the first term
and $\frac{e^b}{1-q_L}$ for the second term gives the claimed result
in terms of $q_L$.  Making similar substitutions with $q_H$ gives the
second expression in the claimed result.
\end{proof}

\begin{lemma}[Profile Payoffs]
  \label{lem:payterm}
  If $q, \qfunc(\difprof) \in (\qL, \qH)$ then
  \begin{align}
    &V(\fwdprof) = \frac{1}{1-q}\hat{V}(q) \nonumber \\
    &V(\difprof) = -c_1 + \frac{1}{1-q}\beta\lambda\KL{q(\strdif)}{q_H} \label{eq:to_dif}
    %&\pi_1(-c_1) + (1-\pi_1)b -\frac{q}{1-q}(1-\pi_0)(c_2) +\frac{1}{1-q}\lambda\beta \KL{\qfunc(\strdif)}{\qL} \nonumber \\
    %&  =-c_1 + \frac{1}{1-q}\beta\lambda\KL{q(\strdif)}{q_H} \label{eq:to_dif}
  \end{align}
\end{lemma}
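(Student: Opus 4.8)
The plan is to obtain each profile's total utility from the per-piece payoff $\vcon$ together with the scaling identity $V(\mixsigma)=\Vcon(\mixsigma)=\vcon(\mixsigma)/(1-q)$, plugging in the closed forms for a best-responding consumer from Lemma~\ref{lem:terms}. The hypothesis $q,\qdif\in(\qL,\qH)$ is precisely what is needed to invoke that lemma: it guarantees the consumer uses the interior (positive-cost) best response both under $\fwdprof$, where the forwarded posterior equals $q$, and under $\difprof$, where it equals $\qdif$.

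The forwarding profile is immediate. Under $\fwdprof$ every piece is forwarded, so $\afil=1$ always and the consumer faces prior $\qfunc(\fwdprof)=q$. By the definition of $\Vind$ in Lemma~\ref{lem:terms}, the consumer's optimal per-piece payoff net of information cost is exactly $\vcon(\fwdprof)=\Vind(q)$, so $V(\fwdprof)=\vcon(\fwdprof)/(1-q)=\tfrac{1}{1-q}\Vind(q)$, as claimed.

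The differentiating profile is the substantive case. Under $\difprof$ the filter forwards iff $\sfil=1$, which occurs with \emph{ex ante} probability $\beta=\Pr\sbr{\afil=1\mid\strdif}=(1-\pi_0)q+(1-\pi_1)(1-q)$, and conditional on forwarding the content is malicious with probability $\qdif$. I would split the per-piece payoff according to whether the piece is forwarded, noting that a blocked piece ($\afil=0$) carries aggregate action $0$, action payoff $u(0,X)$, and no information cost:
\[
\vcon(\difprof)=\beta\,\Vind(\qdif)+(1-\beta)\,\E\sbr{u(0,X)\mid\afil=0}.
\]
The key move is to take $\Vind$ in its second ``blocking-baseline'' form from Lemma~\ref{lem:terms}, namely $\Vind(\qdif)=\widehat{u}_0(\qdif)+\infoL\,\KL{\qdif}{\qH}$, where $\widehat{u}_0(\qhat)=\E\sbr{u(0,\hat X)}$ is the aggregate-action-$0$ payoff at prior $\qhat$. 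Since $\widehat{u}_0(\qdif)=\E\sbr{u(0,X)\mid\afil=1}$, the action-payoff contributions from forwarded and blocked content share the same functional form and collapse by the law of total expectation:
\[
\beta\,\E\sbr{u(0,X)\mid\afil=1}+(1-\beta)\,\E\sbr{u(0,X)\mid\afil=0}=\E\sbr{u(0,X)}=\widehat{u}_0(q)=-(1-q)c_1.
\]
Hence $\vcon(\difprof)=-(1-q)c_1+\beta\,\infoL\,\KL{\qdif}{\qH}$, and dividing by $1-q$ yields $V(\difprof)=-c_1+\tfrac{1}{1-q}\beta\,\infoL\,\KL{\qdif}{\qH}$.

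The one genuine subtlety I would flag as the crux is the choice of the $\widehat{u}_0$ rather than the $\widehat{u}_1$ representation of $\Vind$: it is what makes the forwarded and blocked action payoffs align on $u(0,\cdot)$, so that conditioning on $\afil$ vanishes via total expectation and leaves the clean unconditional baseline $\widehat{u}_0(q)=-(1-q)c_1$. Everything else is bookkeeping; in particular, under $\strdif$ forwarding is exactly the event $\sfil=1$, so the conditional type distributions used above are the genuine Bayesian posteriors and no justification of the interior best response is needed beyond the regime assumption $\qdif\in(\qL,\qH)$.
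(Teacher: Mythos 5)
Your proof is correct and takes the same route as the paper, whose own proof simply states that the lemma is "a straightforward application of Lemma~\ref{lem:terms}"; your write-up supplies exactly the intended details, in particular the use of the $\widehat{u}_0$ (blocking-baseline) representation of $\Vind$ so that the forwarded and blocked action payoffs merge via total expectation into $-(1-q)c_1$, followed by the $1/(1-q)$ batch scaling.
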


\begin{proof}
This is a straightforward application of Lemma \ref{lem:terms}
%   The equation for $V(\fwdprof)$ is a straightforward application of
% lemma \ref{lem:terms} since $\afil=1$ with probability $1$.  The
% expression for $V(\difprof)$ is given by
% \begin{align*}
%   V(\difprof) &= \mathbb{E}\big[P(\afil=0)V(\difprof|\afil=0) + P(\afil=1)V(\difprof|\afil=1)\big] \nonumber \\
%   &=  (1-q)\pi_1(-c_1) + \beta(\hat{V}(q(\strdif)))\\
%   &= (1-q)\pi_1(-c_1) + \beta\big( (1-\qfunc(\strdif))b -\qfunc(\strdif)c_2 + \lambda \KL{\qfunc(\strdif)}{q_L}\big)\nonumber
% \end{align*}
% where $\beta = (1-\pi_0)q + (1-\pi_1)(1-q)$ is the probability the
% filter forwards.  Multiplying through by $\beta$ and plugging in
% $\frac{(1-\pi_0)q}{(1-\pi_0)q+(1-\pi_1)(1-q)}$ for $\qfunc(\strdif)$
% and rearranging gives the claimed result.
\end{proof}

% \begin{proof}[Proof of Proposition \ref{prop:aligned-best-eq}].

%   (This is a non-trivial proof because the filter's mixing
%   probabilities enter non-linearly into payoffs since it impacts the
%   information costs through the consumer's prior.  Therefore, standard
%   arguments regarding mixed strategies (even when incentives are
%   completely aligned) do not apply.  )

  % First note that $V(\difprof)$ and $V(\fwdprof)$ are lower bounded by
  % $V(\blkprof)$ since the consumer can always reject all content.  It
  % suffices to show that no mixed strategy equilibrium profile does
  % better than the maximum of $V(\difprof)$ and $V(\fwdprof)$.
  % % Specifically, since this is a game of common interest, if no mixed
  % % strategy does better than the greater of $V(\difprof)$ and
  % % $V(\fwdprof)$, than the profile yielding the higher payoff is a
  % % socially optimal equilibrium.
  % Lemma \ref{lem:cbr} already established that the consumer's best
  % response is unique so any mixed strategy equilibrium is one in which
  % only the filter mixes.

\begin{proof}[Proof of Proposition \ref{prop:diff}]
When $q(\mixstrfil)>\qH$ the consumer ignores all
content. When $q(\mixstrfil)<\qL$ the consumer accepts all content.
Otherwise, payoffs are given in Lemma \ref{lem:payterm}, the
proof follows by subtracting $V(\fwdprof)$ from
$V(\difprof)$ for each of the regimes.
\end{proof}

\begin{prewebconf}

  \begin{lemma}\label{lem:qhv}
  Let $\mixsigma^1$ and $\mixsigma^2$ be  consumer-optimal profiles where
  the filter plays a (possibly degenerate) mixed strategy
  $\mixstrfil^1$ and $\mixstrfil^2$, respectively.  If
  $\qL<q(\mixstrfil^1)<q(\mixstrfil^2)<\qH$ then $V(\mixsigma^1)>V(\mixsigma^2)$
  if and only if
  $\beta_1 \KL{q(\mixstrfil^1)}{\qH}-\beta_2\KL{q(\mixstrfil^2)}{\qH}>0$
  where the $\beta_i$  is the unconditional probability of the
  filter forwarding under $\mixstrfil^i$.
\end{lemma}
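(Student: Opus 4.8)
The plan is to establish a single per-batch value formula that holds for \emph{every} consumer-optimal profile whose induced belief lies in $(\qL,\qH)$, and then read off the claim by subtraction. Concretely, I claim that for any consumer-optimal profile $\mixsigma$ with filter strategy $\mixstrfil$ and ex ante forwarding probability $\beta$, if $q(\mixstrfil)\in(\qL,\qH)$ then
\[
V(\mixsigma) = -c_1 + \frac{\infoL}{1-q}\,\beta\,\KL{q(\mixstrfil)}{\qH}.
\]
This is exactly the expression already derived for $\difprof$ in Lemma~\ref{lem:payterm}, but the derivation never uses that the filter plays $\strdif$ specifically---only that the consumer best-responds and that the conditional belief sits in the interior regime. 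So the real work is isolating this generality.

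To prove the formula, I would split the per-piece expected utility $\vcon(\mixsigma)=\E\sbr{u(\acon\afil,X)-\afil\,\infoC\sbr{\scon;X\mid\afil=1}}$ into the contributions of forwarded and blocked content. Conditional on forwarding (probability $\beta$), the consumer faces prior $q(\mixstrfil)\in(\qL,\qH)$ and best-responds, so by definition the conditional expected payoff (action payoff net of information cost) is $\Vind(q(\mixstrfil))$, contributing $\beta\,\Vind(q(\mixstrfil))$. Blocked content incurs no information cost and yields aggregate action $0$, hence action payoff $-c_1$ on legitimate pieces and $0$ on malicious ones, contributing $-c_1\Pr\sbr{\afil=0,X=1}=-c_1\big[(1-q)-\beta(1-q(\mixstrfil))\big]$, where I use $\Pr\sbr{\afil=1,X=1}=\beta(1-q(\mixstrfil))$. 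Now I would invoke the \emph{ignore-based} identity from Lemma~\ref{lem:terms}, namely $\Vind(\qhat)=-(1-\qhat)c_1+\infoL\KL{\qhat}{\qH}$, which is the right one precisely because blocked content and ignored forwarded content carry the same $-c_1$ action payoff on legitimate content. Substituting, the two $c_1\beta(1-q(\mixstrfil))$ terms cancel exactly, leaving $\vcon(\mixsigma)=-c_1(1-q)+\beta\,\infoL\,\KL{q(\mixstrfil)}{\qH}$; dividing by $1-q$ gives the displayed formula.

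With the formula in hand the lemma is immediate. Applying it to $\mixsigma^1$ and $\mixsigma^2$---both of whose beliefs lie in $(\qL,\qH)$ by hypothesis---and subtracting, the additive constant $-c_1$ cancels and the common positive factor $\infoL/(1-q)$ divides out (using $\infoL>0$ and $q<1$). Hence $V(\mixsigma^1)>V(\mixsigma^2)$ if and only if $\beta_1\KL{q(\mixstrfil^1)}{\qH}-\beta_2\KL{q(\mixstrfil^2)}{\qH}>0$, as claimed. I note that the strict ordering $q(\mixstrfil^1)<q(\mixstrfil^2)$ in the hypothesis is not actually needed for the equivalence; only membership of both beliefs in the interior regime is used (the ordering is presumably there for the downstream monotonicity application).

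I expect the main obstacle to be the bookkeeping in the decomposition: correctly accounting for the action payoff of blocked legitimate content, and recognizing that the $\qH$ (ignore-based) form of $\Vind$ from Lemma~\ref{lem:terms}, rather than the $\qL$ (accept-based) form, is the one that produces the clean cancellation of the $c_1$ terms. Once that choice is made, the remaining steps are routine algebra, and the ``if and only if'' follows simply because the value is an affine function of $\beta\,\KL{q(\mixstrfil)}{\qH}$ with a strictly positive slope.
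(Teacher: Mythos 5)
Your proof is correct and takes essentially the same approach as the paper: both arguments boil down to the $\qH$-form of Lemma~\ref{lem:terms}, which gives the closed form $V(\mixsigma)=-c_1+\frac{\infoL}{1-q}\,\beta\,\KL{q(\mixstrfil)}{\qH}$ for any consumer-optimal profile whose belief lies in $(\qL,\qH)$, after which the claim follows by subtraction. The only cosmetic difference is that the paper reaches this formula by reparametrizing the mixed filter strategy as a differentiating strategy with modified signal probabilities $\pi_x(\gamma_0-\gamma_1)+\gamma_1$ (so the computation behind Lemma~\ref{lem:payterm} applies verbatim), whereas you rederive it directly via the forwarded/blocked decomposition; your side remark that the ordering $q(\mixstrfil^1)<q(\mixstrfil^2)$ is not actually needed is also accurate.
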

\begin{proof}
  This is again a straightforward application of lemma \ref{lem:terms}
  where the filter plays a differentiating strategy with signals
  governed by
  $\pi_0(\gamma^i)=\pi_0(\gamma^i_0-\gamma^i_1)+\gamma^i_1$ and
  $\pi_1(\gamma^i)=\pi_1(\gamma^i_0-\gamma^i_1)+\gamma^i_1$ and
  $\gamma_i$ are the mixing probabilities.
\end{proof}

    \begin{lemma}\label{lem:qhv2}

      Let $\mixsigma$ be a consumer-optimal profile where the filter
      plays a (possibly degenerate) mixed strategy.  If
      $q(\strdif)<\qL\leq q(\mixstrfil)<\qH$ then
      $V(\difprof)>V(\mixsigma)$ if and only if
      $\Delta U^{\prime}>\beta_{\gamma}\KL{q(\mixsigma)}{\qH}$ and
      $\Delta U^{\prime}=(1-q)(1-\pi_1)(b+c_1) -q (1-\pi_0)c_2$
      where $\beta_{\gamma}$ is the unconditional probability of the
      filter forwarding under $\mixstrfil$.

  % If $\qdif<\qL<q<\qH$ then $V(\difprof)>V(\fwdprof)$ if and only if
  % $\Delta U^{\prime}>\KL{q}{\qH}$ where $\Delta
  % U^{\prime}=(1-q)(1-\pi_1)(b+c_1) -q (1-\pi_0)c_2$
\end{lemma}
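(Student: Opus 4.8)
The plan is to evaluate the two per-content payoffs $\vcon(\difprof)$ and $\vcon(\mixsigma)$ in closed form via Lemma~\ref{lem:terms} and then subtract. Since $V=\vcon/(1-q)$ with $1-q>0$, the sign of $V(\difprof)-V(\mixsigma)$ equals that of $\vcon(\difprof)-\vcon(\mixsigma)$, so it suffices to compare the per-content payoffs. In each profile I split the payoff according to whether the filter forwards (probability $\beta_\gamma$ for $\mixsigma$, and $\beta$ for $\difprof$) or blocks, and I exploit the fact that the ignore payoff $\widehat{u}_0(\qhat)=-(1-\qhat)c_1$ is \emph{affine} in $\qhat$.

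For $\mixsigma$: conditional on forwarding, the posterior is $q(\mixstrfil)\in[\qL,\qH)$, so by Lemmas~\ref{lem:belief-BR} and~\ref{lem:terms} the consumer's optimal net payoff (action payoff minus information cost) is $\Vind(q(\mixstrfil))=\widehat{u}_0(q(\mixstrfil))+\infoL\,\KL{q(\mixstrfil)}{\qH}$; conditional on blocking, the aggregate action is $0$ and the payoff is $\widehat{u}_0(q_{\mathrm{blk}})$, where $q_{\mathrm{blk}}=\Pr[X=0\mid\afil=0]$. By the law of total probability $q=(1-\beta_\gamma)q_{\mathrm{blk}}+\beta_\gamma q(\mixstrfil)$, and affineness of $\widehat{u}_0$ lets the two ignore terms collapse to $\widehat{u}_0(q)$, removing all dependence on $q_{\mathrm{blk}}$ and on the detailed mixing weights. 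This yields $\vcon(\mixsigma)=\widehat{u}_0(q)+\beta_\gamma\,\infoL\,\KL{q(\mixstrfil)}{\qH}$, consistent with Lemma~\ref{lem:payterm} when $\mixstrfil=\strdif$.

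For $\difprof$: since $q(\strdif)<\qL$, Proposition~\ref{prop:goalposts} gives that conditional on forwarding the consumer accepts all content at zero information cost, so the forwarded value is $\widehat{u}_1(q(\strdif))$ rather than $\Vind$. Writing $\widehat{u}_1=\widehat{u}_0+(\widehat{u}_1-\widehat{u}_0)$ and applying the same collapse gives $\vcon(\difprof)=\widehat{u}_0(q)+\beta\,[(1-q(\strdif))(b+c_1)-q(\strdif)c_2]$, where $\beta$ is the differentiating filter's forward probability. Substituting $\beta(1-q(\strdif))=(1-\pi_1)(1-q)$ and $\beta\,q(\strdif)=(1-\pi_0)q$ shows the bracket equals exactly $\Delta U'=(1-q)(1-\pi_1)(b+c_1)-q(1-\pi_0)c_2$, so $\vcon(\difprof)=\widehat{u}_0(q)+\Delta U'$.

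Subtracting, $\vcon(\difprof)-\vcon(\mixsigma)=\Delta U'-\beta_\gamma\,\infoL\,\KL{q(\mixstrfil)}{\qH}$, and dividing by $1-q>0$ yields $V(\difprof)>V(\mixsigma)$ if and only if $\Delta U'>\beta_\gamma\,\infoL\,\KL{q(\mixstrfil)}{\qH}$, as claimed (the factor $\infoL$ is carried through the comparison, unlike in Lemma~\ref{lem:qhv} where it cancels). I expect the main obstacle to be the collapse step for a \emph{general} mixed filter: one must confirm that affineness of $\widehat{u}_0$ together with total probability really eliminates $q_{\mathrm{blk}}$ and the mixing weights, leaving dependence only on $\beta_\gamma$ and $q(\mixstrfil)$. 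A minor loose end is the boundary $q(\mixstrfil)=\qL$, where the accept-all and informative strategies coincide, so that $\Vind(\qL)=\widehat{u}_1(\qL)$ and the formula extends by continuity.
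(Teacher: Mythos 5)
Your proof is correct and follows essentially the same route as the paper's: both evaluate $\vcon(\difprof)$ as the accept-all payoff (valid since $q(\strdif)<\qL$) and $\vcon(\mixsigma)$ via the $\qH$-form of Lemma~\ref{lem:terms}, then rearrange — your affine collapse of $\widehat{u}_0$ through the law of total probability is exactly the paper's substitution $\beta_\gamma(1-q(\mixstrfil))=(1-q)(1-\pi_1^{\gamma})$. Your remark that the factor $\infoL$ survives in the final inequality is also right: the paper's own proof carries $\lambda$ inside the forwarded-value term, and its absence from the lemma statement is a typo.
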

\begin{proof}
  Using the $\qH$ expression for $\hat{u}$ yields
       \begin{align*}
         V(\difprof) - V(\mixsigma) > 0\implies \nonumber \\
         (1-q)(\pi_1(-c_1)+(1-\pi_1)b) -qc_2(1-\pi_0) >-(1-q)\pi_1^{\gamma}c_1 + \beta_\gamma (-(1-q(\mixstrfil))c_1 + \lambda \KL{q}{q_H})
  %   % & \iff (1-p)\pi_1(-c_1) + \beta \lambda\left[\tilde{q}\log\left(\frac{\tilde{q}}{1-\tilde{q}}\frac{1-e^{-c_2/\lambda}}{e^{b/\lambda} - e^{-c_1/\lambda}}\right) + \log\left(\frac{(1-\tilde{q})(e^{b/\lambda} - e^{-(c_1+c_2)/\lambda})}{1-e^{-c_2/\lambda}}\right)\right] > \nonumber \\
  %   %               &\lambda\left[q\log\left(\frac{q}{1-q}\frac{1-e^{-c_2/\lambda}}{e^{b/\lambda} - e^{-c_1/\lambda}}\right) + \log\left(\frac{(1-q)(e^{b/\lambda} - e^{-(c_1+c_2)/\lambda})}{1-e^{-c_2/\lambda}}\right)\right]. \nonumber
  %                 % &\lambda \left[-\beta H(\tilde{q}) + H(q) +(\tilde{q}-q)\log(A)\right]>(1-q)\pi_2c_1 \nonumber \\
  %                 %  & \text{where}\;\;\;\;  A=\frac{1-e^{-c_2/\lambda}}{e^{b/\lambda} - e^{-c_1/\lambda}} \nonumber
       \end{align*}
  %     Rearranging terms  proving the claim.
    where the $\pi_1^{\gamma}$ is as in lemma \ref{lem:qhv}.
    Rearranging the terms gives the claimed result.
    \end{proof}

\begin{proof}[Proof of Proposition \ref{prop:aligned-best-eq}].
  (This is a non-trivial proof because the filter's mixing
  probabilities enter non-linearly into payoffs since it impacts the
  information costs through the consumer's prior.  Therefore, standard
  arguments regarding mixed strategies  --- even when incentives are
  completely aligned ---  do not apply.)

  First note that $V(\difprof)$ and $V(\fwdprof)$ are lower bounded by
  $V(\blkprof)$ since the consumer can always reject all content.  It
  suffices to show that no mixed strategy profile does better than the
  maximum of $V(\difprof)$ and $V(\fwdprof)$.  Specifically, since
  this is a game of common interest, if no consumer-optimal mixed
  strategy does better than the greater of $V(\difprof)$ and
  $V(\fwdprof)$, than the profile yielding the higher payoff is the
  socially optimal equilibrium.  Lemma \ref{lem:cbr} already
  established that the consumer's best response is unique so any mixed
  strategy equilibrium is one in which only the filter mixes.

  Let $\mixstrfil$ be the filter's mixed strategy and assume the
  consumer best responds. Then the
  game is equivalent to a new game where the
  filter plays a differentiating strategy but its signal distribution
  is given by $\pi_0(\gamma_0-\gamma_1)+\gamma_1$ and
  $\pi_1(\gamma_0-\gamma_1)+\gamma_1$ where $\gamma_0,\gamma_1$ are
  the filter's random probabilities of blocking at each information
  set.  If $\gamma_0<\gamma_1$ then the filer is more likely to
  forward malicious content than clean content, which is a a weakly
  dominated strategy, so we can focus on the case where
  $\gamma_0>\gamma_1$.  This also implies that
  $q(\strdif)<q(\mixstrfil)<q$.

  The case where $q(\mixstrfil)>\qH$ is trivial since the optimal
  payoff under that regime is $V(\blkprof)$, which is weakly dominated
  by any other pure strategy profile.  Furthermore when
  $q(\mixstrfil)\leq\qL$, the payoffs are linear in $\gamma_0$ and
  $\gamma_1$ and thus the filter would be better on the edge, which
  corresponds either to $\fwdprof$ or $\difprof$ or setting
  $q(\mixstrfil)=\qL$. which we will show is also not optimal.

  %Without loss of generality, we can assume
    % $\gamma_1>\gamma_2$ which corresponds to the
    % case where the filter is more likeliy to block malicious content
    % than clean.

  Fix the filter's mixed strategy at $\mixstrfil$ such that
  $\qL\leq q(\mixstrfil)<\qH$.  Applying lemma \ref{lem:terms} means
  that the player's optimal payoff when the filter plays mixed
  strategy $\mixstrfil$ is given by
  $\beta_{\gamma}\lambda \KL{q(\mixstrfil)}{\qH}$ where
  $\beta_{\gamma}$ is the unconditional probability the filter
  forwards content.  Let
  $t=\frac{q-q(\mixstrfil)}{q-q(\strdif)}=1-\frac{\gamma_0}{\beta_{\gamma}}$
  such that $tq(\strdif) + (1-t)q = q(\mixstrfil)$.  Then by convexity
  of KL divergence
    \begin{align*}
      \KL{q(\mixstrfil)}{\qH}<t \KL{q(\strdif)}{\qH} + (1-t)\KL{q}{\qH} \nonumber \\
      \rightarrow \gamma_0\big(\KL{q(\strdif)}{\qH} - \KL{q}{\qH}\big) < \beta_{\gamma}\KL{q(\strdif)}{\qH} - \beta_{\gamma}\KL{q(\mixstrfil}{\qH} \nonumber
    \end{align*}
    writing the first $\beta_{\gamma}$ term as $\beta + \beta_{\gamma} - \beta$ and
    rearranging yields
    \begin{align*}
      (\gamma_0+\beta - \beta_{\gamma})\KL{q(\strdif)}{\qH} - \gamma_0 \KL{q}{\qH}< \beta \KL{q(\strdif)}{\qH} - \beta_{\gamma}\KL{q(\mixstrfil)}{\qH} \nonumber
    \end{align*}
    and recognizing that $\gamma_0+\beta -
    \beta_{\gamma}=(1+\gamma_0-\gamma_1)\beta$ the inequality can be
    written as
\begin{align*}
  (1+\gamma_0-\gamma_1)\beta \KL{q(\strdif)}{\qH} - \gamma_0 \KL{q}{\qH}< \beta \KL{q(\strdif)}{\qH} - \beta_{\gamma}\KL{q(\mixstrfil)}{\qH} %\label{eq:nomix00} \\
  \rightarrow \gamma_0\big(\beta \KL{q(\strdif)}{\qH} -  \KL{q}{\qH}\big)< \beta \KL{q(\strdif)}{\qH} - \beta_{\gamma}\KL{q(\mixstrfil)}{\qH} %\label{eq:nomix1}
\end{align*}
where the second line follows because $1-\gamma_1>0$.

\textbf{Case 1:} Suppose $\qL\leq q(\strdif)<q(\mixstrfil)<\qH$: First, suppose
$V(\difprof)>V(\fwdprof)$, then by lemma \ref{lem:qhv} it suffices to
show that
$\beta \KL{q(\strdif)}{\qH}-\beta_{\gamma} \KL{q(\mixstrfil)}{\qH}>0$.
By lemma \ref{lem:qhv} the left hand side of equation \ref{eq:nomix1}
is positive so the right hand side must also be positive thus proving
the sufficient condition.  Now, suppose $V(\difprof)<V(\fwdprof)$,
then again by lemma \ref{lem:qhv} it suffices to show that
$\KL{q}{\qH}-\beta_{\gamma} \KL{q(\mixstrfil)}{\qH}>0$.  This can be
accomplished by adding $\KL{q}{\qH}$ to each side and rearranging
\ref{eq:nomix1} to read
\begin{align*}
  (\gamma_0-1)(\beta \KL{q(\strdif)}{\qH} -\KL{q}{\qH})< \KL{q}{\qH} - \beta_{\gamma}\KL{q(\mixstrfil)}{\qH}
\end{align*}
and since by assumption $V(\difprof)<V(\fwdprof)$ both terms on the
left hand side are negative so the right hand side must be positive,
thus completing the proof.

\textbf{Case 2:} Suppose $q(\strdif)<\qL\leq q(\mixstrfil)$.  First
suppose $V(\difprof)>V(\fwdprof)$.  By lemma \ref{lem:qhv2} we must
show that $\Delta U^{\prime}>\beta_{\gamma}\KL{q(\mixstrfil)}{\qH}$.
By applying lemma \ref{lem:qhv2}, $\Delta U^{\prime}>\KL{q}{\qH}$ and
thus we can plug $\Delta U^{\prime}$ into \ref{eq:nomix00} and
rearrange to obtain.
\begin{align*}
  (\gamma_0 - \gamma_1)\beta \KL{q(\strdif)}{\qH} + \beta_{\gamma}\KL{q(\mixstrfil)}{\qH}<\gamma_0 \Delta U^{\prime} %\label{eq:nomix2}
\end{align*}
Since by assumption $\gamma_0>\gamma_1$, and $\gamma_0<1$, expression
\ref{eq:nomix2} implies $\beta_{\gamma}\KL{q(\mixstrfil)}{\qH}< \Delta
U^{\prime}$, completing the proof.

Finally, suppose $V(\difprof)<V(\fwdprof)$. Then it we must show that
show that $\KL{q}{\qH}>\beta_{\gamma}\KL{q(\mixstrfil)}{\qH}$ which we
can do by rearranging equation \ref{eq:nomix00} to be
\begin{align*}
  (\gamma_0-\gamma_1)\beta \KL{q(\strdif)}{\qH} <  \gamma_0 \KL{q}{\qH} - \beta_{\gamma}\KL{q(\mixstrfil)}{\qH} %\label{eq:nomixlast}
\end{align*}
and since by assumption $(1<\gamma_0<\gamma_1)$, equation
\ref{eq:nomixlast} implies
$\KL{q}{\qH}>\beta_{\gamma}\KL{q(\mixstrfil)}{\qH}$.  This completes
the proof for all cases.
  \end{proof}

\end{prewebconf}

\begin{proof}[Proof of Theorem \ref{thm:dvdp}]
  Parts (a,b) % of theorem \ref{thm:dvdp}
  follow directly from Proposition~\ref{cor:scaff-VoT}.  %  since in that regime the socially optimal profile is for the
  % filter to differentiate and the consumer to accept all content.
  %To prove the third claim,
   For part (c), take the derivative of equation
  \refeq{eq:to_dif}.
\end{proof}

\begin{proof}[Proof of Proposition \ref{prop:aligned-best-eq}]
To show that no mixed strategy can yield higher utilities than
$\max(V(\difprof), V(\fwdprof)$, % be better than the maximum of the pure
%   strategies,
% \asmargincomment{any pure strategies, or these two? very unclear!}
  proceed by contradiction.  Suppose there was a socially
  optimal profile in which the filter blocked with probability
  $\gamma_0$ when $\signal_{\fil}=0$ and blocks with probability
  $\gamma_1$ when $\signal_{\fil}=1$.  This is equivalent to a game in
  which the filter plays a differentiating strategy with the filter's
  signal distribution given by
  $\pi_i^{\prime} = \pi_i(\gamma_0 - \gamma_1) + \gamma_1$.  Denote
  that profile $\difprof^{\prime}$.  Suppose
  $q(\difprof^{\prime})\in(\qL, \qH)$ (the case where
  $q(\difprof^{\prime})<\qH$ is trivial and the case where
  $q(\difprof^{\prime})<\qL$ follows by analogy). Taking the total
  derivative of $\pi_i^{\prime}$ and setting them equal to $0$ says
  that $\pi_0$ is constant in a neighborhood around
  $\gamma_0, \gamma_1$ if
  $\frac{d \gamma_0}{d \gamma_1} = \frac{(1-\pi_0)}{\pi_0}$.  However,
  at that rate of marginal substitution, it must be that
  $\frac{d \gamma_0}{d \gamma_1} < \frac{(1-\pi_1)}{\pi_1}$ which by
  total differentiation implies $\pi_1$ is decreasing.  Since for a
  differentiating profile, expected utility is decreasing in $\pi_1$,
  the above implies that the filter can change $\gamma_0$ and
  $\gamma_1$ so that $\pi_0$ does not change, $\pi_1$ decreases, thus
  total expected utility increases.  This contradicts the initial
  assumption that $\difprof^{\prime}$ was socially optimal.
   % \noindent Again, by using the expression for $\hat{V}(\sigma^d)$ in
   % lemma \ref{lem:terms}, it is straightforward to show that
   % \begin{align*}
   % \frac{\partial \hat{V}(\sigma^d)}{\partial \tilde{q}} &= -\lambda \log\bigg(\frac{1-\tilde{q}}{\tilde{q}}\frac{e^{b/\lambda} - e^{-c_1/\lambda}}{1-e^{-c2/\lambda}}\bigg) \label{eq:dvbar}
   % \end{align*}
   % Plugging equations \ref{eq:eut} and \ref{eq:dvbar} into equation \ref{eq:dpi1p} and
   % noting that $\frac{\partial \tilde{q}}{\partial \pi_0} =
   % \frac{-q(1-\pi_1)(1-q)}{((1-\pi_0)q + (1-\pi_1)(1-q))^2} =
   % \frac{-q(1-\tilde{q})}{(1-\pi_0)q + (1-\pi_1)(1-q)}$ gives the
   % claimed result.  Deriving the expression for
   % $\frac{\partial V^*}{\partial \pi_1}$ follows identically.
 \end{proof}

 % \begin{proof}[Proof of Corollary \ref{cor:dderiv}]
 %   It is sufficient to show that $\nabla_{\mathbf{\pi}} V(\sigma^*)
 %   \not \propto  \nabla_{\mathbf{\pi}} \tilde{q}$.
 %   The expression for $\nabla_{\mathbf{\pi}} V^(\sigma^*)$ is given in
 %   proposition \ref{prop:dvdp} and
 %   $\nabla_{\mathbf{\pi}} \tilde{q}=(-\frac{q(1-\tilde{q})}{L},
 %   \frac{\tilde{q}(1-q)}{L})$ where $L= (1-\pi_0)q + (1-\pi_1)(1-q)$,
 %   which is not proportional to $\nabla_{\mathbf{\pi}} V^(\sigma^*)$
 %   almost everywhere.
 %   \end{proof}

% \subsection{Proof of Section \ref{sec:semi-aligned}}

 \begin{proof}[Proof of Corollary \ref{prop:aligned-overcome}]
   As already established, $V(\difprof)$ is weakly increasing in
   $\pi_0, -\pi_1$ and $\qdif$ is decreasing in those probabilities
   while $V(\fwdprof)$ is constant.  Therefore, as $\pi_0\rightarrow
   1$ and $\pi_1 \rightarrow 0$, condition $d$ in proposition
   \ref{prop:diff} is guaranteed to be satisfied.
\end{proof}

\subsection*{From Section~\ref{sec:semi-aligned}: Semi-aligned Utilities\vspace{2mm}}

 \begin{lemma}
   \label{lem:linq}
   For a consumer optimal mixed profile $\mixsigma$ with filter
   strategy $\mixstrfil$ let $\tilde{\pi}_i(q(\mixstrfil))$ be the
   consumer's optimal information choice as give in lemma
   \ref{lem:cbr}. If $\qL<q(\mixstrfil)<\qH$,  then $(1-\tilde{\pi}_0(q(\mixstrfil))
   q(\mixstrfil)$,$(1-\tilde{\pi}_1(q(\mixstrfil)) (1-q(\mixstrfil))$ and
   $\tilde{\pi}_1(q(\mixstrfil) (1-q(\mixstrfil))$  are all linear in
   $q(\mixstrfil)$ and have no other dependence on $\pi_0$ and $\pi_1$.
 \end{lemma}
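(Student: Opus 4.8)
The plan is to read the three target quantities as joint probabilities and then exploit the prior-invariance of the optimal rational-inattention posteriors. First I would record that in the interior regime $\qL<\qfunc(\mixstrfil)<\qH$, Lemma~\ref{lem:cbr} gives $\mathcal{P}_c\in(0,1)$ together with the action rule $\strcon^*(0)=0$, $\strcon^*(1)=1$, so that conditional on forwarding the aggregate action coincides with $\scon$. Writing all probabilities over the forwarded sub-population (whose prior on $X=0$ is $\qfunc(\mixstrfil)$), this lets me identify $(1-\tilde{\pi}_0)\,\qfunc(\mixstrfil)=\Pr[X=0,\acon=1]$, $(1-\tilde{\pi}_1)(1-\qfunc(\mixstrfil))=\Pr[X=1,\acon=1]$, and $\tilde{\pi}_1(1-\qfunc(\mixstrfil))=\Pr[X=1,\acon=0]$.

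The key step is to show that the two induced posteriors $\gamma_a:=\Pr[X=0\mid \acon=a]$, $a\in\bin$, depend only on the payoff parameters $b,c_1,c_2,\infoL$ and not on $\qfunc(\mixstrfil)$. I would get this directly from Lemma~\ref{lem:cbr}: setting $B=e^{b/\infoL}$, $C_1=e^{-c_1/\infoL}$, $C_2=e^{-c_2/\infoL}$, the explicit formula for $\widetilde{\mathcal{P}}_c$ is \emph{affine} in $\qfunc(\mixstrfil)$, say $\mathcal{P}_c=\alpha\,\qfunc(\mixstrfil)+\beta$, with constant term $\beta=1/(1-C_2)$. Substituting this into the denominators $(1-\mathcal{P}_c)+\mathcal{P}_cC_2$ and $(1-\mathcal{P}_c)C_1+\mathcal{P}_cB$ appearing in $\tilde{\pi}_0,\tilde{\pi}_1$, the choice $\beta=1/(1-C_2)$ forces the constant part of each denominator to cancel, leaving the first denominator proportional to $\qfunc(\mixstrfil)$ and the second proportional to $1-\qfunc(\mixstrfil)$. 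Reading off $\gamma_0$ and $\gamma_1$ as the corresponding ratios then shows both are free of $\qfunc(\mixstrfil)$ (concretely $\gamma_0=1/(\alpha(C_2-1))$ and $\gamma_1=C_2/(\alpha(C_2-1))$); this is exactly the standard invariance of rational-inattention solutions when the chosen action set is fixed.

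With the posteriors fixed, the conclusion is immediate. By Bayes-plausibility $\qfunc(\mixstrfil)=\mathcal{P}_c\gamma_1+(1-\mathcal{P}_c)\gamma_0$, so $\mathcal{P}_c=(\qfunc(\mixstrfil)-\gamma_0)/(\gamma_1-\gamma_0)$ is affine in $\qfunc(\mixstrfil)$ with coefficients built only from the payoff parameters. Finally I rewrite the three quantities as $\mathcal{P}_c\gamma_1$, $\mathcal{P}_c(1-\gamma_1)$, and $(1-\mathcal{P}_c)(1-\gamma_0)$: each is a product of a factor affine in $\qfunc(\mixstrfil)$ with a constant, hence linear in $\qfunc(\mixstrfil)$, and since $\gamma_0,\gamma_1$ and the coefficients of $\mathcal{P}_c$ depend only on $b,c_1,c_2,\infoL$, there is no residual dependence on $\pi_0,\pi_1$ beyond the channel $\qfunc(\mixstrfil)$.

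I expect the only real obstacle to be the second step, the prior-invariance of $\gamma_0,\gamma_1$. In this model it reduces to the single algebraic observation that the affine map $\qfunc(\mixstrfil)\mapsto\mathcal{P}_c$ has constant term $1/(1-C_2)$, which produces the denominator cancellation; everything before and after that computation is routine bookkeeping via Lemma~\ref{lem:cbr}.
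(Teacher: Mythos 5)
Your proof is correct, and it takes a genuinely different route from the paper's. The paper's own proof is brute force: it plugs $\mathcal{P}_c$ into the formulas of Lemma~\ref{lem:cbr} and, ``after tedious algebra,'' exhibits each of the three quantities as an explicit affine function of $\qfunc(\mixstrfil)$ with coefficients depending only on $b,c_1,c_2,\lambda$. You instead factor the computation through structure: read the three quantities as joint state--signal probabilities, prove the two induced posteriors are prior-invariant, and then obtain affinity of $\mathcal{P}_c$ in the prior from Bayes plausibility. I checked the step everything hinges on, and it is right: writing $\widetilde{\mathcal{P}}_c=\alpha\,\qfunc(\mixstrfil)+\beta$ (in your notation $B=e^{b/\lambda}$, $C_1=e^{-c_1/\lambda}$, $C_2=e^{-c_2/\lambda}$), the constant term is indeed $\beta=1/(1-C_2)$, so the denominators $(1-\mathcal{P}_c)+\mathcal{P}_cC_2$ and $(1-\mathcal{P}_c)C_1+\mathcal{P}_cB$ collapse to multiples of $\qfunc(\mixstrfil)$ and $1-\qfunc(\mixstrfil)$ respectively, giving prior-free posteriors $\gamma_0=1/(\alpha(C_2-1))$ and $\gamma_1=C_2/(\alpha(C_2-1))$; you also correctly restrict to the interior regime $(\qL,\qH)$, where the clipping in $\mathcal{P}_c$ is inactive so that $\mathcal{P}_c=\widetilde{\mathcal{P}}_c$ is genuinely affine. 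What your route buys: it explains \emph{why} linearity holds (fixed posteriors plus the martingale property of beliefs) rather than merely verifying it, and in fact your posteriors are exactly $\gamma_0=\qH$ and $\gamma_1=\qL$, which ties this lemma to Proposition~\ref{prop:goalposts} and Lemma~\ref{lem:terms}---an identification worth stating explicitly. What the paper's route buys: the explicit closed-form coefficients that are substituted directly into later arguments (e.g., Proposition~\ref{prop:semiebm}). As a side benefit, your derivation exposes a sign typo in the paper's proof: its first displayed expression should read $\frac{e^{-c_2/\lambda}}{1-e^{-c_2/\lambda}}\left(\qH-\qfunc(\mixstrfil)\right)$ rather than $\frac{e^{-c_2/\lambda}}{1-e^{-c_2/\lambda}}\left(\qfunc(\mixstrfil)-\qH\right)$, since $(1-\tilde{\pi}_0)\,\qfunc(\mixstrfil)\geq 0$ on this regime.
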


   \begin{proof}
     After tedious algebra by plugging in for $\mathcal{P}_c$ in
     equation \ref{lem:cbr}, it can be shown that
\begin{align*}
(1-\tilde{\pi}_0(q(\mixstrfil))q(\mixstrfil)
    &= \frac{e^{-c_2/\lambda}}{1-e^{-c_2/\lambda}}(q(\mixstrfil)-\qH),\\
\tilde{\pi}_1(q(\mixstrfil))(1-q(\mixstrfil))
    &= \frac{e^{-c_1/\lambda}}{e^{b/\lambda}-e^{-c_1/\lambda}}(q(\mixstrfil)-\qL),
\text{ and } \\
\tilde{\pi}_1(q(\mixstrfil))(1-q(\mixstrfil))
    &= \frac{e^{b/\lambda}}{e^{b/\lambda}-e^{-c_1/\lambda}}(\qH-q(\mixstrfil)),
\end{align*}
% \begin{prewebconf}
%      \begin{align*}
%        (1-\tilde{\pi}_0(q(\mixstrfil))q(\mixstrfil) = \frac{e^{-c_2/\lambda}}{1-e^{-c_2/\lambda}}(q(\mixstrfil)-\qH) \nonumber \\
%        \tilde{\pi}_1(q(\mixstrfil))(1-q(\mixstrfil)) = \frac{e^{-c_1/\lambda}}{e^{b/\lambda}-e^{-c_1/\lambda}}(q(\mixstrfil)-\qL) \nonumber \\
%        \tilde{\pi}_1(q(\mixstrfil))(1-q(\mixstrfil)) = \frac{e^{b/\lambda}}{e^{b/\lambda}-e^{-c_1/\lambda}}(\qH-q(\mixstrfil)) \nonumber
%      \end{align*}
% \end{prewebconf}
\noindent which are all linear in $q(\mixstrfil)$; $\pi_0$
     and $\pi_1$ only enter via $q(\mixstrfil)$.
   \end{proof}

 \begin{prop}\label{prop:nomixsemi}
$V_i(\mixsigma)\leq \max\rbr{V_i(\difprof), V_i(\fwdprof)}$
for any mixed equilibrium $\mixsigma$ and any
    $i\in\{\fil,\con\}$. Furthermore, if $\max(V_i(\difprof),
    V_i(\fwdprof))\neq V_i(\blkprof)$ the inequality is strict for any
    non-degenerate $\mixsigma$.
% For any mixed strategy \emph{equilibrium}, $\mixsigma$,
%  $V_i(\mixsigma)$ is upper bounded by $\max(V_i(\difprof),
%  V_i(\fwdprof)$ for $i\in\{\texttt{f}, \texttt{c}\}$.
  \end{prop}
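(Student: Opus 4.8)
\emph{Reparameterization.} By Proposition~\ref{prop:unreason} it suffices to consider \emph{reasonable} filter strategies (those forwarding ``likely legitimate'' content at least as often as ``likely malicious'' content), i.e.\ mixtures with blocking probabilities $\gamma_0\ge\gamma_1$. The plan is to encode such a strategy by the two masses $a=\Pr\sbr{\afil=1,\,X=0}$ and $d=\Pr\sbr{\afil=1,\,X=1}$. As $(\gamma_0,\gamma_1)$ ranges over $\{\gamma_0\ge\gamma_1\}\cap[0,1]^2$, the pair $(a,d)$ traces out the triangle $T$ with vertices $\fwdprof=(q,\,1-q)$, $\difprof=(q(1-\pi_0),\,(1-q)(1-\pi_1))$ and $\blkprof=(0,0)$. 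Here the forwarded mass is $\beta=a+d$ and the belief driving the consumer's best response (Lemma~\ref{lem:belief-BR}) is $\qfunc(\mixstrfil)=a/(a+d)=:\qhat$.

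\emph{Payoffs as perspectives.} I would extend $\Vind(\qhat)$ from Lemma~\ref{lem:terms} to all of $[0,1]$ as the consumer's optimal per-forwarded value (accept-all value for $\qhat\le\qL$, ignore-all value $-(1-\qhat)c_1$ for $\qhat\ge\qH$), and write $\bar u(\qhat)$ for the filter's per-forwarded \emph{action} payoff under the consumer's best response; then the per-content payoffs are
\begin{align*}
\vcon(\mixsigma)&=\beta\,\Vind(\qhat)-c_1\sbr{(1-q)-d},\\
u(\mixsigma)&=\beta\,\bar u(\qhat)-c_1\sbr{(1-q)-d},
\end{align*}
since blocked or ignored legitimate content costs $c_1$ while rejected malicious content is free. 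The crux is that both $\Vind$ and $\bar u$ are convex in the scalar $\qhat$. For $\Vind$ this is the standard convexity of a rational-inattention value function in the prior; concretely it is linear on $\qhat\le\qL$ and on $\qhat\ge\qH$, equals $\widehat{u}_1(\qhat)+\infoL\,\KL{\qhat}{\qL}$ on $(\qL,\qH)$, and is $C^1$ at both thresholds (the derivative of $\KL{\cdot}{\qL}$ vanishes at $\qL$, of $\KL{\cdot}{\qH}$ at $\qH$), hence convex throughout. For $\bar u$, Lemma~\ref{lem:linq} shows the acceptance masses are affine in $\qhat$ on $(\qL,\qH)$, so $\bar u$ is affine there; moreover $\bar u=\Vind+C$ with $C\ge0$ the information cost, which vanishes exactly at $\qL,\qH$ (Proposition~\ref{prop:goalposts}). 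Thus $\bar u$ coincides with $\Vind$ outside $(\qL,\qH)$, and its interior slope is pinned between the slopes $-(b+c_2)$ and $c_1$ of the two trivial regions, giving convex kinks at $\qL,\qH$; so $\bar u$ is convex on $[0,1]$.

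\emph{Reduction to vertices.} Since $(a,d)\mapsto(a+d)\,\phi\!\left(a/(a+d)\right)$ is the perspective of $\phi$, it is jointly convex whenever $\phi$ is convex, and adding the linear term $-c_1[(1-q)-d]$ preserves convexity. Hence $\vcon$ and $u$, and therefore $\Vcon$ and $\Vfil$, are convex on $T$, so each is maximized at a vertex of $T$. It remains to note $V_i(\blkprof)\le V_i(\fwdprof)$ for both players: under $\fwdprof$ the consumer may ignore all content for value $-c_1$, so $\Vcon(\fwdprof)\ge-c_1=\Vcon(\blkprof)$; and $\bar u(q)\ge\Vind(q)\ge-(1-q)c_1$ gives $\Vfil(\fwdprof)\ge-c_1=\Vfil(\blkprof)$. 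Therefore the vertex maximum is attained at $\difprof$ or $\fwdprof$, which is the desired weak inequality.

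\emph{Strictness and the main obstacle.} For a non-degenerate $\mixsigma$, write its image as $P=\alpha_f\fwdprof+\alpha_d\difprof+\alpha_b\blkprof$ with $P$ not a vertex, and set $M_i=\max(V_i(\difprof),V_i(\fwdprof))$; assume $M_i\neq V_i(\blkprof)$, so $V_i(\blkprof)<M_i$. If $\alpha_b>0$, convexity gives $V_i(P)\le\sum_v\alpha_v V_i(v)<M_i$. If $\alpha_b=0$, then $P$ lies interior to the edge $\fwdprof$--$\difprof$; when $V_i(\fwdprof)\neq V_i(\difprof)$ the same averaging is strict, and when they are equal I would use that this edge is \emph{not} radial (it avoids $\blkprof$), so along it the payoff is a convex, non-affine function equal at both endpoints, hence strictly below $M_i$ on the interior---the non-affineness coming, for the consumer, from strict convexity of $\Vind$ on the portion of the edge with $\qhat\in(\qL,\qH)$, and for the filter from the strict convex kink of $\bar u$ where $\qhat$ crosses $\qH$ (or $\qL$). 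This strictness step is the main obstacle: the payoffs are only \emph{weakly} convex---linear along every ray from $\blkprof$, and, for the filter, even affine throughout the examination regime---so strictness cannot follow from convexity alone. It must be extracted from strict convexity of $\Vind$ and the strict convex kinks of $\bar u$ inside the examination band $(\qL,\qH)$, which is exactly why the hypothesis $M_i\neq V_i(\blkprof)$ (ruling out the flat radial edges) is required.
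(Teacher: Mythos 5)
Your proof is correct (within the section's standing regime $\qL<\qdif<\qH<q$) but takes a genuinely different route from the paper's. The paper argues asymmetrically: for the consumer it simply cites the aligned-utilities machinery (Theorem~\ref{thm:dvdp}, resting on Proposition~\ref{prop:aligned-best-eq}), since the consumer's utility is identical in the two variants; for the filter it uses an equilibrium-indifference reduction (a filter that mixes at both information sets is indifferent to blocking outright, so its payoff collapses to $V_i(\blkprof)$), then invokes Lemma~\ref{lem:linq} to make $\Vfil$ linear in the one remaining mixing probability inside the examination band, and finishes with a case analysis for $\qfunc(\mixstrfil)\leq\qL$. You instead give a single unified convex-analysis argument: in the forwarded-mass coordinates $(a,d)$ both players' payoffs are perspectives of convex scalar value functions ($\Vind$ and $\bar u$) plus a linear term, hence convex on the strategy triangle and maximized at a vertex. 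This buys three things the paper's proof does not: the weak bound holds for every consumer-optimal profile rather than only for equilibria (no indifference conditions are used); both players are handled by one mechanism; and the strictness step is localized, making visible exactly where it can fail (edges along which the payoff is affine). What the paper's route buys is brevity given machinery already in place, since the consumer half is a one-line citation and the indifference trick avoids dealing with the full two-dimensional mixing set. Both proofs share the same computational core (Lemmas~\ref{lem:terms} and~\ref{lem:linq}) and the same knife-edge caveat: outside the standing regime (e.g., $q\leq\qL$ with $\deltaU=0$) the payoff is constant along the $\fwdprof$--$\difprof$ edge and strictness genuinely fails, so your non-affineness claim correctly relies on that edge meeting $(\qL,\qH)$ or crossing $\qH$, which the standing assumption guarantees.
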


  \begin{proof}[Proof of Proposition \ref{prop:nomixsemi}]
    for the consumer, the proof follows directly from theorem
    \ref{thm:dvdp}.  % Now consider the filter.  Let $\mixsigma$ be a
    % consumer optimal profile in which the filter mixes according to
    % $\mixstrfil$.  By lemma \ref{lem:cbr} the consumer's best response
    % to any mixed strategy is unique and thus it's strategy is pure.
    It suffices to only consider the case where $\qdif<q_H$

   %For it to be an equilibrium for the filter to
   % randomize at an information set, it must be indifferent between
   % forwarding content and blocking.  Therefore, any equilibrium
   % profile in which the filter blocks with positive probability at
   % both information sets achievs a payoff of $\Vfil(\blkprof)$ which
   % is a lower bound of $\Vfil(\difprof)$ and $\Vfil(\fwdprof)$.  The
   % profile where the filter blocks after reciving the signal that
   % content is clean and forwards after reciving the malicious signal
   % is a never-best response and is thus dominated.
   Any profile in which the filter blocks with positive probability at
   both information sets has payoffs bounded by $V(\blkprof)$ % since
   % the filter must be indifferent to always indifferent to blocking.
   % Since $V(\blkprof)$ is a lower bound for $V(\difprof)$,
   therefore, it is only necessary to consider the case where the
   filter randomizes at one of its information sets and it sufficies to
   show that
   $\Vfil(\mixsigma)\leq max(\Vfil(\fwdprof), \Vfil(\difprof))$ when
   the filter always forwards content upon receiving a clean signal
   and randomizes otherwise.

   Let $\tilde{\pi}_i(q(\mixstrfil))$ be the consumer's unique optimal
   information strategy for filter strategy $\mixstrfil$ and
   $\beta_{\mixstrfil}$ be the unconditional probability the filter
   forwards.  Then: %The case where $q(\mixstrfil)>q_H$, is trivial since
   %then payoffs are $V_{\texttt{f}}(\blkprof)$, which is a lower bound
   %for $\Vfil(\difprof)$ and $\Vfil(\fwdprof)$.  Consider the case
   %We only need to consider the case where $\qL\leq q(\mixprof)<\qH$.
   \begin{align*}
     &\Vfil(\mixstrfil)
        = -c_1(1-q)(\pi_1\gamma + (1-\pi_1\gamma_1)) \;\;  \\
     &\quad+\beta_{\gamma}\big( q(\mixstrfil)(1-\tilde{\pi}_0(q(\mixstrfil)))(-c_2)\\
        &\qquad +(1-q(\mixstrfil))(\tilde{\pi}_1(q(\mixstrfil))(-c_1) + (1-\tilde{\pi}_1(q(\mixstrfil)))b)\big).
   \end{align*}
   Lemma \ref{lem:linq} establishes that the term inside the large
   parenthesis is linear.  Therefore, $\Vfil(\mixstrfil)$ is maximized
   either when $\gamma_0=1$ or $\gamma_0=0$ and $\gamma_1=0$ or when
   $q(\mixstrfil)=\qL$ (the other two pure profiles are weakly
   dominated as well as the profile where $q(\mixstrfil)=\qH$).
   Therefore, it suffices to show that
   $max(\Vfil(\difprof), \Vfil(\fwdprof))>\Vfil(\mixprof)$ when
   $q(\strdif)\leq q(\mixstrfil)\leq \qL$.  Suppose
   $q(\mixstrfil)<\qL$ and hold the consumer's strategy constant at
   accepting all content.  It is then trivial to show that the
   filter's payoffs are linear in $\gamma$.  If $V(\mixprof)$ is
   increasing in $\gamma$, then  $\Vfil(\mixprof)<\Vfil(\difprof)$.  If
   $V(\mixprof)$ is decreasing in $\gamma$, the
   $\Vfil(\fwdprof)>\Vfil(\mixprof)$, thus completing the proof.
%   which implies a maximum is
%   achieved when $\gamma=0$ or
%   $\gamma=1$.
   % Cut for space!
   % If it is increasing in $\gamma$,
   % then $V(\mixprof)$ is maximized when $\gamma=1$, which is
   % equivalent to the filter playing $\strdif$ and thus
   % $\Vfil(\difprof)>\Vfil(\mixprof)$ for any purely mixed profile.  If
   % the filter's payoff is decreasing in $\gamma$, the filter is better
   % off when it forwards all content and the consumer accepts all
   % content and incurs no information costs. However, any consumer
   % optimal profile when the filter forwards all content must have
   % action payoffs lower bounded by the profile in which the consumer
   % accepts all content and incurs no information costs, else it
   % wouldn't be consumer optimal.  Therefore,
   % $\Vfil(\fwdprof) >  V(\mixprof)$ for any purely mixed profile.

% and monotone in $q(\mixstrfil)$  and $\beta \times q(\mixstrfil)$
% is linear in $\gamma$.  Therefore $\Vfil(\mixstrfil)$ is optmized
% when $\gamma_0, \gamma_1 \in \{0,1\}$ which corresponds to any of the
% four pure strategies and since $V(\fwdprof)$ and $V(\difprof)$ are
% lower bounded by the other two profiles $\Vfil(\mixstrfil)$ is upper
% bounded by the maximum of $\Vfil(\fwdprof)$ and $\Vfil(\difprof)$.
 \end{proof}
 %\end{prewebconf}

\begin{proposition}[Existence of Equilibria]{\label{prop:semiebm}}
If $\qL<\qdif<\qH$, then $\difprof$ is an equilibrium if and only
       if $\frac{\pi_0(1-\pi_1)}{\pi_1(1-\pi_0)}>\Lambda$
   % \begin{enumerate}
   %   \item .
   %     \item If $\qdif<\qL$, then $\difprof$ is an equilibrium if and
   %     only if $\frac{\pi_0}{\pi_1}>\frac{1-q}{q}\frac{b+c_1}{c_2}$.
   %    \item If $\qL<q<\qH$, then $\fwdprof$ is an equilibrium if and
   %      only if $\frac{\pi_0}{\pi_1} <\Lambda$
   %     \item If $q<\qL$, then $\fwdprof$ is an equilibrium if and only
   %       if $\frac{(1-\pi1)(\pi_0)}{(1-\pi_0)(\pi_1)}<\frac{1-q}{q}\frac{b+c_1}{c_2}$
   %     \end{enumerate}
 \end{proposition}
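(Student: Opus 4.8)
The plan is to exploit that $\difprof$ already has the consumer best-responding to $\strdif$, so being an equilibrium reduces to the filter having no profitable deviation \emph{holding the consumer's strategy fixed}, and then to collapse that to a single scalar comparison. First I would record the consumer's behavior: since $\qdif\in(\qL,\qH)$, Proposition~\ref{prop:goalposts} and Lemma~\ref{lem:cbr} give that the consumer examines forwarded content, accepts iff $\scon=1$, and uses the optimal probabilities $\pot,\ptt$. The crucial observation is that, with this consumer strategy held fixed, the filter's action payoff $u$ is \emph{affine} in its two forwarding probabilities $\gamma_0=\Pr[\afil=1\mid\sfil=0]$ and $\gamma_1=\Pr[\afil=1\mid\sfil=1]$, because the acceptance rates $1-\pot,1-\ptt$ are now constants independent of $\gamma$. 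Hence the filter's best response sits at a vertex of $[0,1]^2$, i.e.\ among $\strblk,\strfwd,\strdif$ (the ``unreasonable'' vertex is discarded via Proposition~\ref{prop:unreason}). So $\difprof$ is an equilibrium iff $u(\difprof)\geq u(\strfwd)$ and $u(\difprof)\geq u(\strblk)$, with deviation payoffs evaluated under the fixed consumer strategy.

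Next I would compute the decisive comparison. Moving from $\strfwd$ to $\strdif$ blocks the $\pi_0$-fraction of malicious content (gaining $c_2$ on each piece the consumer would have accepted) but also blocks the $\pi_1$-fraction of legitimate content (losing $b+c_1$ on each), leaving the forwarded-content behavior unchanged, which gives
\[
u(\difprof)-u(\strfwd)=q\,\pi_0 c_2(1-\pot)-(1-q)\,\pi_1(1-\ptt)(b+c_1).
\]
Thus the no-deviation condition is $\frac{q(1-\pot)}{(1-q)(1-\ptt)}\cdot\frac{\pi_0}{\pi_1}\cdot\frac{c_2}{b+c_1}\geq 1$.

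The main work is converting the ratio $\tfrac{q(1-\pot)}{(1-q)(1-\ptt)}$ into the cost threshold $\qL$. Here I would use the invariant-posterior feature of binary rational inattention, implicit in Lemma~\ref{lem:cbr}: conditional on accepting, the posterior is pinned at $\Pr[X=0\mid\scon=1]=\qL$ (the accept-side threshold of \eqref{eq:HL-defn}). Writing this out gives $\frac{\qL}{1-\qL}=\frac{\qdif(1-\pot)}{(1-\qdif)(1-\ptt)}$, and combining with $\frac{\qdif}{1-\qdif}=\frac{q(1-\pi_0)}{(1-q)(1-\pi_1)}$ yields $\frac{q(1-\pot)}{(1-q)(1-\ptt)}=\frac{\qL}{1-\qL}\cdot\frac{1-\pi_1}{1-\pi_0}$. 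Substituting collapses the condition to exactly $\frac{\pi_0(1-\pi_1)}{\pi_1(1-\pi_0)}\geq \frac{b+c_1}{c_2}\frac{1-\qL}{\qL}=\Lambda$, i.e.\ $\mathcal{Q}(\pi_0,\pi_1)\geq\Lambda$, which is the claimed comparison.

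Finally I would dispatch the $\strblk$ deviation. The same substitution gives $u(\difprof)-u(\strblk)=(1-q)(1-\pi_1)(1-\ptt)\,c_2\frac{\qL}{1-\qL}(\Lambda-1)$, nonnegative precisely when $\Lambda\geq 1$; and $\Lambda\geq 1$ always holds here, since for beliefs below $\qL$ the consumer accepts all content (Proposition~\ref{prop:goalposts}), so at $\qL$ accepting all is at least as good as ignoring all, forcing $\qL\leq\frac{b+c_1}{b+c_1+c_2}$. Hence the $\strblk$ constraint never binds, and $\difprof$ is an equilibrium iff $\mathcal{Q}(\pi_0,\pi_1)\geq\Lambda$; the stated strict inequality is the generic case, with $\mathcal{Q}=\Lambda$ a knife-edge where the filter is exactly indifferent between $\strdif$ and $\strfwd$. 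I expect the invariant-posterior step to be the main obstacle, since it is what makes $\qL$ (and therefore $\Lambda$) emerge; the rest is the linearity/vertex argument and routine bookkeeping.
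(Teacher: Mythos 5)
Your proof is correct and follows essentially the same route as the paper's: the paper likewise reduces equilibrium of $\difprof$ to the single condition that the filter not gain by deviating to $\strfwd$ while the consumer's best response to $\strdif$ is held fixed, and then substitutes the optimal $\pot,\ptt$ from Lemma~\ref{lem:cbr}, invoking the identity $\frac{e^{b/\lambda}(1-e^{-c_2/\lambda})}{e^{-c_2/\lambda}(e^{b/\lambda}-e^{-c_1/\lambda})}=\frac{1-\qL}{\qL}$, which is exactly your invariant-posterior identity $\frac{\qdif(1-\pot)}{(1-\qdif)(1-\ptt)}=\frac{\qL}{1-\qL}$ in disguise. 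The only difference is that you are more thorough: the paper asserts without argument that the deviation to $\strfwd$ is the necessary and sufficient one, whereas you justify this via linearity of the filter's semi-aligned payoff in its mixing probabilities (forcing best responses to the vertices $\strblk,\strfwd,\strdif$) and verify separately that the $\strblk$ deviation never binds because $\Lambda\geq 1$; you also correctly flag that at the knife-edge $\mathcal{Q}(\pi_0,\pi_1)=\Lambda$ the filter is indifferent, so the proposition's strict inequality should really be weak.
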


 \begin{proof}
The necessary and sufficient condition is that the filter does not have an
incentive to forward all content, holding the consumer's strategy
at its best response to $\difprof$.  This is
given by
\begin{align*}
  (1-q)\Vfil(\difprof)
    &>-c_2 q(1-\tilde{\pi}_0(q(\strdif))) \\
    &\quad -c_1(1-q)\tilde{\pi}_1(q(\strdif)) \\
    &\quad +b(1-q)(1-\tilde{\pi}_1(q(\strdif))).
  \end{align*}
  Substituting in the value for $\tilde{\pi}_0(q(\strdif))$ and
  $\tilde{\pi}_1(q(\strdif))$ and noting that
  \[ \frac{e^{b/\lambda}(1-e^{-c_2/\lambda})}{e^{-c_2/\lambda}
    (e^{b/\lambda}-e^{-c_1/\lambda})}=\frac{1-\qL}{\qL}\]
  yields the
  claimed result
\end{proof}

\begin{prewebconf}

  \begin{proof}[Proof of Proposition \ref{prop:semi-inefficiency}]
   Since $\qdif<\qH<q$, $V_i(\difprof)>V_i(\fwdprof)=V_i(\blkprof)$
   for $i \in \{\texttt{f}, \texttt{c}\}$.
   Furthermore, by lemma \ref{prop:nomixsemi},
   $V_i(\difprof)>V_i(\mixprof)$ for any purely mixed equilibrium.  This
   proves that $\difprof$ Pareto dominates any equilibrium.  Finally,
   proposition \ref{prop-semiebm}, establishes that existence of
   $\difprof$ as an equilibrium.
\end{proof}

\end{prewebconf}

 \begin{proof}[Proof of Theorem \ref{thm:semi-inef-gen}]
   Proposition  \ref{prop:semiebm} establishes that when
   $\mathcal{Q}(\pi_0, \pi_1)<\Lambda$, $\difprof$ is not an equilibrium but
   Proposition \ref{prop:nomixsemi} establishes that $\difprof$
   Pareto dominates any other equilibrium.  Therefore, when
   $\mathcal{Q}(\pi_0, \pi_1)<\Lambda$, any equilibrium is inefficient.
\end{proof}

 \begin{proof}[Proof of Theorem \ref{thm:misaligned-escape}]
Pareto efficiencyy follows from proposition \ref{prop:nomixsemi}.  The
positive VoTC follows by theorem \ref{thm:semi-VoT}.
   % % Let $V_i^*$ be equilibrium payoffs unerpayoffs when the filter's signal is
   % % generated by $\pi_0^{\prime}$ and $\pi_1^{\prime}$.
   % Since $\qdif<\qH<q$,
   % $V_i^{\prime}(\difprof)>V_i^{\prime}(\fwdprof)=V_i^{\prime}(\blkprof)$,
   % and by \ref{prop:nomixsemi}, $\difprof$ is the unique Pareto
   % dominant equilibrium profile at $\pi_0^{\prime}$ and
   % $\pi_1^{\prime}$.  Furthermore, at $\pi_0$ and $\pi_1$ equilibrium
   % payoffs are bounded by $V(\difprof)$  by lemma
   % \ref{prop:nomixsemi}.
   % By theorem \ref{thm:semi-VoT}, $V(\difprof)$ is strictly increasing under
   % $\difprof$.  Therefore $V_i^{\prime}(\difprof)>V_(\difprof)$.  Then
   % trivially, as $\pi^{\prime}_0\rightarrow 1$ and
   % $\pi^{\prime}_1 \rightarrow 0$, proposition \ref{prop:semiebm}
   % guarantees $\difprof$ to be the unique Pareto dominant equilibrium with
   % payoffs higher than any equilibriuim at $\pi_0$ and $\pi_1$.
   \end{proof}

 \begin{proof}[Proof of Theorem \ref{thm:semi-VoT}]

   The consumer \vot follows directly from theorem \ref{thm:dvdp}.
   For the filter, note that
    \begin{align*}
      V(\difprof)=  -c_1(1-q)(\pi_1) + \beta\big(\qdif(1-\tilde{\pi}_0(\qdif))(-c_2) + \nonumber \\
      (1-\qdif)(\tilde{\pi}_1(\qdif)(-c_1) + (1-\tilde{\pi}_1(\qdif))b)\big) \nonumber
    \end{align*}
    where lemma \ref{lem:linq} says that everything inside the
    parenthesis is linear in $\qdif$ and since
    $\qdif = \frac{(1-\pi_0)q}{\beta}$, the entire expression is linear
    in $\pi_0$ and $\pi_1$.  Taking derivatives is then straightforward.% with respect to
    % $\pi_0$, it is
    % straightforward to verify all terms are positive and thus the
    % entire derivative is positive.  A parallel argument holds for
    % $\pi_1$.
  \end{proof}

\subsection*{From Section~\ref{sec:endog}: Endogenous Attacker\vspace{2mm}}

  \begin{lemma}
    \label{lem:qql}
    Let $\mixstr $ be a consumer optimal profile for some filter
    profile $\mixstrfil$.  Let $\rho_0^{-1}(x, \mixstr)$ be the value
    of $\rho_0$ such that $q(\rho_0, \mixstr)=x$.  Then The
    attacker's best response satisfies $\rho_0^{-1}(\qL, \mixstr)$.
    \end{lemma}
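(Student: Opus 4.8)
The plan is to reduce the attacker's payoff to a one-dimensional function of the induced posterior belief $\qfunc(\rho_0,\mixstrfil)$ and show this function is single-peaked with peak exactly at $\qfunc(\rho_0,\mixstrfil)=\qL$. First I would fix the filter's mixed action strategy $\mixstrfil\neq\strblk$ and write $p_x:=\Pr\sbr{\afil=1\mid X=x}$ for $x\in\bin$; these forwarding probabilities are determined by $\mixstrfil$ and the fixed signal law and do \emph{not} depend on $\rho_0$. Bayes' rule gives $\qfunc(\rho_0,\mixstrfil)=\frac{qp_0}{qp_0+(1-q)p_1}$ with $q=\rho_0/(1+\rho_0)$, from which $\rho_0=\frac{p_1}{p_0}\cdot\frac{\qfunc(\rho_0,\mixstrfil)}{1-\qfunc(\rho_0,\mixstrfil)}$ is a strictly increasing bijection from $\rho_0\in(0,\infty)$ onto $\qfunc\in(0,1)$ (assuming $p_0>0$; otherwise $\Vatt\equiv 0$ and the claim is vacuous). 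Hence maximizing $\Vatt$ over $\rho_0$ is equivalent to maximizing it over $\qfunc:=\qfunc(\rho_0,\mixstrfil)\in(0,1)$.

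Since $\mixstr$ is consumer-optimal, I would split into the three regimes of Proposition~\ref{prop:goalposts}. When $\qfunc\le\qL$ the consumer accepts all forwarded content, so by \eqref{eq:endog-Va} (with $\E[Y]=\rho_0$) we get $\Vatt=\rho_0 p_0$, strictly increasing in $\rho_0$ and hence in $\qfunc$. When $\qfunc\ge\qH$ the consumer ignores all content, so $\Vatt=0$. The substance is the middle regime $\qfunc\in(\qL,\qH)$, where by Lemma~\ref{lem:cbr} the optimal action rule is $\strcon(\scon)=\scon$, giving $\Pr\sbr{\acon=1\mid X=0,\afil=1}=\Pr\sbr{\scon=1\mid X=0}=1-\tilde{\pi}_0(\qfunc)$. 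Substituting $\rho_0 p_0=\frac{p_1\,\qfunc}{1-\qfunc}$ yields
\begin{align*}
\Vatt
 = \rho_0\,p_0\,\rbr{1-\tilde{\pi}_0(\qfunc)}
 = \frac{p_1}{1-\qfunc}\;\qfunc\,\rbr{1-\tilde{\pi}_0(\qfunc)}.
\end{align*}

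The crux --- and the step I expect to be the main obstacle --- is to show $\Vatt$ is strictly \emph{decreasing} in $\qfunc$ on $(\qL,\qH)$ even though $\rho_0$ is increasing there. Here I would invoke Lemma~\ref{lem:linq}, which states that $\qfunc\,\rbr{1-\tilde{\pi}_0(\qfunc)}$ is affine in $\qfunc$; being a probability it is nonnegative on $(\qL,\qH)$ and vanishes at $\qfunc=\qH$, so it equals $\frac{e^{-c_2/\infoL}}{1-e^{-c_2/\infoL}}\rbr{\qH-\qfunc}=:a\qfunc+c$. Then $\Vatt=\frac{p_1(a\qfunc+c)}{1-\qfunc}$ and direct differentiation gives $\frac{d}{d\qfunc}\Vatt=\frac{p_1(a+c)}{(1-\qfunc)^2}$, whose sign is that of $a+c$. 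But $a+c$ is the affine expression extrapolated to $\qfunc=1$, namely $\frac{e^{-c_2/\infoL}}{1-e^{-c_2/\infoL}}\rbr{\qH-1}<0$ because $\qH<1$. Hence $\Vatt$ is strictly decreasing on $(\qL,\qH)$.

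Finally I would assemble the pieces. The three regime formulas agree at the boundaries (at $\qfunc=\qL$ one checks $1-\tilde{\pi}_0(\qL)=1$ using $\qL=\qH e^{-c_2/\infoL}$ from \eqref{eq:HL-defn}, so the middle formula matches $\rho_0 p_0$; at $\qfunc=\qH$ both give $0$), so $\Vatt$ is continuous in $\qfunc$, strictly increasing on $(0,\qL]$, strictly decreasing on $[\qL,\qH)$, and identically $0$ on $[\qH,1)$. Consequently its unique maximizer is $\qfunc=\qL$, i.e.\ the attacker's best response sets $\rho_0=\rho_0^{-1}(\qL,\mixstrfil)$, as claimed.
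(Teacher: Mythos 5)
Your proof is correct, and at its core it follows the same strategy as the paper's: split the attacker's problem according to the consumer's best-response regimes from Proposition~\ref{prop:goalposts}, observe that the payoff is increasing below $\qL$ (consumer accepts everything), zero above $\qH$ (consumer ignores everything), and strictly decreasing on $(\qL,\qH)$, so the peak sits exactly at $\qL$. The differences are in execution, and they favor your version. First, the paper proves the decreasing-payoff step only for $\strfwd$, by plugging the closed form of $\tilde{\pi}_0^*$ from Lemma~\ref{lem:cbr} into $\Vatt=\frac{q}{1-q}(1-\tilde{\pi}_0^*)$, and then disposes of $\difprof$ and of general mixed strategies with the one-line remark that they ``follow symmetrically'' because a mixed strategy is ``equivalent to a game with different values of $\pi_0$ and $\pi_1$.'' You instead reparameterize by the posterior $\qfunc$ together with the forwarding probabilities $p_0,p_1$, so that every filter strategy other than $\strblk$ is handled uniformly in a single computation (including the degenerate $p_0=0$ case the paper never mentions); this turns the paper's asserted extension into an actual argument. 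Second, where the paper grinds through the explicit exponential formula, you invoke the affine property of $\qfunc\,(1-\tilde{\pi}_0(\qfunc))$ from Lemma~\ref{lem:linq} — correctly fixing its sign, since as stated in the paper that lemma's first display should read $\qH-q(\mixstrfil)$ rather than $q(\mixstrfil)-\qH$ — which reduces the monotonicity claim to a one-line derivative whose sign is $\qH-1<0$; this is the same algebra in disguise (the paper's expression is your $\frac{p_1 e^{-c_2/\infoL}}{1-e^{-c_2/\infoL}}\cdot\frac{\qH-\qfunc}{1-\qfunc}$ with $p_1=1$, modulo another sign typo in its denominator), but cleaner and easier to verify. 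Your boundary-matching check at $\qL$ and $\qH$, which the paper omits, also makes the single-peakedness conclusion airtight.
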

    \begin{proof}[Proof of Lemma \ref{lem:qql}]
      First, fix the filter's strategy at $\strfwd$. Note that the attacker
      choosing $\rho_0$ such $q(\rho_0, \strfwd)>\qH$
      is a weakly dominated strategy.  Also choosing $\rho_0$ such
      that $q(\rho_0, \mixstr)<\qL$ is not optimal since the consumer
      accepts all content and thus the attacker can do better by
      increasing $\rho$. For $q(\rho_0, \mixstr)\in(\qL,\qH)$, and
      letting $q(\rho_0) = \frac{\rho_0}{1+\rho_0}$ the
      attacker's payoff is
       \begin{align*}
         \Vatt &= \frac{q(\rho_0)}{1-q(\rho_0)}(1-\tilde{\pi}_0^*(\rho_0, \strfwd)) \nonumber \\
         & = e^{-c_2/\lambda} \big( \frac{1}{1-e^{-c_2/\lambda}} - \frac{e^{-c_1/\lambda}}{(e^{b/\lambda} + e^{-(c_1+c_2)/\lambda})(1-q(\rho_0))}\big)
       \end{align*}
       where $\tilde{\pi}_0^*(q, \strfwd)$ is given in Lemma
       \ref{lem:cbr}. The second fraction is negative but increasing
       in magnitude in $q$ and thus the optimal $\rho_0$ is to set
       $q(\rho_0, \fwdprof)=\qL$.  The case under $\difprof$ follows
       symmetrically, and since any mixed strategy is equivalent to a
       game with different values of $\pi_0$ and $\pi_1$, the result
       holds for any $\mixstrfil$.
     \end{proof}
     \begin{proof}[Proof of Theorem \ref{thm:eattack-zero}]
       Follows from Lemmas \ref{lem:qql} and \ref{lem:cbr}.
     \end{proof}

     \begin{lemma}
       \label{lem:fwdexists}
       There exists an equilibrium in which the filter chooses
       $\fwdprof$ if and only if $\Lambda>\frac{\pi_0}{\pi_1}$.
     \end{lemma}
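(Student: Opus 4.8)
The plan is to reduce the statement to a single deviation check for the filter, by first pinning down the structure that any equilibrium with the filter playing $\strfwd$ must have. If the filter plays $\strfwd$, then $q(\rho_0,\strfwd)=q$, so by Lemma~\ref{lem:qql} the attacker's best response sets $\rho_0$ with $q=\qL$. Since the consumer observes $\rho_0$ (hence $q=\qL$) and in equilibrium believes the filter plays $\strfwd$, its posterior on forwarded content is exactly $\qL$; by Proposition~\ref{prop:goalposts} the consumer then accepts all content and incurs no information cost. The crucial structural feature is that the consumer does \emph{not} observe the filter's strategy, so along any filter deviation the consumer still accepts every forwarded piece. Consequently the filter, taking the attacker's $\rho_0$ and the consumer's accept-all behaviour as fixed, faces a per-signal forward/block decision against an always-accepting consumer.

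Second, I would carry out that deviation check. Relative to blocking $\sfil=0$ content, forwarding it gains $(1-q)\pi_1(b+c_1)$ on legitimate pieces but loses $q\pi_0 c_2$ on malicious pieces; since $\strfwd$ and $\strdif$ differ only in the treatment of $\sfil=0$, the filter (weakly) prefers $\strfwd$ to $\strdif$ exactly when $(1-q)\pi_1(b+c_1)\ge q\pi_0 c_2$. Substituting $q=\qL$ and using the definition $\Lambda=\tfrac{b+c_1}{c_2}\,\tfrac{1-\qL}{\qL}$, this rearranges to $\tfrac{\pi_0}{\pi_1}\le\Lambda$; the remaining alternatives ($\strblk$ and the unreasonable strategy) are dominated and need only a one-line check.

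Both directions then follow. For necessity, an equilibrium with the filter playing $\strfwd$ forces $q=\qL$ and the accept-all consumer, so the filter's no-deviation requirement yields $\tfrac{\pi_0}{\pi_1}\le\Lambda$. For sufficiency, assuming $\Lambda>\tfrac{\pi_0}{\pi_1}$, I would exhibit the profile (attacker at $q=\qL$, filter $\strfwd$, consumer accept-all) and verify the three best responses in turn: the consumer via Proposition~\ref{prop:goalposts}, the attacker via Lemma~\ref{lem:qql}, and the filter via the computation above. The main obstacle is conceptual rather than computational: one must correctly account for the information structure, since the consumer cannot observe the filter, a filter deviation does not move the consumer's action, so the deviation is evaluated against a fixed accept-all consumer rather than a best-responding one. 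I would also take care at the boundary $\tfrac{\pi_0}{\pi_1}=\Lambda$, where the filter is indifferent at its $\sfil=0$ information set, to line up the argument with the strict inequality in the statement.
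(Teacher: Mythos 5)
Your proposal is correct and takes essentially the same approach as the paper's proof: invoke Lemma~\ref{lem:qql} to pin the attacker at $q=\qL$, note the consumer then accepts all forwarded content, and evaluate the filter's deviation from $\strfwd$ to $\strdif$ against this fixed accept-all consumer, with the algebra reducing to $\Lambda$ versus $\pi_0/\pi_1$. Your treatment is somewhat more explicit than the paper's (which only checks the $\strdif$ deviation and leaves the dominated alternatives and the boundary case $\Lambda=\pi_0/\pi_1$ implicit), but the decomposition and the key computation are identical.
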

     \begin{proof}
       By lemma \ref{lem:qql}, if $\fwdprof$ is an equilibrium,
       $q(\rho_0, \fwdprof)=\qL$.  Under $\difprof$, the
       consumer will accept all content, so
       \[ V(\rho_0^{-1}(\qL,
       \fwdprof), \fwdprof)> V(\rho_0^{-1}(\qL,
       \fwdprof), \difprof)\] if and only if
       \[ -\qL c_2 +(1-\qL)b > (1-\qL)(-\pi_1c_1 + (1-\pi_1)b)
       -\qL(1-\pi_0)c_2.\]  Algebra yields the claimed results.
       %%% Write equation and finish
       \end{proof}

     % \begin{lemma}\label{lem:diffexists}
     %   For sufficiently high values of $\pi_0$ and $\pi_1$, the
     %   optimal equilibrium for the filter and consumer has the filter
     %   choosing $\difprof$.
     % \end{lemma}

     % \begin{lemma}
     %   $V(\rho_0^{-1}(\qL, \fwdprof), \fwdprof)$ does not depend on $\pi_0$ or $\pi_1$
     % \end{lemma}
     % \begin{proof}
     %   $V(\rho_0^{-1}(\qL, \fwdprof), \fwdprof) = -\qL c_2 +
     %   (1-\qL)b$ and does not depend on $\pi_0$ and $\pi_1$.
     % \end{proof}

     \begin{lemma}
       \label{lem:fwdbetter}
       \begin{align*}
       V(\rho_0^{-1}(\qL, \fwdprof), \fwdprof)>\displaystyle \max_{(\pi_0, \pi_1)\in
         (0,1)^2}V(\rho_0^{-1}(\qL, \difprof),
         \difprof) \nonumber
         \end{align*}
     \end{lemma}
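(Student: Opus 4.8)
The plan is to evaluate both sides explicitly at the attacker's best response and reduce the whole comparison to a single elementary inequality in the cost parameters. First I would invoke Lemma~\ref{lem:qql}: against either $\fwdprof$ or $\difprof$, the attacker's optimal $\rho_0$ drives the consumer's posterior on \emph{forwarded} content down to exactly $\qL$ (this posterior is $q$ under $\fwdprof$ and $\qdif$ under $\difprof$). By Proposition~\ref{prop:goalposts}, at belief $\qL$ the consumer accepts all forwarded content and incurs zero information cost, so in both profiles the batch utility $V=\Vfil=\Vcon$ is just the expected action payoff of an accept-all consumer. I abbreviate the two attacker-optimal values as $V(\fwdprof)$ and $V(\difprof)$.

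Next I would compute the two payoffs as batch totals. Under $\fwdprof$ the attacker sets $q=\qL$, i.e.\ $\rho_0=\qL/(1-\qL)$, and all content is forwarded and accepted, giving
\[ V(\fwdprof)=b-\tfrac{\qL}{1-\qL}\,c_2. \]
Under $\difprof$ the constraint $\qdif=\qL$, together with the Bayes expression $\qdif=\rho_0(1-\pi_0)/\big(\rho_0(1-\pi_0)+(1-\pi_1)\big)$, forces the expected count of \emph{forwarded} malicious pieces to equal $\tfrac{\qL}{1-\qL}(1-\pi_1)$, independently of $\pi_0$. Adding the contributions of the forwarded and blocked legitimate pieces (the latter costing $c_1$ each) yields the key identity
\[ V(\difprof)=(1-\pi_1)\,V(\fwdprof)-\pi_1 c_1, \]
and hence $V(\fwdprof)-V(\difprof)=\pi_1\big(V(\fwdprof)+c_1\big)$.

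The one substantive step, and the main obstacle, is to show the bracket is positive: $V(\fwdprof)+c_1>0$, equivalently $(b+c_1)(1-\qL)>\qL c_2$, i.e.\ $\qL<\tfrac{b+c_1}{b+c_1+c_2}$ (equivalently $\Lambda>1$). I would prove this directly from the closed form $\qL=\qH\,e^{-c_2/\infoL}$ in \eqref{eq:HL-defn}. Substituting and clearing the (positive) denominators, the inequality reduces, after setting $s=(b+c_1)/\infoL$ and $t=c_2/\infoL$, to $s(e^{t}-1)>t(1-e^{-s})$; this follows from the two elementary bounds $e^{t}-1>t$ and $1-e^{-s}<s$ (both valid for positive arguments), since then $s(e^{t}-1)>st>t(1-e^{-s})$. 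This uses only $c_2>0$ and $b+c_1>0$, which hold in the non-degenerate regime where $\qL<\qH$.

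Finally, since every $(\pi_0,\pi_1)\in(0,1)^2$ has $\pi_1>0$, the identity gives $V(\fwdprof)>V(\difprof)$ strictly at every such point, which is the content of the lemma. I would also note that $V(\difprof)$ is independent of $\pi_0$ and strictly decreasing in $\pi_1$, so its supremum over $(0,1)^2$ equals $V(\fwdprof)$ and is approached only as $\pi_1\to 0$; the supremum is thus never attained, and the pointwise strict inequality is the precise statement intended by the $\max$.
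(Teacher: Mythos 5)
Your proof is correct, and its skeleton coincides with the paper's: both evaluate the two payoffs at the attacker-optimal point (posterior equal to $\qL$, accept-all consumer, zero information cost) and reduce the lemma to the identity
$V(\rho_0^{-1}(\qL,\fwdprof),\fwdprof)-V(\rho_0^{-1}(\qL,\difprof),\difprof)=\pi_1\bigl(b+c_1-\tfrac{\qL}{1-\qL}c_2\bigr)$,
which the paper dispatches as ``simple equation manipulation'' and you derive explicitly via the observation that the constraint $\qdif=\qL$ pins the forwarded malicious mass at $\tfrac{\qL}{1-\qL}(1-\pi_1)$ independently of $\pi_0$. Where you genuinely depart from the paper is in proving the bracket positive. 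The paper uses a revealed-preference argument: since the consumer's best response at belief $\qL$ is to accept all content (Proposition~\ref{prop:goalposts}), accepting must beat ignoring, i.e.\ $-\qL c_2+(1-\qL)b > -(1-\qL)c_1$. You instead verify $\qL<\tfrac{b+c_1}{b+c_1+c_2}$ directly from the closed form $\qL=(e^s-1)/(e^{s+t}-1)$ with $s=(b+c_1)/\infoL$, $t=c_2/\infoL$, reducing it to the elementary chain $s(e^t-1)>st>t(1-e^{-s})$. Each route buys something: the paper's argument is shorter and cost-function-agnostic (it needs only the defining property of $\qL$, not the entropy formula), but read literally it yields only a weak inequality, since optimality of accept-all implies only weak preference over ignore-all; your computation certifies the strict inequality rigorously under the standing non-degeneracy assumptions $c_2>0$, $b+c_1>0$. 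You also correctly flag that the ``$\max$'' in the statement is really a supremum, equal to the left-hand side and approached only as $\pi_1\to 0$ but never attained, so the lemma must be read as the pointwise strict inequality for every $(\pi_0,\pi_1)\in(0,1)^2$ --- a subtlety that the paper's statement and proof pass over silently but that both arguments in fact establish.
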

     \begin{proof}
Simple equation manipulation shows
\begin{align*}
  &V(\rho_0^{-1}(\qL, \fwdprof), \fwdprof) - V(\rho_0^{-1}(\qL, \difprof)\\
    &\quad = \pi_1(b +c_1 -c_2 \frac{\qL}{1-\qL})
  %\label{eq:fwdbetter}
\end{align*}
The right-hand side is positive since consumer's best response when $q(\rho_0, \mixstrfil)=\qL$
implies that
\[-\qL c_2 + (1-\qL)b > -(1-\qL)c_1. \qedhere\]
\end{proof}

     \begin{proof}[Proof of Theorem \ref{thm:eattack}]
       Note that $\frac{\pi_0}{\pi_1}$ and
       $q(\rho_0^{-1}(\qL, \difprof), \fwdprof)$ are both increasing
       in $\pi_0$ and decreasing in $\pi_1$ (toward $\infty$ and $1$
       resp.).  Therefore, for sufficiently high values of $\pi_0$ and
       low values of $\pi_1$,
       $q(\rho_0^{-1}(\qL, \difprof), \fwdprof)>\qH$ and thus
       $(\rho_0^{-1}(\qL, \difprof)$ is the optimal equilibrium for
       the filter and consumer (since otherwise the consumer blocks
       all content).  By lemma \ref{lem:fwdexists}, $\fwdprof$ exists
       only when $\Lambda>\frac{\pi_0}{\pi_1}$.  Let $\pi'_0$ and
       $\pi'_1$ be such that $q(\rho_0^{-1}(\qL, \difprof),
       \fwdprof)>\qH$, and $\Lambda<\frac{\pi_0}{\pi_1}$.  Then
       anytime parameters change such that
       $\frac{\pi_0}{\pi_1}<\Lambda$ to $\pi'_0, \pi'_1$, by lemma
       \ref{lem:fwdbetter}, $\Vopt$ decreases.
       \end{proof}

\subsection*{Technical Details of the ``Unreasonable Strategy''}

\begin{prop}\label{prop:unreason2}
    The ``unreasonable''' strategy cannot be an equilibrium in which the consumer doesn't block all content.  
\end{prop}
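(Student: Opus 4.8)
The plan is to refute the unreasonable strategy $\strfil(\psi)=1-\psi$ as part of an equilibrium by exhibiting a profitable one-shot deviation for the filter, holding the consumer's action and information strategy fixed --- exactly the route taken in the proof of Proposition~\ref{prop:unreason}. First I would record what the unreasonable strategy does: it forwards precisely the content with signal $\sfil=0$, so it forwards malicious content ($X=0$) at rate $\pi_0$ and genuine content ($X=1$) at rate $\pi_1$; since $\pi_0\ge\pi_1$ the forwarded pool is weakly \emph{more} malicious than the prior, i.e.\ $\qfunc(\mixstrfil)\ge q$, where $\mixstrfil$ denotes the unreasonable action strategy. Because the consumer does not block all content, Proposition~\ref{prop:goalposts} gives $\qfunc(\mixstrfil)\le\qH$, so the consumer either accepts all forwarded content or examines it and accepts on signal $\scon=1$; in both cases it accepts malicious content with some probability $a_0>0$ (in the examination case $a_0=1-\tilde{\pi}_0^*>0$ by Lemma~\ref{lem:cbr}, and $a_0=1$ in the accept-all case).

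The deviation I would use is the mixed filter strategy $\pi_1\,\strfwd+(1-\pi_1)\,\strblk$, which forwards both content types at the common rate $\pi_1$. Holding the consumer fixed, the expected action payoff changes only through the malicious arm: the deviation forwards malicious content at rate $\pi_1$ rather than $\pi_0$, while the genuine arm (rate $\pi_1$) is untouched. A direct computation of $u=-q f_0 a_0 c_2+(1-q)\big[f_1 a_1(b+c_1)-c_1\big]$ at $(f_0,f_1)=(\pi_1,\pi_1)$ versus $(\pi_0,\pi_1)$ gives
\[
 u_{\mathrm{dev}}-u_{\mathrm{un}} \;=\; q\,a_0\,c_2\,(\pi_0-\pi_1)\;\ge\;0,
\]
which is strict whenever the filter is informative ($\pi_0>\pi_1$) and misclassifying malicious content is costly ($c_2>0$). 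This already settles the semi-aligned case $\Vfil=u$, where no information cost enters the filter's objective.

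For the aligned case I also need the information-cost term to move the right way. Under the deviation the forwarded pool has posterior exactly $q\le\qfunc(\mixstrfil)$ and is forwarded with ex-ante probability $\pi_1\le q\pi_0+(1-q)\pi_1=\Pr[\afil=1\mid\mixstrfil]$, so both the forwarding frequency and the maliciousness of the examined pool weakly drop. With the consumer's channel held fixed, the realized information cost is $\Pr[\afil=1]\cdot\infoC[\scon;X\mid\afil=1]$, and I would argue it weakly decreases: the frequency factor shrinks, and the per-forward mutual information is non-increasing as the posterior shifts down toward $q$. When $q\le\qL$ this is immediate, since the consumer's fixed strategy is non-informative and costless; so the only genuine content is the examination region $\qfunc(\mixstrfil)\in(\qL,\qH)$. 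Combining the strict action-payoff gain with the weak cost reduction contradicts the filter best-responding, which proves the claim.

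The main obstacle is precisely this last monotonicity of the information cost in the examination region: establishing rigorously that, with the consumer's rational-inattention channel fixed at its best response to $\qfunc(\mixstrfil)$, lowering the prior of the forwarded pool does not raise the per-forward mutual information. I would prove it using the explicit optimal channel from Lemma~\ref{lem:cbr} (whose induced posteriors are pinned at $\qL$ and $\qH$ independently of the prior), showing that the derivative of the mutual information in the prior is non-negative at $\qfunc(\mixstrfil)$ --- equivalently, that $\qfunc(\mixstrfil)$ does not exceed the information-maximizing belief of that channel, so that $q<\qfunc(\mixstrfil)$ both lie on the increasing branch of the concave mutual-information curve. The value-function identities of Lemma~\ref{lem:terms} provide convenient bookkeeping for this step, and the degenerate cases $\pi_0=\pi_1$ or $c_2=0$ (where the unreasonable strategy is merely payoff-equivalent to reasonable play) can be dismissed separately.
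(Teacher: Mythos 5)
Your overall route is the same as the paper's: the identical deviation $\pi_1\,\strfwd+(1-\pi_1)\,\strblk$, and the same accounting split into (i) a strict action-payoff gain $q\,a_0c_2(\pi_0-\pi_1)$ and (ii) a weak reduction in information costs, holding the consumer's strategy fixed. Part (i) is correct, and is in fact more careful than the paper, which asserts both halves without computation. The genuine problem is in part (ii), precisely the step you flag as the main obstacle: your plan is to show that the \emph{per-forward} mutual information of the consumer's fixed channel does not increase when the forwarded pool's prior drops from $\qfunc(\mixstrfil)$ to $q$, by arguing that $\qfunc(\mixstrfil)$ never exceeds the capacity-achieving prior of that channel. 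That claim is false in general. For instance, choose parameters so that the consumer's optimal channel at prior $\qfunc(\mixstrfil)=0.7$ is symmetric, $\tilde{\pi}^*_0=1-\tilde{\pi}^*_1$ (compatible with Lemma~\ref{lem:cbr}: taking $c_1=0$ and solving for $b/\infoL$ and $c_2/\infoL$ yields induced posteriors $\qL\approx 0.21$ and $\qH\approx 0.95$ straddling $0.7$). A symmetric binary channel has capacity-achieving prior exactly $1/2<0.7$, so lowering the prior from $0.7$ toward $q$ strictly \emph{increases} the per-forward mutual information. Hence the decomposition ``forwarding frequency shrinks and per-forward MI shrinks'' cannot be pushed through.

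What you actually need is only that the \emph{product} $\Pr[\afil=1]\cdot \infoC\sbr{\scon;X\mid\afil=1}$ weakly falls, and this is true via an argument that bypasses your monotonicity claim. Since $\scon$ is independent of $\sfil$ given $X$, the consumer's fixed channel satisfies $\Pr[\scon=0\mid X=x,\,\afil=1]=\tilde{\pi}_x$ under either filter strategy, so with weights $w_x=\Pr[X=x,\afil=1]$ and forwarded-pool signal marginal $\bar{s}=\rbr{\sum_x w_x\tilde{\pi}_x}/\rbr{\sum_x w_x}$ one has
\begin{align*}
\Pr[\afil=1]\cdot \infoC\sbr{\scon;X\mid\afil=1}
  \;=\; \infoL\sum_{x\in\bin} w_x\,\KL{\tilde{\pi}_x}{\bar{s}},
\end{align*}
and $\bar{s}$ is exactly the minimizer of $t\mapsto\sum_x w_x\KL{\tilde{\pi}_x}{t}$. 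Under the deviation the weights fall coordinate-wise ($w_0$ drops from $q\pi_0$ to $q\pi_1$, $w_1$ is unchanged), so, writing superscripts for the two strategies,
\begin{align*}
\sum_x w^{\mathrm{dev}}_x\KL{\tilde{\pi}_x}{\bar{s}^{\mathrm{dev}}}
 \;\le\; \sum_x w^{\mathrm{dev}}_x\KL{\tilde{\pi}_x}{\bar{s}^{\mathrm{un}}}
 \;\le\; \sum_x w^{\mathrm{un}}_x\KL{\tilde{\pi}_x}{\bar{s}^{\mathrm{un}}},
\end{align*}
the first inequality by optimality of $\bar{s}^{\mathrm{dev}}$ and the second by nonnegativity of KL. Substituting this step for your concavity argument makes your proof complete, and it then coincides with the paper's proof (which itself states the cost reduction without justification). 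Your separate handling of the degenerate cases $\pi_0=\pi_1$ or $c_2=0$, where the gain is not strict, is a reasonable caveat that the paper glosses over as well.
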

\begin{proof}
Suppose the unreasonable strategy were played in an equilibrium in which the consumer accepted content with a positive probability.  . Then the filter would strictly profit by replacing its strategy with the mixed strategy $\pi_1  \strfwd + (1 - \pi_1) \strblk$. This yields strictly more utility because  under the unreasonable strategy, malicious content is forwarded at rate $\pi_0$ and genuine content is forwarded at rate $\pi_1$. Under the deviation, all content is forwarded at rate $\pi_1$. This deviation strictly increases content utility  and weakly reduces information costs.
\end{proof}
\begin{prop}\label{prop:nodeva}
In the aligned case, there is no profitable deviation from the socially optimal profile to the unreasonable profile.
\end{prop}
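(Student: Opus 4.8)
The plan is to bound the filter's payoff from any such deviation by $\Vopt=\max\rbr{V(\difprof),V(\fwdprof)}$ (Proposition~\ref{prop:aligned-best-eq}); since a socially optimal profile already attains $\Vopt$, the deviation cannot strictly help. I first reduce the claim to a statement about consumer-optimal profiles. Let $\mixsigma^*$ be socially optimal and suppose the filter switches to the unreasonable strategy while the consumer holds its strategy from $\mixsigma^*$ fixed. Because utilities are aligned, the filter's post-deviation payoff equals the consumer's payoff at that profile, and the consumer can only weakly gain by best-responding; hence the deviation payoff is at most $V_{\mathrm{unr}}$, defined as the social welfare when the filter plays the unreasonable strategy and the consumer best-responds. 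It therefore suffices to show $V_{\mathrm{unr}}\le \Vopt$.

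Next I bound $V_{\mathrm{unr}}$ by dominating the unreasonable strategy with the mixed filter strategy $\mixstrfil':=\pi_1\strfwd+(1-\pi_1)\strblk$. If the consumer's best response to the unreasonable strategy is to block all content, then $V_{\mathrm{unr}}=V(\blkprof)\le\Vopt$ by Lemma~\ref{lem:blkbad} and we are done. Otherwise the consumer accepts forwarded content with positive probability, and I invoke the deviation in Proposition~\ref{prop:unreason}: since the unreasonable strategy forwards malicious content at rate $\pi_0\ge\pi_1$ and clean content at rate $\pi_1$, whereas $\mixstrfil'$ forwards every piece at rate $\pi_1$, switching to $\mixstrfil'$ strictly raises the action payoff $u$ and weakly lowers the information cost. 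Holding the consumer fixed and then letting it best-respond to $\mixstrfil'$ yields $V_{\mathrm{unr}}\le\Vcon(\mixstrfil')$, where $\Vcon(\mixstrfil')$ denotes the welfare of the consumer-optimal profile with filter strategy $\mixstrfil'$.

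Finally a one-line computation closes the argument. Since $\mixstrfil'$ induces posterior $q(\mixstrfil')=q=q(\strfwd)$, Lemma~\ref{lem:belief-BR} says the consumer behaves exactly as under $\fwdprof$ on forwarded content, while a $(1-\pi_1)$-fraction of each content type is blocked. Hence $\Vcon(\mixstrfil')-V(\fwdprof)=-(1-\pi_1)\rbr{c_1+V(\fwdprof)}$, which is nonpositive because $V(\fwdprof)\ge V(\blkprof)=-c_1$ by Lemma~\ref{lem:blkbad}. Chaining the inequalities gives $V_{\mathrm{unr}}\le\Vcon(\mixstrfil')\le V(\fwdprof)\le\Vopt$, as desired.

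The main obstacle is the middle step. The information cost enters nonlinearly and depends on the filter-induced posterior $q(\mixstrfil)$, so the weak domination of the unreasonable strategy by $\mixstrfil'$ is not automatic: one cannot simply argue coordinatewise that forwarding less malicious content helps, because reducing the forwarding rate also shifts the posterior (and thus the consumer's examination cost). I sidestep this by reusing Proposition~\ref{prop:unreason}, which already packages exactly the needed ``action payoff up, information cost weakly down'' comparison, rather than re-deriving the information-cost inequality from scratch.
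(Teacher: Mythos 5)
Your proof is correct, and its first half coincides with the paper's: both arguments bound the payoff of the unreasonable strategy via the domination from Proposition~\ref{prop:unreason} by the mixture $\pi_1\strfwd+(1-\pi_1)\strblk$. Where you diverge is in handling that mixture. The paper finishes in one line by invoking Proposition~\ref{prop:aligned-best-eq}: no mixed profile exceeds $\max\rbr{V(\difprof),V(\fwdprof)}$. You instead evaluate the mixture's welfare directly: because its blocking decision is signal-independent, the induced posterior equals the prior $q$, so by Lemma~\ref{lem:belief-BR} the consumer behaves exactly as under $\fwdprof$ on forwarded content, and welfare decomposes linearly as $\pi_1 V(\fwdprof)-(1-\pi_1)c_1\le V(\fwdprof)$, using $V(\fwdprof)\ge V(\blkprof)=-c_1$ from Lemma~\ref{lem:blkbad}. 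This buys independence from Proposition~\ref{prop:aligned-best-eq}, whose proof is the genuinely hard part of the aligned-utilities analysis precisely because mixing probabilities enter the information costs nonlinearly; your observation that this particular mixture leaves the posterior untouched (so no nonlinearity arises) sidesteps that machinery entirely. You are also more explicit than the paper about deviation semantics, correctly using alignment to bound the filter's payoff when the consumer's strategy is held fixed by the consumer-optimal value, and you handle the block-everything best response as a separate case. The trade-off: the paper's proof is shorter given results already established, while yours is more self-contained and makes visible why this specific dominating mixture is harmless.
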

\begin{proof}
By proposition \ref{prop:unreason2} the payoffs of deviating to the unreasonable profile are upper bounded by the payoffs of a mixed strategy profile and proposition \ref{prop:aligned-best-eq} that payoff is less than the welfare maximizing profile.  
 \end{proof}

\begin{prop}
If $\difprof$ is an efficient equilibrium in the semi-aligned case, the filter cannot profit by deviating to the unreasonable profile.  
\end{prop}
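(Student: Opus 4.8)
The plan is to show that, holding the consumer's equilibrium strategy fixed (as is appropriate for a unilateral deviation in this simultaneous-move game), the filter's action payoff $u$ cannot be larger under the unreasonable strategy than under $\strdif$. Since utilities are semi-aligned, $\Vfil$ depends only on $u$, so it suffices to argue about $u$. The key observation is that we may \emph{not} invoke the equilibrium hypothesis directly against the unreasonable strategy --- it was excluded from the strategy set when we certified that $\difprof$ is an equilibrium (Proposition~\ref{prop:semiebm}) --- but we \emph{may} invoke it against the reasonable deviations $\strfwd$ and $\strblk$. The whole proof therefore consists of reducing the unreasonable deviation to these two.

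First I would compare the unreasonable strategy with the mixed strategy $\sigma' := \pi_1\,\strfwd + (1-\pi_1)\,\strblk$, which forwards every piece of content with probability $\pi_1$ regardless of signal. This is exactly the comparison made inside the proof of Proposition~\ref{prop:unreason}. Under both strategies legitimate content is forwarded at rate $\pi_1$, so (with the consumer's acceptance probabilities held fixed) the legitimate-content contribution to $u$ is identical; under the unreasonable strategy malicious content is forwarded at the higher rate $\pi_0 \ge \pi_1$, whereas $\sigma'$ forwards it at rate $\pi_1$. Because each accepted malicious piece costs $c_2 \ge 0$ and the consumer's conditional acceptance probability for forwarded content is a fixed nonnegative number, forwarding strictly less malicious content weakly raises $u$. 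Concretely $u(\sigma') - u(\text{unreasonable}) = q\,(\pi_0-\pi_1)\,c_2\,(1-\tilde{\pi}_0^*) \ge 0$, where $1-\tilde{\pi}_0^*$ is the (fixed) probability the consumer accepts forwarded malicious content.

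Next I would dispatch $\sigma'$. With the consumer's strategy fixed, the filter's payoff $u$ is \emph{affine} in its forwarding probabilities, since in the semi-aligned case the nonlinear information-cost term does not enter $\Vfil$ (this is precisely the point at which the semi-aligned argument is simpler than the aligned one, where Proposition~\ref{prop:aligned-best-eq} is needed instead). Hence $u(\sigma') = \pi_1\,u(\strfwd) + (1-\pi_1)\,u(\strblk) \le \max(u(\strfwd),\,u(\strblk))$, all payoffs computed against the same fixed consumer strategy. Since $\difprof$ is an equilibrium, neither $\strfwd$ nor $\strblk$ is a profitable filter deviation, so $u(\strfwd) \le u(\difprof)$ and $u(\strblk) \le u(\difprof)$. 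Chaining the inequalities gives $u(\text{unreasonable}) \le u(\sigma') \le \max(u(\strfwd),u(\strblk)) \le u(\difprof)$, which is the claim.

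The main obstacle is conceptual rather than computational: one must resist the temptation to conclude ``no deviation is profitable'' straight from the equilibrium hypothesis, because the unreasonable strategy sits outside the strategy set used to certify the equilibrium. The real content is the two-step sandwich --- the monotonicity reduction of the first step and the affinity argument of the second --- together with the care that the consumer's strategy (and hence its acceptance probabilities and the forwarding rates that feed into $u$) is held fixed throughout, so that both inequalities are evaluated against one and the same consumer response.
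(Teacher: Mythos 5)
Your proof is correct and follows essentially the same route as the paper's: bound the unreasonable strategy by the mixture $\pi_1\,\strfwd + (1-\pi_1)\,\strblk$ (exactly the comparison from Proposition~\ref{prop:unreason2}), then reduce to pure strategies and invoke the equilibrium hypothesis. Your version is in fact slightly more careful than the paper's one-liner --- you bound the mixture by $\max\rbr{u(\strfwd), u(\strblk)}$ and use the no-deviation condition against both pure strategies, whereas the paper asserts the mixture is bounded by $u(\strfwd)$ alone, which tacitly assumes $u(\strblk)\leq u(\strfwd)$ under the fixed consumer strategy.
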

\begin{proof}
The unreasonable strategy is bounded by the mixture of $\strfwd$ and $\strblk$ as in the proof of \ref{prop:unreason2}, which itself is bounded by the utility from $\strfwd$. Since by assumption $\difprof$ is an equilibrium (and therefore the filter has no incentive to switch to $\strfwd$) there is no incentive for the filter to deviate to the unreasonable profile.  
\end{proof}

Since the unreasonable profile cannot be an equilibrium and it doesn't nullify the equilibrium status of the equilibria we analyze, all of our results hold when also considering the unreasonable profile.

\end{document}